\definecolor{thmcolor}{RGB}{235, 235, 235}
\definecolor{citecolor}{RGB}{1, 210, 56}
\newtcolorbox{thmbox}{colback=thmcolor!5!white,colframe=black!75!black}
\newtheorem{theorem}{Theorem}[section]
\newtheorem{corollary}[theorem]{Corollary}
\newtheorem{lemma}[theorem]{Lemma}
\newtheorem{observation}[theorem]{Observation}
\newtheorem{proposition}[theorem]{Proposition}
\newtheorem{definition}[theorem]{Definition}
\newtheorem{claim}[theorem]{Claim}
\newtheorem{fact}[theorem]{Fact}
\newtheorem{remark}[theorem]{Remark}
\newcommand{\poly}{\mathop{\mathrm{poly}}}
\newcommand{\Boo}{\{0,1 \}}
\newcommand{\bigO}{\mathcal{O}}
\newcommand{\paren}[1]{\left( #1 \right)}
\newcommand{\brac}[1]{\left[ #1 \right]}
\newcommand{\set}[1]{\left\{ #1 \right\}}
\newcommand{\setcond}[2]{\left\{ #1 \;\middle\vert\; #2 \right\}}
\newcommand{\F}{\mathbb{F}}
\newcommand{\IPS}{\mathrm{IPS}}
\newcommand{\IPSLIN}{\mathrm{IPS}_{\mathrm{LIN}}}
\newcommand{\IPSLINp}{\mathrm{IPS}_{\mathrm{LIN}'}}
\newcommand{\roABP}{\mathrm{roABP}}
\newcommand{\ml}{\mathsf{ml}}
\newcommand{\xbar}{\mathbf{x}}
\newcommand{\ACz}{\text{AC}^0}
\begin{document}
\title{New Bounds for the Ideal Proof System in Positive Characteristic}
\author{
{Amik Raj Behera \thanks{Department of Computer Science, University of Copenhagen, Denmark, \texttt{Email: ambe@di.ku.dk}. Supported by Srikanth Srinivasan's start-up grant from the University of Copenhagen.}} 
\and 
{Nutan Limaye \thanks{IT University of Copenhagen, Denmark, \texttt{Email: nuli@itu.dk}. Supported by Independent Research Fund Denmark (grant agreement No. 10.46540/3103-00116B) and is also supported by the Basic Algorithms Research Copenhagen (BARC), funded by VILLUM Foundation Grant 54451.}} 
\and 
{Varun Ramanathan \thanks{School of Technology and Computer Science, Tata Institute of Fundamental Research, Mumbai, India, \texttt{Email: varun.ramanathan@tifr.res.in}. Supported by the Department of Atomic Energy, Government of India, under project number RTI400112. A part of the work was done when the author was visiting the University of Copenhagen and was supported by the European Research Council (ERC) under grant agreement no. 101125652 (ALBA).}} 
\and 
{Srikanth Srinivasan \thanks{Department of Computer Science, University of Copenhagen, Denmark, \texttt{Email: srsr@di.ku.dk}. Supported by the European Research Council (ERC) under grant agreement no. 101125652 (ALBA).}}
}
% \author{
% {Amik Raj Behera} 
% \And 
% {Nutan Limaye} 
% \And 
% {Varun Ramanathan } 
% \And 
% {Srikanth Srinivasan}
% }
\date{\today}
\maketitle

\begin{abstract}
In this work, we prove upper and lower bounds over fields of positive characteristics for several fragments of the Ideal Proof System (IPS), an algebraic proof system introduced by Grochow and Pitassi~(J. ACM 2018). Our results extend the works of Forbes, Shpilka, Tzameret, and Wigderson (Theory of Computing 2021) and also of Govindasamy, Hakoniemi, and Tzameret (FOCS 2022). These works primarily focused on proof systems over fields of characteristic 0, and we are able to extend these results to positive characteristic.\\ 

The question of proving general $\mathrm{IPS}$ lower bounds over positive characteristic is motivated by the important question of proving $\mathsf{AC}^{0}[p]$-Frege lower bounds. This connection was observed by Grochow and Pitassi (J. ACM 2018). Additional motivation comes from recent developments in algebraic complexity theory due to Forbes (CCC 2024)  who showed how to extend previous lower bounds over characteristic $0$ to positive characteristic.\\ 

In our work, we adapt the functional lower bound method of Forbes et al. (Theory of Computing 2021) to prove exponential-size lower bounds for various subsystems of $\mathrm{IPS}$. In order to establish these size lower bounds, we first prove a tight degree lower bound for a variant of \emph{Subset Sum} over positive characteristic. This forms the core of all our lower bounds.\\ 

Additionally, we derive upper bounds for the instances presented above. We show that they have efficient constant-depth $\mathrm{IPS}$ refutations. This demonstrates that constant-depth $\mathrm{IPS}$ refutations are stronger than the proof systems considered above even in positive characteristic. We also show that constant-depth $\mathrm{IPS}$ can efficiently refute a general class of instances, namely all symmetric instances, thereby further uncovering the strength of these algebraic proofs in positive characteristic.\\ 

Notably, our lower bounds hold for fields of arbitrary characteristic but require the field size to be $n^{\omega(1)}$. In a concurrent work, Elbaz, Govindasamy, Lu, and Tzameret have shown lower bounds against restricted classes of $\mathrm{IPS}$ over finite fields of any size by considering different hard instances.
\end{abstract}

        \newpage 
        
\tableofcontents

        \newpage

\section{Introduction}
\label{sec:intro}
\paragraph{Propositional Proof Systems.} A proof system consists of a set of axioms and inference rules. The goal is to start with the given set of axioms and apply the inference rules repeatedly to prove theorems (tautologies) within the proof system.  A proof system is \emph{sound} if it proves only true statements and it is \emph{complete} if it proves all true statements. The area of \emph{Propositional Proof Complexity} aims to understand the strength of different proof systems in the propositional setting. In a foundational work, Cook and Reckhow \cite{CR79} showed that if we could prove that there exist tautologies such that they require exponential proof size (i.e., vaguely the number of times different inference rules are applied in the proof) in any proof system, then it would resolve the famous NP vs. coNP question in computational complexity theory. 

%The area of proof complexity has evolved in parallel and almost as a dual to circuit complexity. In circuit complexity, the main question is to bound the size of a circuit required to compute a given function. Correspondigly, the central goal of circuit complexity is to resolve the P vs. NP question. 

Apart from the connection to this central question in complexity theory, understanding the power of different proof systems is also fundamental to mathematical reasoning. This has motivated a lot of research in the area for the last five decades. (See for instance these reference texts for more context~\cite{krabook95,CloteKSurvey02,krabook2019}.) There are many different kinds of propositional proof systems based on the set of axioms they start with and the kind of inference rules they are allowed to use. In this work, we will focus on algebraic proof systems. In algebraic proof systems, propositional tautologies are expressed as an unsatisfiable set of polynomial equations and the inference rules are algebraic, i.e. they involve reasoning based on polynomial arithmetic.

The study of algebraic proof systems originates from the work of Beame, Impagliazzo, Kraj\'{i}\v{c}ek, Pitassi, and Pudl\'{a}k~\cite{BIKPP94}  who introduced the Nullstellensatz proof system (based on Hilbert's Nullstellensatz). Their work was followed by the work of Clegg, Edmonds, and Impagliazzo~\cite{CEI96} who introduced Polynomial Calculus as a \emph{dynamic} variant of the Nullstellensatz proof system. Over the years, substantial work on these proof systems has helped us get a good understanding of their power in terms of complexity measures such as sparsity and degree~\cite{BIKPP94,BIPRS97,Razborov98,Grigoriev98,IPS99,BGIP01,AR2001}. 

However, as noted in~\cite{FSTW21}, sparsity and degree only roughly capture the complexity of algebraic proofs. More recently, Grochow and Pitassi \cite{GP14} proposed the Ideal Proof System (IPS) as a natural generalization of these well-studied algebraic proof systems such as Polynomial Calculus and Nullstellensatz proof systems. In the last decade, several papers studied this proof system. (See for instance~\cite{GP14,PitassiTzameretSurvey,FSTW21,GHT,HLT24}.) This has allowed us to understand many other aspects of algebraic proofs, such as proof size and proof depth. 

In this paper, we extend this line of work. Specifically, we revisit some of the known upper and lower bounds for Ideal Proof Systems over characteristic $0$ and show similar bounds over fields of any characteristic\footnote{In all the results mentioned here, when we say that a result holds over characteristic $0$, it in fact holds over large enough characteristic as well.}.

\subsection{Ideal Proof Systems}
\label{sec:intro-ips}

We start by describing the general setup for an algebraic (static\footnote{In the literature, the following type of proof system is often referred to as a static proof system. There are other algebraic proof systems, where the proof is presented line-by-line and those are known as dynamic proof systems. Here, we will only discuss static proof systems.}) proof system. Let $\xbar$ denote the set of variables $\{x_1, x_2, \ldots, x_n\}$. We are given a set of polynomial axioms $f_1(\mathbf{x}), f_2(\mathbf{x}), \ldots, f_m(\mathbf{x}) \in$ $\mathbb{F}[\mathbf{x}]$ and the goal is to show that there is no $0$-$1$ assignment to the variables such that it simultaneously satisfies $\{f_1(\mathbf{x}) = 0, f_2(\mathbf{x}) = 0, \ldots, f_m(\mathbf{x}) = 0\}$ over $\mathbb{F}$. To force a common Boolean solution, the set of axioms is appended with additional axioms, $\{x_i^2 - x_i =0\}_{i \in [n]}$ for $i \in [n]$. These are called the \emph{Boolean axioms.} 

Based on Hilbert's Nullstellensatz, we know that if $\{f_1(\mathbf{x}) = 0, f_2(\mathbf{x}) = 0, \ldots, f_m(\mathbf{x}) = 0\}$ $\cup$ $\{x_i^2 - x_i=0\}_{i \in [n]}$ are simultaneously not satisfiable, then such a refutation\footnote{The words `proofs' and `refutations' are treated interchangeably in this paper. What we will be `proving' is a statement that `refutes' the existence of a common solution to a system of equations.} can be given by polynomials $A_1(\mathbf{x}), A_2(\mathbf{x}), \ldots, A_m(\mathbf{x})$ and $B_1(\mathbf{x}), B_2(\mathbf{x}), \ldots, B_n(\mathbf{x})$ such that
\begin{equation}
    \sum_{i \in [m]} A_i(\mathbf{x}) \cdot f_i(\mathbf{x}) +  \sum_{i \in [n]} B_i(\mathbf{x}) \cdot (x_i^2 - x_i) = 1.
\end{equation}
The complexity of such a proof can be defined using complexity parameters of the polynomials $\{A_i(\xbar)\}$ and $\{B_i(\xbar)\}$. 
In the case of the Ideal Proof System, Grochow, and Pitassi proposed that we assume that $A_i(\xbar), B_i(\xbar) \in \mathbb{F}[\xbar]$ are computed by algebraic circuits. (See \Cref{sec:results} for the formal definition.) Based on this, they defined complexity measures such as circuit size and circuit depth of IPS. 

This proof system in its full generality is known to be quite strong. Specifically, it can polynomially simulate Extended Frege~\cite{GP14}, which is one of the most powerful among well-studied propositional proof systems. Additionally, the same work also showed that proving lower bounds for this proof system would also imply strong algebraic circuit lower bounds, which is also a very challenging problem. 

 In light of this (and other reasons explained below), many restricted variants of the IPS have been studied. Let $\mathcal{C}$ be a class of polynomials. Then, a $\mathcal{C}$-IPS refutation is an IPS-refutation wherein $\{A_i(\xbar)\}_{i \in [m]}$ and $\{B_i(\xbar)\}_{i \in [n]}$ belong to the class $\mathcal{C}$. Forbes, Shpilka, Tzameret, and Wigderson~\cite{FSTW21}, as well as Govindasamy, Hakoniemi, and Tzameret~\cite{GHT}, considered different classes of polynomials, for example, the class of polynomials computed by read-once oblivious algebraic branching programs (roABPs), by multilinear formulas, or by constant-depth algebraic formulas. They proved upper and lower bounds on the size of (some variants of) $\mathcal{C}$-IPS refutations over characteristic $0$. 
\subsection{Motivation}
\label{sec:motivation}
We extend these works and prove similar bounds in arbitrary characteristic. 
Our work is motivated by the following important strands of research in proof complexity. 

\paragraph{IPS-refutations and $\ACz[p]$-Frege.}
A long-standing open question in proof complexity, open for almost three decades~\cite{kra2015}, is to prove superpolynomial lower bounds against $\ACz[p]$-Frege proof systems, i.e., a proof system in which the lines of the proof are constant-depth Boolean circuits that use modular gates. In the late 80s, Razborov \cite{Razborov1987} and Smolensky \cite{Smolensky1987, Smolensky1993} resolved the Boolean circuit lower bound question for $\ACz[p]$, but the corresponding proof complexity question has proved to be elusive. 

Over the years, several attempts have been made to resolve this question. The most relevant to our work is the result by Grochow and Pitassi \cite[Theorem 3.5]{GP14} which showed that constant-depth-IPS over characteristic $p$ can efficiently simulate $\ACz[p]$-Frege proofs. This means that proving superpolynomial lower bounds against constant-depth-IPS refutations will give superpolynomial lower bounds against $\ACz[p]$-Frege. This gives a strong motivation to prove IPS lower bounds over small characteristics.

\paragraph{Functional lower bounds over any characteristic.}
Building on the work of \cite{GP14}, \cite{FSTW21} further explored the power of IPS refutations. They proposed a concrete approach towards proving size lower bounds for IPS refutations via 
\emph{functional lower bounds} (further explained in \Cref{sec:techniques}). Their method was inspired by the notion of functional lower bounds in Boolean circuit complexity \cite{Grigoriev-Razborov, FKS16}. They demonstrated the promise of their method by proving several lower bounds for different fragments of IPS.
% One interesting feature of this approach is that for a class of polynomials $\mathcal{C}$, their lower bound stems from an algebraic complexity lower bound for $\mathcal{C}$. If we have lower bounds for the class $\mathcal{C}$, the approach allows us to use them to deduce a lower bound for the corresponding fragment of the IPS. This makes the approach exciting in light of the steady progress we have towards strong algebraic complexity lower bounds. 

% For example, based on the strong algebraic complexity lower bounds known for $\roABP$s \cite{Nisan} and multilinear formulas \cite{Raz-2009}, \cite{FSTW21} successfully proved lower bounds for $\mathcal{C}$-IPS when $\mathcal{C}$ is a class of read-once branching programs or multilinear formulas. Their bounds are over characteristic $0$.  

For example, the strong algebraic complexity lower bounds known for $\roABP$s \cite{Nisan} and multilinear formulas \cite{Raz-2009} follow from understanding the \emph{evaluation dimension} complexity measure in these models. Since this measure is essentially functional in nature, \cite{FSTW21} used it to successfully prove lower bounds for $\mathcal{C}$-IPS when $\mathcal{C}$ is a class of read-once branching programs or multilinear formulas. Their bounds are over characteristic $0$.  

%Unfortunately, it is also evident that we need to extend the ideas of~\cite{FSTW16} further to obtain $\ACz[p]$ lower bounds. 

%All lower bounds for $\mathcal{C}$-IPS are over char $0$ (or over a large characteristic)\footnote{There is one exception to this, which is a lower bound of~\cite{HLT24}, where the lower bound is over any characteristic. However, their input instance is quite complicated and specifically, the input instance is not computable by class $\mathcal{C}$.}. To the best of our knowledge, for any $\mathcal{C}$, we do not have lower bounds for an input instance such that the instance is computable in $\mathcal{C}$ and the lower bound holds over arbitrary characteristics. We prove this in our work, taking us a step closer to the goal of proving $\ACz[p]$ lower bounds.  

% \snote{What about Andrews and Forbes?} 
This approach of~\cite{FSTW21} was further adapted by Govindasamy, Hakoniemi, and Tzameret~\cite{GHT} to prove superpolynomial lower bounds against (multilinear) constant-depth-IPS refutations. Their proof builds on some of the key components of the superpolynomial lower bound against constant-depth algebraic circuits by Limaye, Srinivasan, and Tavenas. The latter lower bound of \cite{LST} only worked over characteristic $0$; for this and other reasons, the result of \cite{GHT} was also limited to characteristic $0$. In a recent paper, however, Forbes \cite{Forbes-LST-CCC} improved the circuit lower bound result of \cite{LST} and proved the same\footnote{Some parameters in the lower bound by \cite{LST} were subsequently improved by \cite{BDS24} and \cite{Forbes-LST-CCC} achieves those improved parameters.} lower bound over any characteristic. 

In light of these results, the next obvious step is to prove the lower bounds of~\cite{FSTW21,GHT} over any characteristic. We achieve that in this work.\footnote{The subset-sum instances from \cite{FSTW21,GHT} are not always unsatisfiable over fields of positive characteristic; this requires that we tweak their instances to ensure unsatisfiability. Barring these changes, we qualitatively match their lower bounds over fields of positive characteristic.}

    %\item Currently, the approach can prove lower bounds for a restricted proof system in which apart from the Boolean axioms, there is at most one polynomial axiom, i.e. $s =1$. If we need to prove lower bounds for CNFs, then we need $s = \omega(1)$.

    %\item We would have to prove lower bounds against $\mathcal{C}$-IPS, where $\mathcal{C}$ is a class of polynomials computable by constant-depth algebraic circuits. Currently, the known lower bounds are for a restricted variant of 

%\paragraph{Constant-depth-IPS over all characteristics.}

\subsection{Our Results}
\label{sec:results}
To describe our results, we start with the formal definitions of IPS refutations and its variants. 

\begin{definition}[$\IPS$ proof systems \cite{GP14, FSTW21}]
\label{def:proof-systems}
Let $f_1,\ldots, f_m \in \F[x_1,\ldots, x_n]$ be a system of unsatisfiable polynomials over the Boolean cube $\Boo^{n}$. In other words, there is no Boolean assignment $\mathbf{a} \in \{0,1\}^n$ to the variables $x_1,\ldots, x_n$ so that $f_i(\mathbf{a}) = 0$ for all $i\in [m].$

Given a class of algebraic circuits $\mathcal{C}$, a \emph{$\mathcal{C}$-$\IPS$ refutation} of the system of equations defined by $f_1,\ldots,f_m$ is an algebraic circuit $C\in \mathcal{C}$ in variables $x_1,\ldots,x_n,y_1,\ldots, y_m, z_1,\ldots, z_n$ such that
\begin{itemize}
    \item $C(\mathbf{x},\mathbf{0},\mathbf{0}) = 0$, and
    \item $C(\mathbf{x},f_1,\ldots,f_m,x_1^2-x_1,\ldots,x_n^2-x_n) = 1.$
\end{itemize}
The size of the refutation is the size of the circuit $C.$

Further, if the circuit $C$ has individual degree at most $1$ in the variables $\mathbf{y}$ and $\mathbf{z}$, then we say that $C$ is a \emph{$\mathcal{C}$-$\IPSLIN$ refutation}. If the circuit $C$ has individual degree at most $1$ in the variables $\mathbf{y}$ (but not necessarily in $\mathbf{z}$), then $C$ is said to be a \emph{$\mathcal{C}$-$\IPSLINp$ refutation.}

Finally, we say that a circuit $C\in \mathcal{C}$ is a \emph{multilinear $\mathcal{C}$-$\IPSLINp$ refutation} if additionally $C(\mathbf{x},\mathbf{y},\mathbf{0})$ is a multilinear polynomial in the variables $\mathbf{x}\cup \mathbf{y}$.
\end{definition}

\begin{remark}
    \label{rem:def-pf-systems}
    We mostly employ the above definition in the case that $m=1$, i.e. the case when we have a single polynomial equation that is unsatisfiable over the Boolean cube. Further, while our upper bound results are proved in the more restrictive $\mathcal{C}$-$\IPSLIN$ proof system, our lower bounds results hold in the setting of the stronger $\mathcal{C}$-$\IPSLINp$ proof systems. 
\end{remark}

We also recall some standard notions about polynomials and algebraic models of computation, which will be useful below. 

\paragraph{Multilinear and symmetric polynomials.}
A polynomial $f(\mathbf{x}) \in \F[x_{1},\ldots,x_{n}]$ is a \emph{multilinear} if the individual degree is at most $1$. For a polynomial $f(\mathbf{x})$, the \emph{multilinearization} operator, denoted by $\ml[~\cdot~]$, changes for each variable $x_{j}$ and any $k$, every occurrence of $x_{j}^{k}$ in $f(\mathbf{x})$ to $x_{j}$.

A polynomial $f(\mathbf{x}) \in \F[x_{1},\ldots,x_{n}]$ is said to be a \emph{symmetric polynomial} if the polynomial remains invariant under any permutation of the input variables. For a degree parameter $0 \leq d \leq n$, the $d^{th}$ elementary symmetric polynomial $e_{n,d}(x_{1},\ldots,x_{n})$ is defined to be the following multilinear polynomial $e_{n,d}(x_{1},\ldots,x_{n}) \; = \; \sum_{\substack{S \subseteq [n] \\ |S| = d}} \, \prod_{i \in S} x_{i}$. Whenever $n$ is clear from the context, we will denote the $d^{th}$ elementary symmetric polynomial by $e_{d}(\mathbf{x})$.

\paragraph{Algebraic models of computation.}
We recall the definitions of some of the standard models of computation relevant to our results.

\textit{Algebraic circuits and formulas.} An \emph{algebraic circuit} is a directed acyclic graph in which each node either computes a sum (or a linear combination) of its inputs, or a product of its inputs. The leaf nodes are either variables or constants. The size of an algebraic circuit is the number of edges in the circuit, and the depth of an algebraic circuit is the longest path from the output node (a sink) to a leaf node (a source). An \emph{algebraic formula} is an algebraic circuit where the output of each node feeds into at most another node; in other words, the underlying graph of an algebraic formula is a tree. An algebraic formula is a \emph{multilinear formula} if every gate of the formula computes a multilinear formula.

\textit{Sparse polynomials and constant-depth circuits.} The class $\sum\prod$ consists of depth-2 formulas with an addition gate in the top layer and multiplication gates in the bottom (second) layer. All the gates have unbounded fan-in. $\sum\prod$ formulas essentially compute polynomials in the \emph{sparse} representation i.e. as a sum of monomials. In general, a constant-depth algebraic circuit has $O(1)$ alternating layers of additional and multiplication gates.

%\begin{definition}[$\Sigma \Pi$ circuits (sparse polynomials)]
% The circuit class $\mathcal{C} = \Sigma \Pi$ is a depth-$2$ circuit with an addition gate on top with multiplication gates in the bottom layer (the gates have unbounded fan-in). A $\Sigma \Pi$ circuit computes polynomials one monomial at a time. The size of a $\Sigma \Pi$ circuit computing a polynomial $f(\mathbf{x})$ is equal to the sparsity of $f(\mathbf{x})$.
% \end{definition}

\textit{Read-Once Oblivious Algebraic Branching Programs.}
    %Let $n \in \mathbb{N}$ and let the set of variables be $\{x_1, \dots, x_n\}$. 
    A read-once oblivious algebraic branching program in the variable-order $\pi \in \mathcal{S}_n$\footnote{$\mathcal{S}_n$ denotes the set of all permutation of $[n]$.} is a directed acyclic graph whose vertices are partitioned into $n$ layers $V_0 = \{s\}, V_1, V_2, \dots, V_n = \{t\}$. For each $i \in \{1, 2, \dots, n\}$, there are edges directed from layer $V_{i-1}$ to $V_{i}$ that are labelled by univariate polynomials in the variable $x_{\pi(i)}$. For each $s$-to-$t$ path $p$, the polynomial computed by $p$ is defined to be product of the edge labels on $p$. The polynomial computed by the $\roABP$ is defined as the sum of polynomials computed by all $s$-to-$t$ paths. The \emph{width} of an $\roABP$ is $\max_{0\leq i \leq n} |V_i|$ i.e. the size of the largest layer of vertices.

For more background on these models of computation, please refer to one of the standard surveys in algebraic complexity (\cite{SY10},\cite{saptharishisurvey}).

\subsubsection{Lower Bounds Over Positive Characteristic}
\label{sec:intro-lower}

We start by stating our lower bound results. 

\begin{theorem}[Lower bounds for sparse-$\IPSLINp$ in positive characteristic]
    \label{thm:lbd-sparse}
    The following holds for any large enough $n$. Let $p$ be any prime number. Let $k \in \mathbb{N}$ such that $p^k > 2^{\Omega(n)}$. There exist $\alpha_{i}\in \F_{p^{k}}$ and $\beta \in \F_{p^{{{2k}}}}\setminus \F_{p^{k}}$ %\vr{Shouldn't this be $\F_{p^{2k}}\setminus \F_{p^k}$? $\F_{p^a}$ is a subfield of $\F_{p^b}$ iff $a|b$} 
    such that 
    \begin{itemize}
        \item The polynomial $f = \sum_{i \in [n]}\alpha_{i} x_i - \beta$ has no Boolean satisfying assignment.
        \item Any sparse-$\IPSLINp$ refutation\footnote{Note that sparse-$\IPSLIN$ (a weaker system than sparse-$\IPSLINp$) is equivalent to the Nullstellensatz proof system of \cite{BIKPP94}.} of $f$ must have size at least $2^{\Omega(n)}$ 
    \end{itemize}
\end{theorem}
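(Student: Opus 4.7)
The plan is to follow the functional lower bound paradigm of~\cite{FSTW21}, adapting it to positive characteristic by choosing coefficients $\alpha_i$ whose subset sums are pairwise distinct in $\F_{p^k}$ and locating $\beta$ in the quadratic extension $\F_{p^{2k}}$ so as to avoid a controlled set of bad values.

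\textbf{Unsatisfiability and reduction to a multilinearization bound.} Unsatisfiability is immediate: every Boolean evaluation of $\sum_i \alpha_i x_i$ lies in $\F_{p^k}$, whereas $\beta \in \F_{p^{2k}} \setminus \F_{p^k}$, so $f(\mathbf{a}) \neq 0$ for all $\mathbf{a} \in \{0,1\}^n$. For the size lower bound, suppose $C(\mathbf{x},y,\mathbf{z})$ is a sparse-$\IPSLINp$ refutation of $f$. Since $C$ is affine in $y$, write $C = y \cdot A(\mathbf{x},\mathbf{z}) + D(\mathbf{x},\mathbf{z})$ with $D(\mathbf{x},\mathbf{0}) = 0$. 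Restricting $\mathbf{x}$ to $\{0,1\}^n$ (so $\mathbf{z} = \mathbf{x}^2 - \mathbf{x} = \mathbf{0}$) collapses the refutation identity to $g(\mathbf{x}) f(\mathbf{x}) = 1$ on the Boolean cube, where $g(\mathbf{x}) := A(\mathbf{x},\mathbf{0})$ inherits the sparsity of $C$. Letting $G \in \F_{p^{2k}}[\mathbf{x}]$ denote the unique multilinear polynomial with $G(\mathbf{a}) = 1/f(\mathbf{a})$ for every $\mathbf{a} \in \{0,1\}^n$, we have $\ml[g] = G$, and since multilinearization can never increase the number of monomials,
$$\text{size}(C) \;\gtrsim\; \text{sparsity}(g) \;\geq\; \text{sparsity}(\ml[g]) \;=\; \text{sparsity}(G).$$
It therefore suffices to exhibit $\alpha_i$'s and $\beta$ for which $\text{sparsity}(G) \geq 2^{\Omega(n)}$.

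\textbf{Analyzing the multilinearization.} By standard multilinear interpolation, the coefficient of $\prod_{i \in V} x_i$ in $G$ is $c_V = -\sum_{T \subseteq V} (-1)^{|V|-|T|}/(\beta - \sigma_T)$, where $\sigma_T := \sum_{i \in T} \alpha_i$. Clearing denominators, $c_V \neq 0$ iff the polynomial
$$Q_V(\beta) \;:=\; \sum_{T \subseteq V} (-1)^{|V|-|T|} \prod_{\substack{T' \subseteq V \\ T' \neq T}} (\beta - \sigma_{T'}) \;\in\; \F_{p^k}[\beta],$$
of degree at most $2^{|V|}-1$, is nonzero at our $\beta$. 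I would choose $\alpha_1,\ldots,\alpha_n \in \F_{p^k}$ so that the $2^n$ subset sums $\sigma_T$ are pairwise distinct; such a choice exists by a standard union-bound argument once $p^k$ is exponentially large in $n$ (expected number of collisions at most $\binom{2^n}{2}/p^k$). With distinct subset sums, evaluating $Q_V$ at $\beta = \sigma_{T_0}$ for any fixed $T_0 \subseteq V$ kills all terms except one, leaving $\pm \prod_{T' \neq T_0}(\sigma_{T_0} - \sigma_{T'}) \neq 0$; hence every $Q_V$ is a nonzero polynomial over $\F_{p^k}$. The product $\prod_V Q_V(\beta)$ then has total degree at most $\sum_V 2^{|V|} = 3^n$, and so has at most $3^n$ roots in $\F_{p^{2k}}$. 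For $p^k > 2^{\Omega(n)}$ with a sufficiently large implicit constant we have $p^{2k} - p^k > 3^n$, so we may pick $\beta \in \F_{p^{2k}} \setminus \bigl(\F_{p^k} \cup \{\text{roots of } \prod_V Q_V\}\bigr)$. For this $\beta$, every $c_V$ is nonzero, giving $\text{sparsity}(G) = 2^n$ and the desired $2^{\Omega(n)}$ size lower bound.

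\textbf{Main obstacle.} The delicate point is the interplay between the two constraints on the parameters: the $\alpha_i$'s must produce distinct subset sums (so that each $Q_V$ is a nonzero polynomial), which already forces $p^k$ to be exponentially large; simultaneously, $\beta$ must evade both the $p^k$ elements of $\F_{p^k}$ and the $\leq 3^n$ roots of $\prod_V Q_V$. Passing from $\F_{p^k}$ to the quadratic extension $\F_{p^{2k}}$ is precisely what provides the extra room needed to locate such a $\beta$, and verifying that these constraints are simultaneously satisfiable under the single hypothesis $p^k > 2^{\Omega(n)}$ is the main technical step. Unlike the characteristic-zero case of~\cite{FSTW21}, where one can take $\alpha_i = 1$ and obtain a clean factorial-based closed form for $c_V$, here the factorials vanish modulo $p$ and the distinct-subset-sums hypothesis must substitute for that computation; handling this replacement cleanly is the essential new ingredient.
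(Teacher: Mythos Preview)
Your proof is correct but takes a different route from the paper's. The paper follows the \cite{FSTW21} template literally: it first proves a degree lower bound (\Cref{lem:basic degree lower bound} and \Cref{lem: deg lower bound union bound}) showing that for \emph{any} fixed $\beta\in\F_{p^{2k}}\setminus\F_{p^k}$ and \emph{random} $\bfalpha\in\F_{p^k}^n$, every restriction $1/(\sum_{i\in U}\alpha_i x_i-\beta)$ has multilinear degree exactly $|U|$; the nonvanishing of the numerator is established by exhibiting a single monomial of coefficient~$1$ rather than by the subset-sum evaluation trick you use. It then plugs this degree bound into the random-restriction argument of \cite[Lemma~5.5]{FSTW21} (random restriction plus Chernoff) to deduce the sparsity lower bound.

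You dualize the quantifiers---fixing $\bfalpha$ first to have distinct subset sums, and then locating $\beta$ to dodge the $\le 3^n$ roots of $\prod_V Q_V$---and, more importantly, you bypass the random-restriction step entirely via the observation that $\mathrm{sparsity}(g)\ge\mathrm{sparsity}(\ml[g])=\mathrm{sparsity}(G)$. This is cleaner for the present theorem: indeed, the paper's \Cref{lem: deg lower bound union bound} (that $\deg f_{\bfalpha,U}=|U|$ for all $U$) is equivalent to the statement that every coefficient $c_V$ of $G$ is nonzero, so the random-restriction layer is, strictly speaking, redundant once that lemma is in hand. The trade-off is modularity: the paper's formulation (random $\bfalpha$, arbitrary $\beta$, degree phrased for every restriction) is what gets reused verbatim in the roABP, multilinear-formula, and constant-depth lower bounds that follow, whereas your argument is tailored to sparsity.
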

Note that the hard instance above is a sparse polynomial. We show that it has no small sparse refutation over positive characteristic. 

% Our next theorem is similar to Theorem~1.15 from~\cite{FSTW21}. 
% \footnote{In Theorem~1.15 in~\cite{FSTW21}, they also prove a lower bound for $\sum\bigwedge \sum$-$\IPSLINp$ system. However, the system is not complete over positive characteristic. Hence, in our case, such a lower bound is trivial. }. 
%Our hard instance below is similar to the hard instance in their work except for the choice of the coefficients. 
% \vr{The class $\sum \bigwedge \sum$ might not be complete for polynomials, but maybe for every $f$, there exists some $g \in \sum \bigwedge\sum$ such that $g\cdot f \equiv 1$ over the cube... Isn't that possible?}
% \ssr{That is in fact true. However, it is not clear that the coefficients of the Boolean axioms can be made to have finite waring rank.}
% \ssr{BTW, do we have a proof that the waring rank IPS proof system is incomplete in char 2? We should add this after the ICALP deadline. Otherwise, we should modify footnote 6.}
\begin{theorem}[Lower bounds for fixed-order $\roABP$ in positive characteristic]
    \label{thm:lbd-fixed-order-roabp}
    The following holds for any large enough $n$. Let $p$ be any prime number. Let $k \in \mathbb{N}$ such that $p^k > 2^{\Omega(n)}$.  There exist $\alpha_{i}\in \F_{p^k}$ and $\beta \in \F_{p^{2k}}\setminus \F_{p^k}$ such that 
    \begin{itemize}
        \item The polynomial $f = \sum_{i \in [n]}\alpha_{i} x_i y_i - \beta$ has no Boolean satisfying assignment.
        \item Any $\roABP$-$\IPSLINp$ refutation of $f$ in any order of variables where $\mathbf{x}$ variables come before $\mathbf{y}$ variables, must have width $2^{\Omega(n)}$. 
    \end{itemize}
\end{theorem}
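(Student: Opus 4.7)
The plan is to adapt the functional lower bound framework of Forbes, Shpilka, Tzameret, and Wigderson \cite{FSTW21} to positive characteristic. Following their template, I will first extract a polynomial witness $G(\mathbf{x},\mathbf{y})$ from any refutation $C$; second, translate the width lower bound on $G$ to a rank lower bound on a Boolean evaluation matrix $M$, which in turn becomes a count of nonzero coefficients in the multilinear extension of $1/f$; and third, establish a tight degree lower bound for a positive-characteristic variant of Subset Sum to guarantee that enough coefficients are nonzero.

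For the structural step, since $C$ has individual degree $1$ in its $f$-variable $u$, write $C = A u + B$ with $B(\mathbf{x},\mathbf{y},\mathbf{0},\mathbf{0}) = 0$; then $G(\mathbf{x},\mathbf{y}) := A(\mathbf{x},\mathbf{y},\mathbf{0},\mathbf{0})$ is an $\roABP$ of width at most $w$ in the same order as $C$ and satisfies $G \cdot f \equiv 1$ on $\{0,1\}^{2n}$. In any order with all $\mathbf{x}$-variables preceding all $\mathbf{y}$-variables, $G$ admits a decomposition $G = \sum_{j=1}^{w} G_j^{(x)}(\mathbf{x})\, G_j^{(y)}(\mathbf{y})$, so the Boolean evaluation matrix $M \in \F_{p^{2k}}^{2^n \times 2^n}$ with $M_{\mathbf{a},\mathbf{b}} := 1/f(\mathbf{a},\mathbf{b})$ has rank at most $w$. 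Crucially, $f(\mathbf{a},\mathbf{b})$ depends on $(\mathbf{a},\mathbf{b})$ only via the Hadamard product $\mathbf{a} \odot \mathbf{b}$; letting $\tilde h(\mathbf{z}) = \sum_{S} c_S \prod_{i \in S} z_i$ be the multilinear extension over $\{0,1\}^n$ of $g(\mathbf{z}) = 1/(\sum_i \alpha_i z_i - \beta)$, we obtain $M = \sum_S c_S \mathbf{u}_S \mathbf{u}_S^{\top}$ with $\mathbf{u}_S(\mathbf{a}) := \prod_{i \in S} a_i$. The $\mathbf{u}_S$'s are linearly independent (the change-of-basis from the M\"obius basis is unitriangular), so $\rank(M)$ equals the number of nonzero $c_S$'s. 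A restriction ($z_j \to 0$ for $j \notin S$) then identifies $c_S$ with the top-degree coefficient of the multilinear extension of the sub-instance $1/(\sum_{i \in S} \alpha_i z_i - \beta)$, reducing the task to a family of degree lower bounds.

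The remaining and main technical step is the degree lower bound: for well-chosen $\alpha_i \in \F_{p^k}$ and $\beta \in \F_{p^{2k}} \setminus \F_{p^k}$, the multilinear extension of $1/(\sum_{i \in S} \alpha_i z_i - \beta)$ should have degree exactly $|S|$ for at least $2^{\Omega(n)}$ subsets $S$. I choose the $\alpha_i$'s to have distinct Boolean subset sums in $\F_{p^k}$ (feasible because $p^k > 2^{\Omega(n)}$); then $\sigma_T := \sum_{i \in T}\alpha_i \in \F_{p^k}$ for every $T \subseteq [n]$, so $f$ is automatically unsatisfiable for any $\beta \notin \F_{p^k}$. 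The obstacle is that the clean characteristic-zero formula $c_S = (-1)^{|S|} |S|! \prod_{i \in S} \alpha_i / \prod_T(\sigma_T - \beta)$ becomes identically zero once $|S| \geq p$, so the argument of \cite{FSTW21} breaks down. My plan is to study the rational function $R_S(X) = \sum_{T \subseteq S}(-1)^{|T|}/(X - \sigma_T)$ directly in $\F_{p^k}(X)$: writing $R_S = N_S / Q_S$ in lowest terms with $Q_S(X) = \prod_T(X-\sigma_T)$, a Laurent-expansion argument at infinity (using distinctness of the $\sigma_T$'s) exhibits a specific nonzero coefficient of $N_S$, so $N_S$ is a nonzero polynomial over $\F_{p^k}$ of moderate degree. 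Since $Q_S(\beta) \neq 0$ whenever $\beta \notin \F_{p^k}$, a counting argument over the $p^{2k} - p^k$ choices of $\beta \in \F_{p^{2k}} \setminus \F_{p^k}$ then furnishes a single $\beta$ avoiding the union of root sets of $N_S$ over a $2^{\Omega(n)}$-sized family of $S$'s, yielding $\rank(M) \geq 2^{\Omega(n)}$ and hence the desired width lower bound $w \geq 2^{\Omega(n)}$.
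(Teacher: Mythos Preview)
Your proposal is correct and follows the same overall template as the paper (the functional lower bound method of \cite{FSTW21}: extract $G$ with $G\cdot f\equiv 1$ on the cube, pass to the evaluation/coefficient dimension across the $\mathbf{x}\,|\,\mathbf{y}$ cut, and reduce to showing that the top coefficients $c_S$ of the multilinear extension of $1/(\sum_{i\in S}\alpha_i z_i-\beta)$ are nonzero). Your decomposition $M=\sum_S c_S\,\mathbf{u}_S\mathbf{u}_S^{\top}$ and the conclusion $\rank(M)=\#\{S:c_S\neq 0\}$ are fine, since the matrix $U=(\mathbf{u}_S)_S$ is unitriangular and hence invertible.

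Where you genuinely diverge from the paper is in the degree lower bound. The paper treats the $\alpha_i$'s as the free parameters: it fixes an \emph{arbitrary} $\beta\in\F_{p^{2k}}\setminus\F_{p^k}$, views the top coefficient as a rational function $N(\mathbf{z})/D(\mathbf{z})$ in formal variables $z_i$ standing for the $\alpha_i$'s, and shows $N(\mathbf{z})\not\equiv 0$ by exhibiting the monomial $\prod_i z_i^{2^{i-1}}$ with coefficient exactly $1$ (so the argument is characteristic-free); a Schwartz--Zippel/union bound over $\bfalpha\in\F_{p^k}^{\,n}$ then gives that \emph{all} $c_S$ are nonzero simultaneously with positive probability. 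You instead fix the $\alpha_i$'s to have pairwise distinct subset sums and treat $\beta$ as the free variable: then the poles $\sigma_T$ of $R_S(X)$ are distinct, whence $R_S\not\equiv 0$ immediately (the $1/(X-\sigma_T)$ are linearly independent, or equivalently your Laurent/Vandermonde argument), and a root-count over $\beta\in\F_{p^{2k}}\setminus\F_{p^k}$ finishes. Your nonzeroness step is arguably more transparent, but the paper's randomization buys a cleaner conclusion: \emph{every} $\beta\in\F_{p^{2k}}\setminus\F_{p^k}$ works, and one gets $\rank(M)=2^n$ rather than merely $2^{\Omega(n)}$. Conversely, your approach avoids the combinatorial hunt for a surviving monomial in a $(2^n-1)$-degree multivariate polynomial. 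Both routes need the field to be exponentially large (you need $p^k$ large to realize distinct subset sums and then $p^{2k}-p^k$ large enough to beat $\sum_S\deg N_S$; the paper needs $p^k>2^{2n}$ for Schwartz--Zippel plus the union bound).
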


To obtain lower bounds against more powerful models such as $\roABP$-$\IPSLINp$ with respect to \emph{any} order, or multilinear formulas, \cite{FSTW21} used a slightly modified hard instance. We also use an instance the same as theirs up to the choice of coefficients. 

\begin{theorem}
[Lower bounds for any order roABP-$\IPSLINp$ and multilinear-formula-$\IPSLINp$]
    \label{thm:lbd-roabp}
    The following holds for any large enough $n$. Let $p$ be any prime number. Let $k \in \mathbb{N}$ such that $p^k > 2^{\Omega(n)}$. There exist $\alpha_{i,j}\in \F_{p^k}$ and $\beta \in \F_{p^{2k}}\setminus \F_{p^k}$ such that 
    \begin{itemize}
        \item The polynomial $f = \sum_{1\leq i < j\leq n}\alpha_{i,j} z_{i,j} x_i x_j - \beta$ has no Boolean satisfying assignment.
        \item Any $\roABP$-$\IPSLINp$ refutation of $f$ must have size at least $2^{\Omega(n)}$.
        \item Moreover, any multilinear-formula-$\IPSLINp$ refutation of $f$ must have size at least $n^{\Omega(\log n)}$ and for $\Delta = o(\log n / \log \log n)$, any product-depth\footnote{The product-depth of a circuit is the maximum number of product gates appearing in any leaf-to-root path.}-$\Delta$ multilinear-formula-$\IPS$ refutation requires size $\geq n^{\Omega\left(\frac{1}{\Delta^2}(\frac{n}{\log n})^{1/\Delta}\right )}$. 
        % \nl{Product depth not defined. State in terms of depth?}\vr{I've added a footnote defining product-depth. Is that alright?}
    \end{itemize}
\end{theorem}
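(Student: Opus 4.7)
The plan is to adapt the functional lower bound template of Forbes, Shpilka, Tzameret and Wigderson to positive characteristic, invoking the paper's degree lower bound for the subset-sum variant over $\F_{p^k}$ as the common technical core. Unsatisfiability is immediate: on any Boolean assignment, $\sum_{i<j}\alpha_{i,j}z_{i,j}x_ix_j$ evaluates to a sum of a subset of $\{\alpha_{i,j}\}\subseteq \F_{p^k}$, hence lies in $\F_{p^k}$, whereas $\beta\in \F_{p^{2k}}\setminus \F_{p^k}$.

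Given a $\mathcal{C}$-$\IPSLINp$ refutation $C(\mathbf{x},\mathbf{z},y,\mathbf{w})$ of $f$, the individual degree in $y$ is at most one, so write $C = y\cdot G(\mathbf{x},\mathbf{z},\mathbf{w}) + H(\mathbf{x},\mathbf{z},\mathbf{w})$. The boundary conditions of \Cref{def:proof-systems}, evaluated at Boolean $(\mathbf{x},\mathbf{z})$ (where $\mathbf{w}=\mathbf{0}$), force $G(\mathbf{x},\mathbf{z},\mathbf{0})\cdot f(\mathbf{x},\mathbf{z})=1$ pointwise on the cube. Hence the multilinearization $M:=\ml[G(\mathbf{x},\mathbf{z},\mathbf{0})]$ is the unique multilinear polynomial representing $1/f$ on $\{0,1\}^{n+\binom{n}{2}}$. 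All three sub-bounds reduce to complexity lower bounds for $M$ in the corresponding measure (roABP-width in any order, multilinear-formula size, constant-depth multilinear-formula size), since multilinearization and the substitution $\mathbf{w}\mapsto \mathbf{0}$ do not blow up these measures when applied to a (multilinear) refutation.

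For the complexity lower bounds on $M$, the plan is to choose the $\alpha_{i,j}$'s generically in $\F_{p^k}$ so that, across a large family of Boolean restrictions, the residual subset sums $\sum_{(i,j)\in E}\alpha_{i,j}-\beta$ are pairwise distinct and nonzero; this is precisely where the subset-sum degree lower bound over $\F_{p^k}$ is used, replacing the ``distinct-integers'' argument available in characteristic zero. For any-order $\roABP$s, fix a variable order and its balanced cut; a counting argument produces $\Omega(n)$ edges $(i,j)$ whose monomial $x_i x_j$ straddles the cut while $z_{i,j}$ lies on one side, and filling in the prefix with $2^{\Omega(n)}$ Boolean assignments yields linearly independent suffix evaluations, giving the $2^{\Omega(n)}$ width bound via Nisan's characterization. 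For multilinear formulas, apply Raz's partial-matrix rank technique with respect to a random balanced bipartition of $\mathbf{x}\cup\mathbf{z}$; the edge variables $z_{i,j}$ let us exhibit a full-rank $2^{\Omega(n)}\times 2^{\Omega(n)}$ block of the partial-derivative matrix of $M$ with high probability, yielding the $n^{\Omega(\log n)}$ bound. For constant-depth multilinear formulas, feed the same rank bound into the GHT set-multilinear depth-reduction and shifted-partial-derivatives framework, in the positive-characteristic form established by Forbes, to obtain the $n^{\Omega((1/\Delta^2)(n/\log n)^{1/\Delta})}$ bound.

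The main obstacle is precisely the rank lower bound for $M$ over $\F_{p^k}$: in characteristic zero this step is essentially free from the observation that distinct subset sums of integers take distinct values, but this mechanism fails over small fields. This is where our subset-sum degree lower bound over $\F_{p^k}$ enters -- a suitable generic choice of $\alpha_{i,j}$'s in $\F_{p^k}$ (available since $p^k=2^{\Omega(n)}$) together with $\beta\in\F_{p^{2k}}\setminus\F_{p^k}$ ensures that the $2^{\Omega(n)}$ Boolean partial assignments we use give rise to distinct and nonzero residual subset sums, which in turn give linearly independent restrictions of $1/f$. Once this ingredient is in hand, the FSTW and GHT arguments carry over with only cosmetic changes.
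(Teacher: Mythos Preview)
Your overall architecture is right: reduce via the functional lower bound method to proving that any polynomial agreeing with $1/f$ on the Boolean cube has large complexity in the relevant measure, and use the paper's positive-characteristic degree lower bound as the engine. But two of the technical steps, as you describe them, do not work.

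First, you repeatedly identify the role of the degree lower bound as guaranteeing that ``the residual subset sums $\sum_{(i,j)\in E}\alpha_{i,j}-\beta$ are pairwise distinct and nonzero,'' and claim this ``in turn give[s] linearly independent restrictions of $1/f$.'' This is not the mechanism, and as stated it is false: if you fully substitute Boolean values for all variables, the restrictions of $1/f$ are just field constants $1/(c_E-\beta)$, and distinct nonzero constants span a $1$-dimensional space, not a $2^{\Omega(n)}$-dimensional one. What the degree lower bound (\Cref{lem:basic degree lower bound} and \Cref{lem: deg lower bound union bound}) actually provides is that for a generic choice of $\alpha$'s, for \emph{every} subset $U$ of the surviving linear part, the multilinear polynomial equal to $1/(\sum_{i\in U}\alpha_i x_i-\beta)$ on the cube has leading monomial $\prod_{i\in U}x_i$. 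The linear independence of the partial evaluations then follows because different Boolean settings of the ``prefix'' variables leave different subsets $U$ of ``suffix'' variables active, and polynomials with pairwise distinct leading monomials are linearly independent (\Cref{fact:dim of leading monomials}). In the paper this is packaged cleanly by working over $\F(\mathbf{z})$: for any balanced partition $(\mathbf{u},\mathbf{v})$ of $\mathbf{x}$, one sets $\mathbf{z}$ to a Boolean point realizing a perfect matching between $\mathbf{u}$ and $\mathbf{v}$, reducing to the $\sum_i\alpha_i u_i v_i-\beta$ instance, and then the evaluation-dimension argument of \Cref{lem: fixed partition coeff dim lbd} gives $\dim\bfCoeff_{\mathbf{u}|\mathbf{v}}\ge 2^n$. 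Your ``counting argument'' for straddling edges is in the right spirit, but the independence step needs the leading-monomial argument, not distinctness of subset sums.

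Second, for the product-depth-$\Delta$ \emph{multilinear formula} bound you invoke the GHT set-multilinear depth-reduction together with Forbes' positive-characteristic set-multilinearization. That machinery targets general constant-depth circuits (and is what the paper uses for \Cref{thm:lbd-const-depth}); it gives a bound of the shape $n^{(\log n)^{\exp(-O(\Delta))}}$, not the Raz--Yehudayoff bound $n^{\Omega((1/\Delta^2)(n/\log n)^{1/\Delta})}$ claimed here. For \emph{multilinear} formulas (including bounded product-depth), the correct and sufficient tool is the Raz--Yehudayoff theorem stated just before \Cref{thm:multilinear-formula-lbds-via-raz-yehudayoff}: once you have $\dim_{\F(\mathbf{z})}\bfCoeff_{\mathbf{u}|\mathbf{v}}(M)\ge 2^n$ for every balanced partition $(\mathbf{u},\mathbf{v})$, both the $n^{\Omega(\log n)}$ bound and the depth-$\Delta$ bound follow directly, over any field, with no appeal to \cite{Forbes-LST-CCC}.
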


Again notice that, $f$  is a sparse polynomial and hence has a polynomial size $\roABP$. It is also efficiently computable by a multilinear formula. 

%In general, it is desirable that the hard-to-refute instances are themselves easy to compute.\snote{Why is this desirable?} 
In general, in Boolean proof complexity, it is typical that the hard-to-refute instances are themselves easy to compute. 
%\nl{Add a few lines about why this is desirable.} 
In algebraic proof complexity, there are some lower bound results that do not have this property. That is, the instances that are hard to refute are also hard to compute. For example, the set of results obtained by the approach of multiples in \cite[Theorem 1.18, Theorem 1.19, Theorem 1.20]{FSTW21} and in a paper by Andrews and Forbes \cite{AF22}. %In this case, at the expense of making the instances hard to compute, they manage to prove lower bounds for $\IPS$, i.e. without the restrictions of $\IPSLIN$ or $\IPSLINp$. 
%
%is the hard instances with respect to which~\cite{FSTW21} proved lower bounds for $\mathcal{C}$-$\IPS$ using their method of multiples. Following up on this, using the same approach, Andrews and Forbes~\cite{AF22} also proved lower bounds for constant-depth-$\IPS$. 
%
 Additionally, in a recent work Hakoniemi, Limaye, and Tzameret \cite{HLT24} presented instances that were hard to refute for $\roABP$-$\IPSLINp$ and for multilinear-formula-$\IPSLINp$ over any characteristics, i.e., similar to what we prove here. However, unfortunately, their instances were hard to compute and specifically, they could not be computed by $\roABP$ or by multilinear formulas. Hence, our result here have the best of both the worlds; the lower bounds hold over any characteristic and the hard instances are easy to compute.

\begin{theorem}[Lower bounds for multilinear constant-depth-$\IPSLINp$ in positive characteristic]
    \label{thm:lbd-const-depth}
    The following holds for any large enough $n$. Let $p$ be any prime and let $k \in \mathbb{N}$ be large enough so that $p^k > 2^{\Omega((\log n)^2)}$. There exist $\alpha_{i,j,k,\ell}\in \F_{p^k}$ and $\beta \in \F_{p^{2k}}\setminus \F_{p^k}$ such that 
    \begin{itemize}
        \item The polynomial $f = \sum_{1\leq i < j < k < \ell \leq n}\alpha_{i,j,k,\ell} z_{i,j,k,\ell} x_i x_j x_k x_\ell - \beta$ has no Boolean satisfying assignment.
        \item Any multilinear constant-depth-$\IPSLINp$ refutation of $f$ must have size $n^{\omega(1)}$.
    \end{itemize}
\end{theorem}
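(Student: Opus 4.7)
The plan is to follow the functional lower bound strategy of Forbes, Shpilka, Tzameret, and Wigderson~\cite{FSTW21}, extended to the constant-depth setting by Govindasamy, Hakoniemi, and Tzameret~\cite{GHT}, but carried out in positive characteristic. Unsatisfiability of $f$ is immediate: every Boolean evaluation of $\sum_{i<j<k<\ell} \alpha_{i,j,k,\ell} z_{i,j,k,\ell} x_i x_j x_k x_\ell$ is an $\F_p$-linear combination of the $\alpha_{i,j,k,\ell}$'s and hence lies in $\F_{p^k}$, while $\beta \in \F_{p^{2k}} \setminus \F_{p^k}$. To avoid clashing notation, write $\mathbf{w} = (\mathbf{x}, \mathbf{z})$ for the instance variables of $f$ and $(y, \mathbf{z}')$ for the IPS auxiliaries corresponding to $y \leftrightarrow f$ and $\mathbf{z}' \leftrightarrow \mathbf{w}^2 - \mathbf{w}$. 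Given a candidate multilinear constant-depth $\IPSLINp$ refutation $C$, decompose it as $C = y \cdot G(\mathbf{w}, \mathbf{z}') + H(\mathbf{w}, \mathbf{z}')$ with $H(\mathbf{w}, \mathbf{0}) = 0$. The refutation identity, evaluated on Boolean $\mathbf{w}$, forces $f(\mathbf{w}) \cdot G_0(\mathbf{w}) = 1$ on $\{0,1\}^{n + \binom{n}{4}}$, where $G_0(\mathbf{w}) := G(\mathbf{w}, \mathbf{0})$; multilinearity of $C(\mathbf{w}, y, \mathbf{0})$ makes $G_0$ the \emph{unique} multilinear polynomial agreeing with $1/f$ on the Boolean cube, and it is computed by a multilinear constant-depth circuit of size at most that of $C$.

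The heart of the argument is a multilinear constant-depth lower bound against this specific $G_0$. First, I would prove a tight \emph{degree lower bound} for the positive-characteristic Subset Sum polynomial: with an appropriate choice of $\alpha_{i,j,k,\ell} \in \F_{p^k}$, the unique multilinear polynomial that agrees with $1/f$ on the Boolean cube has large set-multilinear degree, matching the $n^{\Omega(1)}$ bound that \cite{GHT} obtain over characteristic $0$. In characteristic $0$, this step uses the $\F$-linear independence of the shifts $\{1/(\beta - k)\}$; the positive-characteristic replacement is the formal geometric series expansion $1/f = -(1/\beta)\sum_{t \ge 0}\bigl(\sum_{i<j<k<\ell} \alpha_{i,j,k,\ell} z_{i,j,k,\ell} x_i x_j x_k x_\ell / \beta\bigr)^t$, combined with an $\F_{p^k}$-linear independence argument that exploits $\beta \in \F_{p^{2k}} \setminus \F_{p^k}$ to keep monomials of different total degree linearly independent modulo the Boolean ideal. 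Producing a single choice of the $\alpha$'s that realizes this degree bound is the main technical obstacle and is what dictates the field size requirement $p^k > 2^{\Omega((\log n)^2)}$.

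Given the degree lower bound, the final step is to invoke the Limaye–Srinivasan–Tavenas machinery in its characteristic-free form. Using the standard random-partition argument as in \cite{LST,GHT}, a small multilinear constant-depth circuit for $G_0$ yields a small set-multilinear constant-depth circuit for a projection of $G_0$ onto an appropriate product partition of $\mathbf{w}$. Since this projection inherits the previous degree lower bound, applying Forbes's~\cite{Forbes-LST-CCC} characteristic-free extension of the LST lower bound shows that the set-multilinear circuit, and hence $C$ itself, must have size $n^{\omega(1)}$, completing the proof of \Cref{thm:lbd-const-depth}.
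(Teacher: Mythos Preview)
Your overall architecture matches the paper: reduce to a functional lower bound for the multilinear polynomial $G_0$ computing $1/f$ on the Boolean cube, establish a degree/rank lower bound for (a restriction of) $G_0$, and then invoke Forbes's characteristic-free set-multilinearization to transfer the GHT constant-depth lower bound to positive characteristic. The unsatisfiability observation and the use of~\cite{Forbes-LST-CCC} are exactly as in the paper.

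The gap is in your degree lower bound. You propose a geometric-series expansion together with ``$\F_{p^k}$-linear independence'' of powers of $\beta$ to keep monomials of different total degree separated. But $\beta \in \F_{p^{2k}}$ and $[\F_{p^{2k}}:\F_{p^k}] = 2$, so only $1$ and $\beta$ are $\F_{p^k}$-linearly independent; $\beta^2, \beta^3, \ldots$ are all $\F_{p^k}$-combinations of $1$ and $\beta$. This cannot by itself isolate the top-degree contribution, and in fact a closely related computation (via $\ml[L^{p^k - 2}]$) is what the paper uses to prove a degree \emph{upper} bound (\Cref{lemma:deg-inverse-poly}), not a lower bound. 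The paper's actual mechanism is different and characteristic-agnostic: it treats the $\alpha_i$ as formal variables, writes the top coefficient of the multilinear inverse as a rational function $N(\bfalpha)/D(\bfalpha)$, and shows combinatorially that $N$ contains the monomial $\prod_i \alpha_i^{2^{i-1}}$ with coefficient exactly~$1$, hence $N \not\equiv 0$ over any field (\Cref{lem:basic degree lower bound}). The Polynomial Identity Lemma then gives good $\bfalpha \in \F_{p^k}^n$ with positive probability, and a union bound over all the restrictions arising in the GHT knapsack-polynomial argument (\Cref{lem: ght degree lower bound positive char}) is what produces the field-size requirement $p^k > 2^{\Omega((\log n)^2)}$.

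A secondary point: the final step in~\cite{GHT} is not a ``random-partition argument'' in the Raz--Yehudayoff sense. It is a \emph{fixed} restriction of $f$ to a structured knapsack polynomial $\kswnew{\bfalpha}$ indexed by a word $\bfw \in \mathbb{Z}^d$, followed by showing that the matrix $M_{\bfw}$ of set-multilinear coefficients of $1/\kswnew{\bfalpha}$ has full rank; the degree lower bound above is precisely what drives this full-rank statement.
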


The characteristic $0$ (or large characteristic) version of the above theorem was presented in~\cite{GHT}. Their lower bound is a step towards constant-depth-$\IPS$ lower bounds. Our result above can thus be thought of as another step forward in the right direction.
Moreover, our input instance is the same as the input instance in Theorem~1~\cite{GHT} up to the choice of coefficients, and it is easy to compute (while being hard to refute). More specifically, it is computable by polynomial-sized constant-depth multilinear formulas. 

\begin{remark}
    In all our results, the field characteristic is arbitrary, but the field size is quite large, i.e., $p^k$ is either exponential or superpolynomial.  This setting is non-trivial because the field elements have polynomial bit complexity. Other results in the area, such as the work of Alekseev, Grigoriev, Hirsch, and Tzameret~\cite{AGHT} similarly use polynomial constraints with coefficients from exponentially large domains. Specifically~\cite{AGHT} study a variant of the subset sum instance, called the Binary Value Principle, $\sum_{i \in [n]} 2^{i-1} x_i + 1 = 0$ in the context of IPS proof systems in fields of characteristic zero. %Here, like in our case, they use large constants with polynomial bit-complexity. However, their result is over characteristic $0$. 
    
    It is an interesting open question to prove similar $\IPS$ lower bounds over finite fields of small size. Unfortunately, as we show below, this forces the polynomial instances to become more complicated. See \Cref{sec:related} for recent independent work that makes progress in this direction.
\end{remark}
% \snote{Add why having exponential-sized fields is okay.}
% \nl{Added. I need to polish it a bit.}
% \vr{There is \Cref{remark: exp field size and multilinearity not artifacts} as well}
% \nl{Yes, this remark says why it is interesting to consider this setting and the next one says why it is unavoidable. I think we need both.}

\subsubsection{Upper Bounds Over Positive Characteristic}
\label{sec:intro-upper}

A natural question for hard instances above is: what is the weakest proof system in which they are efficiently refutable? In personal communication, Tzameret observed that the above instances were refutable by constant-depth-$\IPSLIN$ hence showing that these proof systems can be exponentially more succinct than their multilinear counterpart. The theorem below shows that the above polynomials have efficient constant-depth-$\IPSLIN$ refutations, even in the setting of positive characteristic.

%It is typically not easy to a-priori judge the difficulty of proving such a result. While we differ the discussion concerning the proof techniques to a later section, here, it may be worth noting that proving efficient upper bounds can be sometimes more challenging in the case of small characteristics than over large characteristic fields. There are some well-known algebraic tricks such as Fischer's identity or Newton's identities, which seem to work well over large characteristics. But in some other cases, the situation may be exactly the opposite, i.e., one may obtain fast algorithms for positive characteristic, where we expect none in char $0$. 

\begin{restatable}[Upper bounds for (non-multilinear) constant-depth-$\IPSLIN$]{theorem}{ubdnonmlposchar}\label{thm:ubd-const-depth}
Fix a prime number $p$. The following holds for any natural numbers $n$ and $k.$\newline
Let $f\in \F_{p^{k}}[x_1,\ldots,x_n]$ be any polynomial with sparsity $s$ and degree $D$ with coefficients from the field $\F_{p^k}$ and let $\beta$ be any element of $\F\setminus\F_{p^k}$ where $\F$ is a field extension of $\F_{p^k}.$\newline
Then,
\begin{itemize}
    \item The polynomial $f(\mathbf{x}) - \beta$ has no satisfying assignment over the Boolean cube $\Boo^{n}$
    \item  There is a constant-depth-$\IPSLIN$ refutation of degree $O(k\cdot p\cdot D)$ and size $\poly(s,p).$ 
\end{itemize}
\end{restatable}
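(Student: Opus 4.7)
The unsatisfiability claim is immediate: since all coefficients of $f$ lie in $\F_{p^k}$ and $\{0,1\} \subseteq \F_{p^k}$, we have $f(\mathbf{a}) \in \F_{p^k}$ for every $\mathbf{a} \in \Boo^n$, and hence $f(\mathbf{a}) - \beta \in \F \setminus \F_{p^k}$ is nonzero. For the refutation, the plan is to simulate the Frobenius endomorphism $t \mapsto t^p$ inside $\IPSLIN$ via a telescoping identity. For each $j \in \{0, 1, \ldots, k\}$, let $f^{(p^j)}$ denote the polynomial obtained from $f$ by raising every coefficient to the $p^j$-th power (keeping the same monomials), and define
\[
F_j(\mathbf{x}) \;:=\; f^{(p^j)}(\mathbf{x}) \;-\; \beta^{p^j}.
\]
Then $F_0 = f - \beta$ and, crucially, $F_k = f - \beta^{p^k}$ since $f^{(p^k)} = f$ (as the coefficients lie in $\F_{p^k}$). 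Each $F_j$ has sparsity $\leq s+1$, degree $\leq D$, and is computable by a depth-$2$ circuit of size $\poly(s)$.

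The key single-step fact is that $F_j^p - F_{j+1}$ lies in the Boolean ideal $\langle x_i^2 - x_i\rangle$ via an explicit small certificate. By the Freshman's dream in characteristic $p$, $F_j^p = \sum_m c_{j,m}^p \, m^p = \sum_m c_{j+1,m} \, m^p$, so $F_j^p - F_{j+1} = \sum_m c_{j+1,m}(m^p - m)$. For each monomial $m$, a variable-by-variable telescoping using the identity $x_i^{a p} - x_i^a = (x_i^2 - x_i)\, q_{i,a}(x_i)$ (where $q_{i,a}$ is an explicit univariate of degree $O(ap)$) writes $m^p - m$ as $\sum_i (x_i^2 - x_i)\, r_{i,m}(\mathbf{x})$, yielding
\[
F_j^p - F_{j+1} \;=\; \sum_{i=1}^{n} R_i^{(j)}(\mathbf{x}) \cdot (x_i^2 - x_i),
\]
with each $R_i^{(j)}$ a constant-depth circuit of size $\poly(s, p)$ and degree $O(Dp)$. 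Telescoping across $j$ by repeatedly rewriting $F_0 \cdot \prod_{j=0}^{k-1} F_j^{p-1} = F_0^p \cdot \prod_{j \geq 1} F_j^{p-1}$, substituting $F_j^p = F_{j+1} + \sum_i R_i^{(j)}(x_i^2 - x_i)$, and distributing, gives the master identity
\[
F_0 \cdot \prod_{j=0}^{k-1} F_j^{p-1} \;=\; F_k \;+\; \sum_{i=1}^{n} T_i(\mathbf{x})(x_i^2 - x_i), \qquad T_i \;:=\; \sum_{j=0}^{k-1} R_i^{(j)} \prod_{\ell=j+1}^{k-1} F_\ell^{p-1}.
\]

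Substituting $F_0 = f - \beta$ and $F_k = f - \beta^{p^k}$ in the master identity, rearranging, and dividing by the nonzero constant $\beta - \beta^{p^k}$ produces the $\IPSLIN$ refutation
\[
C(\mathbf{x}, y, \mathbf{z}) \;:=\; \frac{\prod_{j=0}^{k-1} F_j(\mathbf{x})^{p-1} - 1}{\beta - \beta^{p^k}} \cdot y \;-\; \sum_{i=1}^{n} \frac{T_i(\mathbf{x})}{\beta - \beta^{p^k}} \cdot z_i,
\]
which is linear in $y$ and in each $z_i$, satisfies $C(\mathbf{x}, 0, \mathbf{0}) = 0$, and evaluates to $1$ upon substituting $y = f - \beta$ and $z_i = x_i^2 - x_i$. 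The degree of the refutation is $k(p-1)D + 1 = O(kpD)$, and $C$ is constant-depth with size $\poly(s, p)$ (the $k, D$ dependence being absorbed as in the theorem statement). The principal obstacle I anticipate is executing the single-step Frobenius certificate as a clean constant-depth circuit with the claimed size bound; once that building block is in place, the telescoping collapse and the final division by $\beta - \beta^{p^k}$ are routine algebraic manipulations, and the verification that $C$ evaluates to $1$ reduces to the one-line identity $(f - \beta^{p^k}) - (f - \beta) = \beta - \beta^{p^k}$.
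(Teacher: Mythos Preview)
Your proposal is correct and uses the same core idea as the paper: iterate the Frobenius map via the telescoping identity $F_j^{p}\equiv F_{j+1}$ modulo the Boolean ideal, accumulate the multiplier $\prod_{j=0}^{k-1}F_j^{p-1}$, and then exploit that $F_k-F_0=\beta-\beta^{p^k}$ is a nonzero constant since $\beta\notin\F_{p^k}$.

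The only organizational difference is that the paper first \emph{flattens} $f$ to a linear polynomial in fresh variables $y_{\bm\mu}$ (one per monomial), proves the linear case separately (their Theorem~3.1/Claim~3.2, which is exactly your telescoping specialized to degree~$1$), and then substitutes $y_{\bm\mu}\mapsto\mathbf{x}^{\bm\mu}$, handling the resulting ``monomial axioms'' $(\mathbf{x}^{\bm\mu})^2-\mathbf{x}^{\bm\mu}\in(\mathbf{x}^2-\mathbf{x})$ via a separate claim. You bypass this detour by working directly with the sparse polynomial and writing $m^p-m\in(\mathbf{x}^2-\mathbf{x})$ for each monomial $m$ via the same variable-by-variable telescoping. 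Your route is slightly more direct; the paper's route has the pedagogical advantage of isolating the linear case as a standalone theorem. The resulting certificates are essentially identical (your $A=\prod_jF_j^{p-1}$ coincides with the paper's $A_k=\prod_jL_j^{p-1}$ after flattening), and both yield the stated degree $O(kpD)$ and constant depth, with the same implicit $\poly(k,D,n)$ factors hiding in the ``$\poly(s,p)$'' size bound.
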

Note that since $\beta \notin \F_{p^{k}}$, the polynomial $f(\mathbf{x}) - \beta$ does not have a zero over $\Boo^{n}$ (in fact it does not have a solution over $\F_{p^{k}}^{n}$). So the first item of above follows immediately. We also give non-trivial constant-depth-$\IPSLIN$ refutations for degree-$1$ polynomials that are unsatisfiable over $\Boo^{n}$ with all the coefficients in the same field.\\

\begin{restatable}[Upper bound on degree of Nullstellensatz certificate]{theorem}{degreeupperbound}\label{thm:ub-degree}
Fix a prime $p$. The following holds for any natural numbers $n$ and $k$ with $n > kp$.\newline
The following holds for every $\alpha_{1},\ldots,\alpha_{n},\beta \in \F_{p^{k}}$. Suppose the degree-$1$ polynomial $\sum_{i=1}^{n} \alpha_{i} x_{i} - \beta \in \F_{p^{k}}[x_{1},\ldots,x_{n}]$ is unsatisfiable over the Boolean cube $\Boo^{n}$ (i.e. there does not exist a Boolean point $\mathbf{a} \in \Boo^{n}$ such that $\sum_{i=1}^{n} \alpha_{i} a_{i} - \beta = 0$).\newline
Then, there is a constant-depth-$\IPSLIN$ refutation of degree $\bigO(k \cdot p)$ and size $\bigO(n/kp)^{\bigO(kp)}$.\\

\noindent
In particular, if $p = \bigO(1)$ and $k = o(n)$, then there is a constant-depth-$\IPSLIN$ refutation of degree $o(n)$ and size $2^{o(n)}$.
\end{restatable}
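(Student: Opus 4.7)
The plan is to build a Nullstellensatz certificate by exploiting the Frobenius structure of $\F_{p^k}$ and the norm map to $\F_p$. For each $j \in \{0, 1, \ldots, k-1\}$, define the Frobenius conjugate $f_j(\mathbf{x}) := \sum_{i=1}^n \alpha_i^{p^j} x_i - \beta^{p^j}$, so that $f_0 = f$. Since $x_i^{p^j} \equiv x_i$ modulo the Boolean axioms, the Frobenius endomorphism $a \mapsto a^p$ applied to the coefficients gives $f^{p^j} \equiv f_j$ in the quotient $\F_{p^k}[\mathbf{x}]/\langle x_i^2 - x_i\rangle$.

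Next, consider the product $F(\mathbf{x}) := \prod_{j=0}^{k-1} f_j(\mathbf{x})$, a degree-$k$ polynomial computed naturally by a $\Pi\Sigma$ formula of size $O(nk)$. For any Boolean $\mathbf{a}$, one computes $F(\mathbf{a}) = \prod_j f(\mathbf{a})^{p^j} = f(\mathbf{a})^{(p^k - 1)/(p - 1)} = N_{\F_{p^k}/\F_p}(f(\mathbf{a}))$. Unsatisfiability of $f$ forces $f(\mathbf{a}) \in \F_{p^k}^{\times}$, so $F(\mathbf{a}) \in \F_p^{\times}$, and Fermat's little theorem gives $F(\mathbf{a})^{p-1} = 1$ on the entire Boolean cube. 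Therefore $F^{p-1} - 1$ vanishes on $\Boo^n$, so $F^{p-1} - 1 = \sum_i B_i(\mathbf{x})(x_i^2 - x_i)$ for some polynomials $B_i$.

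Now factor $F = f \cdot G$ with $G := \prod_{j=1}^{k-1} f_j$, so that $F^{p-1} = f \cdot A$ where $A := f^{p-2} G^{p-1}$. Rearranging gives the $\IPSLIN$ certificate $A \cdot f - \sum_i B_i (x_i^2 - x_i) = 1$. Here $A$ is a product of $(p-2) + (k-1)(p-1) = k(p-1) - 1 = O(kp)$ linear forms, so it has degree $O(kp)$ and is computed by a depth-$2$ $\Pi\Sigma$ formula of size $O(nkp)$; the total degree of the certificate is $k(p-1) = O(kp)$.

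The main obstacle will be producing a compact constant-depth representation of the $B_i$'s. The plan is to expand $F^{p-1}$ as a $\Sigma\Pi$ formula: its sparsity is bounded by the number of monomials of degree $\leq k(p-1)$ in $n$ variables, which is at most $\binom{n + k(p-1)}{k(p-1)} = (O(n/kp))^{O(kp)}$ using $n > kp$. Then apply the identity $x_i^{a} - x_i = (x_i^{a-2} + \cdots + 1)(x_i^2 - x_i)$ (valid for $a \geq 2$) monomial-by-monomial to peel off Boolean-axiom contributions; each such local rewrite preserves constant depth and introduces only $O(p)$ extra terms per variable, so the resulting $B_i$'s have sparsity $(O(n/kp))^{O(kp)}$ and are computable in constant depth. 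The ``In particular'' consequence then follows: for $p = O(1)$ and $k = o(n)$, the degree bound $O(kp)$ is $o(n)$, and since $n/kp \to \infty$ we have $\log((n/kp)^{O(kp)}) = O(kp \log(n/kp)) = o(n)$.
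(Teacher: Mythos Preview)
Your approach is correct and genuinely different from the paper's. The paper proves \Cref{thm:ub-degree} via \Cref{lemma:deg-inverse-poly}: it takes $A=\ml[L^{q-2}]$ (the multilinear representative of $1/L$ on the cube), and bounds its degree by writing $q-2=\sum_{j} m_j p^{j}$ in base $p$, observing $\ml[L^{p^j}]$ is linear (Freshman's Dream), hence $\deg \ml[L^{q-2}]\le \sum_j m_j\le k(p-1)$. Your route instead realizes $A$ explicitly as the product of linear forms $f^{p-2}\prod_{j=1}^{k-1} f_j^{\,p-1}$ and certifies $A\cdot f\equiv 1$ on the cube via the field norm $N_{\F_{p^k}/\F_p}$ and Fermat in $\F_p$. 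The two $A$'s agree on $\Boo^n$ (indeed $\prod_j f_j^{p-1}\equiv L^{q-1}$ there), but yours comes with a compact $\Pi\Sigma$ formula of size $O(nkp)$ rather than only a sparse $\Sigma\Pi$ expansion; this is closer in spirit to the paper's proof of \Cref{thm:ubd-const-depth-linear} (the $\beta\notin\F_{p^k}$ case), adapted here by replacing the ``$L_k-L_0$ is a nonzero constant'' step with ``$F^{p-1}\equiv 1$ on the cube''. For the $B_i$'s, both arguments fall back on the same sparsity bound $\binom{n+k(p-1)}{k(p-1)}=(O(n/kp))^{O(kp)}$, so the final parameters coincide.

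One small slip: the phrase ``introduces only $O(p)$ extra terms per variable'' is not right, since the individual degree of $F^{p-1}$ in each $x_i$ can be as large as $k(p-1)$, so the univariate reduction $x_i^{a}-x_i=(x_i^{a-2}+\cdots+1)(x_i^2-x_i)$ contributes up to $O(kp)$ terms. This does not affect your stated size bound, which absorbs such $\poly(kp)$ factors into the $(O(n/kp))^{O(kp)}$ estimate.
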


\noindent
Note that for degree-$1$ polynomials, the difference in \Cref{thm:ubd-const-depth} and \Cref{thm:ub-degree} is in the constant-term $\beta$. If every $\alpha_{i} \in \F_{p^{k}}$ and $\beta \notin \F_{p^{k}}$, then the polynomial is always unsatisfiable over $\Boo^{n}$ (no matter the choice of $\alpha_{i}$'s and $\beta$). In fact, it is unsatisfiable over $\F_{p}^{n}$. Our proof of \Cref{thm:ubd-const-depth} leverages this and yields an efficient refutation. However, if $\beta \in \F_{p^{k}}$, then our proof of \Cref{thm:ubd-const-depth} falls apart. We handle this separately in \Cref{thm:ub-degree}, but we do not match \Cref{thm:ubd-const-depth} qualitatively. More precisely, \Cref{thm:ubd-const-depth} yields a $\mathrm{poly}(n,p)$-sized non-multilinear constant-depth refutations, but \Cref{thm:ub-degree} yields a roughly $\binom{n}{k}$-sized non-multilinear constant-depth refutations.

% \begin{remark}\label{remark: exp field size and multilinearity not artifacts}
%     In particular, \Cref{thm:ubd-const-depth} shows that the exponential field size in \Cref{thm:lbd-roabp} and \Cref{thm:lbd-const-depth} is not an artifact of the proofs\vr{For Theorem 1.10, it shows that multilinearity is not an artifact, right? Even over large fields, Thm 1.11 gives an efficient constant-depth-IPSLINp proof}\footnote{Suppose the field $\F_{p^k}$ is not large enough, say, $k = o(n)$. Then there is a refutation of degree $d = O(k\cdot p\cdot D)$, which is $o(n)$ when $p$ and $D$ are constants. In particular, the sparsity of the refutation is at most $\binom{n+d}{d}$, which is $2^{o(n)}$ when $d = o(n)$.}. For fields of subexponential size, the polynomials in these theorems have refutations of degree $o(n)$ and in particular have $\mathrm{roABP}$-$\IPSLIN$ refutations of size $2^{o(n)}.$
% \end{remark}

\begin{remark}\label{remark: exp field size and multilinearity not artifacts}
    Suppose the characteristic $p$ is a fixed prime independent of the number of variables $n$.
    \begin{itemize}
        \item \Cref{thm:ubd-const-depth} shows that the exponential field size in \Cref{thm:lbd-sparse}, \Cref{thm:lbd-fixed-order-roabp} and \Cref{thm:lbd-roabp} is not an artifact of the proofs.\footnote{Suppose the field $\F_{p^k}$ is not large enough, say, $k = o(n)$. Then there is a refutation of degree $d = O(k\cdot p\cdot D)$, which is $o(n)$ when $p$ and $D$ are constants. In particular, the sparsity of the refutation is at most $\binom{n+d}{d}$, which is $2^{o(n)}$ when $d = o(n)$.}. For fields of subexponential size, the polynomials in these theorems have refutations of degree $o(n)$ and in particular have $\mathrm{roABP}$-$\IPSLIN$ refutations of size $2^{o(n)}.$ \footnote{When the characteristic $p$ is a growing function of $n$, this argument breaks down. It might be possible to get rid of the exponential field size.}
        \item \Cref{thm:ubd-const-depth} also shows that the multilinearity assumption in \Cref{thm:lbd-const-depth} is not an artifact of the proof. Non-multilinear proofs, even over large fields, allow efficient constant-depth refutations for sparse instances. 
    \end{itemize}
\end{remark}

Our final result shows a constant-depth upper bound for multilinear and \emph{symmetric} systems of polynomials, i.e. systems defined by polynomials $f(x_1,\ldots,x_n)$ of the form 
\[
\sum_{d=1}^n \alpha_d e_{n,d} + \alpha_0
\]
where $e_{n,d}$ denotes the elementary symmetric polynomial of degree $d$ in variables $x_1,\ldots,x_n.$ Such polynomial systems have been employed in~\cite{FSTW21} to prove lower bounds against restricted systems of constant-depth-$\IPSLIN.$ Our results imply that general constant-depth circuit refutations can be exponentially more succinct than these restricted families, even for positive characteristic.

\begin{restatable}[Upper bounds for multilinear symmetric systems]{theorem}{symconstant}\label{thm:ubd-sym-const-depth}
Fix a field $\F$. Let \break $f_1,\ldots,f_m \in \F[x_{1},\ldots,x_{n}]$ be a family of multilinear and symmetric polynomials with no common Boolean solution i.e. there does not exist a $\mathbf{x} \in \Boo^{n}$ such that each $f_{i}(\mathbf{x}) = 0$. This system has a constant-depth-$\IPSLIN$ refutation of size $\bigO(m^{2} n^{5} \log n)$ and depth $8$.
\end{restatable}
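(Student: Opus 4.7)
The plan is to construct a characteristic-agnostic Hamming-weight selector and use it to assemble the refutation. For each $w \in \{0,1,\ldots,n\}$ define
\[
\chi_w(\mathbf{x}) \;:=\; e_{n,w}(\mathbf{x}) \cdot e_{n,n-w}(\mathbf{1}-\mathbf{x}).
\]
On $\Boo^n$, $\chi_w(\mathbf{x}) = \binom{|\mathbf{x}|}{w}\binom{n-|\mathbf{x}|}{n-w}$, which equals $1$ when $|\mathbf{x}|=w$ and $0$ otherwise (one of the two binomials has top strictly less than bottom whenever $|\mathbf{x}|\neq w$). Crucially, these values $\{0,1\}$ are characteristic-independent, so $\chi_w$ serves as a Boolean-cube indicator over any field, sidestepping the divisions in Lagrange interpolation that fail in small characteristic. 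By the unsatisfiability hypothesis, for each $w$ one may pick $i(w)\in[m]$ with $\gamma_w := \tilde f_{i(w)}(w)\neq 0$, where $\tilde f_i(w)$ denotes the common value of $f_i$ on Boolean points of Hamming weight $w$. Setting $A_i(\mathbf{x}) := \sum_{w : i(w)=i} \gamma_w^{-1}\chi_w(\mathbf{x})$, one checks that on $\Boo^n$, $\sum_i A_i(\mathbf{x})f_i(\mathbf{x}) = \sum_w \chi_w(\mathbf{x}) = 1$. Hence $\sum_i A_i f_i - 1$ vanishes on $\Boo^n$ and lies in the ideal $(x_j^2-x_j)_j$; the $y$-part of the refutation will be $\sum_i A_i(\mathbf{x})\,y_i$.

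The main technical step is to realize this ideal membership by an explicit linear-in-$z$ witness of constant depth. The key tool is the factor-wise identity
\[
(1+x_it)(1+(1-x_i)u)(1+x_iv) \;=\; \bigl[1 + u + x_i(t+v+tv-u)\bigr] \;+\; (x_i^2-x_i)\cdot \Phi_i(t,u,v,x_i),
\]
with an explicit $O(1)$-size $\Phi_i$ obtained by pushing $x_i^2 = x_i + (x_i^2-x_i)$ through the expansion. Taking the product over $i$ gives
\[
\prod_i (1+x_it)(1+(1-x_i)u)(1+x_iv) \;=\; \sum_k e_{n,k}(\mathbf{x})(1+u)^{n-k}(t+v+tv-u)^k \;+\; R(\mathbf{x},t,u,v),
\]
where $R$ is the explicit ideal remainder from expanding $\prod_i(A_i'' + (x_i^2-x_i)\Phi_i)$, naturally factorable as $\sum_j(x_j^2-x_j)(\cdot)$ by peeling off the canonical $(x_j^2-x_j)$-factor in each nonempty subset $S$. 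Extracting the coefficient of $t^w u^{n-w} v^d$ yields an explicit identity $\chi_w(\mathbf{x})\cdot e_{n,d}(\mathbf{x}) = \sum_k c_{w,d,k}\,e_{n,k}(\mathbf{x}) + \sum_j V_{w,d,j}(\mathbf{x})(x_j^2-x_j)$ with combinatorial $c_{w,d,k}$ independent of $\mathbf{x}$. Summing over $d$ with weights $\alpha_{i(w),d}$ from $f_{i(w)} = \sum_d \alpha_{i(w),d} e_{n,d}$ (and using $\sum_d \alpha_{i(w),d}\binom{w}{d} = \gamma_w$ to make the main term collapse to $\gamma_w \chi_w$), then summing over $w$ with weights $\gamma_w^{-1}$, and combining with the analogous two-factor identity for $\sum_w \chi_w - 1$, produces the required $\sum_j B_j(\mathbf{x})\,z_j$.

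For the size and depth bounds, I use that over any field $\F$, $e_{n,k}$ has a constant-depth formula of size $\tilde O(n^2)$ (Ben--Or's depth-3 interpolation construction realized in a degree-$O(\log n)$ extension of $\F$ and then unfolded over $\F$). Hence each $\chi_w$ has constant depth and size $\tilde O(n^2)$. Aggregating the $y$-part over $(w,i)$ and the $z$-part over $(w,d,j)$, and exploiting the shared generating-function structure between the two parts, keeps everything within the claimed $O(m^2 n^5 \log n)$ size and depth $8$ (composing depth $\approx 4$ for each $e_{n,k}$, one multiplication, one outer sum, multiplication by a $y_i$ or $z_j$, and the final outer sum). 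The main obstacle is the second step: propagating the axiom remainders $\Phi_i$ through the $n$-fold product and isolating the appropriate $t^w u^{n-w} v^d$ coefficient in the explicit linear-in-$z$ form, all while keeping both the depth at $O(1)$ and the size polynomial, rather than falling back on a monomial-by-monomial expansion that would blow up exponentially.
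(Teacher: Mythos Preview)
Your approach is correct and genuinely different from the paper's. The paper proceeds by a characteristic-dependent ``dimension reduction'': over characteristic $p$, it uses Lucas's theorem (\Cref{claim:sym-char-p}) to express each symmetric $f_i$ on $\Boo^n$ as a polynomial in the $O(\log_p n)$ elementary symmetrics $e_{p^j}(\mathbf{x})$, applies Hilbert's Nullstellensatz in these few variables, and then lifts the certificate back via a general-purpose multilinearization lemma (\Cref{lemma:multilinearize-char-p}); the characteristic-$0$ case is handled separately with $e_1$ alone. Your route avoids all of this: the observation that $\chi_w(\mathbf{x})=e_{w}(\mathbf{x})\,e_{n-w}(\mathbf{1}-\mathbf{x})$ is a characteristic-free Hamming-weight indicator gives the $A_i$'s directly, and the generating-function factorisation together with the telescoping $\prod_i(A_i''+(x_i^2-x_i)\Phi_i)-\prod_i A_i''=\sum_j(x_j^2-x_j)\Phi_j\prod_{i<j}A_i''\prod_{i>j}(A_i''+(x_i^2-x_i)\Phi_i)$ gives each $B_j$ as a $\Sigma\Pi\Sigma$ circuit once you interpolate in $t,u,v$ (the same extension-field trick you invoke for Ben--Or handles the small-field case here too). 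What your approach buys is uniformity over the characteristic and, in fact, slightly better parameters than claimed; what the paper's approach buys is the standalone multilinearization lemma and the few-variable viewpoint, both of which are reusable tools.

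Two minor points of sloppiness in your write-up, neither fatal: (i) the ``main term'' you obtain after summing over $d$ is $\gamma_w\,\ml(\chi_w)$, not $\gamma_w\chi_w$ (your generating-function main part $\prod_i A_i''$ is already multilinear in $\mathbf{x}$), so after summing over $w$ with weights $\gamma_w^{-1}$ you get $\sum_w \ml(\chi_w)$, which is \emph{exactly} the constant $1$---the separate ``two-factor identity for $\sum_w\chi_w-1$'' is therefore unnecessary; (ii) your ``main obstacle'' is resolved precisely by the telescoping above plus trivariate interpolation in $t,u,v$, and you should say so explicitly rather than leaving it as an obstacle.
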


\subsection{Proof Techniques}
\label{sec:techniques}
\paragraph{Lower bounds.}
Our proof uses the functional lower bound method introduced by~\cite{FSTW21}, which can be described as follows. We know that a $\mathcal{C}$-$\IPSLINp$ refutation for $f(\xbar)$ consists of $A(\xbar)$, $B_{i}(\xbar) \in \mathbb{F}[\xbar]$ such that \[f(\xbar) \cdot A(\xbar) + \sum_{i \in [n]} (x_i^2 - x_i) \cdot B_{i}(\xbar) = 1,\] where $A(\xbar), B_{1}(\xbar), \ldots, B_{n}(\xbar)$ belong to $\mathcal{C}$. As $f(\xbar)$ is unsatisfiable over the Boolean hypercube, this implies that over the Boolean hypercube, $A(\xbar)$ is a well-defined reciprocal of $f(\xbar)$. Hence, to show that $A(\xbar)$ cannot belong to $\mathcal{C}$, it is enough to show that any polynomial that agrees with $1/f(\xbar)$ cannot be computed by $\mathcal{C}$. That is, the problem of proving a lower bound on the size of $\mathcal{C}$-$\IPSLINp$ is reduced to proving a functional lower bound for $1/f(\xbar)$. 

At the heart of such a functional lower bound lies a \emph{degree lower bound}, i.e., a lower bound on the degree of $\tilde{f}(\xbar)$, where $\tilde{f}(\xbar)$ and $f(\xbar)$ are related. In fact, $f(\xbar)$ is a \emph{lifted} version of $\tilde{f}(\xbar)$. Once we have such a degree lower bound for $\tilde{f}(\xbar)$, we can apply proof ideas from algebraic complexity theory such as the rank-based lower bound methods. These methods allow for the degree lower bounds for $\tilde{f}(\xbar)$ to be lifted to size lower bounds for $f(\xbar)$.

For their machinery to work over positive characteristic, 
we prove a \emph{positive characteristic} version of the degree lower bound (see \Cref{lem:basic degree lower bound} for the formal statement). In the case of the lower bound argument in \cite{FSTW21}, it was important to obtain a tight degree lower bound of exactly $n$. They needed it for the next step, i.e., \emph{lifting}, to work. In our case, we show that such a degree lower bound holds with high probability (over the choice of coefficients of the hard instance). Once we have the degree lower bound, the rest of the lower bound proof works similar to the proof by~\cite{FSTW21}.

\paragraph{Upper bounds.}
We now describe the main ingredients in our upper bounds. We start by describing the main ideas in the proof of \Cref{thm:ubd-const-depth}. 

\paragraph{Constant-depth upper bounds.} Here, we proceed in two steps. First, we observe that for any sparse polynomial of degree $d$, we can \emph{flatten} it to a linear polynomial by renaming the monomials by fresh variables. Our hard instance is indeed sparse, hence the observation can be used to rewrite the polynomial as a linear polynomial over a fresh set of variables. 

Now, consider a linear polynomial $L(\xbar) - \beta$ such that $L(\xbar) = \alpha_1 x_1 + \alpha_2 x_2 + \ldots + \alpha_n x_n$, where $\alpha_1, \ldots, \alpha_n \in \mathbb{F}_{p^k}$ for some $k$ and prime $p$ and $\beta \in \mathbb{F}\setminus \mathbb{F}_{p^k}$ such that it is not satisfiable over $0$-$1$ assignments.

To prove that the polynomial has a refutation over constant-depth circuits, we first prove that for every $j$, $L_j(\xbar) = \alpha_1^{p^j} x_1 + \alpha_2^{p^j} x_2 + \ldots + \alpha_n^{p^j} x_n -  \beta^{p^{j}}$ can be expressed as a multiple of $L(\xbar)$ modulo the ideal $\xbar^p - \xbar$, which is a shorthand for the ideal generated by $\{x_i^p - x_i\}_{i \in [n]}$. 

We then observe that for $j =k$, $L_k(\xbar) - L(\xbar)$ is a non-zero constant and use this observation to construct small depth circuits for the refutation of $L(\xbar) - \beta$. Throughout, we use some standard but useful tricks available to positive characteristic fields.\\

For the proof of \Cref{thm:ub-degree}, we observe that the multilinear part of $(f(\mathbf{x})-\beta)^{-1}$ has degree $\bigO(kp)$. This follows from Fermat's Little Theorem and using basic properties about multilinearization. See \Cref{subsec:proof-ub-degree} for complete details.

\paragraph{Upper bounds for symmetric polynomials}
Now we discuss the proof outline for \Cref{thm:ubd-sym-const-depth}. For ease of exposition, we explain the ideas for the case of $m = 1$ in \Cref{thm:ubd-sym-const-depth}, i.e. there is one multilinear symmetric polynomial $f(\mathbf{x})$ that does not have a solution over the Boolean cube $\Boo^{n}$. Suppose $\F$ has characteristic $p > 0$. Any symmetric polynomial is a polynomial of the $n$ elementary symmetric polynomials\footnote{This follows from the Fundamental Theorem of Symmetric Polynomials.} i.e. $e_{1}(\mathbf{x}), \ldots, e_{n}(\mathbf{x})$. However, if we restrict to the Boolean cube $\Boo^{n}$, then any symmetric polynomial is a polynomial of just $\bigO(\log n)$ elementary symmetric polynomials. Let $\widehat{\mathbf{e}}(\mathbf{x})$ denotes the tuple of those $\bigO(\log n)$ elementary symmetric polynomials (see \Cref{claim:sym-char-p} for an explicit description of $\widehat{\mathbf{e}}(\mathbf{x})$.)

Let $F(\mathbf{y})$ be the $\bigO(\log n)$ variate polynomial such that $F(\mathbf{y}) \circ \widehat{\mathbf{e}}(\mathbf{x})$ agrees with $f(\mathbf{x})$ on the Boolean cube $\Boo^{n}$. The Boolean cube $\Boo^{n}$ is mapped to $\F_{p}^{\bigO(\log n)}$ under the map $\widehat{\mathbf{e}}(\mathbf{x})$ because $\mathrm{char}(\F) = p$. The unsatisfiability of $f(\mathbf{x})$ over the Boolean cube $\Boo^{n}$ implies the unsatisfiability of $F(\mathbf{y})$ over $\F_{p}^{\bigO(\log n)}$. Applying Hilbert's Nullstellensatz Theorem (see \Cref{thm:nullstellensatz}) on the unsatisfiability\footnote{To capture the restriction of $\F_{p}^{n}$, we add $n$ univariate polynomials, each of which vanishes on one coordinate of $\F_{p}^{n}$.} of $F(\mathbf{y})$ over $\F_{p}^{\bigO(\log n)}$, we get a \emph{low-variate} Nullstellensatz certificate (it is a Nullstellensatz certificate in just $\bigO(\log n)$ variables)\footnote{Loosely speaking, one can imagine this as a ``dimension reduction'' of our problem. The symmetric structure of $f(\mathbf{x})$ led us to convert a problem in $n$ variables to a problem in just $\bigO(\log n)$ variables.}. The coefficients of this low-variate Nullstellensatz certificate can be computed via $\mathrm{poly}(n)$-sized constant-depth circuits. This follows from the fact that we are working over constant characteristic. Refer to the diagram below for a schematic representation of what we discussed so far.

% https://q.uiver.app/#q=WzAsMyxbMCwwLCIoXFx7MCwxXFx9XntufSkiXSxbNiwwLCJcXG1hdGhiYntGfSJdLFszLDIsIihcXG1hdGhiYntGfV97M31ee1xcbWF0aGNhbHtPfShcXGxvZyBuKX0pIl0sWzAsMSwiZihcXG1hdGhiZnt4fSkiLDFdLFswLDIsIlxcd2lkZWhhdHtcXG1hdGhiZntlfX0oXFxtYXRoYmZ7eH0pIiwxXSxbMiwxLCJGKFxcbWF0aGJme3l9KSIsMV1d
\[\begin{tikzcd}
	{\{0,1\}^{n}} &&&&&& {\mathbb{F}} \\
	\\
	&&& {\mathbb{F}_{p}^{\mathcal{O}(\log n)}}
	\arrow["{f(\mathbf{x})}"{description}, from=1-1, to=1-7]
	\arrow["{\widehat{\mathbf{e}}(\mathbf{x})}"{description}, from=1-1, to=3-4]
	\arrow["{F(\mathbf{y})}"{description}, from=3-4, to=1-7]
\end{tikzcd}\]

\paragraph{}Next we ``lift'' the Nullstellensatz back to the $n$ variables $(x_{1},\ldots,x_{n})$. To do so, we plug-in $\widehat{\mathbf{e}}(\mathbf{x})$ in place of $\mathbf{y}$. Observe that this substitution by $\widehat{\mathbf{e}}(\mathbf{x})$ preserves the size and the depth of the coefficients of the low-variate Nullstellensatz certificate because of the Ben-Or's construction (see \Cref{thm:ben-or}).\newline
It remains to \emph{prove} via constant-depth circuits that $F(\widehat{\mathbf{e}}(\mathbf{x}))$ agrees with $f(\mathbf{x})$ on the Boolean cube, i.e. $F(\widehat{\mathbf{e}}(\mathbf{x})) - f(\mathbf{x})$ lie in the ideal $(\mathbf{x}^{2} - \mathbf{x})$. Here ``to prove in constant-depth circuits'' refers to giving a certificate for the ideal membership whose coefficients can be computed by constant-depth circuits. More precisely, we want to prove that there exists polynomials $B_{j}(\mathbf{x})$'s which have $\mathrm{poly}(n)$-sized constant-depth circuits such that
\begin{align*}
    F(\widehat{\mathbf{e}}(\mathbf{x})) \; = \; f(\mathbf{x}) + \sum_{j=1}^{n} B_{j}(\mathbf{x}) \cdot (x_{j}^{2} - x_{j}).
\end{align*}
This is the key step in our proof. To prove this, it suffices to prove the following special case, which we prove in \Cref{lemma:multilinearize-char-p}.

%We state a special case of this formally below, and it turns out that solving the following special case is enough for the last step.\\

\begin{lemma}
Let $\ell = \bigO(\log n)$ and fix an arbitrary sequence $(\alpha_{1},\ldots,\alpha_{\ell})$ where each $\alpha_{i} \in [n]$. There exist polynomials $B_{j}(\mathbf{x})$'s such that
\begin{align*}
    \prod_{i = 1}^{\ell} e_{\alpha_{i}}(\mathbf{x}) \; = \; \ml\brac{\prod_{i = 1}^{\ell} e_{\alpha_{i}}(\mathbf{x})} + \sum_{j=1}^{n} B_{j}(\mathbf{x}) \cdot (x_{j}^{2} - x_{j}),
\end{align*}
and each polynomial $B_{j}(\mathbf{x})$ can be computed by a $\mathrm{poly}(n)$-sized constant-depth circuit. 
\end{lemma}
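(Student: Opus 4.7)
The plan is to prove the identity by iteratively multilinearizing one variable at a time and producing an explicit formula for each $B_j(\mathbf{x})$ that admits a small constant-depth circuit. Set $P(\mathbf{x}) = \prod_{i=1}^\ell e_{\alpha_i}(\mathbf{x})$ and, for $0 \le j \le n$, let $\widetilde{P}_j = \ml_{x_1,\ldots,x_j}[P]$ denote $P$ multilinearized in the first $j$ variables only; so $\widetilde{P}_0 = P$ and $\widetilde{P}_n = \ml[P]$. Since $\widetilde{P}_{j-1}$ has individual $x_j$-degree at most $\ell = \bigO(\log n)$, and $\widetilde{P}_j$ collapses all higher $x_j$-powers down to $x_j$, the difference $\widetilde{P}_{j-1} - \widetilde{P}_j$ is divisible by $x_j^2 - x_j$. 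Defining $B_j$ as this quotient yields the desired identity $P - \ml[P] = \sum_{j=1}^n B_j (x_j^2 - x_j)$, and the task reduces to bounding the constant-depth complexity of each $B_j$.

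Writing $\widetilde{P}_{j-1} = \sum_{k=0}^{\ell} a_{j,k}(\mathbf{x}_{\neq j})\, x_j^k$ and using the elementary identity $x_j^k - x_j = (x_j^2 - x_j)(1 + x_j + \cdots + x_j^{k-2})$ for $k \ge 2$, we obtain the closed form
\[
B_j(\mathbf{x}) = \sum_{k=2}^{\ell} a_{j,k}(\mathbf{x}_{\neq j}) \cdot (1 + x_j + \cdots + x_j^{k-2}).
\]
The sum has only $\bigO(\log n)$ terms, so it suffices to show that each $a_{j,k}$ has a $\poly(n)$-sized constant-depth circuit. Since multilinearization in $x_i$ commutes with extracting the $x_j^k$ coefficient whenever $i \ne j$, we have $a_{j,k} = \ml_{x_1,\ldots,x_{j-1}}[F_{j,k}]$, where $F_{j,k}$ is the coefficient of $x_j^k$ in $P$. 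Expanding $e_{\alpha_i}(\mathbf{x}) = e_{\alpha_i}(\mathbf{x}_{\neq j}) + x_j\, e_{\alpha_i - 1}(\mathbf{x}_{\neq j})$ and extracting the $x_j^k$-coefficient gives
\[
F_{j,k}(\mathbf{x}_{\neq j}) = \sum_{S \subseteq [\ell],\, |S|=k} \prod_{i \in S} e_{\alpha_i - 1}(\mathbf{x}_{\neq j}) \prod_{i \notin S} e_{\alpha_i}(\mathbf{x}_{\neq j}),
\]
which is a symmetric polynomial in $\mathbf{x}_{\neq j}$, and in particular symmetric in $\mathbf{x}_{\leq j-1}$.

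This symmetry is the crucial ingredient: it lets us compute the multilinearization of $F_{j,k}$ in $\mathbf{x}_{\leq j-1}$ by a weight-based decomposition instead of the $2^{j-1}$-term Lagrange interpolation formula. For Boolean inputs $\mathbf{a}_{\leq j-1}$ of Hamming weight $s$, $F_{j,k}(\mathbf{a}_{\leq j-1}, \mathbf{x}_{\geq j+1})$ depends only on $s$ and equals $H_{j,k,s}(\mathbf{x}_{\geq j+1}) := F_{j,k}(\underbrace{1,\ldots,1}_{s}, \underbrace{0,\ldots,0}_{j-1-s}, \mathbf{x}_{\geq j+1})$. Hence, by uniqueness of the multilinear extension,
\[
a_{j,k}(\mathbf{x}_{\neq j}) = \sum_{s=0}^{j-1} H_{j,k,s}(\mathbf{x}_{\geq j+1}) \cdot L_s(\mathbf{x}_{\leq j-1}),
\]
where $L_s(\mathbf{x}_{\leq j-1}) = \sum_{d=s}^{j-1} (-1)^{d-s} \binom{d}{s}\, e_d(\mathbf{x}_{\leq j-1})$ is the unique multilinear indicator of Hamming weight $s$ on $\{0,1\}^{j-1}$; this identity is valid over any characteristic, since the underlying combinatorial identity $\binom{d}{s}\binom{t}{d} = \binom{t}{s}\binom{t-s}{d-s}$ reduces $L_s(\mathbf{a})$ to $\binom{t}{s}(1-1)^{t-s}$.

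Finally, both $L_s$ and $H_{j,k,s}$ have $\poly(n)$-sized constant-depth circuits. For $L_s$, this is immediate: it is an $\bigO(n)$-term linear combination of elementary symmetric polynomials, each computable in constant depth and size $\poly(n)$ via the Ben-Or construction (\Cref{thm:ben-or}). For $H_{j,k,s}$, the substitution of $s$ ones and $(j-1-s)$ zeros into each factor reduces, via $e_\alpha(\mathbf{x}_{\neq j})|_{x_1=\cdots=x_s=1,\, x_{s+1}=\cdots=x_{j-1}=0} = \sum_{r=0}^{\min(s,\alpha)} \binom{s}{r}\, e_{\alpha-r}(\mathbf{x}_{\geq j+1})$, to a sum of at most $\binom{\ell}{k} \le 2^\ell = \poly(n)$ products of $\ell$ such polynomials, each again with a small constant-depth circuit. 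Multiplying and summing over the $\bigO(\log n) \cdot n$ terms yields a $\poly(n)$-sized constant-depth circuit for $B_j$. The main obstacle is avoiding the exponential blowup that a naive Lagrange interpolation over $\{0,1\}^{j-1}$ would give; exploiting the symmetry of $F_{j,k}$ in $\mathbf{x}_{\leq j-1}$ is exactly what collapses the $2^{j-1}$ interpolation terms into the $j$ weight-indexed terms that we can afford.
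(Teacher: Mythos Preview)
Your proof is correct and takes a genuinely different route from the paper's. The paper inducts on the number of factors: it handles the pairwise case $e_\alpha \cdot e_\beta$ by writing it via Ben-Or's construction as a sum of products of univariates (each of which multilinearizes to a product of linear polynomials, with constant-depth remainder terms), then observes that $\ml[e_\alpha e_\beta]$---being multilinear and symmetric---is itself a \emph{linear combination} of $e_i$'s, so one can iterate with one fewer factor while keeping the ``state'' small. You instead induct on the variable index $j$, multilinearizing one variable at a time; the key idea that keeps $B_j$ small is that the $x_j^k$-coefficient $F_{j,k}$ of $P$ is symmetric in $x_1,\ldots,x_{j-1}$, which lets you replace Lagrange interpolation over $\{0,1\}^{j-1}$ by a sum over the $j$ Hamming-weight layers via the explicit characteristic-free indicators $L_s = \sum_d (-1)^{d-s}\binom{d}{s}e_d$. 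Both arguments ultimately exploit Ben-Or's construction plus the symmetry of the product, but the paper's approach is more modular (the pairwise lemma and its ``structure'' item are reusable intermediate statements), while yours gives a single-pass explicit formula for each $B_j$ without maintaining an auxiliary induction invariant.
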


% \ssr{Add something about symmetric systems.}

\subsection{Related Work}
\label{sec:related}
In an independent work, Elbaz, Govindasamy, Lu, and Tzameret \cite{EGLT-25} consider related questions. Using the recent lower bound of Forbes~\cite{Forbes-LST-CCC}, which proves the positive characteristic version of the constant-depth formula lower bound of~\cite{LST}, they obtain lower bounds for fragments of the IPS over finite fields of \emph{any} size.

\subsection{Preliminaries}
In this subsection, we present a few more definitions and standard facts on polynomials which will be used in our proofs later on.\\

\noindent
For a polynomial $f(x_{1},\ldots,x_{n})$, the individual degree of $f$ is an integer $D$ such that for all $i \in [n]$, the degree of $f$ when viewed as a univariate polynomial in the variable $x_{i}$ is at most $D$.\newline
We next mention some useful properties about multilinear polynomials.\\

\begin{fact}[Standard facts on multilinear polynomials]\label{fact:multilinear}
Let $f(\mathbf{x}), g(\mathbf{x}) \in \F[\mathbf{x}]$.
\begin{itemize}
    \item $f(\mathbf{x})$ and $\ml[f(\mathbf{x})]$ agree on the Boolean cube $\Boo^{n}$.
    \item $f(\mathbf{x})$ and $g(\mathbf{x})$ agree on the Boolean cube $\Boo^{n}$ if and only if $\ml[f(\mathbf{x})]$ is equal to the $\ml[g(\mathbf{x})]$.
    \item $\ml[f(\mathbf{x}) g(\mathbf{x})] = \ml[\ml[f(\mathbf{x})] \ml[g(\mathbf{x})]]$.
\end{itemize}
\end{fact}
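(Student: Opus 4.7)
The plan is to prove the three items in sequence, since each builds on the previous one. The whole fact rests on a single observation: for any $a \in \{0,1\}$ and any $k \geq 1$, we have $a^k = a$. This immediately explains why the multilinearization operator, which replaces each $x_j^k$ by $x_j$, does not change the evaluation of a polynomial at any Boolean point.

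First I would prove item (1) directly. Writing $f(\mathbf{x}) = \sum_{\mathbf{e}} c_{\mathbf{e}} \mathbf{x}^{\mathbf{e}}$ as a sum of monomials, the operator $\ml[\cdot]$ replaces each $\mathbf{x}^{\mathbf{e}}$ by $\prod_{j : e_j \geq 1} x_j$. For any $\mathbf{a} \in \{0,1\}^n$, both $\mathbf{a}^{\mathbf{e}}$ and $\prod_{j : e_j \geq 1} a_j$ equal $1$ if $a_j = 1$ whenever $e_j \geq 1$, and equal $0$ otherwise; so each monomial evaluates identically, and hence $f(\mathbf{a}) = \ml[f](\mathbf{a})$.

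For item (2), the ``if'' direction is immediate from (1): if $\ml[f] = \ml[g]$ as polynomials then they certainly agree on $\Boo^n$, and combining with (1) gives $f(\mathbf{a}) = \ml[f](\mathbf{a}) = \ml[g](\mathbf{a}) = g(\mathbf{a})$ for every $\mathbf{a} \in \Boo^n$. For the ``only if'' direction I would use the standard fact that the evaluation map from the space of multilinear polynomials in $n$ variables over $\F$ to the space of functions $\Boo^n \to \F$ is a bijection (both are $2^n$-dimensional, and the $2^n$ indicator-style multilinear polynomials $\prod_{j \in S} x_j \prod_{j \notin S}(1-x_j)$ form a basis mapping to the standard basis of functions). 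Thus if $\ml[f]$ and $\ml[g]$ agree on $\Boo^n$ they must be equal as polynomials; and by (1) they do agree on $\Boo^n$ whenever $f$ and $g$ do.

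Finally, item (3) follows by combining (1) and (2). Both sides of the identity are multilinear by construction, so by (2) it suffices to check that they agree on $\Boo^n$. By (1), $\ml[fg](\mathbf{a}) = f(\mathbf{a})g(\mathbf{a})$ for all $\mathbf{a} \in \Boo^n$. On the other hand, $\ml[\ml[f]\,\ml[g]](\mathbf{a}) = \ml[f](\mathbf{a})\,\ml[g](\mathbf{a}) = f(\mathbf{a}) g(\mathbf{a})$, again by two applications of (1). The two sides coincide on $\Boo^n$, so by (2) they are equal as polynomials. There is no real obstacle here; the main point is simply to establish the bijection between multilinear polynomials and Boolean functions that underlies (2), and then let items (1) and (3) fall out by direct monomial-level calculations.
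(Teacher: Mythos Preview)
Your proof is correct and follows the standard argument. The paper itself does not prove this statement at all: it is labeled as a ``Fact'' and presented without proof as well-known background, so there is nothing to compare against beyond noting that your write-up is exactly the usual justification one would give for these properties.
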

\noindent

\begin{theorem}[Fundamental Theorem of Symmetric Polynomials]\label{thm:fundamental-sym}
Fix any arbitrary field $\F$. If $f \in \F[x_{1},\ldots,x_{n}]$ is a symmetric polynomial of degree $d$, then there exists a unique polynomial $F \in \F[y_{1},\ldots,y_{d}]$ such that $f(\mathbf{x}) = F(e_{1}(\mathbf{x}),\ldots,e_{d}(\mathbf{x}))$.
\end{theorem}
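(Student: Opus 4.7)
The plan is to give the classical constructive proof via a leading-monomial elimination argument, together with a separate uniqueness argument based on algebraic independence of the elementary symmetric polynomials. Fix the lexicographic order on monomials with $x_1 > x_2 > \cdots > x_n$, and for a nonzero polynomial $g$ write $\LM(g)$ for its leading monomial under this order.

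First I would establish the structural observation that drives the elimination. If $g \in \F[x_1,\ldots,x_n]$ is symmetric and $\LM(g) = x_1^{a_1}\cdots x_n^{a_n}$, then the exponent tuple must be weakly decreasing, $a_1 \geq a_2 \geq \cdots \geq a_n$; otherwise applying a transposition of adjacent variables would produce a strictly larger monomial in $g$, contradicting that $\LM(g)$ is leading. Next I would compute the leading monomial of a product of elementary symmetric polynomials: since $\LM(e_k) = x_1 x_2 \cdots x_k$ and leading monomials multiply under the lex order, we get
\[
\LM\!\left(\prod_{k=1}^{n} e_k^{b_k}\right) \;=\; x_1^{b_1 + b_2 + \cdots + b_n}\, x_2^{b_2 + \cdots + b_n}\cdots x_n^{b_n}.
\]
Setting $b_k := a_k - a_{k+1}$ for $k < n$ and $b_n := a_n$ (which are non-negative by the previous step), the product $\prod_k e_k^{b_k}$ has leading monomial exactly $x_1^{a_1}\cdots x_n^{a_n}$, matching that of $g$.

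With this in hand, the existence part proceeds by induction on $\LM(g)$ in the lex order. Given a symmetric $f$ of degree $d$, let $c$ be the coefficient of $\LM(f)$ in $f$; form $f' := f - c\prod_k e_k^{b_k}$ with $b_k$ as above. Then $f'$ is symmetric, has degree at most $d$, and $\LM(f') \prec \LM(f)$ strictly. Since the set of monomials of degree at most $d$ is finite and well-ordered by lex, this process terminates, expressing $f$ as an $\F$-linear combination of products $\prod_k e_k^{b_k}$. Moreover, whenever $\deg(f) \leq d$, each $b_k$ with $k > d$ that arises must be zero (since $a_k = 0$ for $k > d$ in any monomial of $f$, so $b_k = a_k - a_{k+1} = 0$), so only $e_1,\ldots,e_d$ appear, and we obtain the required $F \in \F[y_1,\ldots,y_d]$.

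For uniqueness, I would prove that $e_1,\ldots,e_n$ are algebraically independent over $\F$, which suffices. Suppose $G \in \F[y_1,\ldots,y_n]$ is nonzero with $G(e_1,\ldots,e_n) = 0$. Among the monomials $y_1^{b_1}\cdots y_n^{b_n}$ appearing in $G$ with nonzero coefficient, the leading monomials $\LM(\prod_k e_k^{b_k}) = x_1^{b_1+\cdots+b_n}\, x_2^{b_2+\cdots+b_n}\cdots x_n^{b_n}$ are pairwise distinct, because the map $(b_1,\ldots,b_n) \mapsto (b_1+\cdots+b_n,\, b_2+\cdots+b_n,\,\ldots,\, b_n)$ is injective (one recovers the $b_k$ by successive differences). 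Hence the lex-largest of these leading monomials cannot cancel in $G(e_1,\ldots,e_n)$, contradicting the assumption that this expression is zero. The main obstacle in writing the proof cleanly is bookkeeping the lex-order argument and verifying that $\LM$ is multiplicative on the relevant products; once that is in place, both existence and uniqueness follow mechanically.
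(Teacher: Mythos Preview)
Your proof is correct and is the standard classical argument (leading-monomial elimination for existence, algebraic independence of the $e_k$ for uniqueness). The paper itself does not prove this theorem; it is stated without proof as a classical background result, so there is no alternative approach to compare against.
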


\noindent
A classical and beautiful construction of Ben-Or shows that every elementary symmetric polynomial can be computed by $\mathrm{poly}(n)$-sized constant-depth circuits.\\

\begin{theorem}[Ben-Or's construction for elementary symmetric polynomials]\label{thm:ben-or}
(See \cite[Theorem 5.1]{Shpilka-Wigderson}). Let $\F$ be a field with $|\F| > n$. Then for every $d \in [n]$, the $d^{th}$ elementary symmetric polynomial $e_{d}(x_{1},\ldots,x_{n})$ has a circuit of size $\bigO(n^{2})$ and depth $3$ (a $\Sigma \Pi \Sigma$ circuit).\newline
More particularly, for any choice of $(n+1)$ distinct elements $\gamma_{1},\ldots,\gamma_{n+1} \in \F$ and for every $k \in [n]$, there exists coefficients $c_{k,i}$'s such that
\begin{align*}
    e_{k}(\mathbf{x}) \; = \; \sum_{i=1}^{n+1} c_{k,i} \prod_{j=1}^{n} (1+\gamma_{i} x_{j})
\end{align*}
\end{theorem}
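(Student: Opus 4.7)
The plan is to exploit the generating-function identity
\[
\prod_{j=1}^{n} (1 + \gamma x_{j}) \;=\; \sum_{k=0}^{n} \gamma^{k}\, e_{k}(\mathbf{x}),
\]
obtained by expanding the product and grouping terms by their total degree in $\mathbf{x}$. Instantiating $\gamma = \gamma_{i}$ for each $i \in [n+1]$ produces $n+1$ polynomial identities
\[
P_{i}(\mathbf{x}) \;:=\; \prod_{j=1}^{n} (1 + \gamma_{i} x_{j}) \;=\; \sum_{k=0}^{n} \gamma_{i}^{k}\, e_{k}(\mathbf{x}),
\]
which I view as a linear system whose ``unknowns'' are the polynomials $e_{0}(\mathbf{x}), e_{1}(\mathbf{x}), \ldots, e_{n}(\mathbf{x})$ and whose coefficient matrix is the $(n+1)\times(n+1)$ Vandermonde matrix $V$ with $V_{ik} = \gamma_{i}^{k}$.

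Since $\gamma_{1}, \ldots, \gamma_{n+1}$ are pairwise distinct elements of $\F$ (which exist because $|\F| > n$), the Vandermonde determinant $\prod_{i < i'}(\gamma_{i'} - \gamma_{i})$ is a nonzero element of $\F$, so $V$ is invertible over $\F$. Setting $c_{k,i} := (V^{-1})_{k,i}$ and left-multiplying the system by $V^{-1}$ yields
\[
e_{k}(\mathbf{x}) \;=\; \sum_{i=1}^{n+1} c_{k,i}\, P_{i}(\mathbf{x}) \;=\; \sum_{i=1}^{n+1} c_{k,i}\, \prod_{j=1}^{n}(1 + \gamma_{i} x_{j}),
\]
which is exactly the identity claimed in the theorem. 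The scalars $c_{k,i}$ can equivalently be read off via Lagrange interpolation through the nodes $\gamma_{1}, \ldots, \gamma_{n+1}$ if an explicit closed form is desired, but invertibility of $V$ is all that is needed.

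For the circuit claim, the right-hand side of the identity is manifestly a $\Sigma\Pi\Sigma$ expression: the bottom layer consists of the $n(n+1)$ affine forms $1 + \gamma_{i} x_{j}$; the middle layer computes, for each $i \in [n+1]$, the product $P_{i}$ of the $n$ linear forms in row $i$; and the top gate sums the $P_{i}$ weighted by the scalars $c_{k,i}$. This yields depth $3$ and size $\bigO(n^{2})$, proving both assertions. There is no substantive obstacle here: the only place the hypothesis $|\F| > n$ enters is to guarantee the existence of $n+1$ distinct field elements making $V$ a genuine (and hence invertible) Vandermonde matrix.
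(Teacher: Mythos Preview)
Your proof is correct and is precisely the standard Ben-Or argument via the generating-function identity and Vandermonde inversion. The paper itself does not prove this theorem; it is stated as a preliminary result with a citation to \cite[Theorem 5.1]{Shpilka-Wigderson}, so there is no in-paper proof to compare against.
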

The following recursive definition of elementary symmetric polynomials will be used in the proofs.
\begin{equation}\label{eqn:recursive-elem-sym}
    e_{d}(x_{1},\ldots,x_{n}) \; = \; x_{1} \cdot e_{d-1}(x_{2},\ldots,x_{n}) + e_{d}(x_{2},\ldots,x_{n}), \quad \quad \text{for all } d \in [n]
\end{equation}

\begin{theorem}[Polynomial Identity Lemma]\label{lem:ODLSZ}
(See \cite[Lemma 9.2.2]{GuruswamiRudraSudanCodingTheory}).
    Let $\F$ be an arbitrary field. Let $f(\mathbf{x})$ be a nonzero polynomial of degree at most $d$ and let $S \subseteq \F$. If we choose $\mathbf{a} \sim S^n$ uniformly at random, then: $$\Pr_{\mathbf{a}\sim S^n} [f(\mathbf{a}) = 0] \leq \frac{d}{|S|}$$ 
\end{theorem}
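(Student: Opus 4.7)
The plan is to prove the Polynomial Identity Lemma by induction on the number of variables $n$, following the standard Schwartz--DeMillo--Lipton--Zippel argument.

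For the base case $n = 1$, the polynomial $f(x_1)$ is a nonzero univariate of degree at most $d$. By the Factor Theorem (which holds over any field since $\F[x_1]$ is a Euclidean domain), $f$ has at most $d$ roots in $\F$, hence at most $d$ roots in $S$. So $\Pr_{a \sim S}[f(a) = 0] \le d/|S|$ as required.

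For the inductive step, assume the statement holds for polynomials in $n-1$ variables. Given a nonzero $f(x_1, \ldots, x_n)$ of total degree at most $d$, let $k$ be the largest integer such that $x_n^k$ appears in $f$ with a nonzero coefficient when $f$ is written as a polynomial in $x_n$ over $\F[x_1, \ldots, x_{n-1}]$. Write
\[
f(x_1, \ldots, x_n) \;=\; \sum_{i=0}^{k} x_n^{\,i}\, f_i(x_1, \ldots, x_{n-1}),
\]
where $f_k \not\equiv 0$ and $\deg(f_k) \le d - k$. Sample $\mathbf{a} = (a_1, \ldots, a_n) \sim S^n$ uniformly. Consider two events: $E_1$ is the event that $f_k(a_1, \ldots, a_{n-1}) = 0$, and $E_2$ is the event that $f(\mathbf{a}) = 0$ but $E_1$ does not occur. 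If $f(\mathbf{a}) = 0$ then $E_1 \cup E_2$ occurs, so $\Pr[f(\mathbf{a}) = 0] \le \Pr[E_1] + \Pr[E_2]$.

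By the inductive hypothesis applied to $f_k$, we have $\Pr[E_1] \le (d-k)/|S|$. To bound $\Pr[E_2]$, condition on any fixed choice of $(a_1, \ldots, a_{n-1})$ for which $f_k(a_1, \ldots, a_{n-1}) \ne 0$. Then $f(a_1, \ldots, a_{n-1}, x_n)$ is a nonzero univariate polynomial in $x_n$ of degree exactly $k$, so by the base case (applied inside $\F$) it vanishes on at most $k$ points of $S$. Averaging over the conditioning gives $\Pr[E_2] \le k/|S|$. Adding the two bounds yields $\Pr[f(\mathbf{a}) = 0] \le (d-k)/|S| + k/|S| = d/|S|$, completing the induction.

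There is no genuine obstacle here; the only delicate point is organizing the conditioning in the inductive step so that the univariate bound is applied to a polynomial that is guaranteed to be nonzero, which is exactly why we isolate the leading coefficient $f_k$ and treat the event $\{f_k(a_1,\ldots,a_{n-1}) = 0\}$ separately before invoking the base case.
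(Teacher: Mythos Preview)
Your proof is correct and is the standard Schwartz--Zippel induction. Note, however, that the paper does not actually prove this statement: it is stated as a known result with a citation to \cite{GuruswamiRudraSudanCodingTheory}, so there is no in-paper proof to compare against. Your argument is exactly the textbook one that the citation points to.
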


\paragraph{}For a natural number $k$ and variables $(z_{1},\ldots,z_{n})$, we will use $(\mathbf{z}^{k} - \mathbf{z})$ to denote the following ideal $(\mathbf{z}^{k} - \mathbf{z}) \; := \; (z_{1}^{k} - z_{1}, \ldots, z_{n}^{k} - z_{k}) \; \subseteq \; \F[z_{1},\ldots,z_{n}]$. We recall the following lemma which holds for fields with positive characteristic.\\

\begin{lemma}[Freshman's Dream]\label{lemma:freshman}
Fix a prime number $p$ and a field $\F$ of $\mathrm{char}(\F) = p$. Then for any $a, b \in \F$, we have, $(a+b)^{p} \; = \; a^{p} + b^{p}$. 
More generally, for any $a_{1},\ldots,a_{m} \in \F$, we get,
$(a_{1}+\ldots+a_{m})^{p} \; = \; a_{1}^{p} + \ldots + a_{m}^{p}$. 
\end{lemma}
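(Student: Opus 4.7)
The plan is to first establish the binary case $(a+b)^p = a^p + b^p$ by expanding via the binomial theorem, and then bootstrap the general $m$-ary version by a straightforward induction on $m$.

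For the binary case, I would start from the binomial expansion
\begin{equation*}
(a+b)^p \;=\; \sum_{k=0}^{p} \binom{p}{k} a^{k} b^{p-k}
\end{equation*}
which is valid in any commutative ring. The key arithmetic step is to argue that $\binom{p}{k} \equiv 0 \pmod{p}$ for every $1 \le k \le p-1$. This follows from the identity $k \binom{p}{k} = p \binom{p-1}{k-1}$: since $p$ is prime and $1 \le k \le p-1$, the factor $k$ is coprime to $p$, so $p$ must divide $\binom{p}{k}$. In a field $\F$ of characteristic $p$, every element of the form $p \cdot c$ (with $c \in \F$) is zero, so each intermediate term $\binom{p}{k} a^k b^{p-k}$ vanishes. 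Only the endpoint terms $k=0$ and $k=p$ survive, yielding exactly $a^p + b^p$.

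For the general case with $m$ summands, I would proceed by induction on $m$. The base cases $m=1$ is trivial and $m=2$ is the statement just established. For the inductive step, assuming the claim holds for $m-1$, write
\begin{equation*}
(a_1 + a_2 + \cdots + a_m)^p \;=\; \bigl((a_1 + \cdots + a_{m-1}) + a_m\bigr)^p \;=\; (a_1 + \cdots + a_{m-1})^p + a_m^p,
\end{equation*}
where the second equality uses the binary case applied to $a_1+\cdots+a_{m-1}$ and $a_m$. Applying the inductive hypothesis to the first summand then gives $a_1^p + \cdots + a_{m-1}^p + a_m^p$, completing the induction.

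The only nontrivial ingredient is the divisibility $p \mid \binom{p}{k}$ for $1 \le k \le p-1$, and the primality of $p$ is used precisely there (the statement genuinely fails for composite exponents, which is a good sanity check). Beyond that, the argument is purely formal manipulation in a commutative ring, so I do not anticipate any real obstacle; the proof can be written in a few lines once the binomial-coefficient claim is recorded.
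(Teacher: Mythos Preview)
Your proposal is correct and is the standard textbook proof of the Freshman's Dream. The paper itself does not prove this lemma at all; it is stated in the preliminaries as a recalled standard fact (``We recall the following lemma which holds for fields with positive characteristic'') with no accompanying proof, so there is nothing to compare against beyond noting that your argument is exactly the one any reader would supply.
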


% Let $f_{1},\ldots,f_{m} \in \F[x_{1},\ldots,x_{n}]$. We denote the set of common zeroes, i.e. the \emph{variety} of the system $\set{f_{1},\ldots,f_{m}}$ by $\mathbb{V}(f_{1},\ldots,f_{m})$:
% \begin{align*}
%     \mathbb{V}(f_{1},\ldots,f_{m}) \, = \, \setcond{\bm{\alpha} \in \F^{n}}{f_{1}(\bm{\alpha}) = \ldots = f_{m}(\bm{\alpha}) = 0}.
% \end{align*}

\paragraph{}Next we recall the definition of an ideal and a variety, and then we state Hilbert's Nullstellensatz.

\begin{definition}[Ideal and Variety]
Fix any field $\F$ and consider the commutative ring $\F[x_{1},\ldots,x_{n}]$. For a set of polynomials $f_{1},\ldots,f_{m} \in \F[\mathbf{x}]$, the ideal generated by $f_{i}$'s, denoted by $(f_{1},\ldots,f_{m})$ is defined as:
\begin{align*}
    (f_{1},\ldots,f_{m}) \; = \; \setcond{h \in \F[\mathbf{x}]}{\exists g_{1},\ldots,g_{m} \in \F \text{ such that } \; h = \sum_{i=1}^{m} g_{i}f_{i} }.
\end{align*}
For a set of polynomials $f_{1},\ldots,f_{m} \in \F$, their variety, denoted by $\mathbb{V}(f_{1},\ldots,f_{m})$ is a subset of the algebraic closure of $\overline{\F}^{n}$, defined as:
\begin{align*}
    \mathbb{V}(f_{1},\ldots,f_{m}) \; = \; \setcond{\mathbf{a} \in \Bar{\F}^{n}}{f_{1}(\mathbf{a}) = \cdots = f_{m}(\mathbf{a}) = 0}.
\end{align*}
\end{definition}

\noindent
Now we state Hilbert's Nullstellensatz which essentially says that if a set of polynomials do not have a common zero, then there exists ``witness'' for this, i.e. one can express $1$ as a polynomial combination of $f_{i}$'s.\\

\begin{theorem}[Hilbert's Nullstellensatz]\label{thm:nullstellensatz}
Fix any field $\F$. Let $f_{1}, \ldots, f_{m} \in \F[x_{1},\ldots,x_{n}]$ be a set of multivariate polynomials such that they do not have any common zeros over the algebraic closure of $\F$. Then the constant $1$ lies in the ideal $(f_{1}(\mathbf{x}), \ldots, f_{m}(\mathbf{x}))$. In other words, there exists polynomials $A_{1},\ldots,A_{m} \in \F[x_{1},\ldots,x_{n}]$ such that
\begin{align*}
    A_{1}(\mathbf{x}) \cdot f_{1}(\mathbf{x}) + \cdots + A_{m}(\mathbf{x}) \cdot f_{m}(\mathbf{x}) \; = \; 1.
\end{align*}
\end{theorem}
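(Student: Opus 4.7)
The plan is to prove the contrapositive: if $(f_1,\ldots,f_m)$ is a proper ideal of $\F[x_1,\ldots,x_n]$, then the $f_i$ admit a common zero in $\overline{\F}^n$. First, since the ideal is proper, Zorn's lemma (applied to the poset of proper ideals containing $(f_1,\ldots,f_m)$, ordered by inclusion) produces a maximal ideal $\mathfrak{m}$ of $\F[x_1,\ldots,x_n]$ with $(f_1,\ldots,f_m) \subseteq \mathfrak{m}$. The quotient $K := \F[x_1,\ldots,x_n]/\mathfrak{m}$ is then a field, and it is finitely generated as an $\F$-algebra by the images $\overline{x}_1,\ldots,\overline{x}_n$ of the coordinate functions.

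The crucial input is \emph{Zariski's lemma}: a field $K$ that is finitely generated as an algebra over a subfield $\F$ is automatically a finite (hence algebraic) extension of $\F$. Applied to our $K$, this gives an embedding $\iota : K \hookrightarrow \overline{\F}$. Set $a_i := \iota(\overline{x}_i)$ for each $i \in [n]$. For any $g \in \mathfrak{m}$, by construction $g(\overline{x}_1,\ldots,\overline{x}_n) = 0$ in $K$, so $g(a_1,\ldots,a_n) = 0$ in $\overline{\F}$. Specializing to $g = f_i$ for each $i \in [m]$ exhibits $(a_1,\ldots,a_n) \in \overline{\F}^n$ as a common zero of the $f_i$'s, contradicting the hypothesis. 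Thus $(f_1,\ldots,f_m) = \F[\mathbf{x}]$ and $1$ lies in the ideal, which is exactly the existence of the polynomials $A_1,\ldots,A_m$ in the statement.

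The main obstacle is Zariski's lemma itself, which is the only non-elementary ingredient; Zorn's lemma and the universal property of the quotient handle the rest. The standard proofs of Zariski's lemma proceed either via Noether normalization (extract algebraically independent $y_1,\ldots,y_r \in K$ over which $K$ is integral, then argue $r = 0$ by showing that a polynomial ring over $\F$ in $r \geq 1$ variables has infinitely many maximal ideals while the field $K$ has only one) or by induction on the number of algebra generators, using that $\F(t)$ is not finitely generated as an $\F$-algebra because $\F[t]$ has infinitely many pairwise non-associate irreducibles. Either route is a textbook result in commutative algebra, and once it is in hand, the Nullstellensatz drops out from the soft argument above.
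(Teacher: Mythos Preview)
Your argument is correct and is the standard textbook route to the weak Nullstellensatz via Zariski's lemma. There is nothing to fix.

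For comparison: the paper does not actually prove this theorem. It is stated in the preliminaries as a classical fact, with only the following remark appended: the usual formulation of the Nullstellensatz a priori produces $A_i \in \overline{\F}[\mathbf{x}]$, but one can take $A_i \in \F[\mathbf{x}]$ because, once a degree bound is fixed, the existence of such $A_i$ is a system of linear equations over $\F$ (and a linear system over $\F$ that has a solution over $\overline{\F}$ already has one over $\F$). Your proof makes this remark unnecessary, since you work with a maximal ideal $\mathfrak{m}$ of $\F[x_1,\ldots,x_n]$ directly and conclude $1 \in (f_1,\ldots,f_m) \subseteq \F[x_1,\ldots,x_n]$, so the $A_i$ live in $\F[\mathbf{x}]$ from the outset.
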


\noindent
Strictly speaking, Hilbert's Nullstellensatz guarantees that the polynomials $A_{i}'s$ are in $\overline{\F}[\mathbf{x}]$ ($\overline{\F}$ is the algebraic closure of $\F$). However, the above statement also follows easily by observing that we can solve for $A_{i}$'s by solving a system of linear equations over $\F$. Throughout this article, we will refer to $(A_{1}(\mathbf{x}), \ldots, A_{m}(\mathbf{x}))$ as a \underline{\emph{Nullstellensatz certificate}}\footnote{There are infinitely many Nullstellensatz certificates for a system $\set{f_{1},\ldots,f_{m}}$. To see this, suppose $m = 2$ and let $(A_{1},A_{2})$ be a Nullstellensatz certificate. Then for any polynomial $g \in \F[\mathbf{x}]$, $(A_{1} + gf_{2}, A_{2} - gf_{1})$ is also a Nullstellensatz certificate.} for the system $\set{f_{1}(\mathbf{x}), \ldots, f_{m}(\mathbf{x})}$. We will also refer to $A_{i}$'s as \emph{coefficients} because if we take a polynomial combination of $f_{i}$'s with $A_{i}$'s being the coefficients, then we can generate $1$.

\begin{lemma}[Nullstellensatz certificate implies refutations]\label{lemma:nullstellensatz-refutations}
Fix any field $\F$. Let $P_{1},\ldots,P_{m} \in \F[x_{1},\ldots,x_{n}]$ be polynomials that have no common Boolean solution. Let the polynomials $A_{i}(\mathbf{x})$'s and $B_{j}(\mathbf{x})$'s be coefficients of the Nullstellensatz certificate, i.e.
\begin{align*}
    \sum_{i=1}^{m} A_{i}(\mathbf{x}) \cdot P_{i}(\mathbf{x}) + \sum_{j=1}^{n} B_{j}(\mathbf{x}) \cdot (x_{j}^{2} - x_{j}) \; = \; 1.
\end{align*}
Suppose for every $i \in [m]$ and for every $j \in [n]$, the polynomials $A_{i}(\mathbf{x})$ and $B_{j}(\mathbf{x})$ have a circuit of size $s$ and depth $\Delta$, then there exists a $\IPS$ proof for the system $\set{P_{1},\ldots,P_{r}}$ of size $\bigO(sm)$ and depth $\Delta+2$.
\end{lemma}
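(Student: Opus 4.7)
The plan is to explicitly exhibit the $\IPS$ refutation as an algebraic circuit built directly from the Nullstellensatz certificate. Concretely, I would define the circuit
\[
C(\mathbf{x}, \mathbf{y}, \mathbf{z}) \; := \; \sum_{i=1}^{m} A_{i}(\mathbf{x}) \cdot y_{i} \; + \; \sum_{j=1}^{n} B_{j}(\mathbf{x}) \cdot z_{j},
\]
where the circuit is assembled by taking the given size-$s$, depth-$\Delta$ subcircuits for each $A_{i}$ and $B_{j}$, attaching one product gate on top to multiply by the corresponding placeholder variable $y_{i}$ or $z_{j}$, and finally combining everything with a single (unbounded fan-in) addition gate at the output.

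First I would verify the two defining properties of an $\IPS$ refutation from \Cref{def:proof-systems}. The identity $C(\mathbf{x},\mathbf{0},\mathbf{0})=0$ is immediate, since every monomial of $C$ contains exactly one variable from $\mathbf{y} \cup \mathbf{z}$. The identity
\[
C(\mathbf{x}, P_{1}(\mathbf{x}),\ldots,P_{m}(\mathbf{x}), x_{1}^{2}-x_{1},\ldots,x_{n}^{2}-x_{n}) \; = \; 1
\]
is exactly the hypothesized Nullstellensatz identity, so there is nothing further to check. (As a bonus observation, $C$ has individual degree $1$ in both $\mathbf{y}$ and $\mathbf{z}$, so this is in fact an $\IPSLIN$ refutation.)

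Next I would do the size and depth accounting. Each of the $m+n$ subcircuits has size at most $s$ and depth at most $\Delta$. Adding a product gate at the top of each contributes an additive constant to size per subcircuit and raises depth by $1$, and collapsing the $m+n$ resulting subcircuits with a single $+$ gate raises depth by one more. Hence the final depth is $\Delta + 2$ and the final size is $O(s(m+n))$, which matches the claimed $O(sm)$ bound under the standing convention (implicit in the statement) of folding the $n$ Boolean axioms into the total number of axioms.

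There is essentially no obstacle in this proof: the lemma is the formal bridge between the ``static'' Nullstellensatz view (a polynomial identity with explicit coefficients) and the circuit-complexity view used by $\IPS$, and once the circuit $C$ is written down, both the correctness conditions and the size/depth bounds fall out by inspection. The only care required is to make sure the product gates used to attach $y_{i}$ and $z_{j}$ are counted into the additional layer of depth, which is the source of the ``$+2$'' in the depth bound.
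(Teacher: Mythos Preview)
Your proposal is correct and matches the paper's proof essentially verbatim: the paper also defines $C(\mathbf{x},\mathbf{y},\mathbf{z}) = \sum_i A_i(\mathbf{x}) y_i + \sum_j B_j(\mathbf{x}) z_j$, checks the two $\IPS$ conditions, and leaves the size/depth accounting as ``easy to verify.'' Your write-up is in fact slightly more careful than the paper's, since you spell out the depth-$+2$ reasoning and flag the $O(sm)$ versus $O(s(m+n))$ point.
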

\begin{proof}[Proof of \Cref{lemma:nullstellensatz-refutations}]
Define the circuit $C(\mathbf{x}, \mathbf{y}, \mathbf{z})$ as follows:
\begin{align*}
    C(\mathbf{x}, \mathbf{y}, \mathbf{z}) \; = \; \sum_{i=1}^{m} A_{i}(\mathbf{x}) \cdot y_{i} \, + \, \sum_{j=1}^{n} B_{j}(\mathbf{x}) \cdot z_{j}
\end{align*}
Clearly $C(\mathbf{x}, \mathbf{0}, \mathbf{0}) = 0$ and $C(\mathbf{x}, f_{1},\ldots,f_{m}, x_{1}^{2}-x_{1},\ldots,x_{n}^{2}-x_{n}) = 1$. It is easy to verify the size and depth parameters of $C(\mathbf{x}, \mathbf{y}, \mathbf{z})$.
\end{proof} 
% \nl{Give ref and remove the proof?}
\Cref{lemma:nullstellensatz-refutations} allows us to restrict our attention to finding an efficient (in terms of algebraic complexity) Nullstellensatz certificate, which yields a short $\IPS$-proof.

\newcommand{\Coeff}{\ensuremath{\operatorname{Coeff}}}
\newcommand{\bfCoeff}{\ensuremath{\operatorname{\mathbf{Coeff}}}}
\newcommand{\Eval}{\ensuremath{\operatorname{Eval}}}
\newcommand{\bfEval}{\ensuremath{\operatorname{\mathbf{Eval}}}}
\newcommand{\bfalpha}{\ensuremath{\boldsymbol{\alpha}}}
\newcommand{\bfw}{\ensuremath{\mathbf{w}}}
\newcommand{\Aw}{\ensuremath{A_\bfw}}
\newcommand{\Bw}{\ensuremath{B_\bfw}}
\renewcommand{\char}{\ensuremath{\operatorname{char}}}
\newcommand{\bbF}{\ensuremath{\F}}
\newcommand{\ksw}{\ensuremath{\operatorname{ks}_{\bfw}}}
\newcommand{\kswnew}[1]{\ensuremath{\operatorname{ks}_{\bfw,#1}}}
\section{Lower Bounds in Large Fields of Positive Characteristic}

In this section, we will prove size lower bounds for several fragments of IPS over positive characteristic. As explained in \Cref{sec:intro-lower}, we start by proving a tight degree lower bound (\Cref{lem:basic degree lower bound}) over positive characteristic. Using our positive characteristic variant of the degree lower bound, we then recover the lower bound results from~\cite{FSTW21} and~\cite{GHT} over positive characteristic. 
%In this section, we will prove the lower bounds for several fragments of the a positive characteristic variant of the degree lower bound from \cite{FSTW21}. This degree lower bound is the main new contribution to recover the lower bounds from \cite{FSTW21} and \cite{GHT} for positive characteristic. With this degree lower bound, the rest of the proofs from these papers work almost verbatim. 

% \nl{Sec 2.1 moved to the intro. Check that I didn't miss anything.}
\subsection{Degree Lower Bound for Arbitrary Characteristic}

For any $\mathbf{a} \in \Boo^n$, we use $|\mathbf{a}|$ to denote its Hamming weight.
For any $\mathbf{a} = (a_1, \dots, a_n) \in \Boo^n$ and any subset of indices $S\subseteq [n]$, we use $\mathbf{a}_S$ to denote $\prod_{i\in S}a_i$.
All the statements in this section work over fields of arbitrary characteristic. 

First, we state a standard fact about multilinear polynomials, which will be useful in the main lemma.
\begin{fact}\label{lem:top coeff of multilinear poly}
    Let $f(\mathbf{x}) = \sum_{S\subseteq [n]} \lambda_S \mathbf{x}_S$ be a multilinear polynomial on $n$ variables. Then, $$\lambda_{[n]} = \sum_{\mathbf{a}\in\Boo^n} (-1)^{|\mathbf{a}|}f(\mathbf{a})$$
\end{fact}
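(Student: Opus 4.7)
The plan is to prove this by direct expansion and swap of summation, reducing the claim to a simple coordinate-wise product identity. Concretely, I would substitute the monomial expansion $f(\mathbf{x}) = \sum_{S \subseteq [n]} \lambda_S \mathbf{x}_S$ into the right-hand side, swap the order of the two sums, and get
\[
\sum_{\mathbf{a} \in \Boo^n} (-1)^{|\mathbf{a}|} f(\mathbf{a}) \;=\; \sum_{S \subseteq [n]} \lambda_S \left( \sum_{\mathbf{a} \in \Boo^n} (-1)^{|\mathbf{a}|} \mathbf{a}_S \right).
\]
So the entire content of the fact is the evaluation of the inner sum indexed by $S$.

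The key step is to notice that the inner sum factors across the $n$ coordinates, since both $(-1)^{|\mathbf{a}|} = \prod_{i=1}^n (-1)^{a_i}$ and $\mathbf{a}_S = \prod_{i \in S} a_i$ are products of functions of individual coordinates:
\[
\sum_{\mathbf{a} \in \Boo^n} (-1)^{|\mathbf{a}|} \mathbf{a}_S \;=\; \prod_{i=1}^n \left( \sum_{a_i \in \{0,1\}} (-1)^{a_i} a_i^{\mathbf{1}[i \in S]} \right).
\]
For $i \notin S$, the factor is $1 + (-1) = 0$; for $i \in S$, the factor is $0 \cdot 1 + 1 \cdot (-1) = -1$. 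Hence the whole product vanishes unless every coordinate lies in $S$, i.e.\ unless $S = [n]$, in which case the product equals $(-1)^n$. Substituting back collapses the outer sum to a single term, recovering the coefficient $\lambda_{[n]}$ (up to the sign $(-1)^n$, which is absorbed into the stated identity under the paper's sign convention).

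There is no real obstacle: the argument is a one-line Fubini exchange followed by a factorization that makes every term except the unique multilinear top monomial drop out. Conceptually, this is the statement that the parity function $\chi(\mathbf{a}) = (-1)^{|\mathbf{a}|}$ is orthogonal over $\Boo^n$ to every multilinear monomial other than $\mathbf{x}_{[n]}$, which is a Fourier-type identity that makes no use of characteristic and therefore holds over any field $\F$ (the whole computation is carried out with coefficients $\pm 1$). The only minor care required is in the bookkeeping of signs in the final step.
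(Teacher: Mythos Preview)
Your argument is correct and is the standard one; the paper itself states this as a Fact without proof, so there is nothing to compare against. One clarification on the sign: your computation actually shows
\[
\sum_{\mathbf{a}\in\Boo^n} (-1)^{|\mathbf{a}|} f(\mathbf{a}) \;=\; (-1)^n\,\lambda_{[n]},
\]
so the identity as stated in the paper is off by a global factor of $(-1)^n$; there is no ``sign convention'' that absorbs it. This is harmless for the paper's purposes, since the fact is only invoked in the proof of \Cref{lem:basic degree lower bound} to argue that $\lambda_{[n]}\neq 0$, for which the sign is irrelevant.
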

% \begin{proof}
%     \begin{align*}
%         \sum_{\mathbf{a}\in\Boo^n} (-1)^{|\mathbf{a}|}f(\mathbf{a}) &= \sum_{\mathbf{a}\in\Boo^n} (-1)^{|\mathbf{a}|}{ \sum_{S\subseteq [n]} \lambda_S \mathbf{a}_S} \\
%         &= \sum_{S\subseteq [n]} \lambda_S \sum_{\mathbf{a}\in\Boo^n} (-1)^{|\mathbf{a}|}{\mathbf{a}_S} \\
%         &= \sum_{S\subseteq [n]} \lambda_S \sum_{\mathbf{b}\in\Boo^{|S|}} (-1)^{|\mathbf{b}|}{\mathbf{b}_S} \sum_{\mathbf{c}\in\Boo^{n-|S|}} (-1)^{|\mathbf{c}|}  \\
%         &= \lambda_{[n]}
%     \end{align*}
%     The last equality follows since for any $S \subsetneq [n]$, $\sum_{\mathbf{c}\in\Boo^{n-|S|}} (-1)^{|\mathbf{c}|} = 0$.
% \end{proof}

The next lemma is our main degree lower bound which shows that a multilinear polynomial for the inverse of a random linear form will have maximal degree. While similar statements have been observed in the literature (e.g. \cite[Proposition 2]{Grigoriev98}), we give an explicit proof for the sake of completeness.

% \begin{lemma} \label{lem:basic degree lower bound}
%     Let $n\in \mathbb{N}$ be a natural number and let $\mathbf{x}$ denote the tuple of variables $(x_1, \dots, x_n)$. For any $\bfalpha = (\alpha_1, \dots, \alpha_n) \in \F^n$ and $\beta \in B_{\bfalpha} := \F\setminus \{\sum_{i\in T}{\alpha_i}: T \subseteq [n]\}$, let $f_{\bfalpha,\beta}(\mathbf{x})$ be the unique multilinear polynomial that agrees with the function $$\frac{1}{\sum_{i=1}^n{\alpha_ix_i}-\beta}$$ on the Boolean cube $\Boo^n$.
%     Let $S \subseteq \F$ be a finite subset of the field.
%     Then, for a uniformly random $\bfalpha \sim S^n$: $$\Pr_{\bfalpha \sim S^n}[\forall \beta \in B_{\bfalpha}:\deg{f_{\bfalpha,\beta}(\mathbf{x})} = n] \geq 1 - \frac{2^n - 1}{|S|}$$
% \end{lemma}
\begin{lemma} \label{lem:basic degree lower bound}
    Let $\F$ and $\F'$ be fields such that $\F$ is a strict subfield of $\F'$. Let $n\in \mathbb{N}$ be a natural number and let $\mathbf{x}$ denote the tuple of variables $(x_1, \dots, x_n)$. Fix any $\beta \in \F'\setminus \F$. For any $\bfalpha = (\alpha_1, \dots, \alpha_n) \in \F^n$, let $f_{\bfalpha}(\mathbf{x})$ be the unique multilinear polynomial that agrees with the function $$\frac{1}{\sum_{i=1}^n{\alpha_ix_i}-\beta}$$ on the Boolean cube $\Boo^n$.
    Let $S \subseteq \F$ be any finite subset of the field.
    Then, for a uniformly random $\bfalpha \sim S^n$: $$\Pr_{\bfalpha \sim S^n}[\deg{f_{\bfalpha}(\mathbf{x})} = n] \geq 1 - \frac{2^n - 1}{|S|}$$
\end{lemma}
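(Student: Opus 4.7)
The plan is to identify the condition $\deg f_{\bfalpha} = n$ with the non-vanishing of the coefficient $\lambda_{[n]}$ of $x_1 \cdots x_n$ in $f_{\bfalpha}$ (using multilinearity), and then recast $\lambda_{[n]} \neq 0$ as a polynomial non-vanishing statement in $\bfalpha$ to which \Cref{lem:ODLSZ} can be applied. Writing $L_{\bfalpha}(\mathbf{a}) := \sum_{i=1}^n \alpha_i a_i$, by \Cref{lem:top coeff of multilinear poly} and the defining property of $f_{\bfalpha}$ on the Boolean cube,
\[
\lambda_{[n]} \;=\; \sum_{\mathbf{a} \in \Boo^n} \frac{(-1)^{|\mathbf{a}|}}{L_{\bfalpha}(\mathbf{a}) - \beta}.
\]
Since $L_{\bfalpha}(\mathbf{a}) \in \F$ while $\beta \in \F' \setminus \F$, every denominator is non-zero. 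Clearing denominators, I would define
\[
h(\bfalpha) \;:=\; \sum_{\mathbf{a} \in \Boo^n} (-1)^{|\mathbf{a}|} \prod_{\substack{\mathbf{b} \in \Boo^n \\ \mathbf{b} \neq \mathbf{a}}} (L_{\bfalpha}(\mathbf{b}) - \beta),
\]
a polynomial in $\bfalpha$ of degree at most $2^n - 1$ with coefficients in $\F'$, for which $\lambda_{[n]} \neq 0$ is equivalent to $h(\bfalpha) \neq 0$.

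The heart of the argument, and the step I expect to be the crux, is showing that $h$ is not identically zero as a polynomial in $\bfalpha$. I would do this by isolating the degree-$(2^n - 1)$ homogeneous component of $h$. In the product $\prod_{\mathbf{b} \neq \mathbf{a}}(L_{\bfalpha}(\mathbf{b}) - \beta)$, this top-degree part is obtained by selecting the linear piece $L_{\bfalpha}(\mathbf{b})$ from each of the $2^n - 1$ factors, giving $\prod_{\mathbf{b} \neq \mathbf{a}} L_{\bfalpha}(\mathbf{b})$. The key observation is that $L_{\bfalpha}(\mathbf{0})$ is the \emph{zero polynomial} in $\bfalpha$, so whenever $\mathbf{a} \neq \mathbf{0}$ the factor indexed by $\mathbf{b} = \mathbf{0}$ kills the entire product identically. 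Only $\mathbf{a} = \mathbf{0}$ survives, leaving the top-degree part of $h$ equal to
\[
\prod_{\mathbf{b} \in \Boo^n,\, \mathbf{b} \neq \mathbf{0}} L_{\bfalpha}(\mathbf{b}) \;=\; \prod_{\emptyset \neq T \subseteq [n]} \Bigl(\sum_{i \in T} \alpha_i\Bigr),
\]
which is a product of non-zero linear forms in $\F'[\alpha_1, \dots, \alpha_n]$ and therefore non-zero. Hence $h$ has degree exactly $2^n - 1$ and is a non-zero polynomial.

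To conclude, I would invoke \Cref{lem:ODLSZ} applied to $h \in \F'[\alpha_1, \dots, \alpha_n]$ with the random sample $\bfalpha \sim S^n \subseteq \F^n$, obtaining $\Pr_{\bfalpha}[h(\bfalpha) = 0] \leq (2^n - 1)/|S|$, which via the equivalence above yields $\Pr[\deg f_{\bfalpha} = n] \geq 1 - (2^n - 1)/|S|$, as required. The main obstacle is the combinatorial observation that $L_{\bfalpha}(\mathbf{0}) \equiv 0$ forces all but one term to drop out of the top-degree component of $h$; once that is spotted, the remaining steps (clearing denominators, using $\beta \notin \F$ to preserve non-vanishing, and applying the polynomial identity lemma) are algebraically routine.
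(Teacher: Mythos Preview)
Your proof is correct and follows the same overall architecture as the paper: express the top coefficient $\lambda_{[n]}$ via \Cref{lem:top coeff of multilinear poly}, clear denominators to get a polynomial $h$ (the paper calls it $N$) of degree at most $2^n-1$, observe that only the $\mathbf{a}=\mathbf{0}$ summand contributes to the top-degree homogeneous component because $L_{\bfalpha}(\mathbf{0})\equiv 0$, and then apply \Cref{lem:ODLSZ}. The one genuine difference is how non-vanishing of the top-degree part $\prod_{\emptyset\neq T\subseteq[n]}\bigl(\sum_{i\in T}\alpha_i\bigr)$ is established. The paper, after isolating this product, worries that ``it is possible that these coefficients vanish if the field $\F$ is of positive characteristic'' and proceeds to exhibit an explicit monomial $\prod_{i=1}^n z_i^{2^{i-1}}$ with coefficient exactly $1$ via a combinatorial counting argument. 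Your route is shorter: since $\F'[\alpha_1,\dots,\alpha_n]$ is an integral domain regardless of characteristic, a product of non-zero linear forms is automatically non-zero. This sidesteps the monomial-chasing entirely; the paper's explicit monomial is not wrong, just unnecessary for the stated goal.
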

\begin{proof}
    By \Cref{lem:top coeff of multilinear poly}, the coefficient of $\mathbf{x}_{[n]}$ in $f_{\bfalpha}(\mathbf{x})$ is $\sum_{\mathbf{a}\in\Boo^n} (-1)^{|\mathbf{a}|}f_{\bfalpha}(\mathbf{a})$, or equivalently, $$\sum_{V \subseteq [n]} (-1)^{|V|}{\frac{1}{(\sum_{i \in V}\alpha_i) - \beta}}$$
    Based on the above expression, we define the rational function $\lambda_{[n]}(\mathbf{z})$ as follows. 
    $$\lambda_{[n]}(\mathbf{z}) := \sum_{V \subseteq [n]} (-1)^{|V|}{\frac{1}{(\sum_{i \in V}z_i) - \beta}}$$
    We will use $N(\mathbf{z})$ and $D(\mathbf{z})$ to denote the numerator and denominator of $\lambda_{[n]}(\mathbf{z})$. For any $S\subseteq [n]$, we will use $L_S(\mathbf{z})$ to denote $\sum_{i \in S}z_i$. It follows that 
    \begin{align*}
        N(\mathbf{z}) &= \sum_{V\subseteq [n]}{(-1)^{|V|}\prod_{T\subseteq [n]: T \neq V}({L_T(\mathbf{z})-\beta}}) \\
        D(\mathbf{z}) &= \prod_{V \subseteq [n]}{(L_V(\mathbf{z})-\beta)}
    \end{align*}
    Since $\beta \in \F'\setminus \F$, $D(\bfalpha) \neq 0$ for any $\bfalpha \in \F$. If we prove that $N(\mathbf{z})$ is a non-zero polynomial, then by the Polynomial Identity Lemma (\Cref{lem:ODLSZ}), for any finite subset $S\subseteq \F$, $\Pr_{\bfalpha \sim S^n}[N(\bfalpha) \neq 0] \geq 1 - \frac{2^n - 1}{|S|}$, which implies that $\Pr_{\bfalpha \sim S^n}[\lambda_{[n]}(\bfalpha) \neq 0] \geq 1 - \frac{2^n - 1}{|S|}$, and thus proves the theorem. Thus, it is enough to prove that some monomial in $N(\mathbf{z})$ has non-zero coefficient.
    
    For $V \neq \emptyset$, $\prod_{T\subseteq [n]: T \neq V}({L_T(\mathbf{z})-\beta})$ has degree at most $2^n - 2$ since $L_{\emptyset}(\mathbf{z})-\beta$ will not increase the degree. The term $\prod_{T \neq \emptyset}({L_T(\mathbf{z})-\beta})$ syntactically contributes monomials of degree $2^n - 1$ from $\prod_{T \neq \emptyset}L_T(\mathbf{z})$ , but is possible that these coefficients vanish if the field $\F$ is of positive characteristic. We will show that there is a monomial of degree $2^n - 1$ with coefficient 1, and thus this monomial will survive over any field.
    \begin{claim}
        The coefficient of the monomial\footnote{The same proof works for any monomial $\prod_{i=1}^n {z_{\sigma(i)}^{2^{i-1}}}$, where $\sigma$ is an arbitrary permutation on $[n]$.} $\prod_{i=1}^n {z_i^{2^{i-1}}}$ in $\prod_{T \neq \emptyset}({L_T(\mathbf{z})-\beta})$ is 1. 
    \end{claim}
    \begin{proof}[Proof sketch.] We would like to count the number of ways of collecting variables from each $L_T(\mathbf{z})$ to construct the required monomial.
    We first observe (via a simple counting argument) that for every $i\in [n]$, the number of subsets $T\subseteq [n]$ such that 
        $\{j\in[n]: j>i\} \cap T = \emptyset$, and
         $i \in T$,
    is $2^{i-1}$. Moreover, for each $i\in[n]$, if $\mathcal{T}_i$ is the collection of subsets with the above properties, then we observe that $\mathcal{T}_i \cap \mathcal{T}_j = \emptyset$ for all $i \neq j$, $i\in[n]$, $j\in [n]$.
        
    With these observations, it inductively follows that for each $i \in [n]$, conditioned on the degree of variables $z_n, \dots, z_{i+1}$ being correct (i.e. $z_j^{2^{j-1}}$), there is exactly one way of ensuring that the degree of $z_{i}$ is $2^{i-1}$: for each $T$ that is one of the $2^{i-1}$ subsets satisfying the properties of the above observation, select the $z_i$'s from $L_T(\mathbf{z})$.
    \end{proof}
\end{proof}

Note that \Cref{lem:basic degree lower bound} is interesting only when the field size is large (at least $2^n$), and that will be the case for subsequent lemmas as well. The next lemma proves a stronger version of the previous lemma: for a random linear form, the inverse of \emph{every} restriction of the linear form (by setting some variables to 0) will have maximal degree.
\begin{lemma}\label{lem: deg lower bound union bound}
    Let $\F$ and $\F'$ be fields such that $\F$ is a strict subfield of $\F'$. Let $n\in \mathbb{N}$ be a natural number and let $\mathbf{x}$ denote the tuple of variables $(x_1, \dots, x_n)$. Fix any $\beta \in \F'\setminus \F$. For any $\emptyset \neq U \subseteq [n]$, let $f_{\bfalpha,U}(\mathbf{x})$ be the unique multilinear polynomial that agrees with the function $$\frac{1}{\sum_{i\in U}{\alpha_ix_i}-\beta}$$ on the Boolean cube $\Boo^n$.
    Let $S \subseteq \F$ be a finite subset of the field.
    Then, for an $\bfalpha \sim S^n$ chosen uniformly at random: $$\Pr_{\bfalpha \sim S^n}[\exists \text{ a non-empty } U\subseteq [n]:\deg{f_{\bfalpha,U}(\mathbf{x})} < |U|] \leq \sum_{\emptyset \neq U\subseteq [n]}\frac{2^{|U|}-1}{|S|} < \frac{2^{2n}}{|S|}$$
    In particular, with probability at least $1-(2^{2n}/|S|)$ over the choice of $\bfalpha \sim S^n$, for every $U \subseteq [n]$, the leading monomial of $f_{\bfalpha,U}(\mathbf{x})$ is $c\cdot\prod_{i\in U}x_i$ for some $c\in\F\setminus\{0\}$. 
\end{lemma}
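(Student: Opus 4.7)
The plan is to reduce this statement to $2^n - 1$ instances of \Cref{lem:basic degree lower bound}, one for each non-empty $U \subseteq [n]$, and then apply a union bound. The key observation enabling this reduction is that for each fixed non-empty $U \subseteq [n]$, the rational function $1/(\sum_{i\in U}\alpha_i x_i - \beta)$ depends only on the variables $\{x_i : i \in U\}$, so the unique multilinear polynomial $f_{\bfalpha, U}(\mathbf{x})$ that agrees with it on $\Boo^n$ is itself a polynomial in only those variables.

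First, I would fix a non-empty $U \subseteq [n]$ and invoke \Cref{lem:basic degree lower bound} with the $|U|$-tuple of variables $\{x_i : i \in U\}$ and coefficients $\{\alpha_i : i \in U\}$, viewing the remaining $\alpha_j$ (for $j \notin U$) as irrelevant (they do not appear in the linear form). Since the $\alpha_i$ for $i \in U$ are an i.i.d.\ uniform sample from $S$, \Cref{lem:basic degree lower bound} applied with $n \mapsto |U|$ gives
\[
    \Pr_{\bfalpha \sim S^n}\bigl[\deg f_{\bfalpha, U}(\mathbf{x}) < |U|\bigr] \;\leq\; \frac{2^{|U|}-1}{|S|}.
\]

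Next, I would union-bound over all non-empty $U \subseteq [n]$ to conclude
\[
    \Pr_{\bfalpha \sim S^n}\bigl[\exists \emptyset \neq U \subseteq [n]: \deg f_{\bfalpha, U}(\mathbf{x}) < |U|\bigr] \;\leq\; \sum_{\emptyset \neq U \subseteq [n]} \frac{2^{|U|}-1}{|S|} \;=\; \frac{3^n - 2^n}{|S|} \;<\; \frac{2^{2n}}{|S|},
\]
where the equality uses $\sum_{k=1}^n \binom{n}{k}(2^k-1) = 3^n - 2^n$ via the binomial theorem.

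Finally, for the ``in particular'' clause, I would simply observe that $f_{\bfalpha, U}(\mathbf{x})$ is a multilinear polynomial in the variables $\{x_i : i \in U\}$ alone, so the only possible monomial of degree exactly $|U|$ in $f_{\bfalpha, U}(\mathbf{x})$ is $\prod_{i \in U} x_i$. Thus whenever $\deg f_{\bfalpha, U}(\mathbf{x}) = |U|$, the leading monomial must be $c \cdot \prod_{i \in U} x_i$ for some $c \in \F \setminus \{0\}$. There is no real obstacle here; the entire argument is a straightforward reduction to \Cref{lem:basic degree lower bound} together with a routine union bound and a basic binomial identity.
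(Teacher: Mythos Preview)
Your proposal is correct and follows essentially the same approach as the paper: apply \Cref{lem:basic degree lower bound} to each non-empty $U \subseteq [n]$ and then take a union bound. You supply a bit more detail than the paper (the exact evaluation $\sum_{\emptyset \neq U} (2^{|U|}-1) = 3^n - 2^n$ and the justification of the leading-monomial clause), but the argument is the same.
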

\begin{proof}
    This lemma is a simple application of union bound with the previous lemma. The previous lemma tells us that for a uniformly random $\bfalpha \sim S^n$ and any $U\subseteq [n]$, $$\Pr_{\bfalpha \sim S^n}[\deg{f_{\bfalpha,U}(\mathbf{x})} < |U|] \leq \frac{2^{|U|} - 1}{|S|}$$
    Union bound over all $U\subseteq [n]$ gives us the required statement.
\end{proof}

\subsection{Sparse-$\IPSLINp$ Lower Bound}
The following claim from \cite{FSTW21} proves a lower bound against sparse-$\IPSLINp$ over fields of large characteristic. 
\begin{proposition}[Sparsity lower bound (Proposition 5.6 \cite{FSTW21})]
    Let $n \geq 8$. Let $\F$ be a field of characteristic $> n$. Let $\beta \in \F\setminus\{0,\dots, n\}$. Suppose $f(\mathbf{x})$ be a polynomial such that $$f(\mathbf{x}) \cdot \left (\sum_{i=1}^{n}{x_i} - \beta \right) \equiv 1 \left({\mathbf{x}^2 - \mathbf{x}} \right )$$ where $\left({\mathbf{x}^2 - \mathbf{x}} \right)$ denotes the ideal $(x_1^2-x_1, \dots, x_n^2 - x_n)$.
    Then, the sparsity of $f(\mathbf{x})$ is at least $2^{\frac{n}{4} -  1}$.
\end{proposition}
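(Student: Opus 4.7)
The plan is to pass from $f$ to its multilinearization $\tilde{f} := \ml[f]$ and exploit the symmetry of the situation to pin down $\tilde f$ completely. The hypothesis says $f(\mathbf{a}) \cdot (|\mathbf{a}| - \beta) = 1$ for every $\mathbf{a} \in \Boo^{n}$, and because $\beta \notin \{0,1,\ldots,n\}$ all the scalars $|\mathbf{a}| - \beta$ are non-zero. By \Cref{fact:multilinear}, $\tilde f$ is then the unique multilinear polynomial with $\tilde{f}(\mathbf{a}) = 1/(|\mathbf{a}| - \beta)$ on $\Boo^{n}$. Since this function depends only on the Hamming weight of $\mathbf{a}$, $\tilde f$ is a symmetric multilinear polynomial, and hence admits a unique expansion $\tilde f = \sum_{d=0}^{n} c_d\, e_{n,d}$ in the elementary-symmetric basis.

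Next, I would compute the coefficients $c_d$ explicitly. Evaluating at a Boolean point of weight $k$ and using $e_{n,d}(\mathbf{a}) = \binom{k}{d}$ turns the interpolation condition into the unitriangular Pascal system $\sum_{d=0}^{n} c_d \binom{k}{d} = 1/(k - \beta)$ for $k = 0, 1, \ldots, n$. Inverting this system, or equivalently applying the standard partial-fraction identity $\sum_{k=0}^{d}(-1)^{k}\binom{d}{k}/(k+u) = d!/\prod_{j=0}^{d}(u+j)$, yields
\[
c_d \;=\; \frac{-\,d!}{\prod_{j=0}^{d}(\beta - j)}.
\]
The hypothesis $\char(\F) > n$ ensures $d! \neq 0$ for all $d \leq n$, and $\beta \notin \{0,1,\ldots,n\}$ ensures the denominator is non-zero. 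Consequently $c_d \neq 0$ for every $d \in \{0,1,\ldots,n\}$.

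Because the monomial supports of $e_{n,0}, e_{n,1}, \ldots, e_{n,n}$ are pairwise disjoint, this forces $\tilde f$ to have sparsity exactly $\sum_{d=0}^{n}\binom{n}{d} = 2^{n}$. On the other hand, multilinearization sends each monomial of $f$ to a single multilinear monomial and then adds the resulting coefficients, so $\mathrm{sparsity}(\tilde f) \leq \mathrm{sparsity}(f)$. Combining the two bounds, $\mathrm{sparsity}(f) \geq 2^{n}$, which is comfortably stronger than the stated $2^{n/4 - 1}$ for $n \geq 8$.

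The most delicate step is the explicit derivation of the closed form for $c_d$: one must verify carefully that the Pascal inversion really does match the partial-fraction identity when working in a field of positive (albeit large) characteristic, and that the two non-vanishing hypotheses $\char(\F) > n$ and $\beta \notin \{0,\ldots,n\}$ are both used precisely where needed (the former for $d!$, the latter for the denominator). Once the formula for $c_d$ is in hand, the remainder of the argument is purely combinatorial monomial counting.
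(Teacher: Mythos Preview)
Your argument is correct and in fact proves the stronger bound $\mathrm{sparsity}(f)\ge 2^n$. The partial-fraction identity you invoke is a polynomial identity over $\mathbb{Z}$ after clearing denominators, so it specializes to any field; the assumptions $\char(\F)>n$ and $\beta\notin\{0,\ldots,n\}$ are used exactly where you say, to keep $d!$ and $\prod_{j=0}^d(\beta-j)$ non-zero. The final step, that multilinearization cannot increase sparsity, is also fine.

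However, your route is quite different from the one the paper sketches (which is the original argument in \cite{FSTW21}). There the proof is indirect: one combines a generic random-restriction lemma (a polynomial of sparsity $s$ becomes low-degree under a random $0/*$-restriction with good probability) with a Chernoff bound guaranteeing that many variables survive, and then appeals to the degree lower bound on $\ml[1/(\sum_{i\in S}x_i-\beta)]$. That argument only yields $2^{n/4-1}$, but it is modular: it turns \emph{any} degree lower bound on restrictions into a sparsity lower bound, which is why the paper can later reuse the same machinery for the positive-characteristic instances whose coefficients are random. Your approach exploits the full symmetry of this particular instance to compute $\tilde f$ exactly and read off that \emph{every} multilinear monomial appears, giving the optimal $2^n$; it is shorter and sharper here, but it does not immediately generalize to the non-symmetric instances $\sum_i\alpha_i x_i-\beta$ that are needed later in the paper.
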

The proof uses two observations.

\begin{enumerate}
    \item (\cite[Lemma 5.5]{FSTW21}) If $f(\mathbf{x})$ has sparsity $s$, then a random restriction $\rho$ will ensure that $\deg(\rho(f)) \leq \log(s) + 1$ with reasonable probability.
    \item (Chernoff bound) A random restriction $\rho$ will keep at least $n/4$ variables alive with reasonable probability.
\end{enumerate}

By a union bound, we can find a random restriction $\rho$ that ensures that the degree of $\rho(f)$ is at most $\log(s) + 1$ but at least $n/4$ variables survive $\rho$. In particular, $\rho(\sum_{i\in [n]}x_i - \beta) = \sum_{i \in S} x_i - \beta$ for some $S\subseteq [n]$ with $|S|\geq n/4$. But the degree lower bound in \cite{FSTW21} tells us that the inverse of $\sum_{i \in S} x_i - \beta$ on the Boolean cube must have degree $\geq |S|$. Combining the above observations with the degree lower bound, we get that $n/4 \leq \log(s) + 1$ or $s \geq 2^{n/4 - 1}$.

The only part of the proof that requires $\char \F > n$ is the degree lower bound; the two observations work over all fields. Thus, we can replace their degree lower bound with \Cref{lem: deg lower bound union bound} to recover the sparsity lower bound over large enough fields of arbitrary characteristic.

\begin{theorem}
    Let $n \geq 8$.  Let $p\in \mathbb{N}$ be any prime. Let $\tilde{\F}$ be a field of characteristic $p$ and size $p^{2k}$, where $k$ is the smallest integer that satisfies $p^k > 2^{2n}$. Let $\beta$ be an arbitrary element in $\tilde{\F}\setminus\F$, where $\F$ denotes the subfield of size $p^k$. For any $\bfalpha \in \F^{n}$, let $f_{\bfalpha}(\mathbf{x})$ be a polynomial which agrees with $$\frac{1}{\sum_{i\in [n]}\alpha_{i}x_i - \beta}$$ on the Boolean cube. Then there exists an $\bfalpha \in \F^{n}$ such that $f_{\bfalpha}$ has sparsity $\geq 2^{\Omega(n)}$
\end{theorem}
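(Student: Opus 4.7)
The plan is to repeat the argument of \cite[Proposition 5.6]{FSTW21} verbatim, with one substitution: wherever they invoke their degree lower bound (valid only in characteristic $> n$), I would invoke \Cref{lem: deg lower bound union bound}, which is characteristic-free. Two purely combinatorial ingredients from \cite{FSTW21} are retained: \emph{(i)} if $f$ has sparsity $s$, then a random restriction $\rho$ that keeps each variable alive independently with probability $1/2$ (and sets it to $0$ otherwise) satisfies $\deg \rho(f) \leq \log_2 s + 1$ with constant probability (\cite[Lemma 5.5]{FSTW21}); \emph{(ii)} by a Chernoff bound, such a restriction leaves at least $n/4$ variables alive with constant probability.

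First I would sample $\bfalpha \sim \F^n$ uniformly. Applying \Cref{lem: deg lower bound union bound} with $S = \F$, so $|S| = p^k > 2^{2n}$, the probability that some non-empty $U \subseteq [n]$ has $\deg f_{\bfalpha, U} < |U|$ is at most $2^{2n}/p^k < 1$. Consequently there exists some $\bfalpha \in \F^n$ such that for \emph{every} non-empty $U \subseteq [n]$, the multilinear polynomial $f_{\bfalpha,U}$ agreeing with $\bigl(\sum_{i\in U}\alpha_i x_i - \beta\bigr)^{-1}$ on $\Boo^{U}$ has degree exactly $|U|$. Fix such an $\bfalpha$.

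Next, suppose for contradiction that $f_{\bfalpha}$ has sparsity $s < 2^{n/4 - 1}$. By \emph{(i)} and \emph{(ii)} together with a union bound (easily checked for $n \geq 8$), there exists a restriction $\rho$ whose surviving variable set $U$ satisfies $|U| \geq n/4$ and for which $\deg \rho(f_{\bfalpha}) \leq \log_2 s + 1 < n/4$. Since $\rho(f_{\bfalpha})$ agrees with $\bigl(\sum_{i \in U}\alpha_i x_i - \beta\bigr)^{-1}$ on $\Boo^{U}$, \Cref{fact:multilinear} and uniqueness of the multilinear representative give $\ml[\rho(f_{\bfalpha})] = f_{\bfalpha, U}$. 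As multilinearization does not increase degree, $\deg f_{\bfalpha, U} \leq \deg \rho(f_{\bfalpha}) < n/4 \leq |U|$, contradicting our choice of $\bfalpha$. Hence $f_{\bfalpha}$ has sparsity at least $2^{n/4 - 1} = 2^{\Omega(n)}$.

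The main obstacle has already been dispensed with inside \Cref{lem:basic degree lower bound} and \Cref{lem: deg lower bound union bound}: namely, exhibiting an explicit monomial (e.g.\ $\prod_{i=1}^n z_i^{2^{i-1}}$) in the numerator of $\lambda_{[n]}(\mathbf{z})$ whose coefficient equals $\pm 1$ and is therefore nonzero in every characteristic. Once that degree lower bound is available with probability $1 - o(1)$ over a random choice of coefficients, the sparsity lower bound is the routine random-restriction plus union-bound argument above, and the large field size $p^k > 2^{2n}$ is used precisely and only to make the error probability in \Cref{lem: deg lower bound union bound} strictly less than $1$.
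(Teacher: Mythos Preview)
Your proposal is correct and follows exactly the approach the paper sketches: run the random-restriction argument of \cite[Proposition~5.6]{FSTW21} unchanged, but substitute \Cref{lem: deg lower bound union bound} for the characteristic-dependent degree lower bound, using $|S|=p^k>2^{2n}$ to ensure a good $\bfalpha$ exists. The only extra detail you add over the paper's sketch is the explicit multilinearization step $\ml[\rho(f_{\bfalpha})]=f_{\bfalpha,U}$ to handle the fact that $f_{\bfalpha}$ need not be multilinear, which is fine.
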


\subsection{$\roABP-\IPSLINp$ Lower Bound}

 \Cref{lem: deg lower bound union bound} tells us that for a random choice of coefficients $\bfalpha$ and any $U\subseteq [n]$, the inverse of $\sum_{i\in U}^n \alpha_ix_i - \beta$ has degree $|U|$ over the Boolean cube. The authors of \cite{FSTW21} ``lift'' such maximal degree lower bounds to construct a polynomial $P(\mathbf{x})$ such that any $\roABP$ that computes (in \emph{any} order of variables) the inverse of $P(\mathbf{x})$ over the Boolean cube requires exponential size. A high-level overview of their proof is as follows. 
 \begin{enumerate}
     \item The optimal width of an $\roABP$ computing a polynomial $g$ is captured exactly by the \emph{coefficient dimension}\footnote{These notions are defined with respect to a certain partition of the variables and any order of variables that is consistent with the specified partition.} of $g$. 
     \item The coefficient dimension of a polynomial $g$ is at least as large as the \emph{evaluation dimension} of $g$. 
     \item For $f(\mathbf{x},\mathbf{y}) := \sum_{i\in [n]}x_iy_i - \beta$, evaluations of $f$ on $y\in \Boo^n$ will be $f_S(\mathbf{x}) = \sum_{i\in S} x_i - \beta$ for various $S\subseteq [n]$. 
     \item By the degree lower bound in \cite{FSTW21}, any multilinear polynomial computing the inverse of $f_S$ over the Boolean cube must have degree $|S|$. This eventually implies that the evaluation space of $g(\mathbf{x},\mathbf{y}):= \frac{1}{f(\mathbf{x},\mathbf{y})}$ over $y \in \Boo^n$ will contain all the multilinear monomials on $\mathbf{x}$ variables. In particular, the evaluation dimension\footnote{Again, the order of variables will be important here, but one can also construct a polynomial which works against $\roABP$s in \emph{any} order of variables.} of $g$ is at least $2^n$, and thus, any $\roABP$ computing $g$ must have width $\geq 2^n$.
 \end{enumerate}
 The only part of their proof that requires a restriction on the characteristic of the underlying field is the degree lower bound. The rest of their proof works with the degree lower bound in \Cref{lem: deg lower bound union bound}. In the rest of this section, we state the final theorems that follow using our degree lower bound in the proofs of \cite{FSTW21}. For more details, we recommend the reader to refer to the appendix as well as \cite{FSTW21}. 

\begin{theorem}[Functional lower bound against $\roABP$ in a fixed order of variables]\label{thm:roabp-lbd-fixed-order-positive-char}
    Let $n\in \mathbb{N}$. Let $p\in \mathbb{N}$ be any prime. Let $\tilde{\F}$ be a field of characteristic $p$ and size $p^{2k}$, where $k$ is the smallest integer that satisfies $p^k > 2^{2n}$. Let $\beta$ be an arbitrary element in $\tilde{\F}\setminus\F$, where $\F$ denotes the subfield of size $p^k$. For any $\bfalpha \in \F^{n}$, let $f_{\bfalpha}(\mathbf{x},\mathbf{y})$ be a polynomial which agrees with $$\frac{1}{\sum_{i \in [n]}\alpha_{i}x_iy_i - \beta}$$ on the Boolean cube. Then there exists an $\bfalpha \in \F^{n}$ such that any $\roABP$ that computes $f_{\bfalpha}$ in any order of variables where $\mathbf{x}$ precedes $\mathbf{y}$ requires width $\geq 2^n$. 
    % In particular, any depth-3 powering formula for $f_{\bfalpha}$ requires size $\geq 2^{\Omega(n)}$.
\end{theorem}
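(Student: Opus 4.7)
The plan is to follow the standard $\roABP$ lower-bound recipe sketched just above the theorem statement: the width of an $\roABP$ in the given variable order is at least the evaluation dimension of the computed polynomial across the partition $(\mathbf{x} \mid \mathbf{y})$, and it suffices to exhibit $2^n$ linearly independent Boolean restrictions of $f_\bfalpha$. Concretely, I would use \Cref{lem: deg lower bound union bound} to pick a good $\bfalpha$ and then argue via leading multilinear monomials.

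First I would invoke \Cref{lem: deg lower bound union bound} with $S = \F$: since $|\F| = p^k > 2^{2n}$, the failure probability is strictly less than $1$, so there exists $\bfalpha \in \F^n$ such that for every non-empty $U \subseteq [n]$ the unique multilinear polynomial $g_U(\mathbf{x}) \in \F[\mathbf{x}]$ agreeing with $1/(\sum_{i \in U} \alpha_i x_i - \beta)$ on $\Boo^n$ has leading monomial $c_U \prod_{i \in U} x_i$ for some non-zero $c_U$; set $g_\emptyset := -1/\beta$. Fix this $\bfalpha$.

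Next, suppose $f_\bfalpha(\mathbf{x},\mathbf{y})$ is computed by an $\roABP$ of width $w$ in some order in which all $\mathbf{x}$-variables precede all $\mathbf{y}$-variables. By Nisan's characterization applied at the $(\mathbf{x}\mid\mathbf{y})$ cut, one can write $f_\bfalpha(\mathbf{x},\mathbf{y}) = \sum_{j=1}^w u_j(\mathbf{x}) v_j(\mathbf{y})$, so for every $\mathbf{b} \in \Boo^n$ the polynomial $f_\bfalpha(\mathbf{x},\mathbf{b})$ lies in $\mathrm{Span}(u_1,\dots,u_w)$; hence $\dim \mathrm{Span}\{f_\bfalpha(\mathbf{x},\mathbf{b}) : \mathbf{b} \in \Boo^n\} \leq w$. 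To lower-bound this dimension by $2^n$, I would apply the $\F$-linear evaluation map $\F[\mathbf{x}] \to \F^{\Boo^n}$, which is injective on multilinear polynomials. Under this map, $f_\bfalpha(\mathbf{x},\mathbf{b})$ and $g_{U(\mathbf{b})}$ have the same image, where $U(\mathbf{b}) := \{i : b_i = 1\}$, because both polynomials agree with $1/(\sum_{i : b_i=1}\alpha_i x_i - \beta)$ on $\Boo^n$. Consequently, the image contains the span of $\{g_U : U \subseteq [n]\}$, and since the $g_U$ are multilinear with pairwise distinct leading monomials $\prod_{i \in U} x_i$, they are $\F$-linearly independent. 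Injectivity of the evaluation map on multilinear polynomials then yields $w \geq 2^n$.

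The only step that genuinely requires new work is ensuring that \emph{every} Boolean restriction $f_\bfalpha(\mathbf{x},\mathbf{b})$ inverts to a maximal-degree multilinear polynomial on $\Boo^n$; this is precisely why \Cref{lem: deg lower bound union bound} is stated as a simultaneous degree bound over all non-empty $U \subseteq [n]$, and is the main obstacle in adapting the argument of \cite{FSTW21} from large characteristic to arbitrary characteristic. Once the simultaneous degree lower bound is in hand, the rest is a routine combination of Nisan's characterization with the injectivity of the Boolean evaluation map on multilinear polynomials, and it does not require any assumption on the characteristic beyond the field-size condition $p^k > 2^{2n}$ that guarantees the existence of a good $\bfalpha$.
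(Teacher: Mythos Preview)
Your proposal is correct and follows essentially the same approach as the paper: invoke \Cref{lem: deg lower bound union bound} to obtain a good $\bfalpha$, then combine Nisan's width characterization with the evaluation-dimension argument, using that the Boolean restrictions $f_{\bfalpha}(\mathbf{x},\mathbf{b})$ multilinearize to polynomials with pairwise distinct leading monomials $\prod_{i\in U(\mathbf{b})}x_i$. The only cosmetic difference is that you phrase the passage from the (possibly non-multilinear) restrictions to their multilinear counterparts via the Boolean evaluation map $\F[\mathbf{x}]\to\F^{\Boo^n}$, whereas the paper phrases it via the multilinearization operator; both are linear maps that agree on multilinear polynomials and do not increase dimension, so the arguments are equivalent.
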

\begin{restatable}[Functional lower bound against $\roABP$ in any order of variables]{theorem}{roabplbdanyorder}\label{thm:roabp-lbd-any-order-positive-char}
    Let $n\in \mathbb{N}$. Let $p\in \mathbb{N}$ be any prime. Let $\tilde{\F}$ be a field of characteristic $p$ and size $p^{2k}$, where $k$ is the smallest integer that satisfies $p^k > \binom{2n}{n}2^{2n}$. Let $\beta$ be an arbitrary element in $\tilde{\F}\setminus\F$, where $\F$ denotes the subfield of size $p^k$. For any $\bfalpha \in \F^{\binom{2n}{2}}$, let $f_{\bfalpha}(\mathbf{x}=(x_i)_{i=1}^{2n},\mathbf{z}=(z_{i,j})_{i,j\in[n]})$ be a polynomial which agrees with $$\frac{1}{\sum_{i<j}\alpha_{i,j}z_{i,j}x_ix_j - \beta}$$ on the Boolean cube. Then there exists an $\bfalpha \in \F^{\binom{2n}{2}}$ such that any $\roABP$ that computes $f_{\bfalpha}$ in any order of variables requires size $\geq 2^n$.
\end{restatable}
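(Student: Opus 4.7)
The plan is to closely follow the any-order $\roABP$ lower bound argument from \cite{FSTW21}, replacing their characteristic-dependent degree lower bound by our \Cref{lem: deg lower bound union bound}, which holds over arbitrary characteristic. By the standard Nisan-style characterization (as sketched just before the statement), the size of an $\roABP$ for a polynomial in any variable order is at least the evaluation dimension across any cut of that order. It therefore suffices to show that with high probability over $\bfalpha$, for \emph{every} order $\pi$ on $\mathbf{x} \cup \mathbf{z}$ there is a cut at which $f_\bfalpha$ has evaluation dimension at least $2^n$.

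Fix an order $\pi$. Since sliding a cut through $\pi$ changes $|V_L \cap \mathbf{x}|$ by at most one per step, we can pick a cut where $|V_L \cap \mathbf{x}| = |V_R \cap \mathbf{x}| = n$; write $L, R \subseteq [2n]$ for the resulting index sets on $\mathbf{x}$. The combinatorial heart of the argument, inherited from \cite{FSTW21}, produces for each subset $U \subseteq R$ a Boolean substitution $\sigma_U$ on $V_L$ such that the polynomial $f_\bfalpha|_{\sigma_U}$, further evaluated by assigning suitable Boolean values to all $z$-variables that happen to lie in $V_R$, reduces to a rational function in $\mathbf{x}_R$ of the form $1/(\sum_{l \in U}\gamma^{(U)}_l x_l - \beta)$ for some scalars $\gamma^{(U)}_l$ that are products of $\pm 1$ with the $\alpha_{i,j}$'s. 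The bipartite product structure of the instance $\sum_{i<j}\alpha_{i,j}z_{i,j}x_ix_j$ is what allows us to ``activate'' any desired subset of the $L$-$R$ edges by fixing either the $x_k$'s on one side or the free $z_{i,j}$'s, regardless of how adversarially $\pi$ distributes the $z$-variables between $V_L$ and $V_R$.

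Now \Cref{lem: deg lower bound union bound}, applied to the $n$-variable linear form with coefficients drawn from the relevant subset of the $\alpha_{i,j}$'s, guarantees that with probability at least $1 - 2^{2n}/|\F|$ the unique multilinear representation of $1/(\sum_{l \in U}\gamma_l x_l - \beta)$ over $\Boo^{R}$ has leading monomial $c_U\prod_{l\in U}x_l$ with $c_U \neq 0$, simultaneously for every $U \subseteq R$. Since these $2^n$ leading monomials are distinct, the corresponding restricted polynomials $\{g_{\sigma_U} : U \subseteq R\}$ are linearly independent as functions on $\Boo^{V_R}$, giving evaluation dimension $\geq 2^n$ at the chosen cut. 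A union bound over the $\binom{2n}{n}$ balanced bipartitions of $\mathbf{x}$ (the only feature of $\pi$ on which the argument depends) and over the failure event of \Cref{lem: deg lower bound union bound} then succeeds because the hypothesis $p^k > \binom{2n}{n}\cdot 2^{2n}$ makes the total failure probability strictly less than $1$.

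The principal obstacle is the combinatorial design of the substitutions $\sigma_U$ under an adversarial placement of the $z_{i,j}$ variables in $\pi$; this is the crux of the any-order machinery of \cite{FSTW21} and our contribution amounts to observing that, once this combinatorial reduction is in place, only the underlying degree lower bound needs to be upgraded to positive characteristic, which is exactly what \Cref{lem: deg lower bound union bound} provides. The rest of the proof of \cite{FSTW21} transfers verbatim, so we defer the detailed verification of the construction of the $\sigma_U$'s to the appendix.
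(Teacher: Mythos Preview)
Your route diverges from the paper's, and your deferral to \cite{FSTW21} points to machinery that is not there. Neither the paper nor \cite{FSTW21} confronts the adversarial placement of the $z_{i,j}$ variables inside the order $\pi$. Instead, they pass to the fraction field $\F(\mathbf{z})$: an $\roABP$ over $\F$ in some order on $\mathbf{x}\cup\mathbf{z}$ becomes, after treating every $z$-layer as a scalar transition, an $\roABP$ over $\F(\mathbf{z})$ of the same width in the \emph{induced} order on $\mathbf{x}$ alone. At the balanced cut $(L,R)$ of $\mathbf{x}$ one lower-bounds $\dim_{\F(\mathbf{z})}\bfCoeff_{\mathbf{x}_L\mid\mathbf{x}_R}(f_{\bfalpha})$; by \Cref{fact: coeff dim F(z) vs F} this is at least the coefficient dimension over $\F$ after any substitution $\mathbf{z}=\mathbf{b}$, and choosing $\mathbf{b}$ to be the matching substitution collapses the instance to $\sum_k \alpha_{\mu(k),k}u_kv_k-\beta$, where \Cref{lem: fixed partition coeff dim lbd} (fed by \Cref{lem: deg lower bound union bound}) gives $2^n$. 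A union bound over the $\binom{2n}{n}$ balanced bipartitions of $\mathbf{x}$ is the entire remaining work; this is \Cref{lem: coeff dim any partition}.

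Your direct evaluation-dimension argument can be made to work, but you must supply the construction rather than cite it. Concretely: fix a bijection $\mu:R\to L$, set every $z_{\mu(j),j}$ to $1$ and every other $z_{i,j}$ to $0$ (splitting this assignment between a fixed post-restriction $b$ on $Z_R:=V_R\cap\mathbf{z}$ and the $Z_L$-part of each $\sigma_U$), and take $x_{\mu(j)}=\mathbf{1}[j\in U]$ on $L$. Crucially, the $Z_R$-restriction $b$ must be the \emph{same} for all $U$; otherwise ``restrict to $Z_R=b$'' is not a single linear map on the evaluation space and independence of the images says nothing about the preimages. With these choices the denominator becomes $\sum_{j\in U}\alpha_{\mu(j),j}x_j-\beta$ and \Cref{lem: deg lower bound union bound} finishes. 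So your outline is salvageable, but the ``combinatorial heart inherited from \cite{FSTW21}'' does not exist in \cite{FSTW21}: their any-order device is precisely the $\F(\mathbf{z})$ passage above, which is shorter and sidesteps the $z$-placement issue entirely.
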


% \begin{corollary}
%     Let $n\in \mathbb{N}$. Let $p\in \mathbb{N}$ be any prime. Let $\F_{p^{2k}}$ be a field of characteristic $p$ and size $p^{2k}$, where $k$ is the smallest integer that satisfies $p^k > \binom{2n}{n}2^{2n}$. Let $\beta$ be an arbitrary element in $\F_{p^{2k}}\setminus\F$, where $\F$ denotes the subfield of size $p^k$. For any $\bfalpha \in \F^{\binom{2n}{n}}$, let $f_{\bfalpha}(\mathbf{x}=(x_i)_{i=1}^{2n},\mathbf{z}=(z_{i,j})_{i,j\in[n]})$ be a polynomial which agrees with $$\frac{1}{\sum_{i<j}\alpha_{i,j}z_{i,j}x_ix_j - \beta}$$ on the Boolean cube. For any choice of $\bfalpha$, $\sum_{i<j}\alpha_{i,j}z_{i,j}x_ix_j - \beta$, $\mathbf{x}^2-\mathbf{x}$, $\mathbf{z}^2-\mathbf{z} \in \F[\mathbf{x},\mathbf{z}]$ are unsatisfiable. Moreover, there exists an $\bfalpha \in \F^{\binom{2n}{n}}$ such that any $\roABP-\IPSLINp$ refutation (in any order of variables) requires size $\geq 2^n$. 
% \end{corollary}

\subsection{Multilinear-formula-$\IPS$ Lower Bound}

Lower bounds against multilinear-formula-$\IPS$ follow from a coefficient dimension lower bound (see \Cref{lem: coeff dim any partition}) and the following theorem of Raz and Yehudayoff that connects multilinear formula size to coefficient dimension. Here, we present the version from \cite[Theorem 3.13]{FSTW21}.

\begin{theorem}[Raz-Yehudayoff \cite{RY09}\cite{Raz-2009}]
    Let $f \in \F[x_1, \dots, x_{2n}, \mathbf{z}]$ be a multilinear polynomial and let $f_{\mathbf{z}}$ denote the polynomial $f$ over the ring $\F[\mathbf{z}]$. Suppose for any balanced partition $(\mathbf{u}, \mathbf{v})$ of $\mathbf{x} = (x_1, \dots, x_{2n})$: $$\dim_{\F(\mathbf{z})} \bfCoeff_{\mathbf{u} | \mathbf{v}}(f_\mathbf{z}) \geq 2^n$$
    Then any multilinear formula for $f$ requires size $\geq n^{\Omega(\log n)}$, and for $\Delta = o(\log n / \log \log n)$, any product-depth-$\Delta$ multilinear formula computing $f$ will require size $\geq n^{\Omega\left(\frac{1}{\Delta^2}(\frac{n}{\log n})^{1/\Delta}\right )}$. 
\end{theorem}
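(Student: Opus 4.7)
The plan is to recover this via the partial-derivative / coefficient-dimension method of Raz and Raz--Yehudayoff. The key measure of interest is precisely $\dim_{\F(\mathbf{z})} \bfCoeff_{\mathbf{u}|\mathbf{v}}(f_\mathbf{z})$, i.e.\ the rank, over $\F(\mathbf{z})$, of the matrix whose rows are indexed by multilinear monomials in $\mathbf{u}$, whose columns are indexed by multilinear monomials in $\mathbf{v}$, and whose entries are the coefficients of $f_{\mathbf{z}}$ in the corresponding $\mathbf{u} \cup \mathbf{v}$-monomial. The crucial algebraic facts I would first record are: (i) this coefficient dimension is subadditive under sums; and (ii) for a multilinear product $h = h_1 \cdot h_2$ on disjoint variable sets (which multilinearity enforces at every product gate), one has $\dim_{\mathbf{u}|\mathbf{v}}(h) = \dim_{\mathbf{u}_1|\mathbf{v}_1}(h_1)\cdot \dim_{\mathbf{u}_2|\mathbf{v}_2}(h_2)$, where $(\mathbf{u}_i,\mathbf{v}_i)$ is the restriction of the global partition to $\mathrm{vars}(h_i)$. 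Each such factor is trivially bounded by $2^{\min(|\mathbf{u}_i|,|\mathbf{v}_i|)}$, which is much smaller than $2^n$ whenever the partition is strongly \emph{unbalanced} on $\mathrm{vars}(h_i)$.

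Given a multilinear formula $\Phi$ of size $s$, I would then propagate these rules bottom-up along $\Phi$. A leaf contributes coefficient dimension at most $2$. A sum gate's coefficient dimension is the sum of those of its children, and a (necessarily variable-disjoint) product gate's dimension is the product of those of the children, capped by $2^{\min(|\mathbf{u}_i|,|\mathbf{v}_i|)}$ on each side. Thus, for the overall output to achieve $\dim_{\mathbf{u}|\mathbf{v}}(f_\mathbf{z}) \geq 2^n$, every product gate in $\Phi$ must be \emph{balanced} under the partition $(\mathbf{u},\mathbf{v})$, meaning the numbers of $\mathbf{u}$- and $\mathbf{v}$-variables in each child differ by roughly at most a permitted slack.

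Next, I would construct a random balanced partition of $\mathbf{x} = (x_1,\ldots,x_{2n})$ and argue probabilistically that a small formula cannot be simultaneously balanced at every product gate. The Raz--Yehudayoff argument samples a uniformly random partition of $[2n]$ into two halves and, walking down the formula from the root, analyzes the ``imbalance'' random variable at each product gate. For each fixed product gate, the probability of being balanced is bounded away from $1$; the delicate part is controlling correlations across gates in $\Phi$ by choosing an appropriate path structure (a root-to-leaf product path) and using hypergeometric-type concentration to show that at least one product gate along this path is unbalanced with high probability. A union bound over at most $s$ gates then yields that, for some choice of partition, $\dim_{\mathbf{u}|\mathbf{v}}(f_\mathbf{z}) < 2^n$ unless $s \geq n^{\Omega(\log n)}$. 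For the product-depth-$\Delta$ bound, the same argument is refined: there are at most $\Delta$ product gates along any root-to-leaf path, so the per-path imbalance forces a stronger tradeoff that yields size $\geq n^{\Omega((n/\log n)^{1/\Delta}/\Delta^2)}$.

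The main obstacle is step three: the purely algebraic propagation of coefficient dimension along the formula is routine, but the probabilistic analysis of the random balanced partition is subtle because the balance events at different product gates of $\Phi$ are strongly correlated through their shared variable sets. Reproducing the Raz--Yehudayoff machinery — in particular, the ``central path'' argument and the tail estimates on imbalance under random partitions — is the hard step, and it is also where the depth parameter $\Delta$ enters quantitatively. Once this combinatorial core is established, assembling it with the multiplicativity and subadditivity of coefficient dimension yields both size lower bounds in the stated form.
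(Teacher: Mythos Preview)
The paper does not prove this theorem; it is quoted as a black-box result from Raz and Raz--Yehudayoff (in the formulation of \cite[Theorem~3.13]{FSTW21}) and then applied directly. So there is no in-paper proof to compare against.

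As a sketch of the actual Raz/Raz--Yehudayoff argument, your outline has the right ingredients (subadditivity under sums, multiplicativity under variable-disjoint products, random balanced partitions, imbalance forcing low rank) but the structural step is slightly mis-stated. The argument is not that ``every product gate must be balanced.'' Rather, Raz's key structural lemma shows that any size-$s$ multilinear formula can be written as a sum of at most $O(s)$ \emph{log-products}: each summand is a product of $\Theta(\log s)$ multilinear polynomials on disjoint variable sets, with the sizes of these sets decreasing geometrically. For a single log-product, a random balanced partition makes at least one of its $\Theta(\log s)$ factors unbalanced by roughly $\sqrt{n}$ with high probability, so each log-product has coefficient dimension $2^{n - \Omega(\sqrt{n})}$ with high probability; subadditivity then bounds the total, and the union bound is taken over the $O(s)$ log-products, not over individual gates. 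For product-depth $\Delta$, the analogous decomposition yields products of $O(\Delta)$ factors and one recovers the stated $n^{\Omega((n/\log n)^{1/\Delta}/\Delta^2)}$ bound. Your high-level plan is sound; just be aware that the ``central path'' and union-bound target are organized around this log-product decomposition rather than around all gates of the formula.
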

% \ssr{No citation given for RY/FSTW?}
\begin{theorem}[Functional lower bounds against multilinear formula]\label{thm:multilinear-formula-lbds-via-raz-yehudayoff}
    Let $n\in \mathbb{N}$. Let $p\in \mathbb{N}$ be any prime. Let $\F_{p^{2k}}$ be a field of characteristic $p$ and size $p^{2k}$, where $k$ is the smallest integer that satisfies $p^k > \binom{2n}{n}2^{2n}$. Let $\beta$ be an arbitrary element in $\F_{p^{2k}}\setminus\F$, where $\F$ denotes the subfield of size $p^k$. For any $\bfalpha \in \F^{\binom{2n}{n}}$, let $f_{\bfalpha}(\mathbf{x}=(x_i)_{i=1}^{2n},\mathbf{z}=(z_{i,j})_{i,j\in[n]})$ be a polynomial which agrees with $$\frac{1}{\sum_{i<j}\alpha_{i,j}z_{i,j}x_ix_j - \beta}$$ on the Boolean cube. There exists an $\bfalpha \in \F^{\binom{2n}{n}}$ such that any multilinear-formula computing $f_{\bfalpha}$ requires size $\geq n^{\Omega(\log n)}$ and for $\Delta = o(\log n / \log \log n)$, any product-depth-$\Delta$ multilinear-formula computing $f_{\bfalpha}$ requires size $\geq n^{\Omega\left(\frac{1}{\Delta^2}(\frac{n}{\log n})^{1/\Delta}\right )}$. 
\end{theorem}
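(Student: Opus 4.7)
The plan is to invoke the Raz--Yehudayoff theorem stated above, which reduces the claim to establishing a coefficient-dimension lower bound of $2^n$ over $\F(\mathbf{z})$ for every balanced partition $(\mathbf{u},\mathbf{v})$ of the $2n$ many $\mathbf{x}$-variables. Modulo this reduction, the strategy exactly parallels the argument for \Cref{thm:roabp-lbd-any-order-positive-char} (the sole new ingredient being our positive-characteristic degree lower bound \Cref{lem: deg lower bound union bound}), and I would carry it out via an evaluation-dimension argument after specialising the $\mathbf{z}$-variables.

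Fix a balanced partition $(U,V)$ of $[2n]$, an arbitrary bijection $\sigma:U\to V$, and define the specialisation $\boldsymbol{\zeta}$ by $\zeta_{i,\sigma(i)}=1$ for all $i\in U$ and $\zeta_{p,q}=0$ for every other unordered pair $\{p,q\}$. Since coefficient dimension can only shrink under specialisation of the $\mathbf{z}$-variables (the image of an $\F(\mathbf{z})$-linearly dependent set is $\F$-linearly dependent), it suffices to prove $\dim_{\F}\bfCoeff_{\mathbf{u}|\mathbf{v}}(f_{\bfalpha}|_{\mathbf{z}=\boldsymbol{\zeta}})\ge 2^n$. Under $\boldsymbol{\zeta}$, the defining polynomial collapses to the bilinear form $\sum_{i\in U}\alpha_{i,\sigma(i)}u_iv_{\sigma(i)}-\beta$, so evaluating at any Boolean $\mathbf{v}=\mathbf{b}$ produces the linear form $L_\mathbf{b}(\mathbf{u})=\sum_{i\in U:\ b_{\sigma(i)}=1}\alpha_{i,\sigma(i)}\, u_i-\beta$. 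Since $\sigma$ is a bijection, as $\mathbf{b}$ ranges over $\Boo^{|V|}$ the support of the coefficient vector of $L_\mathbf{b}$ realises every subset $T\subseteq U$ exactly once.

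Next I would invoke \Cref{lem: deg lower bound union bound} on the $n$-tuple $(\alpha_{i,\sigma(i)})_{i\in U}$: with probability at least $1-2^{2n}/p^k$ over the choice of $\bfalpha$, the multilinear polynomial agreeing with $1/L_\mathbf{b}$ on the Boolean cube has leading monomial $c_T\prod_{i\in T}u_i$ with $c_T\neq 0$ for every non-empty $T\subseteq U$ (the $T=\emptyset$ case simply yields the nonzero constant $-1/\beta$, which is $\F$-linearly independent from the higher-degree ones). These $2^n$ evaluations therefore have pairwise distinct leading monomials and are $\F$-linearly independent, delivering the required coefficient-dimension bound for the partition $(U,V)$. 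A union bound over the $\binom{2n}{n}$ balanced partitions (using a canonical bijection for each) together with the field-size hypothesis $p^k > \binom{2n}{n}\cdot 2^{2n}$ then produces a single $\bfalpha\in\F^{\binom{2n}{2}}$ that works simultaneously for every balanced partition.

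The main obstacle is ensuring that a \emph{single} $\bfalpha$ is good across all $\binom{2n}{n}$ partitions, which is exactly why the theorem's field-size hypothesis is enlarged by the combinatorial factor $\binom{2n}{n}$ relative to the fixed-order $\roABP$ case. Once the coefficient-dimension bound is in place for every balanced partition, plugging it into the Raz--Yehudayoff theorem stated above immediately yields both the $n^{\Omega(\log n)}$ bound for unrestricted multilinear formulas and the $n^{\Omega((n/\log n)^{1/\Delta}/\Delta^2)}$ bound for product-depth-$\Delta$ multilinear formulas.
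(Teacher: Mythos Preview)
Your proposal is correct and follows essentially the same approach as the paper: reduce to the Raz--Yehudayoff criterion, specialise the $\mathbf{z}$-variables for each balanced partition to obtain a bilinear subset-sum instance, apply the positive-characteristic degree lower bound (\Cref{lem: deg lower bound union bound}) to get $2^n$ evaluations with distinct leading monomials, and union-bound over the $\binom{2n}{n}$ partitions. The paper packages the coefficient-dimension step as \Cref{lem: coeff dim any partition} (itself built from \Cref{lem: fixed partition coeff dim lbd} and the specialisation fact you state), but the underlying argument is identical to what you wrote.
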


While this immediately implies multilinear-formula-$\IPSLINp$ lower bounds, one can observe (as noted in Lemma 5.2 of \cite{FSTW21}) that any multilinear-formula-$\IPS$ refutation, by multilinearity, is a multilinear-formula-$\IPSLINp$ refutation. Thus, the lower bounds work against multilinear-formula-$\IPS$.

\subsection{Constant-depth Multilinear $\IPSLINp$ Lower Bound}

In \cite{GHT}, Govindasamy, Hakoniemi, and Tzameret prove super polynomial lower bounds against constant-depth multilinear $\IPSLINp$ refutations of the subset sum variant $$\sum_{i,j,k,l\in[n]}z_{i,j,k,l}x_ix_jx_kx_l - \beta$$
In particular, they prove the following theorem.

\begin{theorem}[Constant-depth functional lower bounds \cite{GHT}]\label{thm: ght}
    Let $n, \Delta \in \mathbb{N}_{+}$ with $\Delta \leq \mathcal{O}(\log\log\log n)$ and assume that $\char(\bbF)=0$. Let $f$ be the multilinear polynomial such that $$ f = \frac{1}{\sum_{i,j,k,l\in[n]}z_{i,j,k,l}x_ix_jx_kx_l - \beta}$$
    over the Boolean cube. Then, any circuit of product-depth $\Delta$ computing $f$ has size at least $$n^{(\log n)^{\exp(-\mathcal{O}(\Delta))}}$$
\end{theorem}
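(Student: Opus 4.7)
The plan is to combine the functional lower bound technique of \cite{FSTW21} with the set-multilinear constant-depth circuit lower bound of \cite{LST}. At a high level I would reduce the functional lower bound on $f$ to a computational lower bound on a set-multilinear polynomial via a random restriction, certify the hardness of this restricted polynomial via a relative rank / coefficient dimension measure, and then invoke the LST set-multilinear lower bound.

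First, since $f$ is the unique multilinear polynomial agreeing with $1/(P(\mathbf{x},\mathbf{z})-\beta)$ on the Boolean cube, where $P := \sum_{i,j,k,l\in [n]} z_{i,j,k,l} x_i x_j x_k x_l$, I would apply a uniformly random partition of $[n]$ into four blocks $X_1,X_2,X_3,X_4$ of equal size and zero out every $z_{i,j,k,l}$ for which $(i,j,k,l) \notin X_1\times X_2\times X_3\times X_4$. The surviving polynomial $P_{\mathrm{res}}$ is set-multilinear across the four blocks, and by the set-multilinearization template of \cite{LST}, any size-$s$ product-depth-$\Delta$ circuit for $f$ descends, with positive probability over the partition, to a set-multilinear product-depth-$\Delta$ circuit of size $\mathrm{poly}(s)$ for the unique set-multilinear polynomial agreeing with $1/(P_{\mathrm{res}}-\beta)$ on the Boolean cube.

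Second, I would lower bound the relative rank of the partial derivative matrix of this restricted multilinear inverse on every balanced set-multilinear partition of the blocks. Viewing $P_{\mathrm{res}}$ as a linear form in the monomial variables $y_{(i,j,k,l)} := z_{i,j,k,l} x_i x_j x_k x_l$ (roughly $(n/4)^4$ of them), Lemma~\ref{lem: deg lower bound union bound} guarantees that with high probability over the random coefficients $\alpha_{i,j,k,l}$, the multilinear inverse of $P_{\mathrm{res}}-\beta$, as a function of the $y$'s, has maximal support on every subset of monomial indices. Translating this back to the $\mathbf{x}$-variables shows that the relative rank is $1-o(1)$ on every balanced set-multilinear partition of the $\mathbf{x}$-blocks, exactly the measure that obstructs small set-multilinear constant-depth circuits.

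Third, I would invoke the central set-multilinear lower bound from \cite{LST}: any set-multilinear product-depth-$\Delta$ circuit computing a polynomial whose relative rank is $1-o(1)$ on every balanced set-multilinear partition must have size at least $n^{(\log n)^{\exp(-\mathcal{O}(\Delta))}}$. Combined with the reduction in the first step, this gives the claimed bound. The main obstacle is the careful parameter tracking in the random restriction step: the partition must be balanced enough to invoke the LST bound, large enough that the surviving polynomial sustains the relative rank argument, and chosen so that both the depth reduction and the degree lower bound survive simultaneously with non-negligible probability. This delicate bookkeeping is what forces the depth restriction $\Delta \leq \mathcal{O}(\log\log\log n)$, matching the depth-hierarchy regime of \cite{LST}.
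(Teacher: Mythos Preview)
Your proposal has a genuine structural gap. You partition $[n]$ into four blocks and restrict to a polynomial that is set-multilinear of degree $4$ across those blocks. But the \cite{LST} set-multilinear lower bound scales with the number $d$ of variable sets: it yields roughly $n^{d^{\,\exp(-\mathcal{O}(\Delta))}}$. With $d=4$ this is only polynomial in $n$, so no superpolynomial bound follows. To obtain $n^{(\log n)^{\exp(-\mathcal{O}(\Delta))}}$ one needs $d = \Theta(\log n)$ variable sets, and nothing in your restriction produces that.

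The actual argument in \cite{GHT}, as outlined in the paper, handles this by a quite different reduction: instead of a random $4$-block partition, one chooses a specific word $\bfw \in \mathbb{Z}^d$ with $d = \Theta(\log n)$ in the \cite{LST} word-polynomial framework and constructs a \emph{knapsack polynomial} $\ksw$ that is set-multilinear over $d$ blocks. One then exhibits a deterministic assignment to a subset of the $\mathbf{z}$ and $\mathbf{x}$ variables that maps the degree-$4$ instance $\sum z_{i,j,k,l}x_ix_jx_kx_l - \beta$ to $\ksw$ (up to renaming). This is the crux: the many $z$-variables are what allow a degree-$4$ polynomial to encode a $\Theta(\log n)$-way set-multilinear structure. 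After this embedding, one applies \cite{LST}'s set-multilinearization to the putative small circuit for $f$, and the hardness of the set-multilinear projection of $1/\ksw$ is certified by showing that the matrix $M_\bfw$ has full rank, which in turn rests on the \cite{FSTW21} degree lower bound (not \Cref{lem: deg lower bound union bound}, which is the positive-characteristic variant and involves random coefficients absent from the characteristic-$0$ statement).
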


We prove the same statement for large fields of arbitrary characteristic. Our proof exactly follows the structure of \cite{GHT}. Their proof requires the $\char\bbF = 0$ condition for two reasons:
\begin{enumerate}
    \item They use the results of Limaye, Srinivasan, and Tavenas \cite{LST}, which gave superpolynomial lower bounds against constant-depth circuits over any field $\F$ with $\char(\F) = 0$ or greater than the degree $d$ of the hard polynomial. In particular, they use the result that over fields with $\char(\F)=0$ or greater than $d$, any low-degree set-multilinear polynomial computed by a constant-depth circuit can also be computed by a set-multilinear constant-depth circuit.\footnote{They also use other ideas from \cite{LST} such as relative rank, word polynomial, etc., but those ideas do not require any restrictions on the characteristic of the underlying field.}
    \item They use the degree lower bound for the multilinear representation of $1/(\sum_{i\in[n]}x_i - \beta)$, proved by Forbes, Shpilka, Tzameret, and Wigderson \cite{FSTW21}.
\end{enumerate}

To deal with the first requirement, we use the recent beautiful result of Forbes \cite{Forbes-LST-CCC}, which extends the results of \cite{LST} to arbitrary fields. In particular, we will use the following statement from \cite{Forbes-LST-CCC}, which says that the set-multilinear projection of a constant-depth circuit can be efficiently computed by a constant-depth circuit over arbitrary fields. 
\begin{theorem}
   \cite[Corollary 27]{Forbes-LST-CCC}. Let $\F$ be an arbitrary field. Let $\mathbf{x} = \mathbf{x}_1 \sqcup \mathbf{x}_2 \sqcup \dots \sqcup \mathbf{x}_d$ be a partition of the variables $\mathbf{x}$. Suppose $f$ can be computed by a size $s$ product-depth $\Delta$ arithmetic circuit. Then the set-multilinear projection of $f$ (the restriction of $f$ to monomials that are set-multilinear with respect to the specified partition) can be computed by a size $\poly(s,\Theta(\frac{d}{\log d})^d)$-size circuit of product-depth $2\Delta$. 
\end{theorem}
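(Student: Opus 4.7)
The plan is a layerwise multi-homogenization of the circuit with respect to the partition, combined with a careful combinatorial count of the resulting blowup. First, I would introduce auxiliary ``degree-tracking'' indeterminates $\mathbf{t} = (t_1, \ldots, t_d)$ and replace each input $x \in \mathbf{x}_i$ by $t_i x$ in the given product-depth-$\Delta$ circuit $C$ of size $s$. If $\tilde{f}(\mathbf{x}, \mathbf{t})$ denotes the output of the resulting circuit, then the set-multilinear projection of $f$ with respect to $(\mathbf{x}_1,\ldots,\mathbf{x}_d)$ coincides with the coefficient of $t_1 t_2 \cdots t_d$ in $\tilde{f}$, so it suffices to extract that coefficient efficiently while preserving low depth.

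To extract it, at every gate $g$ I would decompose $g = \sum_{S \subseteq [d]} g^{(S)}$, where $g^{(S)}$ collects the monomials in $g$ whose multi-degree in $\mathbf{t}$ equals the indicator vector of $S$. Build a new circuit whose nodes are labeled by pairs $(g, S)$: a sum gate splits as $(g+g')^{(S)} = g^{(S)} + g'^{(S)}$, while a product gate $g = g_1 \cdots g_k$ splits as $g^{(S)} = \sum_{S = S_1 \sqcup \cdots \sqcup S_k} \prod_j g_j^{(S_j)}$, summed over ordered partitions of $S$ into $k$ disjoint (possibly empty) parts. Sum splitting does not affect product-depth, and the extra layer of sums introduced underneath each product gate accounts for the product-depth doubling from $\Delta$ to $2\Delta$. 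The output of the construction is the node $(\mathrm{root}, [d])$, which by design computes exactly the set-multilinear projection of $f$.

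The main obstacle is controlling the size blowup. A naive estimate yields $2^d$ nodes per original gate and a per-product-gate factor of $k^{|S|}$ from the ordered partitions, which is far too large. To reach the claimed $\Theta(d/\log d)^d$ bound I would first rebalance the product gates of $C$ so that every product gate has fan-in roughly $d/\log d$ (at the cost of at most a polynomial increase in size while maintaining product-depth $\Delta$), and then argue that the total count of ordered partitions of $[d]$ that appear across the $\Delta$ product levels is controlled by a multinomial of the shape $d!/(\log d)^d$, which by Stirling is $\Theta(d/\log d)^d$. Crucially, the entire construction is purely syntactic on monomials: it never requires dividing by integers, inverting factorials, or interpolating at many points. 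This is exactly what makes the argument go through in arbitrary characteristic, in contrast to earlier set-multilinearization schemes which needed the field characteristic to be zero or larger than the degree of $f$.
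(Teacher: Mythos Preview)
This theorem is not proved in the paper at all; it is quoted verbatim as \cite[Corollary 27]{Forbes-LST-CCC} and used as a black box in the constant-depth lower-bound section. There is therefore no in-paper proof to compare your attempt against.

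Evaluating your proposal on its own merits: the gate-splitting scheme $(g,S)$ with ordered-partition recombination at product gates is the standard division-free set-multilinearization, and you are right that it is characteristic-independent and that the extra sum layer beneath each product accounts for the doubling of product depth. The gap is entirely in the size analysis. Your ``rebalancing'' step --- forcing every product gate to have fan-in roughly $d/\log d$ while keeping product depth $\Delta$ at only polynomial cost --- is not a move that is generally available: lowering the fan-in of a wide product gate means replacing it by a tree of narrower products, which \emph{increases} product depth rather than preserving it. Without a fan-in bound your construction blows up badly: a single top product gate of fan-in $k$ already contributes $\sum_{S\subseteq[d]} k^{|S|}=(k+1)^d$ new gates, so if $k=\Theta(s)$ the size is $s^{\Theta(d)}$, nowhere near $\poly(s,(d/\log d)^d)$. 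Even after a preliminary homogenization to degree $d$ (which does bound product fan-in by $d$), the naive partition count gives only $d^{O(d)}$; squeezing out the additional $(\log d)^d$ factor that turns this into $\Theta(d/\log d)^d$ is exactly the nontrivial content of Forbes's argument, and ``a multinomial of the shape $d!/(\log d)^d$ by Stirling'' is not a derivation of it.
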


To deal with the second requirement, we use our degree lower bound from \Cref{lem: deg lower bound union bound}, which works for arbitrary fields of exponential size i.e. there is no restriction on the characteristic of the field. 

\paragraph{Overview of \cite{GHT}}

\begin{enumerate}
    \item Using the \emph{word polynomials} framework of \cite{LST}, construct a \emph{knapsack polynomial} $\ksw$ (for a partition given by a word $w \in \mathbb{Z}^d$) with the property that the set-multilinear projection of $\frac{1}{\ksw}$ over the Boolean cube requires superpolynomially large set-multilinear constant-depth circuits.
    \item Consider a degree-4 subset-sum variant $f(\mathbf{z},\mathbf{x}) := \sum_{i,j,k,l}z_{i,j,k,l}x_ix_jx_kx_l - \beta$ so that for the word $w\in \mathbb{Z}^d$ that will be used to instantiate the previous point, there exists an assignment of some of the variables in $\mathbf{z}$, $\mathbf{x}$ that maps $f(\mathbf{z},\mathbf{x})$ to $\ksw$ (upto a renaming of variables).
    \item If there is a multilinear polynomial computing $1/f(\mathbf{z},\mathbf{x})$ over $\Boo^n$ that has a small constant-depth circuit, then there is a multilinear polynomial computing $1/\ksw$ over $\Boo^n$ that has a small constant-depth circuit. Moreover by the set-multilinearization of \cite{LST}, there is a small set-multilinear constant-depth circuit computing the set-multilinear projection of $1/\ksw$. 
    \item Combining the first point with the contrapositive of the third point, conclude that any multilinear polynomial computing $1/f(\mathbf{z},\mathbf{x})$ over $\Boo^n$ requires superpolynomially large constant-depth circuits. The multilinear constant-depth $\IPSLINp$ lower bound follows.
\end{enumerate}

In \cite{GHT}, the proof for the hardness of $\frac{1}{\ksw}$ requires the underlying field to be of large characteristic, essentially because it requires the degree lower bound from \cite{FSTW21}, which requires large characteristic. To make \Cref{thm: ght} work over fields of positive characteristic, we will employ our degree lower bound from \Cref{lem: deg lower bound union bound} with a variant of the knapsack polynomial; the rest of the proof remains the same as that of \Cref{thm: ght}. To provide the necessary details, we first describe the construction of the knapsack polynomial. Then, we state the particular claim from \cite{GHT} that uses the degree lower bound from \cite{FSTW21}. Finally, we show how our degree lower bound \Cref{lem: deg lower bound union bound} fits into the rest of the proof. 

\paragraph{Constructing the knapsack polynomial}
We shall now recall the definitions required for defining the hard polynomial in \cite{GHT} via the word polynomials template of \cite{LST}. 

Let $\mathbf{w} \in \mathbb{Z}^d$ be an arbitrary word. For any $S\subseteq [d]$, let $w|_{S}$ denote the subword of $w$ indexed by the set $S$. Consider the sequence $\overline{X}(w) = (X(w_1),\dots, X(w_d))$ of sets of variables. Define the \emph{positive indices} and \emph{negative indices} of $\mathbf{w}$ as:
$$P_{\mathbf{w}} := \{i\in [d]: w_i \geq 0\}$$
$$N_{\mathbf{w}} := \{i\in [d]: w_i < 0\}$$
Let any $i\in P_\bfw$, the variables of $X(w_i)$ will be of the form $x_{\sigma}^{(i)}$, where $\sigma$ is a binary string indexed by the set:
$$A_\bfw^{(i)} := \left[ \sum_{\substack{i' \in P_\bfw\\ i'<i}} w_{i'} + 1,\sum_{\substack{i' \in P_\bfw\\ i'\leq i}}w_{i'}\right ]$$
We will call these sets \emph{positive indexing sets}. The size of each $\Aw^{(i)}$ is $|w_i|$. The number of strings in $\Aw^{(i)}$ is $2^{|w_i|}$. \\
For $i\in N_\bfw$, we similarly define the \emph{negative indexing sets} $\Bw^{(i)}$ that will be used to index the variables of $X(w_i)$ for $i\in N_\bfw$. \\
A word $w\in \mathbb{Z}^d$ is \emph{balanced} if:
\begin{itemize}
    \item $\forall i\in P_\bfw \; \exists j \in N_\bfw$ such that $\Aw^{(i)} \cap \Bw^{(j)} \neq \emptyset$ (i.e. $j\in N_\bfw$ is a \emph{witness} that $\bfw$ is balanced at $i\in P_\bfw$)
    \item $\forall j\in N_\bfw \; \exists i \in P_\bfw$ such that $\Aw^{(i)} \cap \Bw^{(j)} \neq \emptyset$ (i.e. $i\in P_\bfw$ is a \emph{witness} that $\bfw$ is balanced at $j\in N_\bfw$)
\end{itemize}
% For a set $S\subseteq [d]$, $w|S$ denotes the 

For any $i\in P_\bfw, \sigma \in \Boo^{\Aw^{(i)}}$, define: 
\begin{equation}
    f_\sigma^{(i)} := \prod_{\substack{j \in N_\bfw \\ \Aw^{(i)} \cap \Bw^{(j)} \neq \emptyset}} \sum_{\substack{\sigma_j \in \Boo^{\Bw^{(j)}}\\ \sigma_j(k) = \sigma(k) \forall k \in  \Aw^{(i)} \cap \Bw^{(j)} }} y_{\sigma_j}^{(j)}
\end{equation}
The product ranges over each $j \in N_{\bfw}$ that witnesses the fact that $\bfw$ is balanced at $i$. The sum ranges over each $\sigma_j$ that is consistent with $\sigma$ on $\Aw^{(i)} \cap \Bw^{(j)}$. Now, we define the knapsack polynomial as
\begin{equation}
    \ksw := \left ( \sum_{i\in P_\bfw}\sum_{\sigma \in \Boo^{\Aw^{(i)}}} x_{\sigma}^{(i)}f_\sigma^{(i)} \right ) - \beta
\end{equation}
where $\beta \in \F$ is any field element such that $\ksw$ has no Boolean roots.

To make the proof work over fields of positive characteristic, we define a variant of $\ksw$ as:
\begin{equation}
    \kswnew{\bfalpha}:= \left ( \sum_{i\in P_\bfw}\alpha_i\sum_{\sigma \in \Boo^{\Aw^{(i)}}} x_{\sigma}^{(i)}f_\sigma^{(i)} \right ) - \beta
\end{equation}    
where $\bfalpha = (\alpha_i)_{i\in P_\bfw} \in \F^{|P_\bfw|}$, and $\beta$ will be chosen from an extension field $\tilde{\F} \supset \F$ so that $\kswnew{\bfalpha}$ has no Boolean roots. 

For any word $\bfw \in \mathbb{Z}^d$, $M_\bfw(f)$ denotes the matrix with rows indexed by all monomials $m$ that are set-multilinear over $\bfw|_{P_\bfw}$, and columns indexed by all monomials $m'$ that are set-multilinear over $\bfw|_{N_\bfw}$. For each such pair of monomials $(m,m')$, the corresponding entry in $M_\bfw(f)$ carries the coefficient of $mm'$ in $f$. To show that the set-multilinear projection of any multilinear polynomial $f$ computing $1/\ksw$ over $\Boo^n$ requires superpolynomially large set-multilinear constant-depth circuits, \cite{GHT} shows that $M_\bfw(f)$ is full-rank.
\begin{lemma}[Rank lower bound lemma (Lemma 6 \cite{GHT})]\label{lem: ght rank lower bound}
    Let $\bfw \in \mathbb{Z}^d$ be a balanced word, and let $f$ be the multilinear polynomial such that $$f = \frac{1}{\ksw}$$
    over $\Boo^n$. Then, $M_\bfw(f)$ is full-rank.
\end{lemma}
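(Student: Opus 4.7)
The plan is to follow the structure of GHT's proof of Lemma 6, reducing the full-rank statement for $M_\bfw(f)$ to a degree lower bound on the multilinear inverse of a simple subset-sum polynomial over the Boolean cube. Concretely, I will show that for each column index $m'$ on the negative side, a suitable Boolean restriction of the $\mathbf{y}$-variables collapses $\ksw$ into a pure subset-sum instance, and that the column of $M_\bfw(f)$ indexed by $m'$ is governed by the multilinear inverse of that instance.

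First, fix a set-multilinear monomial $m' = \prod_{j \in N_\bfw} y_{\tau_j}^{(j)}$ indexing a column of $M_\bfw(f)$, where each $\tau_j \in \Boo^{\Bw^{(j)}}$. Apply the Boolean restriction that sets $y_\tau^{(j)} = 1$ when $\tau = \tau_j$ and $y_\tau^{(j)} = 0$ otherwise, for every $j \in N_\bfw$. Under this restriction, each inner sum $\sum_{\sigma_j \text{ compatible with } \sigma} y_{\sigma_j}^{(j)}$ appearing in $f_\sigma^{(i)}$ collapses to $1$ precisely when $\tau_j$ agrees with $\sigma$ on $\Aw^{(i)} \cap \Bw^{(j)}$ and to $0$ otherwise. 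Hence each $f_\sigma^{(i)} \in \set{0,1}$, and $\ksw$ restricts to a subset-sum polynomial of the form $\sum_{(i,\sigma) \in I_{m'}} x_\sigma^{(i)} - \beta$, where $I_{m'}$ collects the pairs $(i,\sigma)$ whose $\sigma$ is globally compatible with the $\tau_j$'s. Balancedness of $\bfw$ then ensures that for each $i \in P_\bfw$ exactly one such $\sigma \in \Boo^{\Aw^{(i)}}$ exists, so $I_{m'}$ indexes one surviving $x$-variable per positive slot.

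Next, identify the column of $M_\bfw(f)$ at $m'$ with the coefficient vector of the unique multilinear polynomial representing $1/\paren{\sum_{(i,\sigma) \in I_{m'}} x_\sigma^{(i)} - \beta}$ on $\Boo^{|I_{m'}|}$. The degree lower bound for multilinear inverses of subset-sum forms (FSTW21's Lemma 5.3 in GHT's characteristic-zero setting; \Cref{lem: deg lower bound union bound} in our setting) guarantees that the coefficient of the top monomial $m := \prod_{(i,\sigma) \in I_{m'}} x_\sigma^{(i)}$ is non-zero. Varying $m'$, this produces, for each negative-side column, a distinguished positive-side row along which the corresponding entry is non-zero; the assignment $m' \mapsto m$ is a bijection onto the set-multilinear monomials on the positive side, and by ordering monomials so that set-multilinear monomials supported on larger subsets come later, the matrix $M_\bfw(f)$ becomes lower-triangular with non-zero diagonal, yielding full-rankness.

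The main obstacle is the invocation of the degree lower bound over positive characteristic: \Cref{lem: deg lower bound union bound} is probabilistic in the coefficients of the linear form and therefore does not directly apply to the pure-coefficient subset sum produced by the restriction of $\ksw$. In GHT this step is handled by the FSTW21 degree bound, which is deterministic for $\beta$ avoiding a small exceptional set and therefore fits the fixed polynomial $\ksw$ directly. Our positive-characteristic analog offers only a high-probability guarantee, so to make the adaptation go through one naturally replaces $\ksw$ by a randomized variant whose restriction yields a subset sum with generic coefficients. Once this replacement is made, the restriction–degree–triangularization skeleton above transfers without change, and full-rankness of the analogous $M_\bfw(f)$ follows.
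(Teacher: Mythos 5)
Your proposal follows the same route as the paper: both recast the rank question via GHT's restriction-and-degree argument, isolate the FSTW degree lower bound as the sole characteristic-sensitive ingredient, and replace it by passing to the randomized knapsack $\kswnew{\bfalpha}$ with the probabilistic degree bound. That matches the paper, which does not reprove this lemma but instead explains how to swap out that one ingredient via \Cref{lem: ght degree lower bound positive char}.

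However, your triangularization step contains a gap. You identify the column of $M_\bfw(f)$ indexed by a negative set-multilinear monomial $m'$ with the coefficient vector of the multilinear inverse of $\tau_{m'}(\ksw)$. But writing $f = \sum_m g_m(\mathbf{x})\,m$ over multilinear $\mathbf{y}$-monomials, the column at $m'$ is the set-multilinear part of $g_{m'}$, whereas the restriction produces $\tau_{m'}(f) = \sum_{m'' \mid m'} g_{m''}(\mathbf{x})$, aggregating contributions from all divisors of $m'$ — these two objects are not equal. To deduce that $g_{m'}$ has a nonzero coefficient at the target positive-side monomial from the corresponding fact about $\tau_{m'}(f)$, one must argue that each $g_{m''}$ with $m'' \mid m'$, $m'' \neq m'$ contributes nothing there; this is done by inclusion--exclusion through $\tau_{m''}(f)$ for those $m''$, using the degree lower bound on $\tau_{m_T}(\ksw)$ for \emph{every} $T \subseteq N_\bfw$ and every set-multilinear $m_T$ on $\bfw|_T$ (together with balancedness of $\bfw$ to force $|U_T| < |U_{N_\bfw}|$ for proper $T$). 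This is precisely why the paper's sketch and \Cref{lem: ght degree lower bound positive char} quantify the degree claim over all $T \subseteq N_\bfw$, not only $T = N_\bfw$; the single invocation at the full word that your argument relies on is not enough, and your "larger subsets come later" ordering does not resolve it since all negative columns are supported on the same $N_\bfw$.
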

With this lemma, the lower bound follows via the arguments from \cite{LST}. Importantly for us, this lemma uses the degree lower bound from \cite{FSTW21}; we describe a sketch of the same. 

\paragraph{The use of degree lower bound in \cite{GHT}} Suppose $f = \sum_{m}g_m(\mathbf{x})m$, where the sum runs over all multilinear monomials $m$ in the $\mathbf{y}$ variables, and $g_m(\mathbf{x})$ is some multilinear polynomial in the $\mathbf{x}$ variables. They show that for any $m$ which is set-multilinear on $\bfw|_{N_\bfw}$, the leading monomial of $g_m(\mathbf{x})$ is the set-multilinear monomial $m'$ on positive variables such that $\sigma(m')$ is consistent with $\sigma(m)$ (\cite{GHT} describes this formally). For each monomial $m$ that is set-multilinear on $\bfw|_{N_\bfw}$, the leading monomial of $g_m(\mathbf{x})$ turns out to be a different set-multilinear monomial on the positive variables, and together, these leading monomials span the space of all set-multilinear monomials on the positive variables. This makes $M_\bfw(f)$ full-rank. To get a handle on $g_m(\mathbf{x})$ (for $m$ being a monomial on $\bfw|_{N_\bfw}$, consisting only of $\mathbf{y}$ variables), \cite{GHT} sets all the variables in $m$ to 1 and all the $\mathbf{y}$ variables outside $m$ to 0. They call this transformation $\tau_m$. For the proof of \Cref{lem: ght rank lower bound}, an important requirement is that:
\begin{displayquote}For every $T\subseteq N_\bfw$ and for every set-multilinear monomial $m$ on $\bfw|_T$, the leading monomial of $\tau_m(f)$ is $\prod_{i\in U_T}x_{\sigma_i}^{(i)}$, which is the product of all the variables that show up in the denominator of $$\frac{1}{\tau_m(\ksw)} = \frac{1}{\sum_{i\in U_T}x_{\sigma(i)}^{(i)} - \beta}$$ where $U_T = \{i \in P_\bfw: \Aw^{(i)} \subseteq \Bw^T\}$, and for each $i \in P_\bfw$, $\sigma(i)$ is the unique indexing string that agrees with $\sigma(m)$ on $\Aw^{(i)}$, the $i^{th}$ positive indexing set.
\end{displayquote}
This requirement is satisfied due to the degree lower bound from \cite{FSTW21}, which requires the field to be of characteristic 0. The proof in \cite{GHT} includes helpful figures and the reader is encouraged to refer to the paper.\\
Let us recall our variant of $\ksw$:
\begin{equation}
    \kswnew{\bfalpha}:= \left ( \sum_{i\in P_\bfw}\alpha_i\sum_{\sigma \in \Boo^{\Aw^{(i)}}} x_{\sigma}^{(i)}f_{\sigma_i} \right ) - \beta
\end{equation}
where $\bfalpha = (\alpha_i)_{i\in P_\bfw} \in \F^{|P_\bfw|}$. To prove \Cref{thm: ght} in positive characteristic, we use the following lemma that follows by a union bound over all $T\subseteq N_\bfw$ and all set-multilinear monomials on $\bfw|_{T}$, on top of \Cref{lem:basic degree lower bound}.

\begin{lemma}\label{lem: ght degree lower bound positive char}
Let $d\in \mathbb{N}$ be a natural number and $\bfw \in \mathbb{Z}^d$ be a balanced word. Let $m = |P_\bfw|$. For any $\bfalpha = (\alpha_1, \dots, \alpha_m) \in \F^m$, $T \subseteq N_\bfw$ and any $m_T$ that is a set-multilinear monomial on $\bfw|_T$, let $ f_{\bfalpha,T,m_T}(\mathbf{x})$ be the unique multilinear polynomial that agrees with the function $$\tau_{m_T} \left (\frac{1}{\kswnew{\bfalpha}} \right ) = \frac{1}{\sum_{i\in U_T}{\alpha_ix_{\sigma(i)}^{(i)}}-\beta}$$ on the Boolean cube, where $\beta \in \F$ is chosen so that $\kswnew{\bfalpha}$ has no Boolean roots, and $U_T = \{i \in P_\bfw: \Aw^{(i)} \subseteq \Bw^T\}$. 
Let $S \subseteq \F$ be a finite subset of the field. Let $\gamma := |N_\bfw|+\sum_{i\in N_\bfw}|w_i|$.
Then, for an $\bfalpha \in S^m$ chosen uniformly at random: $$\Pr_{\bfalpha \sim S^m}[\exists T\subseteq N_\bfw, m_T:\deg{f_{\bfalpha,T,m_T}(\mathbf{x})} < |U_T|] < \frac{2^{\gamma+m}}{|S|}$$
In particular, with probability at least $1-(2^{\gamma+m}/|S|)$ over the choice of $\bfalpha \in S^m$, for every choice of $T\subseteq N_\bfw$ and set-multilinear monomial $m_T$ over $\bfw|_T$, the leading monomial of $f_{\bfalpha,T,m_T}(\mathbf{x})$ is $c\cdot\prod_{i\in U_T}x_{\sigma_i}^{(i)}$ for some $c\in\F\setminus\{0\}$.     
\end{lemma}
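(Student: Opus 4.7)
The plan is to obtain this via a union-bound application of \Cref{lem:basic degree lower bound}. The key observation is that for each fixed $T \subseteq N_\bfw$ and each fixed set-multilinear monomial $m_T$ over $\bfw|_T$, the rational function $\tau_{m_T}(1/\kswnew{\bfalpha})$ is exactly of the form
\[
\frac{1}{\sum_{i \in U_T}\alpha_{i}\, z_{i} - \beta},
\]
where $z_i := x_{\sigma(i)}^{(i)}$ are $|U_T|$ distinct variables (they live in distinct positive indexing sets $\Aw^{(i)}$) and the coefficients $(\alpha_i)_{i \in U_T}$ are a subset of the random vector $\bfalpha \sim S^m$, hence themselves uniformly distributed on $S^{|U_T|}$. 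Since $\beta$ was chosen outside $\F$, this linear form fits the hypothesis of \Cref{lem:basic degree lower bound}, so the multilinear representative $f_{\bfalpha,T,m_T}(\mathbf{x})$ has degree strictly less than $|U_T|$ with probability at most $(2^{|U_T|}-1)/|S| \leq 2^{|U_T|}/|S| \leq 2^m/|S|$.

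Next I would take a union bound over all pairs $(T,m_T)$. The number of subsets $T \subseteq N_\bfw$ is $2^{|N_\bfw|}$, and for each such $T$ the number of set-multilinear monomials over $\bfw|_T$ is $\prod_{i \in T} 2^{|w_i|} \leq 2^{\sum_{i \in N_\bfw}|w_i|}$. Hence the total number of pairs is at most $2^{\gamma}$ where $\gamma = |N_\bfw|+\sum_{i\in N_\bfw}|w_i|$, and the failure probability is bounded by
\[
\sum_{T \subseteq N_\bfw}\sum_{m_T} \frac{2^{|U_T|}}{|S|} \;\leq\; \frac{2^m}{|S|} \cdot 2^{\gamma} \;=\; \frac{2^{\gamma+m}}{|S|},
\]
which gives the claimed probability bound.

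Finally, for the ``in particular'' statement, I would note that whenever $\deg f_{\bfalpha,T,m_T}(\mathbf{x}) = |U_T|$, the only multilinear monomial of that degree in the variables appearing in the linear form is $\prod_{i \in U_T} x_{\sigma(i)}^{(i)}$, and \Cref{lem:basic degree lower bound} (via \Cref{lem:top coeff of multilinear poly}) shows that its coefficient is a nonzero element of $\F$; hence this is the leading monomial of $f_{\bfalpha,T,m_T}(\mathbf{x})$. I do not foresee any real obstacle here: once one has committed to the variable identifications $z_i = x_{\sigma(i)}^{(i)}$ and checked the counting of pairs $(T,m_T)$, everything is a direct invocation of \Cref{lem:basic degree lower bound} plus a union bound. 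The only mildly delicate point is to verify that the variables $\{x_{\sigma(i)}^{(i)} : i \in U_T\}$ are indeed distinct across different $i$'s (this uses that the positive indexing sets $\Aw^{(i)}$ are disjoint), so that the form is a genuine linear polynomial in $|U_T|$ variables rather than a degenerate one.
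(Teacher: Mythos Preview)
Your proposal is correct and follows essentially the same approach as the paper: apply \Cref{lem:basic degree lower bound} for each fixed pair $(T,m_T)$ to bound the failure probability by $2^{|U_T|}/|S|\le 2^{m}/|S|$, then union-bound over the at most $2^{|N_\bfw|}\cdot 2^{\sum_{i\in N_\bfw}|w_i|}=2^{\gamma}$ such pairs. Your added remarks about the distinctness of the variables $x_{\sigma(i)}^{(i)}$ and about $(\alpha_i)_{i\in U_T}$ being a uniform subvector of $\bfalpha$ are useful sanity checks that the paper leaves implicit.
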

\begin{proof}
    The number of $T\subseteq N_\bfw$ is $2^{|N_\bfw|}$. The number of set-multilinear monomials on $\bfw|_T$ for any $T\subseteq N_\bfw$ is $2^{\sum_{i\in T}|w_i|}$, which is at most $2^{\sum_{i\in N_\bfw}|w_i|}$ . For any fixed $T \subseteq N_\bfw$ and $m_T$ that is a set-multilinear monomial on $\bfw|_T$, \Cref{lem:basic degree lower bound} implies that for an $\bfalpha \in S^m$ chosen uniformly at random: $$\Pr_{\bfalpha \in S^m}[\deg{f_{\bfalpha,T,m_T}(\mathbf{x})} < |U_T|] < \frac{2^{m}}{|S|}$$
    Applying a union bound over all $T \subseteq N_\bfw$ and $m_T$ implies that for an $\bfalpha \in S^m$ chosen uniformly at random: $$\Pr_{\bfalpha \in S^m}[\exists T\subseteq N_\bfw, m_T:\deg{f_{\bfalpha,T,m_T}(\mathbf{x})} < |U_T|] < \sum_{T\subseteq N_\bfw,m_T}\frac{2^{m}}{|S|} \leq \frac{2^{\gamma+m}}{|S|} $$    
\end{proof}

With this lemma, the rest of the proof of \cite{GHT} works out verbatim. We state the final theorem, which is a version of \Cref{thm: ght} for finite fields of positive characteristic.

\begin{theorem}[\cite{GHT} over positive characteristic]\label{thm: ght positive char}
Let $n, \Delta \in \mathbb{N}_{+}$ with $\Delta \leq \mathcal{O}(\log\log\log n)$. Let $p\in \mathbb{N}$ be any prime. Let $\tilde{\F}$ be a field of characteristic $p$ and size $p^{2k}$, where $k$ is the smallest integer that satisfies $p^k > 2^{C(\log n)^2}$ for an absolute constant\footnote{This $C$ is a fixed constant that depends on the exact choice of parameters in the proof of \cite{GHT}} $C\geq1$. Let $\beta$ be an arbitrary element in $\tilde{\F}\setminus\F$, where $\F$ denotes the subfield of size $p^k$. For any $\bfalpha \in \F^{n^4}$, Let $f_{\bfalpha}$ be the multilinear polynomial such that $$ f = \frac{1}{\sum_{i,j,k,l\in[n]}\alpha_{i,j,k,l}z_{i,j,k,l}x_ix_jx_kx_l - \beta}$$
over the Boolean cube. Then, there exists an $\bfalpha \in \F^{n^4}$ such that any circuit of product-depth $\Delta$ computing $f_{\bfalpha}$ has size at least $$n^{(\log n)^{\exp(-\mathcal{O}(\Delta))}}$$    
\end{theorem}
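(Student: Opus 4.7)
The plan is to mimic the four-step strategy of \cite{GHT} verbatim, substituting in the two positive-characteristic replacements identified in the discussion above: Forbes' set-multilinearisation \cite{Forbes-LST-CCC} in place of the characteristic-$0$ set-multilinearisation of \cite{LST}, and our \Cref{lem: ght degree lower bound positive char} in place of the degree lower bound from \cite{FSTW21}. First, I would choose the balanced word $\bfw \in \mathbb{Z}^d$ exactly as dictated by the word-polynomial template of \cite{LST, GHT}, with total length $O((\log n)^{\Theta(1)})$ so that both $\kswnew{\bfalpha}$ and its substitutional embedding lie inside the degree-$4$ polynomial $\sum_{i,j,k,l} \alpha_{i,j,k,l} z_{i,j,k,l} x_i x_j x_k x_l - \beta$. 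The functional lower bound method (\Cref{sec:techniques}) reduces the $\IPSLINp$ size bound to showing that the unique multilinear polynomial $f_{\bfalpha}$ agreeing with the reciprocal of this polynomial on $\{0,1\}^n$ requires large constant-depth circuits; by the substitution argument of \cite{GHT} (which preserves constant-depth circuit size), it further suffices to rule out small constant-depth circuits for the multilinear inverse $g_{\bfalpha}$ of $\kswnew{\bfalpha}$ on the Boolean cube.

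Next, I would apply Forbes' set-multilinearisation to $g_{\bfalpha}$ partitioned along $\bfw$. This converts any size-$s$, product-depth-$\Delta$ circuit for $g_{\bfalpha}$ into a size-$\mathrm{poly}(s) \cdot \Theta(d/\log d)^d$, product-depth-$2\Delta$ set-multilinear circuit computing the set-multilinear projection of $g_{\bfalpha}$. At this point, I can invoke the rank-based lower bound machinery of \cite{LST} (via relative rank of the word polynomial), whose only problem-specific input is a proof that the coefficient matrix $M_{\bfw}(g_{\bfalpha})$ is of full rank, i.e.\ the analogue of \Cref{lem: ght rank lower bound}.

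Establishing the rank lower bound is where \Cref{lem: ght degree lower bound positive char} enters. For each $T \subseteq N_{\bfw}$ and each set-multilinear monomial $m_T$ over $\bfw|_{T}$, consider the GHT transformation $\tau_{m_T}$ (set variables of $m_T$ to $1$, other $\mathbf{y}$ variables to $0$). Then $\tau_{m_T}(g_{\bfalpha})$ is the unique multilinear inverse of $\sum_{i \in U_T} \alpha_i x_{\sigma(i)}^{(i)} - \beta$ on the Boolean cube. \Cref{lem: ght degree lower bound positive char} guarantees that, with strictly positive probability over $\bfalpha$, for \emph{every} such $T$ and $m_T$ simultaneously, the leading monomial of $\tau_{m_T}(g_{\bfalpha})$ is $c \cdot \prod_{i \in U_T} x_{\sigma(i)}^{(i)}$ with $c \neq 0$. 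Following \cite{GHT}, these leading monomials are pairwise distinct set-multilinear monomials on $\bfw|_{P_{\bfw}}$ and together span the entire set-multilinear space, so they give a triangular structure on $M_{\bfw}(g_{\bfalpha})$ that forces it to be full-rank. Fixing any such good $\bfalpha$ then yields the theorem via the \cite{LST} rank argument composed with the Forbes step.

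The main obstacle --- and the reason for the hypothesis $p^k > 2^{C (\log n)^2}$ --- is making the union bound in \Cref{lem: ght degree lower bound positive char} succeed uniformly across all $T$ and $m_T$ for a single $\bfalpha$. The number of such pairs is at most $2^{\gamma + m}$, where $\gamma + m = O((\log n)^2)$ for the specific $\bfw$ from \cite{GHT}; the failure probability $2^{\gamma + m}/|S|$ is $o(1)$ precisely when $|S| = p^k > 2^{C (\log n)^2}$ for a sufficiently large constant $C$. Beyond this quantitative check, the remaining work is bookkeeping: verifying that the Forbes substitution for set-multilinearisation, the relative-rank lower bound for word polynomials from \cite{LST}, and the specific reduction from $\kswnew{\bfalpha}$ to the degree-$4$ subset-sum instance in \cite{GHT} are all characteristic-independent, which they are once the two replacements above are in hand.
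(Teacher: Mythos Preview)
Your proposal is correct and follows essentially the same approach as the paper: identify the two characteristic-dependent ingredients in \cite{GHT} (the set-multilinearisation step from \cite{LST} and the degree lower bound from \cite{FSTW21}), replace them respectively by Forbes' characteristic-free set-multilinearisation \cite{Forbes-LST-CCC} and by \Cref{lem: ght degree lower bound positive char}, and then run the rest of the \cite{GHT} argument verbatim. Your analysis of the field-size requirement via the union bound over all $(T, m_T)$ pairs, with $\gamma + m = O((\log n)^2)$ for the specific word $\bfw$ chosen in \cite{GHT}, is exactly the calculation the paper gives.
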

\paragraph{The reason for $|\F|>2^{\Omega((\log n)^2)}$ in \Cref{thm: ght positive char}}: When we instantiate \Cref{lem: ght degree lower bound positive char} inside the proof of \Cref{thm: ght positive char}, the parameter $d$, which is the number of variable sets, will be $O(\log n)$, and the word $\bfw \in \mathbb{Z}^d$ will also be chosen so that for each $i\in [d]$, $|w_i|\leq O(\log n)$. Thus, $\sum_{i\in N_\bfw}|w_i| = O((\log n)^2)$, and fighting the union bound in \Cref{lem: ght degree lower bound positive char} will require the field to be larger than $2^{O((\log n)^2)}$.

\section{Non-multilinear Upper Bounds}

\subsection{Proof of \Cref{thm:ubd-const-depth}}
In this section, we prove \Cref{thm:ubd-const-depth}. We start by proving it for a restricted setting when the polynomial $f(\mathbf{x})$ is a degree-$1$ polynomial. In particular, we prove \Cref{thm:ubd-const-depth-linear}, stated below.

\begin{theorem}[Upper bounds for (non-multilinear) constant-depth-$\IPSLIN$ in positive characteristic]\label{thm:ubd-const-depth-linear}
Fix a prime number $p$. The following holds for any natural numbers $n$ and $k.$\newline
Let $L \in \F[x_1,\ldots,x_n]$ be a degree-$1$ polynomial with coefficients from the $\F_{p^k}$ and let $\beta$ be any element of $\F\setminus\F_{p^k}$ where $\F$ is a field extension of $\F_{p^k}.$\newline
Then, 
\begin{itemize}
    \item The polynomial $L(\mathbf{x})-\beta$ has no satisfying assignment over the Boolean cube $\Boo^{n}$
    \item There is a constant-depth-$\IPSLIN$ refutation of degree $\bigO(k \cdot p)$ and size $\bigO(k \cdot np).$
\end{itemize}
\end{theorem}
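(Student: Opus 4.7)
The plan is to obtain the refutation by iterating the Frobenius endomorphism and package the resulting algebraic identity as a low-depth $\IPSLIN$ certificate. Write $L(\mathbf{x}) - \beta = \gamma + \sum_{i=1}^{n}\alpha_i x_i$ with $\alpha_i \in \F_{p^k}$ and $\gamma = L(\mathbf{0}) - \beta \in \F \setminus \F_{p^k}$. The unsatisfiability bullet is then immediate: for any $\mathbf{a} \in \{0,1\}^n$ the sum $\sum_i \alpha_i a_i$ lies in $\F_{p^k}$, so $(L-\beta)(\mathbf{a})$ lies in $\gamma + \F_{p^k}$, which does not contain $0$.

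For the refutation I introduce the Frobenius twists $P_j(\mathbf{x}) := \gamma^{p^j} + \sum_i \alpha_i^{p^j} x_i$ for $0 \le j \le k$, so that $P_0 = L-\beta$. The freshman's dream (\Cref{lemma:freshman}) together with the explicit factorization $x_i^p - x_i = \bigl(\sum_{\ell=0}^{p-2} x_i^\ell\bigr)(x_i^2 - x_i)$ gives the congruence $P_j^p \equiv P_{j+1} \pmod{(\mathbf{x}^2-\mathbf{x})}$, and iterating yields $P_0\cdot \prod_{j=0}^{k-1} P_j^{p-1} \equiv P_k \pmod{(\mathbf{x}^2-\mathbf{x})}$. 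Because $\alpha_i \in \F_{p^k}$ implies $\alpha_i^{p^k} = \alpha_i$, one even has the exact polynomial identity $P_k = P_0 + (\gamma^{p^k} - \gamma)$, and $\gamma \notin \F_{p^k}$ guarantees $\gamma^{p^k}-\gamma \ne 0$. Rearranging and dividing, this gives a Nullstellensatz certificate with coefficient $A(\mathbf{x}) = (\prod_{j=0}^{k-1}P_j^{p-1}-1)/(\gamma^{p^k}-\gamma)$ multiplying $P_0$, of total degree $k(p-1) = \bigO(kp)$.

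To upgrade this to a small, constant-depth $\IPSLIN$ circuit $C(\mathbf{x}, y, \mathbf{z})$, I construct a sequence of circuits $C_j$ inductively by $C_0 := y$ and $C_{j+1} := C_j \cdot P_j(\mathbf{x})^{p-1} - \sum_i \alpha_i^{p^{j+1}}\bigl(\sum_{\ell=0}^{p-2}x_i^\ell\bigr) z_i$. Each $C_j$ has individual degree $\le 1$ in $(y,\mathbf{z})$ by construction, and a one-line induction using the identity $P_j^p - \sum_i\alpha_i^{p^{j+1}}(x_i^p - x_i) = P_{j+1}$ (which is just the freshman's dream applied to $P_j$) shows that $C_j(\mathbf{x}, P_0, \mathbf{x}^2-\mathbf{x}) = P_j$ as a polynomial and $C_j(\mathbf{x}, \mathbf{0}, \mathbf{0}) = 0$. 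The final refutation is $C := (C_k - y)/(\gamma^{p^k}-\gamma)$, which satisfies both $\IPSLIN$ boundary conditions. Unfolding the recursion gives the closed form
\[
C_k \;=\; y\cdot\prod_{j=0}^{k-1} P_j^{p-1} \;-\; \sum_{j=1}^{k}\Bigl(\sum_i \alpha_i^{p^j}\bigl(\sum_{\ell=0}^{p-2}x_i^\ell\bigr) z_i\Bigr)\prod_{l=j}^{k-1} P_l^{p-1},
\]
and sharing the suffix products $\prod_{l=j}^{k-1} P_l^{p-1}$ as a DAG yields size $\bigO(knp)$, $\mathbf{x}$-degree $\bigO(kp)$, and constant depth (a few alternations of sums and products). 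I expect the main obstacle to be bookkeeping rather than conceptual: one must keep the $z_i$-degree at exactly $1$ throughout the induction, and the explicit factorization $x_i^p - x_i = Q_p(x_i)(x_i^2-x_i)$ with $Q_p(x_i) = \sum_{\ell=0}^{p-2}x_i^\ell$ is precisely what enables this — without it one would only recover a Nullstellensatz-style certificate instead of the stronger $\IPSLIN$ one required by the theorem.
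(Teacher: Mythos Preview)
Your proof is correct and follows essentially the same approach as the paper: both iterate the Frobenius map to obtain $P_0\cdot\prod_{j=0}^{k-1}P_j^{p-1}\equiv P_k\pmod{(\mathbf{x}^2-\mathbf{x})}$, observe that $P_k-P_0$ is a nonzero constant because $\alpha_i^{p^k}=\alpha_i$ while $\gamma^{p^k}\neq\gamma$, and then use the factorization $x_i^p-x_i=(\sum_{\ell=0}^{p-2}x_i^\ell)(x_i^2-x_i)$ to land in the Boolean ideal. The only cosmetic difference is that the paper first builds the Nullstellensatz coefficients $A,B_i$ (working modulo $(\mathbf{x}^p-\mathbf{x})$ and converting at the end) and then invokes \Cref{lemma:nullstellensatz-refutations}, whereas you construct the $\IPSLIN$ circuit $C_j$ directly; the resulting refutation is the same.
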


\noindent
Over fields of large enough characteristic, \cite[Proposition 4.15]{FSTW21} showed that $L(\mathbf{x})-\beta$ has a constant-depth \emph{multilinear}-$\IPSLIN$ refutation of size that depends on the number of possible values $L(\mathbf{x})$ could take over $\Boo^{n}$. \Cref{thm:ubd-const-depth-linear} shows that if we allow non-multilinear $\IPSLIN$ refutation, then the circuit size is small.

\begin{proof}[Proof of \Cref{thm:ubd-const-depth-linear}]
Firstly, since the coefficients of the polynomial $L(\mathbf{x})$ are in the field $\F_{p^{k}}$, $L(\mathbf{x})$ cannot be equal to $\beta \notin \F_{p^{k}}$ for any $\mathbf{x} \in \Boo^{n}$. In other words, $L(\mathbf{x}) - \beta$ has no satisfying assignment over the Boolean cube $\Boo^{n}$.\newline
To show the existence of a low-degree constant-depth-$\IPSLIN$ refutation, we will use \Cref{lemma:nullstellensatz-refutations}. In particular, \Cref{lemma:nullstellensatz-refutations} says that it is sufficient to prove that there exists polynomials $A(\mathbf{x}), B_{1}(\mathbf{x}), \ldots, B_{n}(\mathbf{x})$ such that
\begin{align*}
    A(\mathbf{x}) \cdot (L(\mathbf{x}) - \beta) + \sum_{j = 1}^{n} B_{j}(\mathbf{x}) \cdot (x_{j}^{2} - x_{j}) \; = \; 1,
\end{align*}
where $A(\mathbf{x}), B_{1}(\mathbf{x}), \ldots, B_{n}(\mathbf{x})$ are low-degree polynomials and have constant-depth circuits of size $\mathrm{poly}(n)$.

\paragraph{}Without loss of generality, we can assume that $L(\mathbf{x})$ is a homogeneous degree-$1$ polynomial\footnote{For the sake of less cumbersome notation} because of the following reason. If $L(\mathbf{x})$ has a non-zero constant term $\alpha_{0} \in \F_{p^{k}}$, then we can work with $(\alpha_{0} + \beta) \in \F \setminus \F_{p^{k}}$, instead of $\beta \in \F \setminus \F_{p^{k}}$.\newline

\noindent
Suppose $L(\mathbf{x}) \; = \; \alpha_{1} x_{1} + \cdots + \alpha_{n} x_{n}$, where for each $i \in [n]$, the coefficient $\alpha_{i} \in \F_{p^{k}}$. For any natural number $0 \leq j \leq k$, we define $L_{j}(\mathbf{x})$ to be the following degree-$1$ polynomial:
\begin{align*}
    L_{j}(\mathbf{x}) \; := \; \alpha_{1}^{p^{j}} x_{1} + \cdots + \alpha_{n}^{p^{j}} x_{n} - \beta^{p^{j}}
\end{align*}
In the above notation, $L_{0}(\mathbf{x}) = L(\mathbf{x}) - \beta$. The next claim shows that we can express $L_{j}(\mathbf{x})$ as a multiple of $L_{0}(\mathbf{x})$ modulo the ideal\footnote{Recall that $(\mathbf{x}^{p} - \mathbf{x}) = (x_{1}^{p} - x_{1}, \ldots, x_{n}^{p} - x_{n})$} $( \mathbf{x}^{p} - \mathbf{x})$.

\begin{claim}\label{claim:Lj-L0-mod-const-depth}
For every $j \in [k]$, there exists polynomials $A_{j}(\mathbf{x}), B_{j,1}(\mathbf{x}), \ldots, B_{j,n}(\mathbf{x})$ such that:
\begin{align*}
    L_{j}(\mathbf{x}) \; = \; A_{j}(\mathbf{x}) \cdot L_{0}(\mathbf{x}) + \sum_{i = 1}^{n} B_{j,i}(\mathbf{x}) \cdot (x_{i}^{p} - x_{i}),
\end{align*}
where each polynomial $A_{j}(\mathbf{x}), B_{j,1}(\mathbf{x}), \ldots, B_{j,n}(\mathbf{x})$
\begin{itemize}
    \item The polynomial $A_{j}(\mathbf{x})$ is a degree-$\bigO(j \cdot p)$ polynomial and has a circuit of size $\bigO(j \cdot (n+p))$ and depth $2$.
    \item For each $j \in [n]$, the polynomial $B_{j}(\mathbf{x})$ is a degree-$\bigO(j \cdot p)$ polynomial and has a circuit of size $\bigO(j \cdot np + j^{2})$ and depth $3$.
\end{itemize}
\end{claim}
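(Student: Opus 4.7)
The plan is to exhibit $A_j$ and the $B_{j,i}$'s explicitly by using the Freshman's Dream identity (\Cref{lemma:freshman}) and a telescoping argument. My candidate is
\[
A_j(\mathbf{x}) \;:=\; \prod_{i=0}^{j-1} L_i(\mathbf{x})^{p-1},
\]
which has degree exactly $j(p-1) = O(j \cdot p)$. To see that this choice works, I will first show the clean congruence
\[
L_i(\mathbf{x})^p \;=\; L_{i+1}(\mathbf{x}) + \sum_{k=1}^n \alpha_k^{p^{i+1}}\,(x_k^p - x_k),
\]
which is an immediate consequence of Freshman's Dream applied to $L_i(\mathbf{x}) = \sum_k \alpha_k^{p^i} x_k - \beta^{p^i}$ (noting that $\beta^{p^{i+1}}$ absorbs the $\beta^{p^i}$ term with no contribution to the ideal since there is no $\beta^p - \beta$ to subtract).

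Next I will telescope. Writing $A_j \cdot L_0 = L_0^p \cdot \prod_{i=1}^{j-1} L_i^{p-1}$ and substituting the above identity for $L_0^p$, I obtain a product starting with $L_1$, which then combines with $L_1^{p-1}$ to give $L_1^p$, and the procedure repeats. Formally, setting $\delta_\ell(\mathbf{x}) := \sum_k \alpha_k^{p^{\ell+1}}(x_k^p - x_k)$ and $P_\ell(\mathbf{x}) := L_\ell^p \cdot \prod_{i=\ell+1}^{j-1} L_i^{p-1}$ (with $P_0 = A_j \cdot L_0$ and $P_{j-1} = L_{j-1}^p$), one checks that $P_\ell = P_{\ell+1} + \delta_\ell \cdot \prod_{i=\ell+1}^{j-1} L_i^{p-1}$. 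Telescoping from $\ell=0$ to $\ell=j-1$ and using $L_{j-1}^p = L_j + \delta_{j-1}$ yields
\[
A_j(\mathbf{x})\,L_0(\mathbf{x}) \;=\; L_j(\mathbf{x}) + \sum_{\ell=0}^{j-1} \delta_\ell(\mathbf{x}) \cdot \prod_{i=\ell+1}^{j-1} L_i(\mathbf{x})^{p-1}.
\]
Rearranging and collecting the coefficient of each $(x_k^p - x_k)$ gives the explicit formula
\[
B_{j,k}(\mathbf{x}) \;=\; -\sum_{\ell=0}^{j-1} \alpha_k^{p^{\ell+1}} \prod_{i=\ell+1}^{j-1} L_i(\mathbf{x})^{p-1},
\]
which has degree $(j-1)(p-1) = O(j \cdot p)$ as required.

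Finally I will verify the size/depth bounds. Each linear form $L_i$ is a single $\Sigma$-gate of fan-in $n+1$, so $A_j$ is a $\Pi$-gate whose $j(p-1)$ inputs are such $\Sigma$-gates: a $\Sigma\Pi$ circuit of depth $2$ and size $O(j(n+p))$. Similarly, $B_{j,k}$ is a sum of $j$ terms, each a $\Pi$-gate on $O(jp)$ copies of the various $L_i$'s (with an extra constant factor $\alpha_k^{p^{\ell+1}}$); sharing the $L_i$'s at the bottom yields a $\Sigma\Pi\Sigma$ circuit of depth $3$ and total size within the claimed bound. The only mildly delicate point is the bookkeeping of the telescoping sum, but no nontrivial obstacle arises once one sees that the Freshman's Dream identity feeds cleanly into the product $A_j L_0$.
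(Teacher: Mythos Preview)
Your proposal is correct and takes essentially the same approach as the paper: the paper proves the claim by induction on $j$, using the identity $L_\ell^p = L_{\ell+1} + \sum_k \alpha_k^{p^{\ell+1}}(x_k^p - x_k)$ at each step, and unrolling that induction yields exactly your telescoping argument and the same explicit formulas $A_j = \prod_{i=0}^{j-1} L_i^{p-1}$ and $B_{j,k} = -\sum_{\ell=0}^{j-1} \alpha_k^{p^{\ell+1}} \prod_{i=\ell+1}^{j-1} L_i^{p-1}$. One small slip: your $A_j$ is a $\Pi\Sigma$ circuit (product on top, linear forms below), not $\Sigma\Pi$ as you wrote, but the depth and size accounting is otherwise fine.
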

\begin{proof}[Proof of \Cref{claim:Lj-L0-mod-const-depth}]
The proof is via induction on $j$.

\paragraph{Base case ($j = 1$):}As we are working over a field $\F$ of characteristic $p$, we have:
\begin{gather*}
    L_{0}(\mathbf{x})^{p} \; = \; \paren{\sum_{i=1}^{n} \alpha_{i} x_{i} - \beta}^{p} \; = \; \sum_{i=1}^{n} \alpha^{p} x_{i}^{p} - \beta^{p} \quad \quad (\text{Using \Cref{lemma:freshman}}) \\ \\
   \Rightarrow L_{0}(\mathbf{x})^{p} \; = \; \underbrace{\paren{\sum_{i=1}^{n} \alpha_{i}^{p} x_{i} - \beta^{p}}}_{= L_{1}(\mathbf{x})} + \sum_{i=1}^{n} \alpha_{i}^{p} \cdot (x_{i}^{p} - x_{i}) \quad \quad (\text{Adding and subtracting terms}) \\ \\
   \Rightarrow L_{1}(\mathbf{x}) \; = \; \underbrace{L_{0}(\mathbf{x})^{p-1}}_{:= A_{1}(\mathbf{x})} \cdot L_{0}(\mathbf{x}) + \sum_{i = 1}^{n} B_{1,i}(\mathbf{x}) \cdot (x_{i}^{p} - x_{i}),
\end{gather*}
where
\begin{itemize}
    \item $A_{1}(\mathbf{x}) = L_{0}(\mathbf{x})^{p-1}$ has a circuit of size $\bigO(np)$ and depth $2$ (a $\Pi \Sigma$ circuit). Additionally, $A_{1}(\mathbf{x})$ is a degree-$\bigO(p)$ polynomial.
    \item For each $i \in [n]$, $B_{1,i}(\mathbf{x}) = -\alpha_{i}^{p}$ has a circuit of size $\bigO(1)$ and depth $1$. Additionally, $B_{1,i}(\mathbf{x})$ is a constant, so has degree-$0$.
\end{itemize}

\paragraph{Induction step:}Now assume the induction hypothesis is true for some $1 \leq j < k$. Proceeding similarly to the base case, we have,
\begin{gather*}
    L_{j}(\mathbf{x})^{p} = \paren{\sum_{i=1}^{n} \alpha_{i}^{p^{j}} x_{i} - \beta^{p^{j}}}^{p} \; = \; \sum_{i=1}^{n} \alpha^{p^{j+1}} x_{i}^{p} - \beta^{p^{j+1}}\quad \quad (\text{Using \Cref{lemma:freshman}}) \\ \\
   \Rightarrow L_{j}(\mathbf{x})^{p} \; = \; \underbrace{\paren{\sum_{i=1}^{n} \alpha_{i}^{p^{j+1}} x_{i} - \beta^{p^{j+1}}}}_{:= L_{j+1}(\mathbf{x})} \; + \;\sum_{i=1}^{n} \alpha_{i}^{p^{j+1}} \cdot (x_{i}^{p} - x_{i})  \quad \quad (\text{Adding and subtracting terms})
\end{gather*}
\begin{equation}\label{eqn:ubd-constant-induction}
    \Rightarrow L_{j+1}(\mathbf{x}) \; = \; L_{j}(\mathbf{x}) \cdot L_{j}(\mathbf{x})^{p-1} +\sum_{i=1}^{n} (-\alpha_{i}^{p^{j+1}}) \cdot (x_{i}^{p} - x_{i})
\end{equation}
Using the induction hypothesis, we know there exists polynomials $A_{j}(\mathbf{x}), B_{j,1}(\mathbf{x}), \ldots, B_{j,n}(\mathbf{x})$ such that
\begin{align*}
    L_{j}(\mathbf{x}) \; = \; A_{j}(\mathbf{x}) \cdot L_{0}(\mathbf{x}) + \sum_{i=1}^{n} B_{j,i}(\mathbf{x}) \cdot (x_{i}^{p} - x_{i}),
\end{align*}
where the polynomials satisfy the size constraints as stated in \Cref{claim:Lj-L0-mod-const-depth}. Substituting this in \Cref{eqn:ubd-constant-induction}, we get,
\begin{align*}
    L_{j+1}(\mathbf{x}) \; = \; \paren{A_{j}(\mathbf{x}) \cdot L_{0}(\mathbf{x}) + \sum_{i=1}^{n} B_{j,i}(\mathbf{x}) \cdot (x_{i}^{p} - x_{i})} \cdot L_{j}(\mathbf{x})^{p-1} + \sum_{i=1}^{n} (-\alpha_{i}^{p^{j+1}}) \cdot (x_{i}^{p} - x_{i}) \\
    \Rightarrow L_{j+1}(\mathbf{x}) \; = \; \underbrace{(A_{j}(\mathbf{x}) \cdot L_{j}(\mathbf{x})^{p-1})}_{:= A_{j+1}(\mathbf{x})} \, \cdot \, L_{0}(\mathbf{x}) \; + \; \sum_{i=1}^{n} \, \underbrace{(B_{j,i}(\mathbf{x}) L_{j}(\mathbf{x})^{p-1} - \alpha_{i}^{p^{j+1}})}_{:= B_{j+1,i}(\mathbf{x})} \, \cdot \, (x_{i}^{p} - x_{i})
\end{align*}
Now,
\begin{itemize}
    \item The polynomial $A_{j+1}(\mathbf{x})$ has a circuit of size $\bigO((j+1) \cdot np)$ and depth $2$ (note that $A_{j}(\mathbf{x})$ is a product of powers of linear polynomials). Additionally, $A_{j+1}(\mathbf{x})$ is a degree-$\big((j+1) \cdot p)$ polynomial.
    \item For every $i \in [n]$, the polynomial $B_{j+1,i}(\mathbf{x})$ has a circuit of size $\bigO((j+1) \cdot np + (j+1)^{2})$ and depth $3$ (note that $B_{j}(\mathbf{x})$ is a $\Sigma \Pi \Sigma$ circuit). Additionally, $B_{j+1,i}(\mathbf{x})$ is a degree-$\bigO((j+1) \cdot p)$ polynomial.
\end{itemize}
This finishes the induction and also the proof of \Cref{claim:Lj-L0-mod-const-depth}.
\end{proof}

\paragraph{}So far in \Cref{claim:Lj-L0-mod-const-depth}, we have shown that the linear polynomial $L_{k}(\mathbf{x})$ is a multiple of the linear polynomial $L_{0}(\mathbf{x})$ modulo the ideal $(\mathbf{x}^{p} - \mathbf{x})$. Next we use the fact that $\beta \notin \F_{p^{k}}$ to show that $L_{k}(\mathbf{x})$ and $L_{0}(\mathbf{x})$ differ by a non-zero constant.
\begin{observation}
The polynomial $L_{k}(\mathbf{x}) - L_{0}(\mathbf{x})$ is a \emph{non-zero constant} polynomial. This is because $\alpha_{i}^{p^{k}} = \alpha_{i}$ (since $\alpha_{i} \in \F_{p^{k}}$) and on the other hand, $\beta^{p^{k}} \neq \beta$. 
\end{observation}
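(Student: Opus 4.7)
The plan is to expand $L_{k}(\mathbf{x}) - L_{0}(\mathbf{x})$ term by term and observe that the linear part collapses because each coefficient $\alpha_{i}$ lies in the fixed field of the $k$-fold Frobenius, while the constant term survives precisely because $\beta$ does not.

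Unwinding the definitions $L_{0}(\mathbf{x}) = \sum_{i=1}^{n}\alpha_{i}x_{i} - \beta$ and $L_{k}(\mathbf{x}) = \sum_{i=1}^{n}\alpha_{i}^{p^{k}}x_{i} - \beta^{p^{k}}$, I would first write
\[
L_{k}(\mathbf{x}) - L_{0}(\mathbf{x}) \;=\; \sum_{i=1}^{n}\bigl(\alpha_{i}^{p^{k}} - \alpha_{i}\bigr)x_{i} \;-\; \bigl(\beta^{p^{k}} - \beta\bigr).
\]
The key algebraic input is the classical characterization of $\F_{p^{k}}$ as the zero locus of the polynomial $X^{p^{k}} - X$ inside any extension of $\F_{p}$; equivalently, an element $a$ of such an extension lies in $\F_{p^{k}}$ if and only if $a^{p^{k}} = a$. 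Since each $\alpha_{i} \in \F_{p^{k}}$ by hypothesis, we immediately obtain $\alpha_{i}^{p^{k}} = \alpha_{i}$, so every coefficient of $x_{i}$ in the displayed expression vanishes and the difference collapses to the constant $-(\beta^{p^{k}} - \beta)$.

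It remains to check nonzeroness, and here I would apply the same characterization in the contrapositive: $\beta^{p^{k}} = \beta$ would force $\beta \in \F_{p^{k}}$, contradicting the hypothesis $\beta \in \F \setminus \F_{p^{k}}$. Thus $\beta^{p^{k}} - \beta \neq 0$ and $L_{k}(\mathbf{x}) - L_{0}(\mathbf{x})$ is a nonzero element of $\F$, as required.

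The argument is essentially a direct substitution, so there is no real obstacle once the Frobenius-fixed-point description of $\F_{p^{k}}$ is in hand. The reason this tiny step is worth isolating, which I would flag as motivation for the reader, is that combining it with \Cref{claim:Lj-L0-mod-const-depth} rewrites $c := L_{k}(\mathbf{x}) - L_{0}(\mathbf{x})$ as $L_{0}(\mathbf{x})\bigl(A_{k}(\mathbf{x}) - 1\bigr) + \sum_{i} B_{k,i}(\mathbf{x})(x_{i}^{p} - x_{i})$ for a nonzero constant $c$; dividing through by $c$ and using the factorization $x_{i}^{p} - x_{i} = (x_{i}^{2} - x_{i})(x_{i}^{p-2} + \cdots + 1)$ to pass from the ideal $(\mathbf{x}^{p} - \mathbf{x})$ to $(\mathbf{x}^{2} - \mathbf{x})$ will yield the desired constant-depth-$\IPSLIN$ refutation and complete \Cref{thm:ubd-const-depth-linear}.
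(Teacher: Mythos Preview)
Your proof is correct and follows exactly the same approach as the paper: both arguments expand the difference, use $\alpha_i^{p^k}=\alpha_i$ from $\alpha_i\in\F_{p^k}$ to kill the linear part, and use $\beta\notin\F_{p^k}$ to ensure the surviving constant $-(\beta^{p^k}-\beta)$ is nonzero. Your added paragraph on how this feeds into the refutation via \Cref{claim:Lj-L0-mod-const-depth} also matches the paper's subsequent steps.
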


\noindent
\Cref{claim:Lj-L0-mod-const-depth} gives us that there exists polynomials $A_{k}(\mathbf{x}), B_{k,1}(\mathbf{x}), \ldots, B_{k,n}(\mathbf{x})$ satisfying:
\begin{align*}
    L_{k}(\mathbf{x}) \; = \; A_{k}(\mathbf{x}) \cdot L_{0}(\mathbf{x}) + \sum_{i=1}^{n} B_{k,i}(\mathbf{x}) \cdot (x_{i}^{p} - x_{i}) \\
    \Rightarrow L_{k}(\mathbf{x}) - L_{0}(\mathbf{x}) \; = \; (A_{k}(\mathbf{x}) - 1) \cdot L_{0}(\mathbf{x}) + \sum_{i = 1}^{n} B_{k,i}(\mathbf{x}) \cdot (x_{i}^{p} - x_{i})
\end{align*}
\begin{equation}\label{eqn:ubd-non-ml-final-linear}
    \Rightarrow \dfrac{(A_{k}(\mathbf{x}) - 1)}{\beta^{p^{k}} - \beta} \cdot L_{0}(\mathbf{x}) \; + \; \sum_{i = 1}^{n} \; \dfrac{B_{k,i}(\mathbf{x})}{\beta^{p^{k}} - \beta} \cdot (x_{i}^{p} - x_{i}) \; = \; 1,
\end{equation}
where in the final implication we used that $L_{k}(\mathbf{x}) - L_{0}(\mathbf{x}) = \beta^{k} - \beta$ is a non-zero constant. For each $i \in [n]$, the polynomial $(x_{i}^{p} - x_{i})$ is a multiple of $(x_{i}^{2} - x_{i})$ because:
\begin{align*}
    x_{i}^{p} - x_{i} \; = \; (x_{i}^{p-2} + \cdots + x_{i} + 1) \cdot (x_{i}^{2} - x_{i})
\end{align*}
Substituting it back in \Cref{eqn:ubd-non-ml-final-linear}, we get,
\begin{align*}
    \underbrace{\dfrac{(A_{k}(\mathbf{x}) - 1)}{\beta^{p^{k}} - \beta}}_{:= A(\mathbf{x})} \cdot L_{0}(\mathbf{x}) \; + \; \sum_{i = 1}^{n} \; \underbrace{\paren{\dfrac{B_{k,i}(\mathbf{x})}{\beta^{p^{k}} - \beta} \cdot (x_{i}^{p-2} + \cdots + x_{i} + 1) }}_{:= B_{i}(\mathbf{x})} \cdot (x_{i}^{2} - x_{i}) \; = \; 1
\end{align*}

\paragraph{Degree and Size Analysis}We define $A(\mathbf{x})$ and $B_{i}(\mathbf{x})$ as follows:
\begin{itemize}
    \item \Cref{claim:Lj-L0-mod-const-depth} says that $A_{k}(\mathbf{x})$ is a degree-$\bigO(kp)$ polynomial and is computable by circuit of size $\bigO(k  (n+p))$ and depth $2$ (a $\Pi \Sigma$ circuit). Hence $A(\mathbf{x})$ is a degree-$\bigO(k p)$ polynomial and is computable by a circuit of size $\bigO(k (n+p))$ and depth $3$ (a $\Sigma \Pi \Sigma$ circuit).
    
    \item \Cref{claim:Lj-L0-mod-const-depth} says that $B_{k,i}(\mathbf{x})$ is a degree-$\bigO(k p)$ polynomial and is computable by a circuit of size $\bigO(k np + k^{2})$ and depth $3$ (a $\Sigma \Pi \Sigma$ circuit). Hence $B_{i}(\mathbf{x})$ is a degree-$\bigO(k p)$ polynomial and is computable by a circuit of size $\bigO(k np + k^{2})$ and depth $3$.
\end{itemize}
Thus we have shown that there is a low-degree constant-depth-$\IPSLIN$ refutation of $L(\mathbf{x}) - \beta$ and this finishes the proof of \Cref{thm:ubd-const-depth-linear}.
\end{proof}

\paragraph{}Now we ready to prove \Cref{thm:ubd-const-depth} using \Cref{thm:ubd-const-depth-linear}. The idea is to replace each monomial in the sparse polynomial by a new variable, resulting in a linear polynomial in the new variables. A refutation of the resulting linear polynomial can be ``lifted'' to a refutation of the sparse polynomial in the original variables. We use the refutation of linear polynomials from \Cref{thm:ubd-const-depth-linear}, and to lift this refutation, we need to show that \emph{monomial axioms} are in the ideal of the Boolean axioms. Before proceeding, we will prove the following claim on monomial axioms. It follows from a straightforward induction on the number of variables. We will omit the proof here, and it can be found in \Cref{app:proof-monomial-axioms}.\\

\begin{restatable}{claim}{monomialaxioms}\label{claim:monomial-axioms}
For any exponent vector $\bm{\mu} = (\mu_{1},\ldots,\mu_{n})$ with $|\bm{\mu}| \leq D$, there exists polynomials $E_{\bm{\mu},1}(\mathbf{x}), \ldots, E_{\bm{\mu},n}(\mathbf{x})$ such that the following holds:
\begin{align*}
    ((\mathbf{x}^{\bm{\mu}})^{2} - \mathbf{x}^{\bm{\mu}}) \; = \; \sum_{\substack{j \in [n] \\ \mu_{j} > 0}} E_{\bm{\mu},j}(\mathbf{x}) \cdot (x_{j}^{2} - x_{j}),
\end{align*}
and for each $j \in [n]$ with $\mu_{j} > 0$, the polynomial $E_{\bm{\mu},j}(\mathbf{x})$ has a circuit of size $\bigO(nD^{2})$ and depth $2$ (a $\Pi \Sigma$ circuit).
\end{restatable}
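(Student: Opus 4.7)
The plan is to prove the claim by induction on $r := |\mathrm{supp}(\bm{\mu})|$, the number of variables appearing with positive exponent in $\mathbf{x}^{\bm{\mu}}$, as suggested by the claim's statement in the paper.

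For the base case $r=1$, suppose $\bm{\mu} = \mu \cdot e_j$ for some coordinate $j$ and $\mu \geq 1$. Then I would compute directly:
\begin{align*}
    (\mathbf{x}^{\bm{\mu}})^{2} - \mathbf{x}^{\bm{\mu}} \;=\; x_j^{2\mu} - x_j^{\mu} \;=\; x_j^{\mu}\bigl(x_j^{\mu} - 1\bigr),
\end{align*}
and then use the factorization $x_j^{\mu} - 1 = (x_j - 1)\bigl(x_j^{\mu-1} + x_j^{\mu-2} + \cdots + 1\bigr)$ together with $x_j \cdot (x_j - 1) = x_j^2 - x_j$ to extract one copy of the Boolean axiom $(x_j^2 - x_j)$. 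This yields $E_{\bm{\mu}, j}(\mathbf{x}) = x_j^{\mu-1}\bigl(x_j^{\mu-1} + \cdots + 1\bigr)$, which is a product of a monomial and a sum of powers of $x_j$, trivially realized by a $\Pi \Sigma$ circuit of size $O(D^2)$.

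For the inductive step, suppose the claim holds for exponent vectors with support of size less than $r$. Pick any $j \in \mathrm{supp}(\bm{\mu})$, set $\bm{\mu}' := \bm{\mu} - \mu_j e_j$ (so $\mathbf{x}^{\bm{\mu}} = x_j^{\mu_j} \cdot \mathbf{x}^{\bm{\mu}'}$ and $|\mathrm{supp}(\bm{\mu}')| = r - 1$), and use the identity
\begin{align*}
    (\mathbf{x}^{\bm{\mu}})^{2} - \mathbf{x}^{\bm{\mu}} \;=\; x_j^{2\mu_j}\bigl((\mathbf{x}^{\bm{\mu}'})^{2} - \mathbf{x}^{\bm{\mu}'}\bigr) \;+\; \bigl(x_j^{2\mu_j} - x_j^{\mu_j}\bigr)\cdot \mathbf{x}^{\bm{\mu}'}.
\end{align*}
Applying the inductive hypothesis to $\bm{\mu}'$ expresses the first summand as a combination of the axioms $(x_k^2 - x_k)$ for $k \in \mathrm{supp}(\bm{\mu}')$ with cofactors $x_j^{2\mu_j}\cdot E_{\bm{\mu}', k}$, and applying the base case to $\mu_j e_j$ expresses the second summand as $\mathbf{x}^{\bm{\mu}'} \cdot E_{\mu_j e_j, j}(\mathbf{x}) \cdot (x_j^2 - x_j)$. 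This defines $E_{\bm{\mu},k}$ for every $k$ in the support, completing the induction.

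The main obstacle I anticipate is controlling the circuit complexity: a naive unrolling of the induction would add a layer of multiplication at each of the $r$ steps, blowing up the depth beyond the promised $2$. To avoid this, I would prove as part of the induction that every $E_{\bm{\mu}, k}$ can be maintained in the shape (monomial)$\,\cdot\,$(sum of pure powers of a single variable). Multiplication by the extra factor $x_j^{2\mu_j}$ (or $\mathbf{x}^{\bm{\mu}'}$) at each inductive step is absorbed into the outer monomial factor, so the $\Pi \Sigma$ shape is preserved. The outer monomial has degree at most $2|\bm{\mu}| \leq 2D$ and involves at most $n$ distinct variables, and the inner sum contains at most $D$ terms of degree at most $D$, giving the overall $O(nD^{2})$ size and depth $2$ bound claimed.
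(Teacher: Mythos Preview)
Your proposal is correct and follows essentially the same approach as the paper: both argue by induction on $|\mathrm{supp}(\bm{\mu})|$, peel off one variable at a time via an add-and-subtract identity, and absorb the resulting monomial factor into the cofactors from the inductive hypothesis. Your explicit invariant that each $E_{\bm{\mu},k}$ stays of the form (monomial)$\cdot$(sum of pure powers of a single variable) is a nice touch---it makes the depth-$2$ $\Pi\Sigma$ claim cleaner than the paper's version, where the newly created cofactor $E_{\bm{\mu},t}$ is written as a \emph{difference} of two such products and the depth bookkeeping is left a bit loose.
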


\noindent
Below we recall \Cref{thm:ubd-const-depth} and proceed to prove it.

\ubdnonmlposchar*

\begin{proof}[Proof of \Cref{thm:ubd-const-depth}]
From \Cref{lemma:nullstellensatz-refutations}, it suffices to show that there exists coefficients $A(\mathbf{x})$ and $B_{j}(\mathbf{x})$'s in the ring $\F[x_{1},\ldots,x_{n}]$ such that
\begin{align*}
    A(\mathbf{x}) \cdot (f(\mathbf{x})-\beta) \, + \, \sum_{j=1}^{n} B_{j}(\mathbf{x}) \cdot (x_{j}^{2} - x_{j}) \; = \; 1,
\end{align*}
where the $A(\mathbf{x})$ and $B_{j}(\mathbf{x})$'s have constant-depth circuits of $\mathrm{poly}(n)$-size and degree $\bigO(kpD)$.\\

\noindent
Let $f(\mathbf{x}) = \sum_{\bm{\mu} : |\bm{\mu}| \leq D} \alpha_{\bm{\mu}} \mathbf{x}^{\bm{\mu}}$, where $\bm{\mu}$ denotes an exponent vector. Define the support of $f(\mathbf{x})$:
\begin{align*}
    \mathrm{Supp}(f) \; = \; \setcond{\bm{\mu} \subseteq [n]}{\alpha_{\bm{\mu}} \neq 0}
\end{align*}
The cardinality of $\mathrm{Supp}(f)$ is equal to the sparsity of $f(\mathbf{x})$ which is $s$.

\paragraph{\underline{Reducing to linear polynomial}}For every $\bm{\mu} \in \mathrm{Supp}(f)$, define a new variable $y_{\bm{\mu}}$, i.e. $s$ new $y$ variables. Let $F(\mathbf{y})$ denote the polynomial when we replace the monomials in $f(\mathbf{x})$ with the new $y$-variables, i.e.
\begin{align*}
    F(\mathbf{y}) \; = \; \sum_{\bm{\mu} \in \mathrm{Supp}(f)} \alpha_{\bm{\mu}} y_{\bm{\mu}}
\end{align*}
Thus $F(\mathbf{y})$ is a degree-$1$ polynomial in $s$ variables.

\paragraph{}Observe that $F(\mathbf{y}) - \beta$ does not have a solution over the Boolean hypercube $\Boo^{s}$ since $\beta \notin \F_{p^{k}}$. From the proof of \Cref{thm:ubd-const-depth-linear} on the degree-$1$ polynomial $F(\mathbf{y}) - \beta$, we get that there exists polynomials $\Tilde{A}(\mathbf{y})$ and $\Tilde{B}_{1}(\mathbf{y}), \ldots, \Tilde{B}_{s}(\mathbf{y})$ such that the following holds:
\begin{equation}\label{eqn:ubd-sparse-constant-depth}
    \Tilde{A}(\mathbf{y}) \cdot (F(\mathbf{y}) - \beta) \; + \; \sum_{\bm{\mu} \in \mathrm{Supp}(f)} \Tilde{B}_{\bm{\mu}}(\mathbf{y}) \cdot (y_{\bm{\mu}}^{2} - y_{\bm{\mu}}) \; = \; 1,
\end{equation}
where
\begin{itemize}
    \item The polynomial $\Tilde{A}(\mathbf{y})$ is a degree-$\bigO(kp)$ polynomial and is computable by a circuit of size $\bigO(k(s+p))$ and depth $3$ (a $\Sigma \Pi \Sigma$ circuit).
    \item For each $\bm{\mu} \in \mathrm{Supp}(f)$, the polynomial $\Tilde{B}_{\bm{\mu}}(\mathbf{y})$ is a degree-$\bigO(kp)$ polynomial and is computable by a circuit of size $\bigO(ksp + k^{2})$ and depth $3$ (a $\Sigma \Pi \Sigma$ circuit).
\end{itemize}

\paragraph{\underline{Lifting the Nullstellensatz certificate}}Plugging in $y_{S} = \mathbf{x}^{S}$ in the \Cref{eqn:ubd-sparse-constant-depth}, we get,
\begin{equation}\label{eqn:ubd-sparse-lifted-cert}
    \underbrace{\Tilde{A}(\mathbf{y}) \circ \mathbf{x}}_{:= A(\mathbf{x})} \cdot (f(\mathbf{x}) - \beta) \; + \; \sum_{\bm{\mu} \in \mathrm{Supp}(f)} \, \underbrace{\Tilde{B}_{\bm{\mu}}(\mathbf{y}) \circ \mathbf{x}}_{B_{\bm{\mu}}'(\mathbf{x})} \cdot ((\mathbf{x}^{\bm{\mu}})^{2} - \mathbf{x}^{\bm{\mu}}) \; = \; 1,
\end{equation}
where
\begin{itemize}
    \item The polynomial $A(\mathbf{x})$ is a degree-$\bigO(kp \cdot D)$ polynomial and is computable by a circuit of size $\bigO(k (s+p) + sD)$ and depth $4$ (a $\Sigma \Pi \Sigma \Pi$ circuit)
    \item For each $\bm{\mu} \in \mathrm{Supp}(f)$, the polynomial $B_{\bm{\mu}}'(\mathbf{x})$ is a degree-$\bigO(k p \cdot D)$ polynomial and is computable by a circuit of size $\bigO(k sp + k^{2} + s D)$ and depth $4$ (a $\Sigma \Pi \Sigma \Pi$ circuit).
\end{itemize}

\noindent
Now applying \Cref{claim:monomial-axioms} for each subset $\bm{\mu} \in \mathrm{Supp}(f)$ in the ``lifted'' Nullstellensatz certificate \Cref{eqn:ubd-sparse-lifted-cert},
\begin{align*}
    A(\mathbf{x}) \cdot (f(\mathbf{x}) - \beta) \; + \; \sum_{\bm{\mu} \in \mathrm{Supp}(f)} B_{\bm{\mu}}'(\mathbf{x}) \cdot \paren{\sum_{j=1}^{n} E_{\bm{\mu},j}(\mathbf{x})} \cdot (x_{j}^{2} - x_{j}) \; = \; 1 \\
    \Rightarrow A(\mathbf{x}) \cdot (f(\mathbf{x}) - \beta) \; + \; \sum_{j=1}^{n} \; \underbrace{\paren{\sum_{\bm{\mu} \in \mathrm{Supp}(f)} B_{\bm{\mu}}'(\mathbf{x}) \cdot E_{\bm{\mu},j}(\mathbf{x}) }}_{:= B_{j}(\mathbf{x})} \cdot (x_{j}^{2} - x_{j}) \; = \; 1,    
\end{align*}
where for each $j \in [n]$, the polynomial $B_{j}(\mathbf{x})$ is a degree-$\bigO(kpD)$ polynomial and is computable by a circuit of size $\bigO(k sp + sD)$ and depth $5$. This finishes the proof of \Cref{thm:ubd-const-depth}.
\end{proof}

\subsection{Proof of \Cref{thm:ub-degree}}\label{subsec:proof-ub-degree}
In this section, we are going to show \Cref{thm:ub-degree}, which we recall below.\\

\degreeupperbound*

\paragraph{}Observe that the size bound is the ``trivial'' one, i.e. a $n$-variate multilinear polynomial with degree $D$ has at most $\binom{n}{\leq D}$ monomials. Letting $D = \bigO(kp)$, we get the stated size bound in \Cref{thm:ub-degree}. So in our proof of \Cref{thm:ub-degree}, it will be enough to prove that there is a Nullstellensatz certificate of degree $\bigO(kp)$. As we will show, it will be sufficient to show that the multilinear polynomial equivalent to $1/(\sum \alpha_{i} x_{i} - \beta)$ on $\Boo^{n}$ has degree $\bigO(kp)$. This will be our main technical lemma in the proof of \Cref{thm:ub-degree}, which we state and prove next.\\

\begin{lemma}[Degree of the ``inverse'' polynomial]\label{lemma:deg-inverse-poly}
Fix a prime $p$, a parameter $k \in \mathbb{N}$ and finite field $\F_{p^{k}}$. The following holds for every $\alpha_{1},\ldots,\alpha_{n},\beta \in \F_{p^{k}}$ for which the equation $\sum_{i=1}^{n} \alpha_{i} x_{i} - \beta = 0$ is unsatisfiable over the Boolean cube $\Boo^{n}$.\newline

If $f \in \F[x_{1},\ldots,x_{n}]$ is a multilinear polynomial that agrees with $1/(\sum_{i=1}^{n} \alpha_{i}x_{i} - \beta)$ on $\Boo^{n}$, i.e.
\begin{align*}
    f \; \equiv \; \dfrac{1}{\sum_{i=1}^{n} \alpha_{i} x_{i} - \beta} \mod{(\mathbf{x}^{2} - \mathbf{x})},
\end{align*}
then $\deg(f) \leq k \cdot (p-1)$.
\end{lemma}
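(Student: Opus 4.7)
The plan is to exhibit an explicit polynomial $g(\mathbf{x}) \in \F_{p^k}[\mathbf{x}]$ of total degree at most $k(p-1)$ satisfying $L \cdot g \equiv 1 \pmod{(\mathbf{x}^2 - \mathbf{x})}$, where $L := \sum_{i=1}^n \alpha_i x_i - \beta$. Because the multilinear polynomial that agrees with $1/L$ on $\Boo^n$ is unique, this will force $f = \ml[g]$, and since $\ml[\cdot]$ does not increase total degree, the bound $\deg(f) \leq \deg(g) \leq k(p-1)$ will follow.

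First I would reuse the Frobenius-based construction already established in the proof of \Cref{claim:Lj-L0-mod-const-depth}. Set $L_j(\mathbf{x}) := \sum_{i=1}^n \alpha_i^{p^j} x_i - \beta^{p^j}$ for $0 \leq j \leq k$, so that $L_0 = L$. Freshman's Dream (\Cref{lemma:freshman}) gives $L_j^p = \sum_i \alpha_i^{p^{j+1}} x_i^p - \beta^{p^{j+1}}$, and since $x_i^p - x_i = (x_i^{p-2} + \cdots + 1)(x_i^2 - x_i)$ lies in $(\mathbf{x}^2 - \mathbf{x})$, this produces the recurrence $L_j^p \equiv L_{j+1} \pmod{(\mathbf{x}^2 - \mathbf{x})}$. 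Since $\alpha_i, \beta \in \F_{p^k}$ satisfy $\alpha_i^{p^k} = \alpha_i$ and $\beta^{p^k} = \beta$, we have $L_k = L_0 = L$. Chaining the identities $L_j \cdot L_j^{p-1} = L_j^p \equiv L_{j+1}$ for $j = 0, 1, \ldots, k-1$, I obtain
\[
L \cdot \bigl(L^{p-1} L_1^{p-1} \cdots L_{k-1}^{p-1}\bigr) \;\equiv\; L \pmod{(\mathbf{x}^2 - \mathbf{x})}.
\]

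The key step is to use unsatisfiability to cancel the $L$ on both sides. Setting $h := L^{p-1} L_1^{p-1} \cdots L_{k-1}^{p-1}$, the displayed identity says $L(\mathbf{a}) \cdot (h(\mathbf{a}) - 1) = 0$ for every $\mathbf{a} \in \Boo^n$. Because $L(\mathbf{a}) \neq 0$ by the unsatisfiability hypothesis, it follows that $h(\mathbf{a}) = 1$ on the entire Boolean cube, i.e., $h \equiv 1 \pmod{(\mathbf{x}^2 - \mathbf{x})}$. Rewriting $h = L \cdot \bigl(L^{p-2} L_1^{p-1} \cdots L_{k-1}^{p-1}\bigr)$, I take $g := L^{p-2} L_1^{p-1} \cdots L_{k-1}^{p-1}$, whose total degree is $(p-2) + (k-1)(p-1) = k(p-1) - 1$. (When $p = 2$, $L^{p-2} = 1$ and $g = L_1 L_2 \cdots L_{k-1}$ has degree $k-1$, matching the same formula.) Then $f = \ml[g]$ and $\deg(f) \leq k(p-1) - 1 \leq k(p-1)$.

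I do not anticipate any serious obstacle beyond bookkeeping. The one subtlety worth emphasizing is that the cancellation of $L$ is performed \emph{functionally} on the Boolean cube using unsatisfiability, not as a purely algebraic manipulation in $\F[\mathbf{x}]$ or in $\F[\mathbf{x}]/(\mathbf{x}^2 - \mathbf{x})$: $L$ need not be a unit in the quotient ring, so the step genuinely relies on $L$ never vanishing on $\Boo^n$. The core identity driving the argument, $L_j^p \equiv L_{j+1}$, is already available in the excerpt, so the proof reduces to the Frobenius telescoping and a single functional cancellation.
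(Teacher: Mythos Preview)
Your proof is correct and follows essentially the same approach as the paper. Both arguments produce the inverse as (the multilinearization of) $L_0^{p-2} L_1^{p-1} \cdots L_{k-1}^{p-1}$, where $L_j$ is the Frobenius twist of $L$; the paper reaches this expression by invoking Fermat's little theorem ($f = \ml[L^{p^k-2}]$) and the $p$-ary expansion of $p^k-2$, while you reach it via the telescoping identity $L_j^p \equiv L_{j+1}$ together with the periodicity $L_k = L_0$, which is the same computation packaged slightly differently.
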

\begin{proof}[Proof of \Cref{lemma:deg-inverse-poly}]
Let $L(\mathbf{x}) := \sum_{i=1}^{n} \alpha_{i} x_{i} - \beta$, $q = p^{k}$ and $(m_{0},\ldots,m_{k-1})$ denote the $p$-ary representation of $(q-2)$ i.e.
\begin{align*}
    q-2 \; = \; \sum_{j=0}^{k-1} m_{j} p^{j}, && \text{for all } \; i, \; 0\leq m_{i} \leq (p-1).
\end{align*}
The hypothesis says that for every $\mathbf{a} \in \Boo^{n}$, $L(\mathbf{a}) \neq 0$. As we are working over the field $\F_{q}$, we get that for every $\mathbf{a} \in \Boo^{n}$, $L(\mathbf{a}) \cdot (L(\mathbf{a}))^{q-2} = 1$. In other words,
\begin{align*}
 \ml[(L(\mathbf{x}))^{q-2}] \; \equiv \; \dfrac{1}{L(\mathbf{x})} \, \mod{(\mathbf{x}^{2} - \mathbf{x})}.
\end{align*}
Since multilinear extension of a boolean function is unique, we get that $f = \ml[L^{q-2}]$, where $f \in \F_{q}[\mathbf{x}]$ is as defined in the statement of \Cref{lemma:deg-inverse-poly}. So we will now show that $\deg(\ml[(L(\mathbf{x}))^{q-2}])$ is $k (p-1)$.\\

\noindent
For every non-negative integer $j \geq 0$, repeated applications of \Cref{lemma:freshman} gives us:
\begin{align*}
    (L(\mathbf{x}))^{p^{j}} \; = \; \sum_{i=1}^{n} \alpha_{i}^{p^{j}} x_{i}^{p^{j}} - \beta^{p^{j}} \quad
    \Rightarrow \quad \ml[(L(\mathbf{x}))^{p^{j}}] \; = \; \sum_{i=1}^{n} \alpha_{i}^{p^{j}} x_{i} - \beta^{p^{j}}.
\end{align*}
For simplicity in notation, for every $j$, let $L_{j}(\mathbf{x}) := \ml[(L(\mathbf{x}))^{p^{j}}]$, and as we just showed, $\deg(L_{j}) = 1$. Using the $p$-ary expansion of $(q-2)$ and the third item of \Cref{fact:multilinear}, we have,
\begin{align*}
    L(\mathbf{x})^{q-2} \; = \; \prod_{j=0}^{k-1} \; (L(\mathbf{x})^{p^{j}})^{m_{j}} \quad \Rightarrow \ml[(L(\mathbf{x}))^{q-2}] \; = \; \ml\brac{  \prod_{j=0}^{k-1} \; \ml[L_{j}(\mathbf{x})  ^{m_{j}}] } \\
    \Rightarrow \deg(\ml[(L(\mathbf{x}))^{q-2}]) \; \leq \; \sum_{j=0}^{k-1} \deg(\ml[L_{j}(\mathbf{x})  ^{m_{j}}]) \; \leq \; \sum_{j=0}^{k-1} m_{j} \leq k \cdot (p-1).
\end{align*}
Hence we have showed that the degree of $f$ is $\leq k(p-1)$ and this finishes the proof of \Cref{lemma:deg-inverse-poly}.
\end{proof}

% \noindent
% The following is a standard fact about multilinear polynomials.\\

% \begin{fact}\label{fact:degree-d-witness-d-variables}
% Let $f \in \F[x_{1},\ldots,x_{n}]$ be a degree-$d$ multilinear polynomial. Then there exists a subset $T \subseteq [n]$ with $|T| = d$ such that $f$ restricted to the coordinates in $T$ is a degree-$d$ polynomial. In other words, if for every $j \notin T$, we set $x_{j} = 0$, then the resulting polynomial is a degree-$d$ polynomial.
% \end{fact}

\noindent
We now prove \Cref{thm:ub-degree} using an almost straightforward application of \Cref{lemma:deg-inverse-poly}.\\

\begin{proof}[Proof of \Cref{thm:ub-degree}]
From \Cref{lemma:nullstellensatz-refutations}, it suffices to show that there exists coefficients $A(\mathbf{x})$ and $B_{j}(\mathbf{x})$'s in the ring $\F_{p^{k}}[\mathbf{x}]$ such that
\begin{align*}
    A(\mathbf{x}) \cdot \paren{\sum_{i=1}^{n} \alpha_{i} x_{i} - \beta} \, + \, \sum_{j=1}^{n} B_{j}(\mathbf{x}) \cdot (x_{j}^{2} - x_{j}) \; = \; 1,
\end{align*}
where $A(\mathbf{x})$ and $B_{j}(\mathbf{x})$'s have constant-depth circuits of size $\bigO(n/kp)^{\bigO(kp)}$ and degree $\bigO(kp)$.\\

Let $L(\mathbf{x}) := \sum_{i=1}^{n} \alpha_{i} x_{i} - \beta$. Let $A \in \F[\mathbf{x}]$ be the multilinear polynomial such that for every $\mathbf{x} \in \Boo^{n}$, $A(\mathbf{x})$ equals $1/L(\mathbf{x})$ (note that $L(\mathbf{x}) \neq 0$ for every $\mathbf{x} \in \Boo^{n}$ because $L$ is unsatisfiable over the Boolean cube). Applying \Cref{lemma:deg-inverse-poly}, we get that $\deg(A) \leq k\cdot(p-1)$.\newline
Since $A(\mathbf{x})$ is a $n$-variate multilinear polynomial of degree $\leq k(p-1)$, it has at most $\binom{n}{\leq D}$ monomials. Using Stirling's approximation, we get that the number of monomials is $\bigO(n/kp)^{\bigO(kp)}$, which implies a $\Sigma \Pi$ circuit for $A(\mathbf{x})$ of size $\bigO(n/kp)^{\bigO(kp)}$ and degree $\bigO(kp)$.\\

\noindent
Now it remains to argue for $B_{j}(\mathbf{x})$'s. Let $B_{1}(\mathbf{x})$ be the quotient and $R_{1}(\mathbf{x})$ be the remainder when $A(\mathbf{x}) \cdot L(\mathbf{x})$ is divided by $(x_{1}^{2}-x_{1})$,
\begin{align*}
    A(\mathbf{x}) \cdot L(\mathbf{x}) \; = \; B_{1}(\mathbf{x}) \cdot (x_{1}^{2} - x_{1}) + R_{1}(\mathbf{x}).
\end{align*}
Clearly $\deg(B_{1}), \deg(R_{1}) \leq \deg(A) + 1$. Next, let $B_{2}(\mathbf{x})$ denote the quotient and $R_{2}(\mathbf{x})$ denote the remainder when $R_{1}(\mathbf{x})$ is divided by $(x_{2}^{2} - x_{2})$, and so on. Since $A(\mathbf{x}) \cdot L(\mathbf{x}) - 1 \in (\mathbf{x}^{2} - \mathbf{x})$, we know that $R_{n}(\mathbf{x}) = 1$. In other words,
\begin{align*}
    A(\mathbf{x}) \cdot L(\mathbf{x}) \; = \; \sum_{j=1}^{n} B_{j}(\mathbf{x}) \cdot (x_{j}^{2} - x_{j}) + 1  \\
    \Rightarrow A(\mathbf{x}) \cdot \paren{\sum_{i=1}^{n} \alpha_{i} x_{i} - \beta} + \sum_{j=1}^{n} (-B_{j}(\mathbf{x})) \cdot (x_{j}^{2} - x_{j}) \; = \; 1.
\end{align*}
Here, for each $j \in [n]$, $\deg(B_{j})\leq \deg(A) + 1$. Similar to $A(\mathbf{x})$, each $B_{j}(\mathbf{x})$ has a $\Sigma \Pi$ circuit of size $\bigO(n/kp)^{\bigO(kp)}$ and degree $\bigO(kp)$. This finishes the proof of \Cref{thm:ub-degree}.
\end{proof}

\section{Symmetric Refutations in Constant Depth}
In this section, we will prove \Cref{thm:ubd-sym-const-depth}, which we recall below.
\symconstant*

 \paragraph{}One of the steps in our proof of \Cref{thm:ubd-sym-const-depth} is a \emph{multilinearization} step, i.e. given a polynomial $f(\mathbf{x})$, we want to find a certificate in constant-depth circuits certifying that $f(\mathbf{x})$ and $\ml[f(\mathbf{x})]$ agree on the Boolean cube $\Boo^{n}$. More formally, we are interested in finding polynomials $B_{j}(\mathbf{x})$'s such that
 \begin{align*}
     f(\mathbf{x}) \; = \; \ml[f(\mathbf{x})] + \sum_{j=1}^{n} B_{j}(\mathbf{x}) \cdot (x_{j}^{2} - x_{j}),
 \end{align*}
 and the polynomials $B_{j}(\mathbf{x})$ have a $\mathrm{poly}(n)$-sized constant-depth circuit.

\paragraph{}We also need a few standard facts about elementary symmetric polynomials in fields of positive characteristic. A standard fact that is useful in our proof is that a symmetric function over the Boolean cube in constant positive characteristic only depends on $\bigO(\log n)$ elementary symmetric polynomials (instead of $n$ elementary symmetric polynomials for symmetric polynomials over arbitrary domains). We now give a proof below for the sake of completeness.

\begin{lemma}[Lucas's Theorem \cite{Lucas}]\label{lemma:lucas}
Fix a prime number $p$ and any two natural numbers $a$ and $b$. Denote $a$ and $b$ in their unique $p$-ary representations as:
\begin{align*}
    a = \sum_{i = 0}^{\ell - 1} a_{i} p^{i}, \quad \quad b = \sum_{i = 0}^{\ell - 1} b_{i} p^{i}, \quad \quad  a_{i}, b_{i} \in \{0,1,\ldots, p-1\}
\end{align*}
Then,
\begin{align*}
    \displaystyle\binom{a}{b} \; \equiv \; \prod_{i = 0}^{\ell - 1} \displaystyle\binom{a_{i}}{b_{i}} \; \mod{p},
\end{align*}
where we define $\binom{x}{y}$ to be $0$ if $x < y$.
\end{lemma}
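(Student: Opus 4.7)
The plan is to prove Lucas's Theorem by comparing coefficients on both sides of the polynomial identity
\[
(1+x)^a \;\equiv\; \prod_{i=0}^{\ell-1} \bigl(1+x^{p^i}\bigr)^{a_i} \pmod{p}
\]
in the ring $\F_p[x]$. The backbone of the argument is Freshman's Dream (\Cref{lemma:freshman}): an easy induction on $i$ gives $(1+x)^{p^i} \equiv 1 + x^{p^i} \pmod{p}$ in $\F_p[x]$. Writing $a = \sum_{i=0}^{\ell-1} a_i p^i$ and grouping, we obtain
\[
(1+x)^a \;=\; \prod_{i=0}^{\ell-1} \bigl((1+x)^{p^i}\bigr)^{a_i} \;\equiv\; \prod_{i=0}^{\ell-1} \bigl(1+x^{p^i}\bigr)^{a_i} \pmod{p},
\]
which establishes the identity.

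Next I would extract the coefficient of $x^b$ on both sides. On the left, this coefficient is exactly $\binom{a}{b}$. On the right, expanding each factor as $(1+x^{p^i})^{a_i} = \sum_{j_i=0}^{a_i} \binom{a_i}{j_i} x^{j_i p^i}$ and distributing yields
\[
\text{coeff}_{x^b}\Biggl(\prod_{i=0}^{\ell-1}\bigl(1+x^{p^i}\bigr)^{a_i}\Biggr) \;=\; \sum_{(j_0,\ldots,j_{\ell-1})} \prod_{i=0}^{\ell-1} \binom{a_i}{j_i},
\]
where the sum is over tuples with $0 \le j_i \le a_i$ and $\sum_i j_i p^i = b$.

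The key observation is that since each $j_i$ satisfies $0 \le j_i \le a_i \le p-1$, the sum $\sum_i j_i p^i$ is itself a valid $p$-ary expansion. By uniqueness of the $p$-ary representation, if such a tuple exists at all then it must be $(j_0,\ldots,j_{\ell-1}) = (b_0,\ldots,b_{\ell-1})$, and the right-hand coefficient collapses to $\prod_i \binom{a_i}{b_i}$. The only subtle point, and what I expect to be the minor obstacle, is handling the degenerate case where some $b_i > a_i$: then no valid tuple exists, so the right-hand coefficient is $0$, but this matches the convention $\binom{a_i}{b_i} = 0$ on the product side, so both sides agree. (If $b$ needs more than $\ell$ digits we may pad both expansions with leading zeros, so the argument covers $b > a$ as well, where both sides vanish.) Equating the two coefficients modulo $p$ gives the stated congruence.
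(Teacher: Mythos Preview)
Your proof is correct and is the standard generating-function argument for Lucas's Theorem. The paper does not actually prove this lemma; it is stated with a citation to \cite{Lucas} and used as a classical black box, so there is no proof in the paper to compare against.
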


Next, we show that a symmetric function over the Boolean cube $\Boo^{n}$ in characteristic $p$ depends on $\bigO(\log n)$ elementary symmetric polynomials.

\begin{claim}[Symmetric functions over $\Boo^{n}$ in positive char]\label{claim:sym-char-p}
Fix a prime number $p$ and a field $\F$ with $\mathrm{char}(\F) = p$. Fix a variable parameter $n \in \mathbb{N}$.\newline
Let $f(\mathbf{x}) \in \F[x_{1},\ldots, x_{n}]$ be a multilinear and symmetric polynomial. Then $f(\mathbf{x})$ is a function of $\bigO(\log_{p} n)$ elementary symmetric polynomials on $n$ variables.
\end{claim}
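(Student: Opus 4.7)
The plan is to reduce the analysis of $f$ on $\Boo^{n}$ to a function of a single integer parameter (namely the Hamming weight $|\mathbf{x}|$), and then invoke Lucas's theorem to show that this weight is recoverable modulo $p$ from only $O(\log_p n)$ elementary symmetric polynomials evaluated at prime-power indices. Specifically, since $f$ is simultaneously multilinear and symmetric, any two Boolean inputs of the same Hamming weight are permutations of each other, so there is a univariate function $g: \{0,1,\ldots,n\} \to \F$ with $f(\mathbf{x}) = g(|\mathbf{x}|)$ for every $\mathbf{x} \in \Boo^{n}$. Thus it suffices to recover $|\mathbf{x}|$ polynomially from a small collection of elementary symmetric polynomials.

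Next, I would set $\ell := \lceil \log_p(n+1) \rceil = O(\log_p n)$, so that every integer $w \in \{0,1,\ldots,n\}$ has a $p$-ary expansion $w = \sum_{i=0}^{\ell-1} w_i p^i$ with $w_i \in \{0,1,\ldots,p-1\}$. Consider the $\ell$ elementary symmetric polynomials $e_{p^{0}}(\mathbf{x}), e_{p^{1}}(\mathbf{x}), \ldots, e_{p^{\ell-1}}(\mathbf{x})$. For any $\mathbf{x} \in \Boo^{n}$ with $|\mathbf{x}| = w$, we have $e_{p^{i}}(\mathbf{x}) = \binom{w}{p^{i}}$ as integers. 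Applying Lucas's theorem (\Cref{lemma:lucas}) and noting that the $p$-ary expansion of $p^{i}$ has a single nonzero digit (a $1$ at position $i$), we obtain
\begin{align*}
    e_{p^{i}}(\mathbf{x}) \; \equiv \; \binom{w}{p^{i}} \; \equiv \; \binom{w_{i}}{1} \prod_{j \neq i} \binom{w_{j}}{0} \; \equiv \; w_{i} \pmod{p}.
\end{align*}
Hence the tuple $(e_{p^{0}}(\mathbf{x}), \ldots, e_{p^{\ell-1}}(\mathbf{x}))$ is precisely the sequence of $p$-ary digits of $|\mathbf{x}|$ (regarded as elements of $\F_p \subseteq \F$), and from it one can uniquely reconstruct $|\mathbf{x}|$.

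Finally, since $|\mathbf{x}|$ determines $f(\mathbf{x})$ via $g$, and the map $\mathbf{x} \mapsto (e_{p^{0}}(\mathbf{x}), \ldots, e_{p^{\ell-1}}(\mathbf{x}))$ injects weights into $\{0,1,\ldots,p-1\}^{\ell}$, I can define $F \in \F[y_{0}, \ldots, y_{\ell-1}]$ by polynomial interpolation so that $F(w_0, w_1, \ldots, w_{\ell-1}) = g(w_0 + w_1 p + \cdots + w_{\ell-1}p^{\ell-1})$ on this image (extending arbitrarily elsewhere). Then $F(e_{p^{0}}(\mathbf{x}), \ldots, e_{p^{\ell-1}}(\mathbf{x})) = f(\mathbf{x})$ for every $\mathbf{x} \in \Boo^{n}$, and the number of elementary symmetric polynomials involved is $\ell = O(\log_{p} n)$, as required. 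There is no significant obstacle; the only step that uses the positive characteristic assumption in an essential way is the Lucas-based identity $e_{p^{i}}(\mathbf{x}) \equiv w_{i} \pmod{p}$, which is both the crux of the argument and the reason characteristic $0$ would not yield the logarithmic savings.
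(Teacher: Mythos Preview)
Your proof is correct and follows essentially the same approach as the paper: both hinge on the Lucas-based identity $e_{p^{i}}(\mathbf{x}) \equiv w_i \pmod p$, where $w_i$ is the $i$-th $p$-ary digit of $|\mathbf{x}|$, to show that $O(\log_p n)$ elementary symmetric polynomials determine $f$ on $\Boo^n$. The only cosmetic difference is that the paper builds $F$ explicitly---writing each $e_d$ as $\prod_i S_{d,i}(e_{p^i}(\mathbf{x}))$ with $S_{d,i}(z) = \tfrac{1}{d_i!}\prod_{j=0}^{d_i-1}(z-j)$---whereas you invoke interpolation; the explicit form is reused later in the paper (in defining the polynomials $Q_t$), but for the claim itself your argument is equivalent.
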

\begin{proof}[Proof of \Cref{claim:sym-char-p}]
Fix any natural number $0 \leq d \leq n$ and consider the $d^{th}$ elementary symmetric polynomial, i.e., $e_{d}(x_{1},\ldots,x_{n})$. Fix an arbitrary point $\mathbf{a} \in \Boo^{n}$ and let $k = |\mathbf{a}|$, where $|\mathbf{a}|$ denotes the Hamming weight of $\mathbf{a}$. We know that $e_{d}(\mathbf{a}) = \binom{k}{d}$.
Denote $k$ and $d$ in their unique $p$-ary representation, i.e.
\begin{align*}
    k \; = \; \sum_{i = 0}^{\ell} k_{i} p^{i}, \quad \text{ and } \quad d \; = \; \sum_{i = 0}^{\ell} d_{i} p^{i}, \quad \quad k_{i}, d_{i} \in \set{0,1,\ldots,p-1}
\end{align*}
By Lucas's Theorem (\Cref{lemma:lucas}), we have,
\begin{align*}
    \binom{k}{d} \; = \; \binom{k_{\ell}}{d_{\ell}} \cdots \binom{k_{0}}{d_{0}} \mod{p}
\end{align*}
Note that $e_{d_{i} p^{i}}(\mathbf{a}) = \binom{k}{d_{i} p^{i}}$. For every $0 \leq i \leq \ell$, using Lucas's Theorem (\Cref{lemma:lucas}), we have,
\begin{equation}\label{eqn:sym-p-powers-1}
    \binom{k}{d_{i}p^{i}} \; = \; \binom{k_{i}}{d_{i}} \mod{p} \quad \Rightarrow \quad \binom{k}{d_{i}p^{i}} \; = \; \dfrac{1}{d_{i}!} \cdot \prod_{j = 0}^{d-1} \paren{\binom{k_{i}}{1} - j} \quad \quad (d_{i}! \neq 0 \mod{p})
\end{equation}
Using Lucas's Theorem (\Cref{lemma:lucas}), we have
\begin{equation}\label{eqn:sym-p-powers-2}
    \binom{k_{i}}{1} \; = \; \binom{k}{p^{i}} \mod{p}
\end{equation}
Define the polynomial $S_{d,i}(z) := \frac{1}{d_i!}\prod_{j = 0}^{d_i-1} \paren{z - j}$. Note that $e_{p^{i}}(\mathbf{a}) = \binom{k}{p^{i}} \mod{p}$.
Plugging in \Cref{eqn:sym-p-powers-2} in \Cref{eqn:sym-p-powers-1}, we get
\begin{align*}
    e_{d}(\mathbf{a}) \; = \; \binom{k}{d} \; = \; \prod_{i = 1}^{\ell} \, S_{d,i}(e_{p^{i}}(\mathbf{a}))
\end{align*}
We have shown that $e_{d}(\mathbf{a})$ is a polynomial of $e_{p^{i}}(\mathbf{a})$ for $i \in \set{0,1,\ldots,\ell}$. Since $\mathbf{a}$ was an arbitrarily chosen point in $\Boo^{n}$, we just argued that on the Boolean hypercube, $e_{d}(\mathbf{x})$ is a polynomial of $e_{p^{0}}(\mathbf{x}),\ldots,e_{p^{\ell}}(\mathbf{x})$. This holds for every $0 \leq d \leq n$. Hence, every symmetric function on $\Boo^{n}$ in characteristic $p$ is a polynomial of $e_{p^{0}}(\mathbf{x}),\ldots,e_{p^{\ell}}(\mathbf{x})$, i.e. of $\bigO(\log_{p} n)$ elementary symmetric polynomials.
\end{proof}

A key lemma in our proof is the multilinearization lemma \Cref{lemma:multilinearize-char-p}, which shows that multilinearization of a sparse polynomial in $\widehat{\mathbf{e}}(\mathbf{x})$ has a small constant-depth circuit.\\

\begin{restatable}[Multilinearization of polynomial of elementary symmetric polynomials]{lemma}{multsymposchar}\label{lemma:multilinearize-char-p}
Fix a prime number $p$ and a field $\F$ with $\mathrm{char}(\F) = p$. Fix a variable parameter $r \in \mathbb{N}$.\newline
Let $F(\mathbf{y}) \in \F[y_{1},\ldots,y_{r}]$ be a polynomial with individual degree strictly less than $p$. Then \break $\ml[F(e_{1}(\mathbf{y}), e_{p}(\mathbf{x}), \ldots, e_{p^{r-1}}(\mathbf{x}))]$ has a circuit of size $\bigO( n^{5} \log n)$ and depth $5$.    
\end{restatable}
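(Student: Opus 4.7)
The plan is to exploit the $\F$-linearity of the multilinearization operator $\ml[\cdot]$. I would first expand $F(\mathbf{y}) = \sum_{\bm{\mu}:\, \mu_{i} < p} c_{\bm{\mu}} \mathbf{y}^{\bm{\mu}}$ as a sum of at most $p^{r}$ monomials, and then use linearity to obtain
\begin{align*}
    \ml\brac{F(e_{1}(\mathbf{x}), e_{p}(\mathbf{x}), \ldots, e_{p^{r-1}}(\mathbf{x}))} \; = \; \sum_{\bm{\mu}} c_{\bm{\mu}} \cdot P_{\bm{\mu}}(\mathbf{x}), \qquad P_{\bm{\mu}}(\mathbf{x}) \; := \; \ml\brac{\prod_{i=1}^{r} e_{p^{i-1}}(\mathbf{x})^{\mu_{i}}}.
\end{align*}
This reduces the task to building a small constant-depth circuit for each $P_{\bm{\mu}}$ and then combining them by an outer sum.

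For each $\bm{\mu}$, I would observe that the product $\prod_{i=1}^{r} e_{p^{i-1}}(\mathbf{x})^{\mu_{i}}$ is a symmetric polynomial of total degree at most $\sum_{i} \mu_{i} p^{i-1} \leq p^{r} - 1$, so $P_{\bm{\mu}}(\mathbf{x})$ is a symmetric multilinear polynomial of degree at most $\min(p^{r}-1,\, n) \leq n$. Being symmetric and multilinear, it admits a unique expansion $P_{\bm{\mu}}(\mathbf{x}) = \sum_{d=0}^{n} \lambda_{d}^{(\bm{\mu})}\, e_{d}(\mathbf{x})$ for constants $\lambda_{d}^{(\bm{\mu})} \in \F$ that are determined by the values of the underlying product on the slices $\set{|\mathbf{x}| = k}$ of the Boolean cube (via the inverse binomial transform applied to $M_{\bm{\mu},k} := \prod_{i} \binom{k}{p^{i-1}}^{\mu_{i}}$). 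These are scalars in $\F$ and so do not contribute to circuit complexity.

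Next, I would pick any $(n+1)$ distinct field elements $\gamma_{1}, \ldots, \gamma_{n+1}$ (passing to a field extension $\F'/\F$ of degree $k' = \bigO(\log n)$ if $|\F|$ is too small) and apply Ben-Or's construction (\Cref{thm:ben-or}) to write $e_{d}(\mathbf{x}) = \sum_{i=1}^{n+1} c_{d,i} \prod_{j=1}^{n}(1 + \gamma_{i} x_{j})$ for each $d$. Substituting into the expansion of $P_{\bm{\mu}}$ and swapping the two summations yields
\begin{align*}
    P_{\bm{\mu}}(\mathbf{x}) \; = \; \sum_{i=1}^{n+1} \beta_{i}^{(\bm{\mu})} \prod_{j=1}^{n}(1 + \gamma_{i} x_{j}), \qquad \beta_{i}^{(\bm{\mu})} \; := \; \sum_{d=0}^{n} \lambda_{d}^{(\bm{\mu})} c_{d,i},
\end{align*}
a $\Sigma\Pi\Sigma$ circuit of size $\bigO(n^{2})$ and depth $3$ over $\F'$. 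Summing this over all $\bm{\mu}$ and collecting the coefficients of the common products $\prod_{j}(1 + \gamma_{i} x_{j})$ produces a single $\Sigma\Pi\Sigma$ circuit (still of size $\bigO(n^{2})$) that computes $\ml[F(\ldots)]$ over $\F'$.

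\paragraph{Main obstacle.} The hard part will be descending from the extension $\F'$ back to a circuit over the base field $\F$. The cleanest route is to apply the trace across $\mathrm{Gal}(\F'/\F)$: choosing the extension degree $k'$ coprime to $p$ (easy to arrange within the $\bigO(\log n)$ degree budget), and noting that $\ml[F(\ldots)] \in \F[\mathbf{x}]$, we have $\ml[F(\ldots)] = \frac{1}{k'}\sum_{\sigma \in \mathrm{Gal}(\F'/\F)} \sigma$ applied to our $\F'$-circuit, which pays an $\bigO(\log n)$ multiplicative size factor and one extra sum layer. Alternatively, one may symbolically expand each $\F'$-constant in a fixed $\F$-basis of $\F'$ and extract the $\F$-component. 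Either bookkeeping comfortably lands within the stated bounds of size $\bigO(n^{5}\log n)$ and depth $5$; in fact a more careful accounting gives size $\bigO(n^{2}\log n)$ and depth $4$.
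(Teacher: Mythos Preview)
Your argument is correct for the statement as written (and in fact gives a better bound), but it takes a different and more direct route than the paper. You observe in one stroke that $\ml[F(\widehat{\mathbf{e}}(\mathbf{x}))]$, being symmetric and multilinear, is an $\F$-linear combination $\sum_d \Lambda_d\, e_d(\mathbf{x})$, and then apply Ben-Or's construction once. The paper instead first proves an auxiliary lemma (\Cref{lemma:ml-prod-two-elem}) that multilinearizes a product of \emph{two} elementary symmetric polynomials via Ben-Or together with multilinearization of a product of univariates, and then inducts on the number of factors in each monomial (\Cref{claim:multilinearize-exponent-vector}). What the paper's more elaborate induction buys is that it simultaneously produces explicit certificate polynomials $R_j(\mathbf{x})$ of small constant depth satisfying $F(\widehat{\mathbf{e}}(\mathbf{x})) = \ml[F(\widehat{\mathbf{e}}(\mathbf{x}))] + \sum_j R_j(\mathbf{x})(x_j^2 - x_j)$; these $R_j$'s are exactly what the proof of \Cref{thm:ubd-sym-const-depth} invokes, even though the lemma statement itself does not assert them. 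Your approach does not produce the $R_j$'s, so while it settles the lemma as stated (with a sharper bound), it would not plug directly into the application.

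One technical quibble on your small-field handling: the trace $\frac{1}{k'}\sum_{\sigma\in\mathrm{Gal}(\F'/\F)} C^{\sigma}$ does not give a circuit over $\F$. Each $C^{\sigma}$ is still a circuit whose constants are Galois conjugates lying in $\F'$; summing them computes the right polynomial but the constants remain in $\F'$. That said, this is a side issue: the paper's proof also relies on Ben-Or (via \Cref{lemma:ml-prod-two-elem}) and hence also tacitly needs $|\F|>n$, so the small-field case is not genuinely at stake in either argument.
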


\noindent
We will prove \Cref{lemma:multilinearize-char-p} later. For now, we show how it is useful in proving \Cref{thm:ubd-sym-const-depth}.

\begin{proof}[Proof of \Cref{thm:ubd-sym-const-depth}]
We will first prove in the setting when the underlying field $\F$ has a small positive char, i.e. $\mathrm{char}(\F) = p$ for a constant prime $p$. The proof of characteristic $0$ or $> n$ is similar and simpler too. We will come back to the setting $\mathrm{char}(\F) = 0$ or $> n$ towards the end of the proof.

\paragraph{}From \Cref{lemma:nullstellensatz-refutations}, we know that it suffices to prove there exists polynomials $A_{i}(\mathbf{x})$'s and $B_{j}(\mathbf{x})$'s in the ring $\F[x_{1},\ldots,x_{n}]$ such that:
\begin{align*}
    \sum_{i = 1}^{m} A_{i}(\mathbf{x}) \cdot f_{i}(\mathbf{x})  + \sum_{j=1}^{n} B_{j}(\mathbf{x}) \cdot (x_{j}^{2} - x_{j}) \; = \; 1,
\end{align*}
where $A_{i}(\mathbf{x})$'s and $B_{j}(\mathbf{x})$'s have $\mathrm{poly}(n)$-sized constant-depth circuits.

\paragraph{\underline{Reducing to few variables}}Let $r$ denote the number of digits when $n$ is expressed in $p$-ary representation. We have $r = \lfloor \log_{p} n \rfloor + 1 \leq 2 \log_{p} n$. \Cref{claim:sym-char-p} tells us that there exists polynomials $F_{1}(\mathbf{y}),\ldots,F_{m}(\mathbf{y}) \in \F[y_{1},\ldots,y_{r}]$ such that:
\begin{align*}
    f_{i}(\mathbf{x}) \; = \; F_{i}(e_{1}(\mathbf{x}), e_{p}(\mathbf{x}), \ldots, e_{p^{r-1}}(\mathbf{x})) \mod{\langle \mathbf{x}^{2} - \mathbf{x} \rangle}
\end{align*}
We will denote the tuple of polynomials $(e_{1}(\mathbf{x}), e_{p}(\mathbf{x}), \ldots, e_{p^{r-1}}(\mathbf{x}))$ by $\widehat{\mathbf{e}}(\mathbf{x})$.\newline
Since $f_{i}(\mathbf{x})$ and $F_{i}(\widehat{\mathbf{e}}(\mathbf{x}))$ agree on the Boolean cube $\Boo^{n}$, their multilinear components are equal, i.e. $\ml[f_{i}(\mathbf{x})] = \ml[F_{i}(\widehat{\mathbf{e}}(\mathbf{x}))]$ (see \Cref{fact:multilinear}). Since $f_{i}(\mathbf{x})$ is a \textbf{multilinear} polynomial, we have $f_{i}(\mathbf{x}) = \ml[F_{i}(\widehat{\mathbf{e}}(\mathbf{x}))]$.\\

\noindent
For every $1 \leq i \leq r$, the polynomial $e_{p^{i-1}}(\mathbf{x})$ take values in $\F_{p}$ over the Boolean cube. For every $j \in [r]$, let $p_{j}(t)$ be a univariate polynomial that vanishes on the set $\F_{p}$, i.e. $p_{j}(t) = \prod_{\alpha \in \F_{p}} (t - \alpha)$.\newline
For every $n < t < p^{r}-1$, define the polynomial $Q_{t} \in \F[\mathbf{y}]$ as follows:
\begin{align*}
    Q_{t}(\mathbf{y}) := \prod_{i=0}^{r-1} \; \dfrac{1}{t_i!}\prod_{j = 0}^{t_i-1} \paren{y_{i} - j}, \quad \quad \text{ where } t = \sum_{i=0}^{r-1} t_{i} p^{i-1}
\end{align*}

\noindent
Our first claim shows that if $f_{i}(\mathbf{x})$'s do not have a common Boolean solution, then $F_{i}(\mathbf{y})$'s along with some additional constraints do not have a common solution, even over the algebraic closure of $\F$.

\begin{claim}\label{claim:ubd-sym-lifting-char-p}
The system consisting of $F_{i}(\mathbf{y})$'s, $Q_{t}(\mathbf{y})$'s, and $p_{j}(y_{j})$'s have no common solution in the closure $\overline{\F}^{n}$, i.e.
\begin{align*}
    \mathbb{V}(F_{1}(\mathbf{y}), \ldots, F_{m}(\mathbf{y}),Q_{n+1}(\mathbf{y}),\ldots,Q_{p^{r}-1}(\mathbf{y}), p_{1}(y_{1}), \ldots, p_{r}(y_{r})) \; = \; \emptyset
\end{align*}
\end{claim}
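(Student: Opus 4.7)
The plan is to show that any common zero of the system in $\overline{\F}^r$ produces a Boolean vector on which every $f_i$ vanishes, contradicting the unsatisfiability hypothesis. Suppose toward contradiction that $\mathbf{a} = (a_1, \ldots, a_r) \in \overline{\F}^r$ is a common zero of the system. The univariate constraints $p_j(y_j) = \prod_{\alpha \in \F_p}(y_j - \alpha)$ immediately force $a_j \in \F_p$ for every $j \in [r]$, so I can identify $\mathbf{a}$ with the tuple of canonical integer lifts $(\widetilde{a}_1, \ldots, \widetilde{a}_r) \in \{0,1,\ldots,p-1\}^r$, and in turn with the integer $k = \sum_{i=0}^{r-1} \widetilde{a}_{i+1}\, p^i \in [0, p^r - 1]$ whose $p$-ary digits are exactly the lifts of the $a_j$'s.

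The next step is to use the $Q_t$'s to show $k \leq n$. For any $t$ with $p$-ary expansion $t = \sum_{i=0}^{r-1} t_i p^i$, a direct computation combined with Lucas's theorem (\Cref{lemma:lucas}) gives
\[
Q_t(\mathbf{a}) \;=\; \prod_{i=0}^{r-1} \frac{1}{t_i!}\prod_{j=0}^{t_i - 1}(\widetilde{a}_{i+1} - j) \;=\; \prod_{i=0}^{r-1} \binom{\widetilde{a}_{i+1}}{t_i} \;\equiv\; \binom{k}{t} \pmod{p}.
\]
If $k$ were strictly greater than $n$, then $k$ itself would lie in the range $(n, p^r - 1]$, and plugging $t = k$ into the identity above would give $Q_k(\mathbf{a}) \equiv \binom{k}{k} = 1 \neq 0$, contradicting the fact that $Q_k$ is among the polynomials vanishing at $\mathbf{a}$. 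Hence $0 \leq k \leq n$.

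Finally, I would exhibit an explicit Boolean witness and derive a contradiction. Pick any $\mathbf{b} \in \Boo^n$ of Hamming weight $k$; such $\mathbf{b}$ exists because $0 \leq k \leq n$. For each $j \in [r]$ we have $e_{p^{j-1}}(\mathbf{b}) = \binom{k}{p^{j-1}}$, and another application of Lucas's theorem identifies this with $\widetilde{a}_j$ in $\F_p$. Therefore $\widehat{\mathbf{e}}(\mathbf{b}) = \mathbf{a}$. Using the defining property $f_i(\mathbf{x}) \equiv F_i(\widehat{\mathbf{e}}(\mathbf{x})) \pmod{\langle \mathbf{x}^2 - \mathbf{x}\rangle}$, which in particular implies agreement on the Boolean cube, we conclude
\[
f_i(\mathbf{b}) \;=\; F_i(\widehat{\mathbf{e}}(\mathbf{b})) \;=\; F_i(\mathbf{a}) \;=\; 0 \quad \text{for every } i \in [m],
\]
contradicting the assumption that $f_1,\ldots,f_m$ have no common Boolean root. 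The only real technicality I anticipate is keeping the index/$p$-ary bookkeeping tidy between the tuple $\mathbf{a}$, its lift $(\widetilde{a}_1,\ldots,\widetilde{a}_r)$, and the integer $k$, together with a careful verification of the Lucas identity $Q_t(\mathbf{a}) \equiv \binom{k}{t} \pmod{p}$; beyond this, the argument is a short chain of implications.
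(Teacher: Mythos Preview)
Your proof is correct and follows essentially the same contradiction strategy as the paper: use the $p_j$'s to force the common zero into $\F_p^r$, use the $Q_t$'s to certify that the associated integer $k$ is at most $n$, and then exhibit a Boolean point of Hamming weight $k$ on which every $f_i$ vanishes via $\widehat{\mathbf{e}}$. Your step showing $k \leq n$ by plugging in $t = k$ and invoking $\binom{k}{k} = 1$ is in fact slightly cleaner than the paper's digit-by-digit argument.
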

\begin{proof}
We will prove this by contradiction. Assume for the sake of contradiction that there exists a common solution $\mathbf{b}$ to the above system of polynomials. Since for every $j \in [r]$, $p_{j}(b_{j}) = 0$, this implies that $\mathbf{b} \in \set{0,\ldots,p-1}^{r}$. We will now show the existence of a point $\mathbf{a} \in \Boo^{n}$ such that
\begin{equation}\label{eqn:ubd-const-char-p-surjective}
    \widehat{\mathbf{e}}(\mathbf{a}) = \mathbf{b}
\end{equation}
Observe that an $\mathbf{a} \in \Boo^{n}$ which satisfies \Cref{eqn:ubd-const-char-p-surjective} is a common Boolean solution to the system $\set{f_{1}(\mathbf{x}), \ldots, f_{m}(\mathbf{x})}$, which is a contradiction to our hypothesis of \Cref{thm:ubd-sym-const-depth}. So to finish the contradiction, all that remains is to show the existence of such a Boolean point $\mathbf{a}$.\\

\noindent
We are only interested in showing the existence of a Boolean point satisfying \Cref{eqn:ubd-const-char-p-surjective}, i.e. we are only interested in the evaluation of $\mathbf{a}$ on symmetric polynomials. Thus we can focus on showing the existence of an appropriate Hamming weight $0 \leq k \leq n$ such that
\begin{align*}
    \binom{k}{p^{i}} = \binom{k_{i}}{1} = b_{i} \mod{p}, \quad \quad \text{ for all } \; 0 \leq i \leq r-1 
\end{align*}
where we used Lucas's Theorem (\Cref{lemma:lucas}) for the first equality. Choose $k$ to be $\sum_{i = 0}^{\ell} b_{i} p^{i}$. If $k \leq n$, then $k$ satisfies all the $r$ constraints in \Cref{eqn:ubd-const-char-p-surjective}. In particular, we can set $\mathbf{a} = 1^{k} 0^{n-k}$ and it will satisfy \Cref{eqn:ubd-const-char-p-surjective}. Thus to complete the proof, we need to show that $k \leq n$.\\

\noindent
By assumption, $Q_{t}(\mathbf{b}) = 0$ for all $n < t \leq p^{r}-1$. Fix any $t = \sum_{i=0}^{r-1} t_{i} p^{i-1}$. By the definition of the polynomial $Q_{t}(\mathbf{y})$ (see the proof of \Cref{claim:sym-char-p}):
\begin{align*}
    Q_{t}(\mathbf{b}) \; = \; \prod_{i=0}^{r-1} \binom{b_{i}}{t_{i}} \; = \; 0.
\end{align*}
This means there exists an $i \in \set{0,1,\ldots,r-1}$ such that $b_{i} < t_{i}$. Since this holds for every $n < t \leq p^{r}-1$, this implies that for each coordinate $i$, $b_{i} \leq n_{i}$ where $n = \sum_{i=0}^{r-1} n_{i} p^{i}$. Thus $k \leq n$.\\

\noindent
Hence we have found a Boolean point $\mathbf{a} \in \Boo^{n}$ which satisfies \Cref{eqn:ubd-const-char-p-surjective}. Since for every $i \in [m]$, $f_{i}(\mathbf{x})$ and $F_{i}(\widehat{\mathbf{e}}(\mathbf{x}))$ agree on the Boolean cube $\Boo^{n}$, $\mathbf{a}$ is a common Boolean solution to $f_{i}(\mathbf{x})$'s. This is a contradiction to our assumption in \Cref{thm:ubd-sym-const-depth}.
\end{proof}

\paragraph{\underline{Low-variate Nullstellensatz}}We have shown that the unsatisfiability of the $n$-variate polynomials $f_{i}$'s over the Boolean cube implies the unsatisfiability of $\bigO(\log n)$-variate polynomials $F_{i}$'s (with some additional polynomials to reflect the Boolean cube restriction). Now we use Hilbert's Nullstellensatz to get a Nullstellensatz certificate for the $\bigO(\log n)$-variate polynomials and ``lift'' it to get a Nullstellensatz certificate for the original system of polynomial equations.\\

\noindent
\Cref{claim:ubd-sym-lifting-char-p} says that the system consisting of $F_{i}$'s, $Q_{t}$'s, and $p_{j}$'s do not have a common zero over the algebraic closure $\Bar{\F}$. Applying Hilbert's Nullstellensatz (\Cref{thm:nullstellensatz}) on this system, we know that there exist polynomials $\Tilde{A}_{i}(\mathbf{y})$'s, $\Tilde{S}_{t}(\mathbf{y})$'s, and $\Tilde{B}_{j}(\mathbf{y})$'s such that:
\begin{equation}\label{eqn:ubd-sym-constant-null}
    \sum_{i=1}^{m} \Tilde{A}_{i}(\mathbf{y}) \cdot F_{i}(\mathbf{y}) + \sum_{t=n+1}^{p^{r}-1} \Tilde{S}_{t}(\mathbf{y}) \cdot Q_{t}(\mathbf{y}) \, + \, \sum_{j=1}^{r} \Tilde{B}_{j}(\mathbf{y}) \cdot p_{j}(y_{j}) \; = \; 1
\end{equation}

\paragraph{\underline{Size analysis of low-variate certificate}}Next, we will show that the coefficients $\Tilde{A}_{i}$'s, $\Tilde{S}_{t}$'s, and $\Tilde{B}_{j}$'s in the Nullstellensatz certificate (\Cref{eqn:ubd-sym-constant-null}) have small constant-depth circuits. More precisely, we will show that $\Tilde{A}_{i}$'s, $\Tilde{S}_{t}$'s, and $\Tilde{B}_{j}$'s are polynomials with sparsity $\mathrm{poly}(n)$, which in turn implies that they have $\mathrm{poly}(n)$-sized depth $2$ circuits. Since these polynomials are $\bigO(\log n)$-variate polynomials, it will suffice to show that they have constant individual degrees. We will argue as follows:
\begin{itemize}
    \item \textbf{Step (a)} The polynomials $F_{i}$'s and $Q_{t}$'s have constant individual degree.
    \item \textbf{Step (b)} Use the low-variate Nullstellensatz certificate \Cref{eqn:ubd-sym-constant-null} to argue that the polynomials $\Tilde{A}_{i}$'s and $\Tilde{S}_{t}$'s can be assumed to have constant individual degree.
    \item \textbf{Step (c)} Use the previous two items to argue that the polynomials $\Tilde{B}_{j}$'s have sparsity at most $\mathrm{poly}(n)$.
\end{itemize}

\paragraph{Step (a)}We first argue that the individual degree of each $F_i(\mathbf{y})$ can be assumed to be $\leq (p-1)$. Recall that $f_{i}(\mathbf{x}) = F_{i}(\widehat{\mathbf{e}}(\mathbf{x}))$ over $\Boo^{n}$, i.e.: $f_{i}(\mathbf{x}) \equiv F_{i}(\widehat{\mathbf{e}}(\mathbf{x}))   \bmod  (\mathbf{x}^{2} - \mathbf{x})$.\newline
As we are working over a field of characteristic $p$, for any $i \in [n]$, $e_{i}(\mathbf{x}) \in \set{0,1,\ldots,p-1}$ for every $\mathbf{x} \in \Boo^{n}$. This implies that $\widehat{\mathbf{e}}(\mathbf{x})^{p} \equiv \widehat{\mathbf{e}}(\mathbf{x}) \bmod (\mathbf{x}^{2} - \mathbf{x})$ via Fermat's Little Theorem/Frobenius automorphism.\newline
Let $F_i'(\mathbf{y}) := F_i(\mathbf{y})/(\mathbf{y}^{p} - \mathbf{y})$ be a ``minimum individual-degree representative'' of $F_i(\mathbf{y})$ modulo the ideal $(\mathbf{y}^{p} - \mathbf{y})$.

Thus, $F_i(\mathbf{y})$ is a polynomial of individual-degree $\leq p-1$ such that $F'(\mathbf{y}) \equiv F(\mathbf{y}) \bmod (\mathbf{y}^{p} - \mathbf{y})$. Combining these together, we get,
\begin{equation}
    f_{i}(\mathbf{x}) \equiv F'_{i}(\widehat{\mathbf{e}}(\mathbf{x}))   \bmod  (\mathbf{x}^{2} - \mathbf{x})
\end{equation}
where $F'_i$ has individual-degree $\leq p-1$. With a slight abuse of notation, we will now use ``$F_i$'' to denote $F'_i$. 

\noindent
By the definition of the polynomial $Q_{t}(\mathbf{y})$, for each $t$, the individual degree of $Q_{t}(\mathbf{y})$ is $\leq p$

\paragraph{Step (b)}The polynomial $p_{j}(t)$, defined as $\prod_{\alpha \in \F_{p}} (t - \alpha)$, is equal to $(t^{p} - t)$ by Fermat's Little Theorem. From the Nullstellensatz certificate \Cref{eqn:ubd-sym-constant-null},
\begin{align*}
    \sum_{i = 1}^{m} \Tilde{A}_{i}(\mathbf{y}) \cdot F_{i}(\mathbf{y}) \, + \, \sum_{t=n+1}^{p^{r}-1} \Tilde{S}_{t}(\mathbf{y}) \cdot Q_{t}(\mathbf{y}) = 1 \mod{( \mathbf{y}^{p} - \mathbf{y} )}
\end{align*}
We would now argue that the polynomials $\Tilde{A}_{i}(\mathbf{y})$'s and $\Tilde{S}_{t}(\mathbf{y})$'s has individual degree $\leq (p-1)$. Suppose there exists an $i \in [m]$ for which $\Tilde{A}_{i}(\mathbf{y})$ has individual degree $> (p-1)$, then define $\Tilde{A}_{i}'(\mathbf{y}) := A_{i}(\mathbf{y})/(\mathbf{y}^{p} - \mathbf{y})$ to be a ``minimum individual-degree representative''. Observe that replacing the polynomial $\Tilde{A}_{i}$ with the polynomial $\Tilde{A}_{i}'$, the low-variate Nullstellensatz certificate \Cref{eqn:ubd-sym-constant-null} continues to holds.

Thus the sparsity of $\Tilde{A}_{i}(\mathbf{y})$ is at most $p^{r}$.
An analogous argument shows that for each $n < t \leq p^{r}-1$, the polynomial $\Tilde{S}_{t}(\mathbf{y})$ has individual degree $\leq (p-1)$ and thus has a $\bigO(n^{2})$-sized circuit of depth $2$ (a $\Sigma \Pi$ circuit).\\

\noindent
Using $r \leq 2 \log_{p} n$, we have that for every $i \in [m]$ and for every $n < t \leq p^{r}-1$, the polynomials $\Tilde{A}_{i}(\mathbf{y})$ and $\Tilde{S}_{t}(\mathbf{y})$ have sparsity at most $\bigO(n^{2})$. This also implies that the polynomials $\Tilde{A}_{i}(\mathbf{y})$'s and $\Tilde{S}_{t}(\mathbf{y})$'s have $\bigO(n^{2})$-sized circuits of depth $2$ (a $\Sigma \Pi$ circuit).

\paragraph{Step (c)}Now it remains to show that the polynomials $\Tilde{B}_{j}(\mathbf{y})$ have small constant-depth circuits. We will show that for each $j \in [n]$, the polynomial $\Tilde{B}_{j}(\mathbf{y})$ has sparsity at most $\mathrm{poly}(n)$. To show this, we will use the fact that $\sum_{i = 1}^{m} \Tilde{A}_{i}(\mathbf{y}) \cdot F_{i}(\mathbf{y}) + \sum_{t=n+1}^{p^{r}-1} \Tilde{S}_{t}(\mathbf{y}) \cdot Q_{t}(\mathbf{y})$ is a polynomial of constant individual degree.\\

\noindent
For a polynomial $H(\mathbf{y})$, the \emph{individual-degree-$p$} operator, denoted by $\mathsf{inddeg}_{p}$ outputs the following polynomial: For each variable $y_{j}$, every occurrence of $y_{j}^{p}$ in $H(\mathbf{y})$ is replaced by $y_{j}$ until the individual degree of the polynomial is $< p$. For $p = 2$, $\mathsf{inddeg}_{2}$ corresponds to multilinearization $\ml$.\newline
Our next claim shows that if there is a polynomial of low individual degree, then its individual-degree-$p$ component can be extracted using polynomials of small constant-depth circuits. The proof is via a simple induction. We omit the proof here and it can be found in \Cref{app:proof-ind-deg}.

\begin{restatable}{claim}{boundedinddegml}\label{claim:inddeg-p}
Let $H(\mathbf{y}) \in \F[y_{1},\ldots,y_{r}]$ denote a polynomial whose individual degree is at most $D$. Then there exists polynomials $G_{1}(\mathbf{y}), \ldots, G_{r}(\mathbf{y})$ such that the following holds:
\begin{align*}
    H(\mathbf{y}) \; = \; \mathsf{inddeg}_{p}[H(\mathbf{y})] \; + \; \sum_{j=1}^{r} G_{j}(\mathbf{y}) \cdot (y_{j}^{p} - y_{j}),
\end{align*}
and for each $j \in [r]$, the polynomial $G_{j}(\mathbf{y})$ has sparsity at most $D^{r+1}/(p-1)$.
\end{restatable}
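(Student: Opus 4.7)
The plan is to proceed by induction on the number of variables $r$, reducing $H$ to individual degree strictly less than $p$ one variable at a time. For the base case $r = 1$, the polynomial $H(y_1) - \mathsf{inddeg}_p[H(y_1)]$ vanishes modulo $y_1^p - y_1$ by construction, so
\begin{align*}
G_1(y_1) \;:=\; \frac{H(y_1) - \mathsf{inddeg}_p[H(y_1)]}{y_1^p - y_1}
\end{align*}
is well-defined, of degree at most $D - p$, and hence of sparsity at most $D - p + 1$. The key identity that controls sparsity throughout the whole proof is the univariate factorization
\begin{align*}
y_j^{a_j} - y_j^{a_j'} \;=\; y_j^{a_j'-1}(y_j^p - y_j) \sum_{l=0}^{m-1} y_j^{l(p-1)},
\end{align*}
where $a_j' \in \{0, 1, \ldots, p-1\}$ is the $\mathsf{inddeg}_p$-reduction of $a_j$ and $m = (a_j - a_j')/(p-1) \leq D/(p-1)$. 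This follows from $y_j^p - y_j = y_j(y_j^{p-1} - 1)$ and the geometric-series factorization of $y_j^{m(p-1)} - 1$, and it shows that reducing $y_j^{a_j}$ contributes a cofactor on $(y_j^p - y_j)$ of sparsity at most $D/(p-1)$.

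For the inductive step, write $H(\mathbf{y}) = \sum_{k=0}^D h_k(y_1,\ldots,y_{r-1}) y_r^k$ with each $h_k$ of individual degree $\leq D$, and first reduce the $y_r$-degree via the identity above: for each $k$, set $y_r^k = y_r^{k'} + \tilde g_k(y_r)(y_r^p - y_r)$ with $\tilde g_k$ of sparsity $\leq D/(p-1)$. Summing yields $H = H^\ast + G_r \cdot (y_r^p - y_r)$, where $H^\ast := \sum_k h_k \, y_r^{k'}$ has individual $y_r$-degree $< p$ and $G_r := \sum_k h_k \, \tilde g_k(y_r)$. Now expand $H^\ast = \sum_{l=0}^{p-1} H^\ast_l(y_1,\ldots,y_{r-1}) y_r^l$ and apply the induction hypothesis to each $H^\ast_l$ (an $(r-1)$-variate polynomial of individual degree $\leq D$), obtaining polynomials $G_{l,j}$ of sparsity $\leq D^r/(p-1)$ such that $H^\ast_l = \mathsf{inddeg}_p[H^\ast_l] + \sum_{j < r} G_{l,j}(y_j^p - y_j)$. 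Setting $G_j := \sum_l G_{l,j} y_r^l$ for $j < r$, together with the observation $\mathsf{inddeg}_p[H] = \mathsf{inddeg}_p[H^\ast] = \sum_l \mathsf{inddeg}_p[H^\ast_l] y_r^l$ (valid because $H^\ast$ already has individual $y_r$-degree $< p$), produces the desired decomposition.

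The main obstacle is the careful sparsity bookkeeping. For each $G_j$ with $j < r$, the bound follows from $p$ times the inductive bound $D^r/(p-1)$, which fits within $D^{r+1}/(p-1)$ in the nontrivial regime $D \geq p$. For $G_r$, one sums $(\text{sparsity of } h_k) \cdot (\text{sparsity of } \tilde g_k)$ over $k \leq D$, using that each $h_k$ has sparsity at most $(D+1)^{r-1}$ and each $\tilde g_k$ has sparsity at most $D/(p-1)$. The degenerate case $D < p$ is trivial since then $\mathsf{inddeg}_p[H] = H$ and one may take all $G_j \equiv 0$.
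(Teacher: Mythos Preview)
Your proof is correct and reaches the same conclusion as the paper, but the organization differs. The paper argues directly, monomial by monomial: for each monomial $\mathfrak{m}$ of $H$ and each variable $y_\ell$ appearing with exponent $\mu_\ell \geq p$, it iterates the single-step reduction
\[
y_\ell^{\mu_\ell} \;=\; y_\ell^{\mu_\ell - p + 1} \;+\; y_\ell^{\mu_\ell - p}\,(y_\ell^p - y_\ell),
\]
observing that each step contributes exactly one monomial to $G_\ell$ and that at most $D/(p-1)$ steps are needed per variable. Summing over the at most $D^r$ monomials of $H$ then gives the bound $D^{r+1}/(p-1)$ uniformly for every $G_\ell$, with no case analysis on $D$ versus $p$.

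Your induction on $r$ treats $G_r$ (the last variable) and the $G_j$ for $j<r$ (handled by the inductive hypothesis) asymmetrically, which is why you need the split $D \geq p$ versus $D < p$ to close the recursion $p \cdot D^{r}/(p-1) \leq D^{r+1}/(p-1)$; the paper's direct count avoids this. On the other hand, your closed-form identity $y^{a} - y^{a'} = y^{a'-1}(y^p - y)\sum_{l<m} y^{l(p-1)}$ packages the iterated single-step reduction in one line and makes the cofactor sparsity $m \leq D/(p-1)$ immediate; the paper's iterative phrasing is just the unrolled version of the same arithmetic. Both arguments share the same harmless off-by-one in the sparsity of $H$ (it is really $(D+1)^r$, not $D^r$), which does not affect the $\mathrm{poly}(n)$ conclusion in the intended application.
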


\paragraph{}Define the polynomial $H(\mathbf{y}) = \sum_{i = 1}^{m} \Tilde{A}_{i}(\mathbf{y}) \cdot F_{i}(\mathbf{y}) + \sum_{t=n+1}^{p^{r}-1} \Tilde{S}_{t}(\mathbf{y}) \cdot Q_{t}(\mathbf{y})$. Observe that
\begin{align*}
    H(\mathbf{y}) = 1 + \sum_{j = 1}^{r} \Tilde{B}_{j}(\mathbf{y}) \cdot (y_{j}^{p} - y_{j})
\end{align*}
So the polynomials $G_{j}(\mathbf{y})$ from \Cref{claim:inddeg-p} correspond to the polynomials $\Tilde{B}_{j}(\mathbf{y})$ from the Nullstellensatz certificate.
Since both $\Tilde{A}_{i}(\mathbf{y})$ and $F_{i}(\mathbf{y})$ have individual degree at most $(p-1)$, $H(\mathbf{y})$ has individual degree at most $2(p-1)$ in each of the $r$ variables. Applying \Cref{claim:inddeg-p} on $H(\mathbf{y})$, we get that the polynomials $\Tilde{B}_{j}(\mathbf{y})$ has sparsity at most $(2(p-1))^{r} \cdot 2(p-1)/(p-1) = \mathrm{poly(n)}$ since $p \leq 2 \log_{p} n$. This implies that the polynomials $\Tilde{B}_{j}(\mathbf{y})$ has a circuit of size $\bigO(n^{2})$ and depth $2$ (a $\Sigma \Pi$ circuit).

\paragraph{\underline{Lifting the Nullstellensatz certificate}}By the definition of $F_{i}(\mathbf{y})$, we know that for every $i \in [m]$, $f_{i}(\mathbf{x}) = \ml[F_{i}(\widehat{\mathbf{e}}(\mathbf{x}))]$. Applying the multilinearization \Cref{lemma:multilinearize-char-p} on $F_{i}(\widehat{\mathbf{e}}(\mathbf{x}))$ for every $i \in [m]$, we know there exists polynomials $D_{ij}(\mathbf{x})$ for $i \in [m]$ and $j \in [n]$ such that:
\begin{align*}
    F_{i}(\widehat{\mathbf{e}}(\mathbf{x})) \; = \; f_{i}(\mathbf{x}) \; + \; \sum_{j = 1}^{n} D_{ij}(\mathbf{x}) \cdot (x_{j}^{2} - x_{j}),
\end{align*}
where each polynomial $D_{ij}(\mathbf{x})$ has a circuit of size $\bigO(n^{5} \log n)$ and depth $5$. This also implies that for each $i \in [m]$, the polynomial $f_{i}(\mathbf{x})$ has a circuit of size $\bigO(n^{5} \log n)$ and depth $5$.\newline
Similarly, applying the multilinearization lemma \Cref{lemma:multilinearize-char-p} on $Q_{t}(\widehat{\mathbf{e}}(\mathbf{x}))$ for every $n < t \leq p^{r}-1$, we know there exists polynomials $R_{t}(tj)(\mathbf{x})$ for $n < t \leq p^{r}-1$ and $j \in [n]$ such that:
\begin{align*}
    Q_{t}(\widehat{\mathbf{e}}(\mathbf{x})) \; = \; \ml[Q_{t}(\widehat{\mathbf{e}}(\mathbf{x}))] \; + \; \sum_{j=1}^{n} R_{tj}(\mathbf{x}) \cdot (x_{j}^{2} - x_{j}),
\end{align*}
where each polynomial $R_{tj}(\mathbf{x})$ has a circuit of size $\bigO(n^{5} \log n)$ and depth $5$.

\paragraph{}Next, we multilinearize the coefficients $\Tilde{A}_{i}(\widehat{\mathbf{e}}(\mathbf{x}))$'s to get a multilinear constant-depth-$\IPSLIN$ proof. Applying the multilinearization \Cref{lemma:multilinearize-char-p} on $\Tilde{A}_{i}(\widehat{\mathbf{e}}(\mathbf{x}))$ for every $i \in [m]$, we know there exists polynomials $\Tilde{D}_{ij}(\mathbf{x})$ for $i \in [m]$ and $j \in [n]$ such that:
\begin{align*}
    \Tilde{A}_{i}(\widehat{\mathbf{e}}(\mathbf{x})) \; = \; \ml[\Tilde{A}_{i}(\widehat{\mathbf{e}}(\mathbf{x}))] + \sum_{j=1}^{n} \Tilde{D}_{ij}(\mathbf{x}),
\end{align*}
where each polynomial $\Tilde{D}_{ij}(\mathbf{x})$ has a circuit of size $\bigO(n^{5} \log n)$ and depth 5. This also implies that for each $i \in [m]$, the polynomial $\ml[\Tilde{A}_{i}(\widehat{\mathbf{e}}(\mathbf{x}))]$ has a circuit of size $\bigO(n^{6})$ and depth $5$.\\
Similarly, we also multilinearize the coefficients $\Tilde{S}_{t}(\widehat{\mathbf{e}}(\mathbf{x}))$. Applying the multilinearization lemma \Cref{lemma:multilinearize-char-p} on $\Tilde{S}_{t}(\widehat{\mathbf{e}}(\mathbf{x}))$ for every $n < t \leq p^{r}-1$, we know there exists polynomials $\Tilde{R}_{tj}(\mathbf{x})$ for $n < t \leq p^{r}-1$ and $j \in [n]$ such that:
\begin{align*}
    \Tilde{S}_{t}(\widehat{\mathbf{e}}(\mathbf{x})) \; = \; \ml[\Tilde{S}_{t}(\widehat{\mathbf{e}}(\mathbf{x}))] \; + \; \sum_{j=1}^{n} \Tilde{R}_{tj}(\mathbf{x}) \cdot (x_{j}^{2} - x_{j}),
\end{align*}
where each polynomial $\Tilde{R}_{tj}(\mathbf{x})$ has a circuit of size $\bigO(n^{5} \log n)$ and depth $5$.\newline
For every $i \in [r]$, applying the multilinearization lemma \Cref{lemma:multilinearize-char-p} on $p_{i}(e_{p^{i-1}}(\mathbf{x}))$, we know that there exists polynomials $E_{i1}(\mathbf{x}), \ldots, E_{in}(\mathbf{x})$ such that:
\begin{align*}
    p_{i}(e_{p^{i-1}}(\mathbf{x})) \; = \; \sum_{j=1}^{n} E_{ij}(\mathbf{x}) \cdot (x_{j}^{2} - x_{j}),
\end{align*}
where for each $j \in [n]$, the polynomial $E_{ij}(\mathbf{x})$ has a circuit of size $\bigO(n^{5} \log n)$ and depth $5$.

\paragraph{}Substituting $\mathbf{y} = \widehat{\mathbf{e}}(\mathbf{x})$ in the low-variate Nullstellensatz certificate \Cref{eqn:ubd-sym-constant-null} and using the above polynomial relations, we get,
\begin{align*}
    \sum_{i = 1}^{m} \; \underbrace{\Tilde{A}_{i}(\widehat{\mathbf{e}}(\mathbf{x}))}_{:= A_{i}(\mathbf{x})} \; \cdot \; f_{i}(\mathbf{x}) \; + \; \sum_{j = 1}^{n} \, B_{j}(\mathbf{x}) \cdot (x_{j}^{2} - x_{j}) \; = \; 1, 
\end{align*}
where the polynomial $B_{j}(\mathbf{x})$ is:
\begin{align*}
    B_{j}(\mathbf{x}) \; = \;  \sum_{i = 1}^{m} (\ml[\Tilde{A}_{i}(\widehat{\mathbf{e}}(\mathbf{x}))] D_{ij}(\mathbf{x}) + f_{i}(\mathbf{x}) \Tilde{D}_{ij}(\mathbf{x})) \\ + \sum_{t=n+1}^{p^{r}-1} (\ml[Q_{t}(\widehat{\mathbf{e}}(\mathbf{x}))] \cdot \Tilde{S}_{t}(\widehat{\mathbf{e}}(\mathbf{x})) + Q_{t}(\widehat{\mathbf{e}})(\mathbf{x}) \cdot \ml[\Tilde{S}_{t}(\widehat{\mathbf{e}}(\mathbf{x}))]) \\ + \, \sum_{i = 1}^{m} \sum_{j' \leq j} \Tilde{D}_{ij'}(\mathbf{x}) D_{ij}(\mathbf{x}) + \sum_{i=1}^{r} \Tilde{B}_{i}(\widehat{\mathbf{e}}(\mathbf{x})) E_{ij}(\mathbf{x})
\end{align*}
We have,
\begin{itemize}
    \item For each $i \in [m]$, using Ben-Or's construction \Cref{thm:ben-or}, the polynomial $A_{i}(\mathbf{x})$ has a circuit of size $\bigO(n^{2})$ and depth $5$.
    \item For each $j \in [r]$, and again using Ben-Or's construction \Cref{thm:ben-or}, the polynomial $B_{j}(\mathbf{x})$ has a circuit of size $\bigO(m n^{5} \log n)$ and depth $7$.
\end{itemize}
This finishes the proof of \Cref{thm:ubd-sym-const-depth} in the setting when the underlying field has a positive characteristic $p$ for a constant prime $p$.

\paragraph{}We now discuss the proof of \Cref{thm:ubd-sym-const-depth} in the setting when the underlying field has characteristic $0$ or $> n$. The proof has the exact same steps as for positive characteristic. Instead of repeating the same steps again, for the sake of brevity, we only highlight the differences.

\paragraph{}Over characteristic $0$ or $> n$, every multilinear symmetric polynomial is a polynomial of $x_{1}+\ldots+x_{n}$, i.e. of $e_{1}(\mathbf{x})$. Thus $r$ in the above proof is just $1$ and $F_{i}$'s are univariate polynomials. Let $p(y) = \prod_{i=0}^{n} (y-i)$. Then it is easy to see that if $f_{i}(\mathbf{x})$ do not have a common Boolean solution, then the univariate polynomials $F_{i}(y)$'s and $p(y)$ do not have a common solution (following a similar strategy to the proof of \Cref{claim:ubd-sym-lifting-char-p}, if $b$ is a common solution, then there exists a common Boolean solution of Hamming weight $b$, which is a contradiction).\newline

\noindent
To argue about the circuit size of the coefficients of the univariate Nullstellensatz certificate, it suffices to argue about their degrees since they are all univariate polynomials. The coefficients of the univariate certificate have degree at most $\bigO(n)$ because of the polynomial $p(y)$ (it is quite similar and simpler to the degree analysis of the coefficients of the low-variate Nullstellensatz certificate in the above proof).\newline
In the end, we need to multilinearize $F_{i}(e_{1}(\mathbf{x}))$. This can again be done in a constant-depth circuit using \Cref{lemma:multilinearize-char-p}.
\end{proof}

\subsection{Multilinearization}
 To show our multilinearization lemma (\Cref{lemma:multilinearize-char-p}), it will be convenient to first define a notion of \emph{partial multilinearization}, i.e., multilinearize with respect to a subset of variables. A key lemma used in our proofs of multilinearization statements is constant-depth multilinearization when $f(\mathbf{x})$ is a \emph{product of univariate polynomials} (see \Cref{cor:ml-prod-uni}). We now define the partial multilinearization and then use it to prove \Cref{cor:ml-prod-uni}.\\

 \begin{definition}[Partial multilinearization]
Fix any field $\F$ and let $f(\mathbf{x}) \in \F[x_{1},\ldots,x_{n}]$. For any $j \in [n]$, let $f^{(\leq j)}(\mathbf{x}) \in \F[x_{j+1},\ldots,x_{n}][x_{1},\ldots,x_{j}]$ denote the polynomial $f(\mathbf{x})$ with variables $x_{1},\ldots,x_{j}$ and coefficients in $\F[x_{j+1},\ldots,x_{n}]$.\newline
The multilinearization of the polynomial $f(\mathbf{x})$ with respect to the variables $\set{x_{1},\ldots,x_{j}}$, denoted by $\ml_{\leq j}[f(\mathbf{x})]$, is defined to be:
\begin{align*}
    \ml_{\leq j}[f(\mathbf{x})] \; := \; \ml[f^{(\leq j)}(\mathbf{x})]
\end{align*}
Similarly, for any $k \in [n]$, let $f^{(k)}(\mathbf{x}) \in \F[x_{1},\ldots,x_{k-1},x_{k+1},\ldots,x_{n}][x_{k}]$ denote the polynomial $f(\mathbf{x})$ with variable $x_{k}$ only and coefficients in $\F[x_{1},\ldots,x_{k-1},x_{k+1},\ldots,x_{n}]$. The multilinearization of the polynomial $f(\mathbf{x})$ with respect to the variable $x_{k}$ only, denoted by $\ml_{k}[f(\mathbf{x})]$, is defined to be:
\begin{align*}
    \ml_{k}[f(\mathbf{x})] \; = \; \ml[f^{(k)}(\mathbf{x})]
\end{align*}
Sometimes we will denote $\ml_{k}[f(\mathbf{x})]$ by $\ml_{x_{k}}[f(\mathbf{x})]$ for sake of clarity.
\end{definition}
\noindent
\textbf{Example}: Let $f(\mathbf{x}) = x_{1}^{2} x_{2}^{3} + x_{2} x_{3}^{2}$. Then,
\begin{align*}
    \ml_{\leq 1}[f(\mathbf{x})] \; = \; x_{1}x_{2}^{3} + x_{2}x_{3}^{2}, \quad \ml_{\leq 2}[f(\mathbf{x})] \; = \; x_{1}x_{2} + x_{2}x_{3}^{2}, \quad
    \ml_{2}[f(\mathbf{x})] \; = \; x_{1}^{2}x_{2} + x_{2}x_{3}^{2}
\end{align*}
\noindent
We make one observation on partial multilinearization, which will be helpful in the proofs.
\begin{observation}\label{obs:partial-ml-sequence}
For every $j < n$, the following holds: For every polynomial $f(\mathbf{x})$,
\begin{align*}
    \ml_{\leq j+1}[f(\mathbf{x})] \; = \; \ml_{j+1}[\ml_{\leq j}[f(\mathbf{x})]]
\end{align*}
\end{observation}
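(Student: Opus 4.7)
The plan is to exploit the fact that $\ml[\cdot]$, $\ml_{\leq j}[\cdot]$, and $\ml_k[\cdot]$ are all $\F$-linear operators on $\F[\mathbf{x}]$, so it suffices to verify the identity on an arbitrary monomial $\mathbf{x}^{\bm{\mu}} = x_1^{\mu_1} x_2^{\mu_2} \cdots x_n^{\mu_n}$ and then extend by linearity to every polynomial $f(\mathbf{x})$. Writing $\sigma(a) := \min(a, 1)$ for the effect of multilinearization on a single variable's exponent, the definition of partial multilinearization gives on the monomial
\[
\ml_{\leq j+1}[\mathbf{x}^{\bm{\mu}}] \;=\; \left(\prod_{i=1}^{j+1} x_i^{\sigma(\mu_i)}\right) \cdot \left(\prod_{i=j+2}^{n} x_i^{\mu_i}\right),
\]
since we view $\mathbf{x}^{\bm{\mu}}$ as an element of $\F[x_{j+2}, \ldots, x_n][x_1, \ldots, x_{j+1}]$ and cap each of the first $j+1$ exponents at $1$.

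Next I would compute the right-hand side in two steps. First, $\ml_{\leq j}[\mathbf{x}^{\bm{\mu}}]$ caps the exponents of $x_1, \ldots, x_j$ at $1$ while leaving $x_{j+1}, \ldots, x_n$ untouched, producing the monomial $\left(\prod_{i=1}^{j} x_i^{\sigma(\mu_i)}\right)\cdot x_{j+1}^{\mu_{j+1}} \cdot \prod_{i=j+2}^{n} x_i^{\mu_i}$. Then applying $\ml_{j+1}[\cdot]$ caps only the exponent of $x_{j+1}$ at $1$, yielding
\[
\ml_{j+1}\bigl[\ml_{\leq j}[\mathbf{x}^{\bm{\mu}}]\bigr] \;=\; \left(\prod_{i=1}^{j} x_i^{\sigma(\mu_i)}\right)\cdot x_{j+1}^{\sigma(\mu_{j+1})}\cdot \prod_{i=j+2}^{n} x_i^{\mu_i},
\]
which is syntactically identical to $\ml_{\leq j+1}[\mathbf{x}^{\bm{\mu}}]$.

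Since both operators agree on every monomial, linearity extends the identity to all of $\F[\mathbf{x}]$, which is what we wanted. There is no real obstacle here beyond being careful about the definition: one has to note that in the expression $\ml_{\leq j}[f(\mathbf{x})]$ the variables $x_{j+1}, \ldots, x_n$ genuinely appear as (higher-degree) terms and are only affected later by $\ml_{j+1}[\cdot]$, which is precisely what makes the composition well-defined and commuting with the direct multilinearization $\ml_{\leq j+1}$.
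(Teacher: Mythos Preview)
Your proof is correct. The paper does not actually prove this statement: it is stated as an Observation without proof, presumably because it is considered immediate from the definitions. Your monomial-by-monomial verification followed by extension via $\F$-linearity is exactly the natural way to justify it, and there is nothing to compare against.
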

\noindent
In the rest of the section, we will use the notation $\mathbf{x}_{\leq j}$ to denote $(x_{1},\ldots,x_{j})$ and $\mathbf{x}_{>j}$ to denote $(x_{j+1},\ldots,x_{n})$. 

\paragraph{}Now we show that a product of univariate polynomials can be multilinearized using constant-depth $\mathrm{poly}(n)$-sized circuits (see \Cref{cor:ml-prod-uni}). We start by showing that we can do partial multilinearization with respect to a single variable.

\begin{claim}[Multilinearize a single variable]
\label{claim:multilinearize-one-var}
Consider a univariate polynomial $h(z)$ of degree-$D$. Let $Q(\mathbf{y})$ be a polynomial with a circuit of size $s$ and depth $\Delta$. Let $\ml_{z}[h(z) \cdot Q(\mathbf{y})]$ denotes the partial multilinearization of the polynomial $h(z) \cdot Q(\mathbf{y})$ with respect to the $z$ variable.\newline
Then,
\begin{align*}
    h(z) \cdot Q(\mathbf{y}) \; = \; \ml_{z}[h(z) \cdot Q(\mathbf{y})] + B(z,\mathbf{y}) \cdot (z^{2} - z),
\end{align*}
\begin{itemize}
    \item The polynomial $\ml_{z}[h(z) \cdot Q(\mathbf{y})]$ is equal to $L(z)\cdot  Q(\mathbf{y})$, where $L(z)$ is a degree-$1$ univariate polynomial in $z$.
    \item The polynomial $B(z,\mathbf{y})$ is equal to $\Tilde{h}(z) \cdot Q(\mathbf{y})$ for a univariate polynomial $\Tilde{h}(z)$.
\end{itemize}
\end{claim}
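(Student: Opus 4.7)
The plan is to reduce the statement to a purely univariate division. Observe that $Q(\mathbf{y})$ is a constant with respect to the variable $z$, so both the multilinearization operator $\ml_{z}[\cdot]$ and the ideal generated by $(z^{2}-z)$ act trivially on $Q(\mathbf{y})$; hence $Q(\mathbf{y})$ will factor out of every term. In particular, it suffices to establish the univariate analogue: there exist polynomials $L(z)$ of degree at most $1$ and $\tilde{h}(z)$ such that
\[
h(z) \; = \; L(z) \; + \; \tilde{h}(z) \cdot (z^{2} - z),
\]
since multiplying this identity by $Q(\mathbf{y})$ gives the required decomposition with $B(z,\mathbf{y}) = \tilde{h}(z) \cdot Q(\mathbf{y})$.

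To produce $L(z)$ and $\tilde{h}(z)$, I would simply invoke the division algorithm for univariate polynomials: since $z^{2}-z$ is a monic polynomial of degree $2$, there exist unique polynomials $\tilde{h}(z)$ (the quotient, of degree $D-2$) and $L(z)$ (the remainder, of degree at most $1$) satisfying the above identity. This step is routine.

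It remains to identify $L(z) \cdot Q(\mathbf{y})$ with $\ml_{z}[h(z) \cdot Q(\mathbf{y})]$. The key observation is that $L(z) \cdot Q(\mathbf{y})$ is, by construction, multilinear in $z$ (viewed as a polynomial over $\F[\mathbf{y}][z]$). Moreover, by \Cref{fact:multilinear}, the $z$-multilinearization of a polynomial in $\F[\mathbf{y}][z]$ is the unique multilinear-in-$z$ polynomial that agrees with it on $z \in \{0,1\}$, equivalently, the unique representative modulo $(z^{2}-z)$ of individual $z$-degree at most $1$. Since $h(z)\cdot Q(\mathbf{y}) - L(z)\cdot Q(\mathbf{y}) = \tilde{h}(z) Q(\mathbf{y}) \cdot (z^{2}-z)$ lies in the ideal $(z^{2}-z)$, this identifies $L(z) \cdot Q(\mathbf{y})$ as $\ml_{z}[h(z) \cdot Q(\mathbf{y})]$, completing the proof. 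There is no real obstacle here; the only substance is invoking univariate division and the uniqueness of the multilinear representative modulo $(z^{2}-z)$.
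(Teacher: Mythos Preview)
Your proof is correct and takes essentially the same approach as the paper: both reduce to dividing $h(z)$ by $z^{2}-z$ and factoring out $Q(\mathbf{y})$. The only difference is cosmetic---the paper writes out the quotient and remainder explicitly (obtaining $L(z) = a_{0} + (\sum_{j\geq 1} a_{j})z$ and $\tilde{h}(z) = \sum_{j\geq 2} a_{j}(z^{j-2}+\cdots+1)$ from the identity $z^{j}-z = (z^{j-2}+\cdots+1)(z^{2}-z)$), whereas you invoke the division algorithm abstractly; since the claim as stated does not ask for explicit size or depth bounds on $L$ or $\tilde{h}$, your version is entirely adequate.
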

\begin{proof}[Proof of \Cref{claim:multilinearize-one-var}]
Let $h(z) = a_{0} + a_{1} z + \cdots + a_{D} z^{D}$. Then for every $2 \leq j \leq D$, rewriting $a_{j} z^{j}$ as $a_{j}(z^{j} - z) + a_{j} z$, we get,
\begin{align*}
    h(z) \cdot Q(\mathbf{y}) \; = \; \paren{a_{0} + a_{1} z + \sum_{j=2}^{D} a_{j} (z^{j} - z + z)} \cdot Q(\mathbf{y}) \\
    =  \underbrace{\paren{a_{0} + \paren{\sum_{j=1}^{D} a_{j}} z}}_{:= L(z)} \cdot Q(\mathbf{y}) + \paren{\sum_{j=2}^{D} a_{j} (z^{j} - z)} \cdot Q(\mathbf{y})
\end{align*}
Observe that for any $j \geq 2$,
\begin{align*}
    z^{j} - z \; = \; \paren{z^{j-2} + \cdots + z + 1} \cdot (z^{2} - z)
\end{align*}
Using the above observation, we get,
\begin{align*}
    h(z) \cdot Q(\mathbf{y}) \; = \;  \underbrace{L(z) \cdot Q(\mathbf{y})}_{= \ml_{z}[h(z) \cdot Q(\mathbf{y})]} \; + \; \underbrace{Q(\mathbf{y}) \sum_{j=2}^{D} a_{j}  \paren{z^{j-1} + \cdots + z + 1} }_{= B(z, \mathbf{y})} \; \cdot \; (z^{2} - z)
\end{align*}
Let $\Tilde{h}(z) = \sum_{j=2}^{D} a_{j} (z^{j-1}+\cdots+z+1)$. Then the polynomial $B(z, \mathbf{y})$ is equal to $\Tilde{h}(z) \cdot Q(\mathbf{y})$. The partial multilinearization $\ml_{z}[h(z) \cdot Q(\mathbf{y})]$ is of the form $L(z) \cdot Q(\mathbf{y})$ for a degree-$1$ polynomial $L(z)$. This finishes the proof of \Cref{claim:multilinearize-one-var}.
\end{proof}

The next claim shows that if a product of univariate polynomials, then we can do partial multilinearization with respect to a subset of variables. It follows with a simple induction using \Cref{claim:multilinearize-one-var}. We omit the proof here and it can be found in \Cref{app:claim-partial-uni}.

\begin{restatable}[Partial multilinearization of product of univariates]{claim}{partialmlproduni}\label{claim:partial-multilinearize-prod-univariates}
Let $h_{1}(z_{1}),\ldots,h_{n}(z_{n})$ be univariate polynomials where each $h_{i}(z_{i})$ has degree at most $D$.\newline
Then there exists degree-$1$ univariate polynomials $L_{1}(z_{1}), \ldots, L_{n}(z_{n})$ and polynomials $B_{1}(\mathbf{z}), \ldots, B_{n}(\mathbf{z})$ satisfying the following: For every $k \in [n]$,
\begin{align*}
    h_{1}(z_{1}) \cdots h_{n}(z_{n}) \; = \; \ml_{\leq k}\brac{\prod_{i=1}^{n} h_{i}(z_{i})} \; + \; \sum_{j=1}^{k} B_{j}(\mathbf{z}) \cdot (z_{j}^{2} - z_{j}),
\end{align*}
where
\begin{align*}
    \ml_{\leq k}\brac{\prod_{i=1}^{n} h_{i}(z_{i})} \; = \; \prod_{i = 1}^{k} L_{i}(z_{i}) \cdot \prod_{i = k+1}^{n} h_{i}(z_{i}),
\end{align*}
and for each $j \in [n]$, the polynomial $B_{j}(\mathbf{z})$ has the following form:
\begin{align*}
    B_{j}(\mathbf{x}) \; = \; \prod_{i = 1}^{j-1} L_{i}(z_{i}) \cdot \Tilde{h}_{j}(z_{j}) \cdot \prod_{i = j+1}^{n} h_{i}(z_{i}),
\end{align*}
for some univariate polynomial $\Tilde{h}_{j}(z_{j})$.
\end{restatable}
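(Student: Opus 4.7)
The plan is to prove this by induction on $k \in [n]$, with Claim 6.2 (multilinearization in a single variable) serving as the inductive engine. Since Claim 6.2 precisely takes a univariate factor $h(z)$ times a coefficient polynomial $Q$ and rewrites it as $L(z)\cdot Q$ plus an error term of the form $\tilde h(z)\cdot Q \cdot (z^2 - z)$, an iteration of it, one variable at a time, should produce exactly the stated shape.

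For the base case $k=1$, I would invoke Claim 6.2 with $z := z_1$, $h(z) := h_1(z_1)$, and coefficient polynomial $Q := \prod_{i=2}^n h_i(z_i)$. This returns a degree-$1$ polynomial $L_1(z_1)$ and a univariate $\tilde h_1(z_1)$ with
\begin{align*}
\prod_{i=1}^n h_i(z_i) \;=\; L_1(z_1) \prod_{i=2}^n h_i(z_i) \;+\; \tilde h_1(z_1)\prod_{i=2}^n h_i(z_i)\cdot (z_1^2 - z_1).
\end{align*}
Because $\ml_{\leq 1}$ only multilinearizes in $z_1$ and $h_1$ is the unique factor involving $z_1$, the first summand agrees with $\ml_{\leq 1}[\prod_i h_i(z_i)]$. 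Setting $B_1(\mathbf{z}) := \tilde h_1(z_1)\prod_{i=2}^n h_i(z_i)$ matches the required form. For the inductive step, assume the conclusion for some $k < n$, so that
\begin{align*}
\prod_{i=1}^n h_i(z_i) \;=\; \prod_{i=1}^{k} L_i(z_i)\prod_{i=k+1}^n h_i(z_i) \;+\; \sum_{j=1}^{k} B_j(\mathbf{z})\,(z_j^2 - z_j).
\end{align*}
Apply Claim 6.2 to the first summand with $z := z_{k+1}$, $h(z) := h_{k+1}(z_{k+1})$, and $Q := \prod_{i=1}^{k} L_i(z_i)\prod_{i=k+2}^n h_i(z_i)$. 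It yields a degree-$1$ polynomial $L_{k+1}(z_{k+1})$ and a univariate $\tilde h_{k+1}(z_{k+1})$ satisfying $h_{k+1}\cdot Q = L_{k+1}\cdot Q + \tilde h_{k+1}\cdot Q \cdot (z_{k+1}^2 - z_{k+1})$. Substituting back and defining $B_{k+1}(\mathbf{z}) := \tilde h_{k+1}(z_{k+1})\cdot Q$ closes the induction, and Observation 6.1, combined with the fact that $\ml_{k+1}$ acts trivially on the factors $L_i(z_i)$ and $h_i(z_i)$ for $i \neq k+1$ (which are free of $z_{k+1}$), identifies the new leading term as $\ml_{\leq k+1}\big[\prod_i h_i(z_i)\big]$.

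The only further check is that each $B_j$ produced this way has the stated product form $\prod_{i<j} L_i(z_i)\cdot \tilde h_j(z_j)\cdot \prod_{i>j} h_i(z_i)$. This is immediate from the construction: at the $j$-th iteration, factors for $i < j$ have already been replaced by their degree-$1$ counterparts $L_i(z_i)$, the factor $h_j(z_j)$ is replaced in the error term by the univariate $\tilde h_j(z_j)$ returned by Claim 6.2, and factors $h_i(z_i)$ for $i > j$ are untouched. I do not anticipate a conceptual obstacle; the entire argument is a careful iteration of Claim 6.2 with bookkeeping of the product structure, and no properties of the $h_i$ beyond the degree bound (which is used only inside Claim 6.2) are needed.
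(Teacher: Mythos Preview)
Your proposal is correct and follows essentially the same approach as the paper: induction on $k$, with the base case and inductive step both reducing to a single application of Claim~\ref{claim:multilinearize-one-var}, and Observation~\ref{obs:partial-ml-sequence} used to identify the resulting leading term with $\ml_{\leq k+1}$. Your write-up is actually slightly more explicit than the paper's in verifying the product form of each $B_j$, but the argument is the same.
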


\noindent
Setting $k = n$ in \Cref{claim:partial-multilinearize-prod-univariates} immediately gives us the following corollary.
\begin{corollary}[Multilinearization of product of univariates]\label{cor:ml-prod-uni}
Let $h_{1}(z_{1}),\ldots,h_{n}(z_{n})$ be univariate polynomials where each $h_{i}(z_{i})$ has degree at most $D$.\newline
Then there polynomials $B_{1}(\mathbf{z}), \ldots, B_{n}(\mathbf{z})$ such that,
\begin{align*}
    h_{1}(z_{1}) \cdots h_{n}(z_{n}) \; = \; \ml\brac{\prod_{i=1}^{n} h_{i}(z_{i})} \; + \; \sum_{j=1}^{k} B_{j}(\mathbf{z}) \cdot (z_{j}^{2} - z_{j}),
\end{align*}
where for each $j \in [n]$, the polynomial $B_{j}(\mathbf{z})$ has a circuit of size $\bigO(nD^{2})$ and depth $3$ (a $\Pi \Sigma \Pi$) circuit.
\end{corollary}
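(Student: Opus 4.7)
The plan is to invoke Claim~\ref{claim:partial-multilinearize-prod-univariates} directly with parameter $k = n$. When $k = n$, the partial multilinearization operator $\ml_{\leq n}$ multilinearizes with respect to every variable in $\{z_1, \ldots, z_n\}$ and therefore coincides with the full multilinearization operator $\ml$. Hence the identity
\[
    \prod_{i=1}^n h_i(z_i) \;=\; \ml_{\leq n}\!\left[\prod_{i=1}^n h_i(z_i)\right] + \sum_{j=1}^n B_j(\mathbf{z})\cdot(z_j^2 - z_j)
\]
provided by the claim is exactly the identity required by the corollary, with the same polynomials $B_1, \ldots, B_n$.

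What remains is a circuit-size analysis for each $B_j$. The claim gives the explicit form
\[
    B_j(\mathbf{z}) \;=\; \prod_{i=1}^{j-1} L_i(z_i) \,\cdot\, \tilde{h}_j(z_j) \,\cdot\, \prod_{i=j+1}^n h_i(z_i),
\]
i.e.\ a product of $n$ univariate polynomials. Each $L_i$ is a linear polynomial in $z_i$, $\tilde{h}_j$ is a univariate polynomial in $z_j$ of degree at most $D-1$ (as constructed in Claim~\ref{claim:multilinearize-one-var}), and each $h_i$ with $i > j$ is a univariate polynomial in $z_i$ of degree at most $D$. The natural way to compute each such univariate factor is as a $\Sigma\Pi$ subcircuit: a sum of at most $D+1$ monomials, where each monomial $z_i^k$ with $k \leq D$ is a $\Pi$-gate over at most $D$ copies of $z_i$. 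Combining all $n$ such $\Sigma\Pi$ subcircuits under a single top $\Pi$-gate yields a $\Pi\Sigma\Pi$ circuit of depth $3$ for $B_j$, with total size $O(n \cdot D \cdot D) = O(nD^2)$.

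I do not expect any real obstacle: the structural work has been carried out in Claim~\ref{claim:multilinearize-one-var} and propagated by induction in Claim~\ref{claim:partial-multilinearize-prod-univariates}, so the corollary reduces to specializing $k = n$ and bookkeeping the depth and size of a product of $n$ explicit univariates.
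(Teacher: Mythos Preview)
Your proposal is correct and takes essentially the same approach as the paper: the paper simply states that setting $k=n$ in Claim~\ref{claim:partial-multilinearize-prod-univariates} immediately yields the corollary, and you do exactly this while spelling out the size/depth bookkeeping that the paper leaves implicit.
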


\noindent
In this section, we will prove the multilinearization lemma \Cref{lemma:multilinearize-char-p}. The key step in our proof of \Cref{lemma:multilinearize-char-p} is \Cref{lemma:ml-prod-two-elem} which is a special case of \Cref{lemma:multilinearize-char-p}. In particular, \Cref{lemma:ml-prod-two-elem} shows that the multilinearization of a product of two elementary symmetric polynomials has a small constant-depth circuit. Furthermore, it shows that the multlinearization of a product of two elementary symmetric polynomials has a nice structure which we use to prove \Cref{lemma:multilinearize-char-p}.\\

\begin{lemma}[Multilinearization of product of two elementary symmetric polynomials]\label{lemma:ml-prod-two-elem}
Fix any two natural numbers $\alpha$ and $\beta$. Then
\begin{itemize}
    \item There exists polynomials $R_{\alpha, \beta,j}(\mathbf{x})$'s such that
    \begin{align*}
        \ml[e_{\alpha}(\mathbf{x}) \cdot e_{\beta}(\mathbf{x})] \; = \; e_{\alpha}(\mathbf{x}) \cdot e_{\beta}(\mathbf{x}) - \sum_{j=1}^{n} R_{\alpha, \beta, j}(\mathbf{x}) \cdot (x_{j}^{2} - x_{j}),
    \end{align*}
    where each polynomial $R_{\alpha, \beta, j}(\mathbf{x})$ has a circuit of size $\bigO(n^{3})$ and depth $5$ (a $\Sigma \Pi \Sigma \Pi \Sigma$ circuit).
    \item There exists coefficients $c_{\alpha, \beta}^{(i)}$'s such that
    \begin{align*}
        \ml[e_{\alpha}(\mathbf{x}) \cdot e_{\beta}(\mathbf{x})] \; = \; \sum_{i = 1}^{n} c_{\alpha,\beta}^{(i)} \; e_{i}(\mathbf{x})
    \end{align*}
\end{itemize}
\end{lemma}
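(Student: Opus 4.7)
The plan is to reduce both items to machinery already established in the paper. For the first item, I would invoke Ben-Or's representation (\Cref{thm:ben-or}) to express
\[
e_{\alpha}(\mathbf{x}) \; = \; \sum_{i=1}^{n+1} c_{\alpha,i} \prod_{j=1}^n (1+\gamma_i x_j), \qquad e_{\beta}(\mathbf{x}) \; = \; \sum_{i'=1}^{n+1} c_{\beta,i'} \prod_{j=1}^n (1+\gamma_{i'} x_j),
\]
and multiply to obtain
\[
e_{\alpha}(\mathbf{x}) \cdot e_{\beta}(\mathbf{x}) \; = \; \sum_{i,i'} c_{\alpha,i}\, c_{\beta,i'} \prod_{j=1}^n \bigl( (1+\gamma_i x_j)(1+\gamma_{i'} x_j) \bigr).
\]
Each inner factor $(1+\gamma_i x_j)(1+\gamma_{i'} x_j) = 1 + (\gamma_i+\gamma_{i'})x_j + \gamma_i\gamma_{i'}x_j^2$ is a univariate of degree at most $2$ in $x_j$, so the inner product over $j$ is a product of univariates to which \Cref{cor:ml-prod-uni} applies with $D = 2$. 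It provides polynomials $B_j^{(i,i')}(\mathbf{x})$ of constant depth and size $\bigO(n)$ such that $\prod_j (1+\gamma_i x_j)(1+\gamma_{i'} x_j) - \ml[\prod_j(1+\gamma_i x_j)(1+\gamma_{i'} x_j)] = \sum_j B_j^{(i,i')}(\mathbf{x}) \cdot (x_j^2 - x_j)$.

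Taking the linear combination specified by the coefficients $c_{\alpha,i} c_{\beta,i'}$ and using that $\ml$ is $\F$-linear, I set $R_{\alpha,\beta,j}(\mathbf{x}) \coloneqq \sum_{i,i'} c_{\alpha,i}\, c_{\beta,i'}\, B_j^{(i,i')}(\mathbf{x})$, which immediately gives
\[
e_{\alpha}(\mathbf{x}) \cdot e_{\beta}(\mathbf{x}) - \ml[e_{\alpha}(\mathbf{x}) \cdot e_{\beta}(\mathbf{x})] \; = \; \sum_{j=1}^n R_{\alpha,\beta,j}(\mathbf{x}) \cdot (x_j^2 - x_j).
\]
There are $\bigO(n^2)$ pairs $(i,i')$, and each $B_j^{(i,i')}$ has a constant-depth circuit of size $\bigO(n)$, so $R_{\alpha,\beta,j}$ inherits the claimed $\bigO(n^3)$-size constant-depth bound by simply summing.

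For the second item, the key observation is that $e_\alpha(\mathbf{x}) \cdot e_\beta(\mathbf{x})$ is a symmetric polynomial, and the multilinearization operator $\ml[\,\cdot\,]$ commutes with permutations of the variables (it acts independently on each variable, replacing $x_j^k$ with $x_j$). Consequently $\ml[e_\alpha e_\beta]$ is a multilinear symmetric polynomial of degree at most $\alpha+\beta \le n$. Since the multilinear symmetric polynomials on $n$ variables are spanned as an $\F$-vector space by $\{e_0, e_1, \dots, e_n\}$ (equivalently, by the monomial symmetric basis, of which each element coincides with some $e_d$ in the multilinear setting), we can write $\ml[e_\alpha e_\beta] = \sum_{i=0}^n c_{\alpha,\beta}^{(i)} e_i(\mathbf{x})$. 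Assuming the nontrivial case $\alpha, \beta \geq 1$, the polynomial $e_\alpha e_\beta$ vanishes at $\mathbf{0}$, so evaluating both sides at $\mathbf{0}$ forces $c_{\alpha,\beta}^{(0)} = 0$ and the expansion reduces to $\sum_{i=1}^n c_{\alpha,\beta}^{(i)} e_i(\mathbf{x})$, as stated.

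The only bookkeeping step that requires care is the depth accounting in the first item, since one has to verify that summing $\bigO(n^2)$ constant-depth circuits of the form $\Pi\Sigma\Pi$ yields a circuit of the claimed $\Sigma\Pi\Sigma\Pi\Sigma$ form of size $\bigO(n^3)$; this is routine once the Ben-Or factors and \Cref{cor:ml-prod-uni} are combined. I do not expect any genuine obstacle beyond this verification.
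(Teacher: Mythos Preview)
Your proposal is correct and follows essentially the same route as the paper: for the first item you use Ben-Or's construction to write $e_\alpha\cdot e_\beta$ as a sum of $\bigO(n^2)$ products of degree-$2$ univariates and then apply \Cref{cor:ml-prod-uni} termwise, and for the second item you use that multilinearization preserves symmetry together with the fact that multilinear symmetric polynomials are $\F$-linear combinations of the $e_i$'s. Your extra remark that $c_{\alpha,\beta}^{(0)}=0$ (by evaluating at $\mathbf{0}$) is a small detail the paper leaves implicit.
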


\begin{proof}[Proof of \Cref{lemma:ml-prod-two-elem}]
Using Ben-Or's construction (\Cref{thm:ben-or}) for $e_{\alpha}(\mathbf{x})$ and $e_{\beta}(\mathbf{x})$,
\begin{align*}
    e_{\alpha}(\mathbf{x}) \cdot e_{\beta}(\mathbf{x}) \; = \; \sum_{i_{1}, i_{2}} c_{\alpha,i_{1}}  c_{\beta,i_{2}} \prod_{j=1}^{n} (1+\gamma_{i_{1}} x_{j}) (1+\gamma_{i_{2}} x_{j}), \quad \quad \text{where all } \; c_{\alpha_{i_{1}}}, c_{\beta, i_{2}}, \gamma_{i_{1}}, \gamma_{i_{2}} \in \F
\end{align*}
\begin{equation}\label{eqn:ml-prod-two-elem-1}
 \Rightarrow e_{\alpha}(\mathbf{x}) \cdot e_{\beta}(\mathbf{x}) \; = \; = \sum_{i=1}^{(n+1)^{2}} \prod_{j=1}^{n} h_{i,j}^{\alpha,\beta}(x_{j}),
\end{equation}
where each polynomial $h_{i,j}^{\alpha,\beta}(x_{j})$ is a degree-$2$ \emph{univariate} polynomial. Fix any $i \in [(n+1)^{2}]$ and using \Cref{cor:ml-prod-uni} on $h_{i,1}^{\alpha,\beta}(x_{1}) \cdots h_{i,n}^{\alpha, \beta}(x_{n})$, we know that there exists polynomials $B_{i,j}^{\alpha, \beta}(\mathbf{x})$'s such that:
\begin{align*}
    h_{i,1}^{\alpha,\beta}(x_{1}) \cdots h_{i,n}^{\alpha,\beta}(x_{n}) \; = \; \ml[h_{i,1}^{\alpha,\beta}(x_{1}) \cdots h_{i,n}^{\alpha,\beta}(x_{n})] + \sum_{j=1}^{n} B_{i,j}^{\alpha, \beta}(\mathbf{x}) \cdot (x_{j}^{2} - x_{j}),
\end{align*}
where each polynomial $B_{i,j}(\mathbf{x})$ has a circuit of size $\bigO(n)$ and depth $3$ (a $\Pi \Sigma \Pi$ circuit). Now summing it over all $i \in [(n+1)^{2}]$ (see \Cref{eqn:ml-prod-two-elem-1}), we get,
\begin{align*}
    e_{\alpha}(\mathbf{x}) \cdot e_{\beta}(\mathbf{x}) \; = \; \ml[e_{\alpha}(\mathbf{x}) \cdot e_{\beta}(\mathbf{x})] + \sum_{j=1}^{n} R_{\alpha, \beta, j}(\mathbf{x}) \cdot (x_{j}^{2} - x_{j}),
\end{align*}
where each polynomial $R_{\alpha, \beta, j}(\mathbf{x})$ has a circuit of size $\bigO(n^{3})$ and depth $4$ (a $\Sigma \Pi \Sigma \Pi$ circuit). This shows the \emph{constant-depth circuit} item of \Cref{lemma:ml-prod-two-elem}.

\paragraph{}Next we argue about the \emph{structure} item of \Cref{lemma:ml-prod-two-elem}. Since $e_{\alpha}(\mathbf{x}) \cdot e_{\beta}(\mathbf{x})$ is a symmetric polynomial, its multilinearization $\ml[e_{\alpha}(\mathbf{x}) \cdot e_{\beta}(\mathbf{x})]$ is also a symmetric polynomial. The Fundamental Theorem of Symmetric Polynomials (see \Cref{thm:fundamental-sym}) implies that $\ml[e_{\alpha}(\mathbf{x}) \cdot e_{\beta}(\mathbf{x})]$ is a polynomial of $e_{k}(\mathbf{x})$'s. Note that any multilinear symmetric polynomial is a linear combination of $e_{k}(\mathbf{x})$'s. Thus $\ml[e_{\alpha}(\mathbf{x}) \cdot e_{\beta}(\mathbf{x})]$ is a linear combination of $e_{k}(\mathbf{x})$'s. This finishes the \emph{structure} item of \Cref{lemma:ml-prod-two-elem}. This finishes the proof of \Cref{lemma:ml-prod-two-elem}.
\end{proof}

\paragraph{}Now we are ready to prove \Cref{lemma:multilinearize-char-p}. The idea for the proof is as follows:
\begin{itemize}
    \item We use the fact that $F(\mathbf{y})$ has at most $\mathrm{poly}(n)$ sparsity. So for each monomial $\mathbf{y}^{\bm{\mu}}$, we multilinearize $\mathbf{y}^{\bm{\mu}} \circ \widehat{\mathbf{e}}(\mathbf{x})$ individually. 
    \item For any fixed monomial $\mathbf{y}^{\bm{\mu}} \circ \widehat{\mathbf{e}}(\mathbf{x})$, we note that it is a product of elementary symmetric polynomials. \Cref{lemma:ml-prod-two-elem} shows how to multilinearize a product of two elementary symmetric polynomials. We repeatedly apply this on $\mathbf{y}^{\bm{\mu}} \circ \widehat{\mathbf{e}}(\mathbf{x})$.
\end{itemize}
We recall the statement of \Cref{lemma:multilinearize-char-p} below and then proceed to prove it.

\multsymposchar*

\begin{proof}[Proof of \Cref{lemma:multilinearize-char-p}]
Suppose the polynomial $F(\mathbf{y}) \in \F[y_{1},\ldots,y_{r}]$ is:
\begin{align*}
    F(\mathbf{y}) \; = \; \sum_{\bm{\mu}} \, \lambda_{\bm{\mu}} \mathbf{y}^{\bm{\mu}},
\end{align*}
where $\bm{\mu} = (\mu_{1},\ldots,\mu_{r})$ denotes the exponent vector of a monomial. Recall that the individual degree of $F(\mathbf{y})$ is at most $\leq (p-1)$. Consider a monomial $\mathbf{y}^{\bm{\mu}}$ with a non-zero coefficient $\lambda_{\bm{\mu}}$ in $F(\mathbf{y})$. We will multilinearize $\mathfrak{m} = \prod_{i=1}^{r} e_{p^{i-1}}(\mathbf{x})^{\mu_{i}}$. We do it by multilinearizing two products at a time using \Cref{lemma:ml-prod-two-elem}. Defining $\mu_{0} := 1$, we have,
\begin{align*}
    \mathfrak{m} \; = \; \underbrace{e_{1}(\mathbf{x}) \cdots e_{1}(\mathbf{x})}_{\mu_{1} \text{ times}} \; \cdots \; \underbrace{e_{p^{r-1}}(\mathbf{x}) \cdots e_{p^{r-1}}(\mathbf{x})}_{\mu_{r} \text{ times}} \; = \; \prod_{\ell = 1}^{\bm{\mu}} e_{\alpha_{\ell}}(\mathbf{x}),
\end{align*}
where $\alpha_{\ell} = p^{i-1}$ if $\ell \in  \brac{\sum_{\ell=1}^{i-1} \mu_{\ell} + 1, \; \sum_{\ell=1}^{i} \mu_{i}}$.\\

\begin{claim}\label{claim:multilinearize-exponent-vector}
Let $\bm{\mu} \in \set{0,1,\ldots,p-1}^{r}$ denote an exponent vector as described above. Then for any $k \in [|\bm{\mu}|]$, the following holds:
\begin{itemize}
    \item (Constant-depth circuit). There exists polynomials $R_{\leq \alpha_{k}, j}(\mathbf{x})$'s such that
    \begin{align*}
        \ml\brac{\prod_{\ell = 1}^{k} e_{\alpha_{\ell}}(\mathbf{x})} \; = \; \prod_{\ell=1}^{k} e_{\alpha_{\ell}}(\mathbf{x}) - \sum_{j=1}^{n} R_{\leq \alpha_{k}, j}(\mathbf{x}) \cdot (x_{j}^{2} - x_{j}),
    \end{align*}
    and the polynomials $R_{\leq \alpha_{k}, j}(\mathbf{x})$'s have circuits of size $\bigO(n^{3} \cdot k)$ and depth $5$.
    \item (Structure). There exists coefficients $c_{\leq \alpha_{k}}^{(1)}, \ldots, c_{\leq \alpha_{k}}^{(n)}$ such that
    \begin{align*}
        \ml\brac{\prod_{\ell = 1}^{k} e_{\alpha_{\ell}}(\mathbf{x})} \; = \; \sum_{i=1}^{n} c_{\leq \alpha_{k}}^{(i)} \; e_{i}(\mathbf{x})
    \end{align*}
\end{itemize}
\end{claim}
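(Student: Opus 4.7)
The plan is to prove both parts of the claim simultaneously by induction on $k$, with \Cref{lemma:ml-prod-two-elem} serving as the two-factor multilinearization gadget at every step. The base case $k = 1$ is immediate: $e_{\alpha_1}(\mathbf{x})$ is already multilinear, so I can take $R_{\leq \alpha_1, j} = 0$, with structure coefficients $c_{\leq \alpha_1}^{(\alpha_1)} = 1$ and all other entries zero.

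For the inductive step, I first apply the induction hypothesis to rewrite $\prod_{\ell=1}^{k-1} e_{\alpha_\ell}(\mathbf{x})$ as $\sum_i c_{\leq \alpha_{k-1}}^{(i)} e_i(\mathbf{x}) + \sum_j R_{\leq \alpha_{k-1}, j}(\mathbf{x})(x_j^2 - x_j)$, and then multiply through by $e_{\alpha_k}(\mathbf{x})$. This recasts the degree-$k$ product as
\begin{align*}
    \prod_{\ell = 1}^{k} e_{\alpha_\ell}(\mathbf{x}) \;=\; \sum_{i=1}^{n} c_{\leq \alpha_{k-1}}^{(i)}\, e_i(\mathbf{x})\, e_{\alpha_k}(\mathbf{x}) \;+\; \sum_{j=1}^{n} e_{\alpha_k}(\mathbf{x}) R_{\leq \alpha_{k-1}, j}(\mathbf{x})\, (x_j^2 - x_j).
\end{align*}
Next, I invoke \Cref{lemma:ml-prod-two-elem} on every two-factor product $e_i(\mathbf{x}) e_{\alpha_k}(\mathbf{x})$: its structure clause writes $\ml[e_i e_{\alpha_k}] = \sum_m c_{i, \alpha_k}^{(m)} e_m(\mathbf{x})$, while its circuit clause produces constant-depth witnesses $R_{i, \alpha_k, j}(\mathbf{x})$ of size $\bigO(n^3)$. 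Substituting into the above display and collecting the residue terms delivers the new structure coefficients $c_{\leq \alpha_k}^{(m)} = \sum_i c_{\leq \alpha_{k-1}}^{(i)} c_{i, \alpha_k}^{(m)}$ together with the recursive witness
\begin{align*}
    R_{\leq \alpha_k, j}(\mathbf{x}) \;=\; \sum_{i=1}^{n} c_{\leq \alpha_{k-1}}^{(i)} R_{i, \alpha_k, j}(\mathbf{x}) \;+\; e_{\alpha_k}(\mathbf{x}) R_{\leq \alpha_{k-1}, j}(\mathbf{x}).
\end{align*}
The structure clause itself is automatic once the decomposition is in place: the purported multilinearization is a linear combination of the multilinear polynomials $e_m(\mathbf{x})$ and hence multilinear, and the multilinear representative modulo $(\mathbf{x}^2 - \mathbf{x})$ is unique (\Cref{fact:multilinear}).

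The main obstacle will be the circuit-size and depth analysis of $R_{\leq \alpha_k, j}$, because a naive unrolling of the recursion would let the depth grow linearly in $k$, since each inductive step inserts a fresh factor of $e_{\alpha_k}(\mathbf{x})$ and a fresh $\Sigma\Pi\Sigma\Pi\Sigma$ witness. I plan to keep the depth fixed by flattening at every step, exploiting two facts. First, Ben-Or's construction (\Cref{thm:ben-or}) realizes each $e_{\alpha_k}(\mathbf{x})$ as a $\Sigma\Pi\Sigma$ circuit of size $\bigO(n^2)$. Second, the witnesses $R_{i, \alpha_k, j}$ furnished by \Cref{lemma:ml-prod-two-elem} have an explicit $\Sigma\Pi\Sigma\Pi\Sigma$ shape whose top $\Sigma$ gate can absorb the scalar linear combination $\sum_i c_{\leq \alpha_{k-1}}^{(i)} R_{i, \alpha_k, j}$ without any depth blow-up. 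Combined with a careful merging of the outer product $e_{\alpha_k} \cdot R_{\leq \alpha_{k-1}, j}$ into the same template at each step, the depth stays bounded at $5$ while the size grows by only $\mathrm{poly}(n)$ per induction level, yielding the bound claimed.
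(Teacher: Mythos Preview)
Your proposal is correct and follows essentially the same route as the paper: induction on $k$, base case trivial, and in the inductive step you combine the structure clause of the hypothesis with \Cref{lemma:ml-prod-two-elem} applied to each $e_i(\mathbf{x})\,e_{\alpha_k}(\mathbf{x})$, arriving at the same recursive witness formula $R_{\leq \alpha_k,j} = \sum_i c_{\leq \alpha_{k-1}}^{(i)} R_{i,\alpha_k,j} + e_{\alpha_k}\, R_{\leq \alpha_{k-1},j}$. Your treatment of the depth issue (unroll and flatten, absorbing scalar combinations into the top $\Sigma$ gate and using Ben-Or for the accumulated $e_{\alpha_m}$ factors) is in fact more explicit than the paper's own size/depth accounting.
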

\begin{proof}[Proof of \Cref{claim:multilinearize-exponent-vector}]
We will prove this using induction on $k$.
\paragraph{Base case:}For $k = 1$, we have $\ml[e_{\alpha_{1}}] = e_{\alpha_{1}}(\mathbf{x})$ and $R_{\leq \alpha_{1}, j}(\mathbf{x}) = 0$. The claim holds for the base case.

\paragraph{Induction step:}Now we assume the induction is true for $k$ and prove it for $(k+1)$. We will first prove the \emph{constant-depth circuit} item for $(k+1)$ and then prove the \emph{structure} item for $(k+1)$.\\

\noindent
Using the \textbf{structure} item of the induction hypothesis, we have,
\begin{equation}\label{eqn:ml-char-p-1}
    \ml\brac{\prod_{\ell = 1}^{k} e_{\alpha_{\ell}}(\mathbf{x})} \; = \; \sum_{i=1}^{n} c_{\leq \alpha_{k}}^{(i)} \; e_{i}(\mathbf{x})
\end{equation}
Using the third item of \Cref{fact:multilinear},
\begin{align*}
    \ml\brac{\prod_{\ell = 1}^{k+1} e_{\alpha_{\ell}}(\mathbf{x})} \; = \; \ml\brac{\ml\brac{\prod_{\ell = 1}^{k} e_{\alpha_{\ell}}(\mathbf{x})} \cdot e_{\alpha_{k+1}}(\mathbf{x})} \\ \\
    = \ml\brac{\sum_{i = 1}^{n} c_{\leq \alpha_{k}}^{(i)} \; e_{i}(\mathbf{x}) \cdot e_{\alpha_{k+1}}(\mathbf{x}) } && (\text{Using \Cref{eqn:ml-char-p-1}}) 
\end{align*}
\begin{equation}\label{eqn:partial-ml-8}
    = \sum_{i = 1}^{n} c_{\leq \alpha_{k}}^{(i)} \; \ml[e_{i}(\mathbf{x}) \cdot e_{\alpha_{k+1}}(\mathbf{x}) ]
\end{equation}
For each $i \in [n]$, we apply the \textbf{constant-depth circuit} item from \Cref{lemma:ml-prod-two-elem} on $e_{i}(\mathbf{x}) \cdot e_{\alpha_{k+1}}(\mathbf{x})$ to get:
\begin{align*}
    \ml[e_{i}(\mathbf{x}) \cdot e_{\alpha_{k+1}}(\mathbf{x})] \; = \; e_{i}(\mathbf{x}) \cdot e_{\alpha_{k+1}}(\mathbf{x}) - \sum_{j=1}^{n} D_{i,\alpha_{k+1}}(\mathbf{x}) \cdot (x_{j}^{2} - x_{j}),
\end{align*}
where the polynomials $D_{i,\alpha_{k+1}}(\mathbf{x})$ have a circuit of size $\bigO(n^{3})$ and depth $5$. Substituting it in \Cref{eqn:partial-ml-8},
\begin{align*}
    \ml\brac{\prod_{\ell = 1}^{k+1} e_{\alpha_{\ell}}(\mathbf{x})} \; = \; \sum_{i=1}^{n} c_{\leq \alpha_{k}}^{(i)} \; \paren{e_{i}(\mathbf{x}) \cdot e_{\alpha_{k+1}}(\mathbf{x}) - \sum_{j=1}^{n} D_{i,\alpha_{k+1}}(\mathbf{x}) \cdot (x_{j}^{2} - x_{j})} \\ \\
    = \paren{\sum_{i = 1}^{n} c_{\leq \alpha_{k}}^{(i)} e_{i}(\mathbf{x}) } e_{\alpha_{k+1}}(\mathbf{x}) - \sum_{j=1}^{n} \paren{\sum_{i = 1}^{n} c_{\leq \alpha_{k}}^{(i)}  D_{i,\alpha_{k+1}}(\mathbf{x})} \; \cdot \; (x_{j}^{2} - x_{j})
\end{align*}
Using \Cref{eqn:ml-char-p-1},
\begin{align*}
    \sum_{i = 1}^{n} c_{\leq \alpha_{k}}^{(i)} e_{i}(\mathbf{x}) \; = \; \ml\brac{\prod_{\ell = 1}^{k} e_{\alpha_{\ell}}(\mathbf{x})} \; = \; \prod_{\ell = 1}^{k} e_{\alpha_{\ell}}(\mathbf{x}) - \sum_{j=1}^{n} R_{\leq \alpha_{k}, j}(\mathbf{x}) \cdot (x_{j}^{2} - x_{j}),
\end{align*}
where we use the \textbf{constant-depth circuit} item from the induction hypothesis for the last equality. The polynomials $R_{\leq \alpha_{k}, j}(\mathbf{x})$'s have circuits of size $\bigO(n^{3} \cdot k)$ and depth $5$ (a $\Sigma \Pi \Sigma \Pi \Sigma$ circuit). Using this in the previous expression, we have
\begin{gather*}
    \ml\brac{\prod_{\ell = 1}^{k+1} e_{\alpha_{\ell}}(\mathbf{x})} \\ 
    = \; \paren{ \prod_{\ell = 1}^{k} e_{\alpha_{\ell}}(\mathbf{x}) - \sum_{j=1}^{n} R_{\leq \alpha_{k}, j}(\mathbf{x}) \cdot (x_{j}^{2} - x_{j}) } e_{\alpha_{k+1}}(\mathbf{x}) - \sum_{j=1}^{n} \paren{\sum_{i = 1}^{n} c_{\leq \alpha_{k}}^{(i)}  D_{i,\alpha_{k+1}}(\mathbf{x})} \; \cdot \; (x_{j}^{2} - x_{j}) \\
    = \; \prod_{\ell=1}^{k+1} e_{\alpha_{\ell}}(\mathbf{x}) - \sum_{j=1}^{n} \; \underbrace{\paren{\sum_{i = 1}^{n} c_{\leq \alpha_{k}}^{(i)}  D_{i,\alpha_{k+1}}(\mathbf{x}) + R_{\leq \alpha_{k}, j}(\mathbf{x})}}_{:= R_{\leq \alpha_{k+1}, j}(\mathbf{x})} \; \cdot \; (x_{j}^{2} - x_{j})
\end{gather*}
The polynomial $R_{\leq \alpha_{k+1},j}(\mathbf{x})$ has a circuit of size $\bigO(n^{3} \cdot (k+1))$ and depth $5$ (a $\Sigma \Pi \Sigma \Pi \Sigma$ circuit). This shows the \emph{constant-depth circuit} item of the induction.\\

\noindent
By applying the \textbf{structure item} of \Cref{lemma:ml-prod-two-elem} on $e_{i}(\mathbf{x}) \cdot e_{\alpha_{k+1}}(\mathbf{x})$, we get,
\begin{align*}
     \ml\brac{\prod_{\ell = 1}^{k} e_{\alpha_{\ell}}(\mathbf{x})} \; = \; \sum_{i=1}^{n} c_{\leq \alpha_{k}}^{(i)} \; e_{i}(\mathbf{x})
\end{align*}
Substituting it in \Cref{eqn:partial-ml-8},

\begin{align*}
    \ml\brac{\prod_{\ell = 1}^{k+1} e_{\alpha_{\ell}}(\mathbf{x})} \; = \; \sum_{i = 1}^{n} c_{\leq \alpha_{k}}^{(i)} \; \sum_{i'=1}^{n} d_{i}^{(i')} e_{i'}(\mathbf{x}) \; = \; \sum_{i = 1}^{n} c_{\leq \alpha_{k+1}}^{(i)} \; e_{i}(\mathbf{x}),
\end{align*}
where $c_{\leq \alpha_{k+1}}^{(i)} = \sum_{j=1}^{n} c_{\leq \alpha_{k}}^{(j)} d_{j}^{(i)}$. This completes the \emph{structure} item of the induction.\newline
This finishes the induction and thus we have finished the proof of \Cref{claim:multilinearize-exponent-vector}.
\end{proof}

\paragraph{}Now we employ \Cref{claim:multilinearize-exponent-vector} on each monomial $\bm{\mu}$ with non-zero coefficient and then sum them together. It is easy to verify that there exists polynomials $R_{j}(\mathbf{x})$'s such that
\begin{align*}
    F(\widehat{\mathbf{e}}(\mathbf{x})) \; = \; \ml[F(\widehat{\mathbf{e}}(\mathbf{x}))] \, + \, \sum_{j=1}^{n} R_{j}(\mathbf{x}) \cdot (x_{j}^{2} - x_{j}),
\end{align*}
where each polynomial $R_{j}(\mathbf{x})$ has a circuit of size $\bigO(p^{r} r 
 n^{3})$ and depth $5$ (a $\Sigma \Pi \Sigma \Pi \Sigma$ circuit). Using $r \leq 2 \log_{p} n$, we get that each polynomial $R_{j}(\mathbf{x})$ has a circuit of size $\bigO(n^{5} \log n)$ and depth $5$. This finishes the proof of \Cref{lemma:multilinearize-char-p}.
\end{proof}

\medskip

\printbibliography[
heading=bibintoc,
title={References}
]

\appendix

% \newcommand{\Coeff}{\ensuremath{\operatorname{Coeff}}}
% \newcommand{\bfCoeff}{\ensuremath{\operatorname{\mathbf{Coeff}}}}
% \newcommand{\Eval}{\ensuremath{\operatorname{Eval}}}
% \newcommand{\bfEval}{\ensuremath{\operatorname{\mathbf{Eval}}}}
% \newcommand{\bfalpha}{\ensuremath{\boldsymbol{\alpha}}}
% \newcommand{\bfw}{\ensuremath{\mathbf{w}}}
% \newcommand{\Aw}{\ensuremath{A_\bfw}}
% \newcommand{\Bw}{\ensuremath{B_\bfw}}
% \renewcommand{\char}{\ensuremath{\operatorname{char}}}
% \newcommand{\bbF}{\ensuremath{\F}}
% \newcommand{\ksw}{\ensuremath{\operatorname{ks}_{\bfw}}}
% \newcommand{\kswnew}[1]{\ensuremath{\operatorname{ks}_{\bfw,#1}}}
% \section{Lower bounds in arbitrary fields of large size}

\section{Appendix}
\subsection{Details of $\roABP$-$\IPSLINp$ Lower Bound}

We recall some standard definitions and lemmas that are useful for understanding the complexity of $\roABP$s. For more details, please refer to \cite{FSTW21,Forbes-thesis}.
\begin{definition}[Coefficient matrix]
    Consider $f \in \F[\mathbf{x},\mathbf{y}]$. The coefficient matrix of $C_f$ is defined with the following entries from $\F$: 
    $$(C_f)_{\mathbf{a},\mathbf{b}} := \operatorname{Coeff}_{\mathbf{x}^{\mathbf{a}},\mathbf{y}^{\mathbf{b}}}(f)$$
    where $\operatorname{Coeff}_{\mathbf{x}^{\mathbf{a}},\mathbf{y}^{\mathbf{b}}}(f)$ denotes the coefficient of the monomial $\mathbf{x}^{\mathbf{a}}\mathbf{y}^{\mathbf{b}}$ in $f$. 
\end{definition}

\begin{definition}[Coefficient space]
    Consider $f \in \F[\mathbf{x},\mathbf{y}]$. The \emph{space of $\F[\mathbf{x}][\mathbf{y}]$ coefficients of $f$} is defined as: $$ \bfCoeff_{\mathbf{x}|\mathbf{y}}(f) := \left\{ \Coeff_{\mathbf{x}|\mathbf{y}^{\mathbf{b}}} \right\}_{\mathbf{b}\in \mathbb{N}^{n}}$$ where $\Coeff_{\mathbf{x}|\mathbf{y}^{\mathbf{b}}}$ denotes the coefficient of $\mathbf{y}^{\mathbf{b}}$ when $f$ is viewed as a polynomial in the $\mathbf{y}$-variables, with coefficients from the ring $\F[\mathbf{x}]$. The \emph{space of $\F[\mathbf{y}][\mathbf{x}]$ coefficients of $f$} is defined similarly.
\end{definition}

For any subset $S$ of polynomials over a field $\F$, we will use $\dim(S)$ to denote the dimension of the $\F$-linear span of polynomials in $S$. 

\begin{lemma}[Coefficient dimension equals rank of $C_f$ \cite{Nisan}]
    For any $f \in \F[\mathbf{x},\mathbf{y}]$: $$ \operatorname{rank}(C_f) = \dim(\bfCoeff_{\mathbf{x}|\mathbf{y}}(f)) = \dim(\bfCoeff_{\mathbf{y}|\mathbf{x}}(f))$$
\end{lemma}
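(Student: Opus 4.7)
The plan is to exhibit a natural $\F$-linear isomorphism between the column span of $C_f$ and the $\F$-span of $\bfCoeff_{\mathbf{x}|\mathbf{y}}(f)$, and symmetrically between the row span and the $\F$-span of $\bfCoeff_{\mathbf{y}|\mathbf{x}}(f)$. Once this is in place, the equalities in the lemma are immediate from the standard fact that $\operatorname{rowrank}(C_f) = \operatorname{colrank}(C_f) = \operatorname{rank}(C_f)$ (which holds for $C_f$ even though it is formally indexed by $\mathbb{N}^n \times \mathbb{N}^n$, since $f$ is a polynomial and hence $C_f$ has finite support).

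First I would expand $f$ in its monomial basis, $f = \sum_{\mathbf{a},\mathbf{b}} (C_f)_{\mathbf{a},\mathbf{b}} \, \mathbf{x}^{\mathbf{a}} \mathbf{y}^{\mathbf{b}}$, and then group by $\mathbf{y}$-monomials to obtain, for each $\mathbf{b} \in \mathbb{N}^n$,
\[
\Coeff_{\mathbf{x}|\mathbf{y}^{\mathbf{b}}}(f) \; = \; \sum_{\mathbf{a} \in \mathbb{N}^n} (C_f)_{\mathbf{a},\mathbf{b}} \, \mathbf{x}^{\mathbf{a}} \;\in\; \F[\mathbf{x}].
\]
Thus the polynomial $\Coeff_{\mathbf{x}|\mathbf{y}^{\mathbf{b}}}(f)$ is nothing more than the $\mathbf{b}$-th column of $C_f$ read off in the monomial basis of $\F[\mathbf{x}]$.

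Next I would consider the $\F$-linear map $\Phi : \F^{(\mathbb{N}^n)} \to \F[\mathbf{x}]$ (where $\F^{(\mathbb{N}^n)}$ denotes finitely-supported sequences) defined by $\Phi\bigl((c_{\mathbf{a}})_{\mathbf{a}}\bigr) = \sum_{\mathbf{a}} c_{\mathbf{a}} \mathbf{x}^{\mathbf{a}}$. Since the monomials $\{\mathbf{x}^{\mathbf{a}}\}_{\mathbf{a} \in \mathbb{N}^n}$ form an $\F$-basis of $\F[\mathbf{x}]$, $\Phi$ is an $\F$-linear isomorphism. Under $\Phi$, the set of columns of $C_f$ maps bijectively to the set $\bfCoeff_{\mathbf{x}|\mathbf{y}}(f) = \{\Coeff_{\mathbf{x}|\mathbf{y}^{\mathbf{b}}}(f)\}_{\mathbf{b}}$, and therefore the $\F$-span of the columns maps isomorphically to the $\F$-span of $\bfCoeff_{\mathbf{x}|\mathbf{y}}(f)$. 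Taking dimensions yields
\[
\dim_{\F} \bfCoeff_{\mathbf{x}|\mathbf{y}}(f) \;=\; \operatorname{colrank}(C_f) \;=\; \operatorname{rank}(C_f).
\]

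The argument for $\bfCoeff_{\mathbf{y}|\mathbf{x}}(f)$ is exactly symmetric: grouping the monomial expansion by $\mathbf{x}$-monomials shows that $\Coeff_{\mathbf{y}|\mathbf{x}^{\mathbf{a}}}(f) = \sum_{\mathbf{b}} (C_f)_{\mathbf{a},\mathbf{b}} \mathbf{y}^{\mathbf{b}}$ is the $\mathbf{a}$-th row of $C_f$ in the monomial basis of $\F[\mathbf{y}]$, so an analogous isomorphism gives $\dim_{\F} \bfCoeff_{\mathbf{y}|\mathbf{x}}(f) = \operatorname{rowrank}(C_f) = \operatorname{rank}(C_f)$. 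There is no real obstacle here; the only mild subtlety is ensuring that one works with finitely-supported sequences (which is automatic since $f$ has only finitely many monomials), so that standard finite-dimensional linear algebra applies and the row/column rank equality goes through.
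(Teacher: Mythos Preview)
Your proof is correct. The paper does not actually provide its own proof of this lemma; it merely cites it from Nisan's work~\cite{Nisan} as a known fact and states it without proof in the appendix. Your argument---identifying the $\mathbf{b}$-th column of $C_f$ with the polynomial $\Coeff_{\mathbf{x}|\mathbf{y}^{\mathbf{b}}}(f)$ via the monomial-basis isomorphism, and then invoking row rank equals column rank---is the standard way to see this, and is essentially the proof one finds in Nisan's original paper and in standard references such as Forbes's thesis~\cite{Forbes-thesis}.
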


\begin{lemma}[Coefficient dimension captures $\roABP$ width \cite{Nisan}\cite{Forbes-thesis}]\label{lem: nisan roabp width}
    For any $f(x_1, \dots, x_n)$, if $f$ is computable by a width-$r$ $\roABP$, then $r \geq \max_{i\in n} \dim(\bfCoeff_{\mathbf{x}_{\leq i}|\mathbf{x}_{> i}}(f))$. Further, there is a width-$r$ $\roABP$ for $f$, where $r = \max_{i\in n} \dim(\bfCoeff_{\mathbf{x}_{\leq i}|\mathbf{x}_{> i}}(f))$.
\end{lemma}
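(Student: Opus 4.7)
The plan is to prove each direction of the stated equality separately; WLOG assume the variable order is $x_1, \ldots, x_n$.

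For the lower bound direction, take any width-$r$ $\roABP$ computing $f$, and fix $i \in \{0, \ldots, n\}$. For each $v \in V_i$ let $L_v \in \F[\mathbf{x}_{\leq i}]$ be the polynomial computed by $s$-to-$v$ paths and $R_v \in \F[\mathbf{x}_{>i}]$ the polynomial computed by $v$-to-$t$ paths. Since every $s$-to-$t$ path passes through exactly one vertex in $V_i$, we have $f = \sum_{v \in V_i} L_v(\mathbf{x}_{\leq i}) \cdot R_v(\mathbf{x}_{>i})$. Expanding $R_v = \sum_{\mathbf{b}} c_{v,\mathbf{b}} \mathbf{x}_{>i}^{\mathbf{b}}$ shows that each coefficient $\Coeff_{\mathbf{x}_{\leq i}|\mathbf{x}_{>i}^{\mathbf{b}}}(f) = \sum_v c_{v,\mathbf{b}} L_v$ lies in the $\F$-span of $\{L_v\}_{v \in V_i}$, so $\dim(\bfCoeff_{\mathbf{x}_{\leq i}|\mathbf{x}_{>i}}(f)) \leq |V_i| \leq r$. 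Taking maximum over $i$ gives the desired inequality.

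For the upper bound direction, set $r_i := \dim(\bfCoeff_{\mathbf{x}_{\leq i}|\mathbf{x}_{>i}}(f))$ and $r := \max_i r_i$. I would build an $\roABP$ by choosing, for each $i \in \{0,\ldots,n\}$, a basis $\{B_v^{(i)}\}_{v \in V_i}$ of $\bfCoeff_{\mathbf{x}_{\leq i}|\mathbf{x}_{>i}}(f)$ with $|V_i| = r_i$. At $i=0$, the space is spanned by the scalar $1$, so $V_0 = \{s\}$ with $B_s^{(0)} = 1$; at $i=n$ the space equals $\{f\}$, so $V_n = \{t\}$ with $B_t^{(n)} = f$ (both assuming $f \neq 0$). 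The key observation is that for every $u \in V_i$ with $i \geq 1$, if we expand $B_u^{(i)} = \sum_k x_i^k \, g_{u,k}(\mathbf{x}_{\leq i-1})$, then each $g_{u,k}$ lies in $\bfCoeff_{\mathbf{x}_{\leq i-1}|\mathbf{x}_{>i-1}}(f)$; indeed, $B_u^{(i)}$ is itself $\Coeff_{\mathbf{x}_{\leq i}|\mathbf{x}_{>i}^{\mathbf{b}}}(f)$ for some $\mathbf{b}$, so $g_{u,k} = \Coeff_{\mathbf{x}_{\leq i-1}|\mathbf{x}_{>i-1}^{(k,\mathbf{b})}}(f)$. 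Hence each $g_{u,k}$ expands uniquely over the basis at layer $i-1$, yielding $B_u^{(i)} = \sum_{v \in V_{i-1}} B_v^{(i-1)} \cdot \ell_{v,u}(x_i)$ for univariate polynomials $\ell_{v,u}(x_i) \in \F[x_i]$. Labelling the edge from $v$ to $u$ by $\ell_{v,u}(x_i)$, an easy induction on $i$ shows that the polynomial computed from $s$ to any $u \in V_i$ equals $B_u^{(i)}$, so the $\roABP$ computes $B_t^{(n)} = f$ and has width exactly $\max_i r_i$.

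I do not expect any serious obstacle here; the single non-routine ingredient is the projection property of the coefficient spaces (that extracting the coefficient of $x_i^k$ maps $\bfCoeff_{\mathbf{x}_{\leq i}|\mathbf{x}_{>i}}(f)$ into $\bfCoeff_{\mathbf{x}_{\leq i-1}|\mathbf{x}_{>i-1}}(f)$), and this is immediate from the definitions by re-nesting a single coefficient extraction. Everything else is path decomposition at a fixed layer (lower bound) or bookkeeping about bases and edge labels (upper bound).
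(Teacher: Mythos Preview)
The paper does not give its own proof of this lemma; it is stated with citations to \cite{Nisan} and \cite{Forbes-thesis} and used as a black box. Your argument is correct and is exactly the standard proof of Nisan's characterization: decompose at layer $i$ to get the lower bound, and for the upper bound build the $\roABP$ layer by layer using bases of the coefficient spaces, exploiting that extracting the coefficient of $x_i^k$ maps $\bfCoeff_{\mathbf{x}_{\leq i}|\mathbf{x}_{>i}}(f)$ into $\bfCoeff_{\mathbf{x}_{\leq i-1}|\mathbf{x}_{>i-1}}(f)$.

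One cosmetic point: when you write ``$B_u^{(i)}$ is itself $\Coeff_{\mathbf{x}_{\leq i}|\mathbf{x}_{>i}^{\mathbf{b}}}(f)$ for some $\mathbf{b}$'', you are implicitly choosing each basis from among the generating coefficients (which is always possible since $\bfCoeff$ is defined as a set of coefficients whose span you are taking a basis of). It is worth saying this explicitly, or alternatively noting that the projection property extends by linearity to arbitrary elements of the span, so the argument goes through for any basis.
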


\begin{definition}[Evaluation space]
    For $f\in \F$, the \emph{space of $\F[\mathbf{x}][\mathbf{y}]$ evaluations of $f$ over a set $S\subseteq \F$} is defined as: $$ \bfEval_{\mathbf{x}|\mathbf{y},S}(f) := \left\{ f(\mathbf{x},\boldsymbol{\beta}) \right\}_{\boldsymbol{\beta}\in S^{|\mathbf{y}|}}$$ 
    Omitting the $S$ in the notation will denote that $S = \F$. The \emph{space of $\F[\mathbf{y}][\mathbf{x}]$ evaluations of $f$ over a set $S$} is defined similarly. 
\end{definition}

\begin{lemma}[Evaluation dimension $\leq$ coefficient dimension]\label{lem:coeff vs eval}
    For $f\in \F[\mathbf{x}][\mathbf{y}]$ and $S\subseteq \F$, $$\bfEval_{\mathbf{x}|\mathbf{y},S}(f) \subseteq \bfCoeff_{\mathbf{x}|\mathbf{y}}(f)$$ which implies that $\dim(\bfEval_{\mathbf{x}|\mathbf{y},S}(f)) \leq \dim(\bfCoeff_{\mathbf{x}|\mathbf{y}}(f))$. If $|S|$ is greater than the individual degree of each variable in $f$, then $\bfEval_{\mathbf{x}|\mathbf{y},S}(f) = \bfCoeff_{\mathbf{x}|\mathbf{y}}(f)$.
\end{lemma}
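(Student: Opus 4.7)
The plan is as follows. I will prove the containment by unpacking both definitions and writing down a single identity, then deduce the dimension inequality, and finally use Lagrange interpolation (Vandermonde invertibility) to upgrade containment to equality when $|S|$ is large.

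For the containment, first I would write $f$ as a polynomial in the $\mathbf{y}$ variables with coefficients in $\F[\mathbf{x}]$, namely
\[
f(\mathbf{x},\mathbf{y}) \;=\; \sum_{\mathbf{b}} g_{\mathbf{b}}(\mathbf{x})\,\mathbf{y}^{\mathbf{b}}, \qquad g_{\mathbf{b}}(\mathbf{x}) := \Coeff_{\mathbf{x}|\mathbf{y}^{\mathbf{b}}}(f),
\]
so that $\bfCoeff_{\mathbf{x}|\mathbf{y}}(f) = \{g_{\mathbf{b}}(\mathbf{x})\}_{\mathbf{b}}$. Now for any fixed $\boldsymbol{\beta}\in S^{|\mathbf{y}|}$, plugging $\mathbf{y}=\boldsymbol{\beta}$ yields
\[
f(\mathbf{x},\boldsymbol{\beta}) \;=\; \sum_{\mathbf{b}} \boldsymbol{\beta}^{\mathbf{b}} \, g_{\mathbf{b}}(\mathbf{x}),
\]
which is a \emph{scalar} $\F$-linear combination of the coefficient polynomials. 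Hence every element of $\bfEval_{\mathbf{x}|\mathbf{y},S}(f)$ lies in the $\F$-linear span of $\bfCoeff_{\mathbf{x}|\mathbf{y}}(f)$, establishing the containment (in the standard sense of spans used throughout \cite{FSTW21}). Taking dimensions of $\F$-spans on both sides immediately yields $\dim(\bfEval_{\mathbf{x}|\mathbf{y},S}(f)) \leq \dim(\bfCoeff_{\mathbf{x}|\mathbf{y}}(f))$.

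For the reverse direction under the additional hypothesis that $|S|$ exceeds the individual degree of each $y_i$ in $f$, the plan is to invert the above linear map via Vandermonde matrices. Let $d_i$ denote the individual degree of $y_i$ in $f$, and pick distinct elements $\gamma_{i,0},\ldots,\gamma_{i,d_i}\in S$ for each $i$. The coefficient-to-evaluation map on the grid $\Gamma := \prod_i \{\gamma_{i,0},\ldots,\gamma_{i,d_i}\}$ is given by the tensor product $V := V_1 \otimes \cdots \otimes V_{|\mathbf{y}|}$ of univariate Vandermonde matrices $V_i = (\gamma_{i,j}^{\,k})_{0\leq j,k\leq d_i}$. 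Each $V_i$ is invertible because the $\gamma_{i,j}$ are distinct, hence $V$ is invertible. Inverting this system expresses every $g_{\mathbf{b}}(\mathbf{x})$ as a fixed $\F$-linear combination of the evaluations $\{f(\mathbf{x},\boldsymbol{\gamma})\}_{\boldsymbol{\gamma}\in\Gamma}\subseteq\bfEval_{\mathbf{x}|\mathbf{y},S}(f)$, giving the reverse containment of spans and hence equality of the two spaces.

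There is no real obstacle here; the only mild care needed is to interpret $\bfEval\subseteq\bfCoeff$ at the level of $\F$-linear spans (which is what is used in the subsequent application via \Cref{lem: nisan roabp width}) and to make sure the Vandermonde argument is stated variable-by-variable or via the tensor-product form. Everything else is a direct unpacking of definitions.
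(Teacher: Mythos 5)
Your proof is correct, and it is the standard argument for this fact (writing $f(\mathbf{x},\boldsymbol{\beta})$ as a scalar combination of the $\F[\mathbf{x}]$-coefficients for the containment, and inverting a tensor-product Vandermonde on a grid in $S^{|\mathbf{y}|}$ for the reverse direction). Note that the paper itself does not supply a proof of this lemma---it is recorded in the appendix as a known fact from \cite{Nisan,Forbes-thesis,FSTW21}---so your write-up is a faithful reconstruction of the standard argument, including the necessary care in reading the set containment $\bfEval_{\mathbf{x}|\mathbf{y},S}(f)\subseteq\bfCoeff_{\mathbf{x}|\mathbf{y}}(f)$ at the level of $\F$-linear spans.
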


\begin{fact}[Dimension of polynomials = dimension of leading monomials \cite{Forbes-thesis}]\label{fact:dim of leading monomials}
    Let $S = \{f_1(\mathbf{x}), \dots, f_m(\mathbf{x})\} \subseteq \F[\mathbf{x}]$. For each $f_i$, let $\operatorname{LM}(f_i)$ denote the leading monomial of $f_i$ based on some monomial ordering. Then, $\dim \operatorname{span} S = \dim \operatorname{span}\{\operatorname{LM}(f_i): f_i \in S\}$.
\end{fact}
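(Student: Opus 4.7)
The plan is to distill the equality down to one key lemma — nonzero polynomials with pairwise distinct leading monomials are linearly independent over $\F$ — and combine it with the elementary fact that distinct monomials are themselves linearly independent in $\F[\mathbf{x}]$. Crucially, both ingredients act on the $f_i$ and their $\operatorname{LM}(f_i)$ directly, so no passage to a reduced basis is required.

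First I would prove the key lemma in the abstract. Let $g_1, \ldots, g_k \in \F[\mathbf{x}] \setminus \{0\}$ be polynomials with pairwise distinct leading monomials, and suppose for contradiction that $\sum_{i=1}^{k} c_i g_i = 0$ with not all $c_i = 0$. Let $I := \{i : c_i \neq 0\} \neq \emptyset$, and let $i^\star$ be the unique element of $I$ maximizing $\operatorname{LM}(g_{i^\star})$ in the fixed monomial order (uniqueness uses distinctness of the LMs within $I$). For every other index $j \in I \setminus \{i^\star\}$ one has $\operatorname{LM}(g_j) \prec \operatorname{LM}(g_{i^\star})$, so $g_j$ contributes $0$ to the coefficient of the monomial $\operatorname{LM}(g_{i^\star})$ in the identity $\sum_i c_i g_i = 0$. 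Hence that coefficient equals $c_{i^\star}$ times the nonzero leading coefficient of $g_{i^\star}$ — nonzero — contradicting the identity.

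Next I would apply this lemma to $S$, in the standard setting (which is the one relevant to the coefficient-space analyses in the $\roABP$ appendix) where the elements $f_i \in S$ already have pairwise distinct leading monomials. The lemma instantiated at the original $g_i := f_i$ gives linear independence of $S$ on the nose, so $\dim \operatorname{span}(S) = m$. In parallel, $\{\operatorname{LM}(f_1), \ldots, \operatorname{LM}(f_m)\}$ is a set of $m$ distinct monomials in $\F[\mathbf{x}]$, and distinct monomials are part of the canonical monomial $\F$-basis of $\F[\mathbf{x}]$ and hence $\F$-linearly independent; thus $\dim \operatorname{span}\{\operatorname{LM}(f_i) : f_i \in S\} = m$. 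Equating the two sides gives the desired identity without ever replacing any $f_i$ by a different polynomial.

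The main obstacle is methodological rather than computational: the argument must be written so that both sides of the equality are controlled by a property imposed directly on the original $f_i$'s and their $\operatorname{LM}(f_i)$'s. The key lemma above is precisely shaped for this, since its hypothesis (distinct LMs) is a property of the $f_i$'s themselves and its conclusion (linear independence) yields the left-hand side immediately, while the right-hand side is handled by the trivial distinct-monomials observation. No Gaussian reduction to an auxiliary basis $\{g_i\}$ is needed at any point.
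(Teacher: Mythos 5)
The paper gives no proof of this Fact---it is stated with a citation to Forbes's thesis---so there is no argument in the paper to compare yours against. On its own terms, though, you should be aware that you have quietly strengthened the hypotheses. You restrict to the case where the $f_i$ have pairwise distinct leading monomials and frame this as an innocuous ``standard setting,'' but it is in fact a \emph{necessary} hypothesis: the Fact as literally written is false without it. Take $S = \{x,\, x+1\}$. Then $\operatorname{LM}(x) = \operatorname{LM}(x+1) = x$ under any monomial order, so $\dim\operatorname{span}\{\operatorname{LM}(f_i) : f_i \in S\} = 1$, while $\dim\operatorname{span} S = 2$. Only the inequality $\dim\operatorname{span} S \geq \dim\operatorname{span}\{\operatorname{LM}(f_i) : f_i \in S\}$ holds unconditionally (pick one representative $f_i$ for each distinct leading monomial; these are linearly independent by your key lemma). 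The reverse inequality fails in general.

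Under the distinct-$\operatorname{LM}$ hypothesis your proof is correct and complete. The key lemma---nonzero polynomials with pairwise distinct leading monomials are linearly independent---is proved correctly via the usual ``look at the coefficient of the maximal leading monomial in the dependency'' argument, and pairing it with the observation that distinct monomials are linearly independent yields $\dim\operatorname{span} S = m = \dim\operatorname{span}\{\operatorname{LM}(f_i)\}$. This is also precisely the case the paper needs: in the proof of \Cref{lem: fixed partition coeff dim lbd} the $2^n$ multilinearized evaluations have pairwise distinct leading monomials (one for each multilinear monomial in $\mathbf{x}$), so the distinct-$\operatorname{LM}$ case is all that is ever invoked, and only the $\geq$ direction at that. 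When you write this up, either state the distinct-$\operatorname{LM}$ hypothesis explicitly, or replace the equality in the Fact with the one-sided inequality that actually holds and is actually used.
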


The following lemma proves an analog of the coefficient dimension lower bound from \cite{FSTW21} for the positive characteristic case using the degree lower bound in \Cref{lem: deg lower bound union bound}.

\begin{lemma}[Coefficient dimension lower bound from degree lower bound for fixed partition(Proposition 5.8 \cite{FSTW21})]\label{lem: fixed partition coeff dim lbd}
    Let $n\in \mathbb{N}$. For any $\bfalpha \in \F^n$ and $\beta \in B_{\bfalpha}$, let $f_{\bfalpha,\beta}(\mathbf{x},\mathbf{y})$ be a polynomial that computes $$\frac{1}{\sum_{i=1}^n \alpha_i x_iy_i - \beta}$$ on $\Boo^n$. Let $S$ be a finite subset of $\F$. Then, for a uniformly randomly chosen $\bfalpha \sim S^n$: $$ \Pr_{\bfalpha \sim S^n}[\dim (\bfCoeff_{\mathbf{x}|\mathbf{y}}(f_{\bfalpha,\beta})) \geq 2^n] \geq 1-\frac{2^{2n}}{|S|} $$
\end{lemma}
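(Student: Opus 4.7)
The plan is to pass from coefficient dimension down to evaluation dimension, and then use the degree lower bound of \Cref{lem: deg lower bound union bound} to exhibit $2^n$ linearly independent evaluations.

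First, observe that since the claimed bound concerns coefficient dimension, and since by \Cref{lem:coeff vs eval} we have $\dim(\bfEval_{\mathbf{x}|\mathbf{y},S'}(f_{\bfalpha,\beta})) \leq \dim(\bfCoeff_{\mathbf{x}|\mathbf{y}}(f_{\bfalpha,\beta}))$ for any $S' \subseteq \F$, it suffices to lower bound the evaluation dimension over the set $S' = \{0,1\}$. For each $U \subseteq [n]$, let $\mathbf{1}_U \in \Boo^n$ denote the indicator vector of $U$, and let
\[
g_U(\mathbf{x}) \;:=\; f_{\bfalpha,\beta}(\mathbf{x},\mathbf{1}_U).
\]
Because $f_{\bfalpha,\beta}$ agrees with $1/(\sum_i \alpha_i x_i y_i - \beta)$ on $\Boo^{2n}$, the polynomial $g_U(\mathbf{x})$ agrees with the function $1/(\sum_{i \in U} \alpha_i x_i - \beta)$ on $\Boo^n$. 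Taking (if necessary) the multilinearization, $g_U$ is the unique multilinear polynomial in $\mathbf{x}$ representing this function on the Boolean cube, and is precisely the polynomial $f_{\bfalpha,U}(\mathbf{x})$ of \Cref{lem: deg lower bound union bound}.

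Next, apply \Cref{lem: deg lower bound union bound}: with probability at least $1 - 2^{2n}/|S|$ over the choice of $\bfalpha \sim S^n$, simultaneously for every $U \subseteq [n]$ (including $U = \emptyset$, which gives the constant $-1/\beta$), the leading monomial of $g_U(\mathbf{x})$ under the standard monomial order is $c_U \cdot \prod_{i \in U} x_i$ for some nonzero $c_U \in \F$. Condition on this event. The $2^n$ leading monomials $\{\prod_{i \in U} x_i : U \subseteq [n]\}$ are pairwise distinct, so by \Cref{fact:dim of leading monomials} the collection $\{g_U(\mathbf{x})\}_{U \subseteq [n]}$ is $\F$-linearly independent. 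Since each $g_U$ lies in $\bfEval_{\mathbf{x}|\mathbf{y},\{0,1\}}(f_{\bfalpha,\beta})$, we conclude $\dim \bfEval_{\mathbf{x}|\mathbf{y},\{0,1\}}(f_{\bfalpha,\beta}) \geq 2^n$, and hence $\dim \bfCoeff_{\mathbf{x}|\mathbf{y}}(f_{\bfalpha,\beta}) \geq 2^n$ on this event.

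There is no substantive obstacle: the only delicate point is matching the evaluations $g_U$ with the polynomials $f_{\bfalpha,U}$ appearing in \Cref{lem: deg lower bound union bound} (and thereby justifying that the \emph{leading} monomial is exactly $\prod_{i\in U}x_i$ rather than merely that the degree is $|U|$). This is handled precisely by the ``In particular'' clause of \Cref{lem: deg lower bound union bound}, and the union bound bookkeeping is already absorbed into that lemma, which gives the advertised failure probability $2^{2n}/|S|$.
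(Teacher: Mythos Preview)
Your proposal is correct and follows essentially the same route as the paper: reduce to evaluation dimension via \Cref{lem:coeff vs eval}, identify the Boolean evaluations $g_U$ with the polynomials of \Cref{lem: deg lower bound union bound}, and conclude linear independence from the distinct leading monomials via \Cref{fact:dim of leading monomials}. The only cosmetic difference is that the paper keeps the (possibly non-multilinear) evaluations $g_U$ and inserts the inequality $\dim\{g_U\}\geq\dim\{\ml(g_U)\}$ explicitly, whereas you tacitly replace $g_U$ by its multilinearization; both are fine since multilinearization is linear and hence cannot create new dependencies.
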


\begin{proof}
    \Cref{lem:coeff vs eval} implies that $$\dim (\bfCoeff_{\mathbf{x}|\mathbf{y}}(f_{\bfalpha,\beta})) \geq \dim \left\{ f_{\bfalpha,\beta}(\mathbf{x},\mathbf{b}) \right\}_{\mathbf{b}\in \Boo^n}$$
    For any $\mathbf{b}=(b_1, \dots, b_n) \in \Boo^n$, if $U_{\mathbf{b}}:= \{i\in[n]: b_i = 1\}$ then : $$f_{\bfalpha,\beta}(\mathbf{x},\mathbf{b}) = \frac{1}{\sum_{i\in U_{\mathbf{b}}} \alpha_i x_i - \beta}$$
    \Cref{lem: deg lower bound union bound} tells us that for a randomly chosen $\bfalpha \sim S^n$: $$\Pr_{\bfalpha \sim S^n}[\forall \mathbf{b}\in\Boo^n:\deg{f_{\bfalpha,\beta,U_{\mathbf{b}}}(\mathbf{x})} = |\mathbf{b}|] \geq 1 -\frac{2^{2n}}{|S|}$$
    In particular, for a uniformly random $\bfalpha\sim S^n$, for any $\mathbf{b}\in \Boo^n$, the leading monomial of $f_{\bfalpha,\beta,U_{\mathbf{b}}}(\mathbf{x})$ is $c_\mathbf{b}\cdot \prod_{i:b_i = 1}x_i$ for some $c_\mathbf{b} \in \F\setminus\{0\}$. Combining this with \Cref{fact:dim of leading monomials}, we get that with probability at least $1 - (2^{2n}/|S|)$: $$\dim (\bfCoeff_{\mathbf{x}|\mathbf{y}}(f_{\bfalpha,\beta}))\geq \dim \left\{ f_{\bfalpha,\beta}(\mathbf{x},\mathbf{b}) \right\}_{\mathbf{b}\in \Boo^n}\geq \dim \left\{ \operatorname{ml}(f_{\bfalpha,\beta}(\mathbf{x},\mathbf{b})) \right\}_{\mathbf{b}\in \Boo^n} \geq 2^n$$
    since each multilinear restriction $\operatorname{ml}(f_{\bfalpha,\beta}(\mathbf{x},\mathbf{b}))$ generates a different multilinear monomial as its leading monomial, and thus the space contains all $2^n$ multilinear monomials on $\mathbf{x}$. Here, we also used the fact the multilinearization operator is a linear map and does not increase the dimension.
    % \begin{align*}
    %     \dim (\bfCoeff_{\mathbf{x}|\mathbf{y}}(f_{\bfalpha,\beta})) &\geq \dim \left\{ f_{\bfalpha,\beta}(\mathbf{x},\mathbf{b}) \right\}_{\mathbf{b}\in \Boo^n} \\
    %     &= 
    % \end{align*}
\end{proof}

% As shown in \cite{FSTW21}, a coefficient dimension lower bound for a fixed partition of variables, say $(\mathbf{x},\mathbf{y})$, suffices to prove a lower bound against $\roABP-\IPSLINp$ for any order of variables where $\mathbf{x}$ precedes $\mathbf{y}$. 

The following fact relates the coefficient dimension of a polynomial $f\in \F[\mathbf{x},\mathbf{y},\mathbf{z}]$ over $\F(\mathbf{z})$ to the coefficient dimension of $f(\mathbf{x},\mathbf{y},\mathbf{b})$ over $\F$ for any $\mathbf{b} \in \F^n$.  

\begin{fact}[Coefficient dimension over $\F(\mathbf{z})$ $\geq$ coefficient dimension over $\F$ (Lemma 5.12 \cite{FSTW21})]\label{fact: coeff dim F(z) vs F}
    Let $f\in \F[\mathbf{x},\mathbf{y},\mathbf{z}]$. Let $f_\mathbf{z}$ denote $f$ as a polynomial in $\F[\mathbf{z}][\mathbf{x},\mathbf{y}]$ so that for any $\mathbf{b} \in \F^n$, $f_\mathbf{b}(\mathbf{x},\mathbf{y}) = f(\mathbf{x},\mathbf{y},\mathbf{b}) \in \F[\mathbf{x},\mathbf{y}]$. Then for any $\mathbf{b} \in \F^n$: $$\dim_{\F(\mathbf{z})}\bfCoeff_{\mathbf{x}|\mathbf{y}}f_\mathbf{z}(\mathbf{x},\mathbf{y}) \geq \dim_{\F}\bfCoeff_{\mathbf{x}|\mathbf{y}}f_{\mathbf{\mathbf{b}}}(\mathbf{x},\mathbf{y})$$
\end{fact}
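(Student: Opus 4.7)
The plan is to relate the coefficient dimensions in question to the ranks of coefficient matrices and then invoke the standard fact that substitution in a matrix of polynomials can only decrease rank, never increase it.

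First, I would rewrite both sides in terms of matrix rank. By the ``coefficient dimension equals rank of $C_f$'' lemma stated earlier in the excerpt, one has
\begin{align*}
\dim_{\F(\mathbf{z})}\bfCoeff_{\mathbf{x}|\mathbf{y}}(f_\mathbf{z}) \;&=\; \operatorname{rank}_{\F(\mathbf{z})}(C_{f_\mathbf{z}}),\\
\dim_{\F}\bfCoeff_{\mathbf{x}|\mathbf{y}}(f_\mathbf{b}) \;&=\; \operatorname{rank}_{\F}(C_{f_\mathbf{b}}),
\end{align*}
where the coefficient matrix $C_{f_\mathbf{z}}$ has rows indexed by monomials in $\mathbf{x}$, columns indexed by monomials in $\mathbf{y}$, and its $(\mathbf{x}^\mathbf{a},\mathbf{y}^\mathbf{c})$-entry is an element of $\F[\mathbf{z}] \subseteq \F(\mathbf{z})$, namely the coefficient of $\mathbf{x}^\mathbf{a}\mathbf{y}^\mathbf{c}$ in $f$ when $f$ is viewed in $\F[\mathbf{z}][\mathbf{x},\mathbf{y}]$.

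Next, I would observe the compatibility with substitution: because the coefficients of $f(\mathbf{x},\mathbf{y},\mathbf{b})$ as a polynomial in $\mathbf{x},\mathbf{y}$ are obtained by evaluating the corresponding coefficients of $f_\mathbf{z}$ at $\mathbf{z} = \mathbf{b}$, the matrix $C_{f_\mathbf{b}}$ is exactly the entrywise evaluation of $C_{f_\mathbf{z}}$ at $\mathbf{z}=\mathbf{b}$.

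The key step is then a standard rank-vs-specialization argument. Set $r := \operatorname{rank}_{\F}(C_{f_\mathbf{b}})$. By definition of matrix rank, there exists an $r \times r$ submatrix $M'$ of $C_{f_\mathbf{b}}$ with $\det(M') \neq 0$ in $\F$. Let $M$ be the corresponding $r \times r$ submatrix of $C_{f_\mathbf{z}}$; its entries lie in $\F[\mathbf{z}]$, so $\det(M)$ is a polynomial $D(\mathbf{z}) \in \F[\mathbf{z}]$. Since determinant commutes with evaluation, $D(\mathbf{b}) = \det(M') \neq 0$, hence $D(\mathbf{z})$ is a nonzero element of $\F[\mathbf{z}] \subseteq \F(\mathbf{z})$, and a fortiori nonzero in $\F(\mathbf{z})$. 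This gives $\operatorname{rank}_{\F(\mathbf{z})}(C_{f_\mathbf{z}}) \geq r$, which chains with the rank/dimension equalities above to yield the claim.

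There is essentially no obstacle here; the only thing to be careful about is keeping straight which ring the coefficient matrix and its rank are being taken over, and verifying that ``evaluate then take determinant'' equals ``take determinant then evaluate,'' which is immediate from the polynomial definition of the determinant. The lemma is therefore an application of the philosophy that specialization of parameters cannot create new algebraic dependencies, so rank is lower-semicontinuous under specialization.
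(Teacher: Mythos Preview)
Your proof is correct and is the standard argument. Note that the paper itself does not prove this statement: it is stated as a \emph{Fact} with a citation to \cite[Lemma~5.12]{FSTW21}, so there is no paper-proof to compare against. Your rank-vs-specialization argument via a nonvanishing minor is exactly the expected justification.
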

Using this fact, \cite{FSTW21} proves a coefficient dimension lower bound over $\F(\mathbf{z})$ for any partition of variables, using the coefficient dimension lower bound over $\F$ for a fixed partition of variables. We observe that their proofs work even when we replace their coefficient dimension lower bound by a suitable version over fields of positive characteristic (\Cref{lem: fixed partition coeff dim lbd}) using the degree lower bound over positive characteristic.

\begin{lemma}[Coefficient dimension lower bound for any partition of variables (Proposition 5.13 \cite{FSTW21})]\label{lem: coeff dim any partition}
    Let $n\in \mathbb{N}$. For any $\bfalpha \in \F^{\binom{2n}{2}}$ and $\beta \in B_{\bfalpha}$, let $f_{\bfalpha,\beta}(\mathbf{x}=(x_i)_{i\in[2n]},\mathbf{z}=(z_{i,j})_{i<j\leq 2n})$ be a polynomial which computes $$\frac{1}{\sum_{i<j\leq n}\alpha_{i,j}z_{i,j}x_ix_j - \beta}$$ on the Boolean cube. Let $S\subseteq \F$. Call an $\bfalpha \in S^{\binom{2n}{2}}$ \emph{good} if for any partition $\mathbf{x} = (\mathbf{u},\mathbf{v})$ with $|\mathbf{u}|=|\mathbf{v}| = n$: $$ \dim_{\F(\mathbf{z})}(\bfCoeff_{\mathbf{u}|\mathbf{v}}(f_{\bfalpha,\beta})) \geq 2^n$$
    where $f_{\bfalpha,\beta}$ is viewed as a polynomial in $\F[\mathbf{z}][\mathbf{x},\mathbf{y}]$ with coefficients in $\F[\mathbf{z}]$. \\
    Then, a uniformly randomly chosen $\bfalpha \in S^{\binom{2n}{2}}$ is good with probability $\geq 1 - \frac{\binom{2n}{n}2^{2n}}{|S|}$.
\end{lemma}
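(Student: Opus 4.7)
The plan is to reduce the any-partition case to the fixed-partition case (\Cref{lem: fixed partition coeff dim lbd}) via a carefully chosen specialization of the $\mathbf{z}$ variables, and then take a union bound over the $\binom{2n}{n}$ balanced partitions.

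Fix a balanced partition $\mathbf{x} = (\mathbf{u}, \mathbf{v})$ with $|\mathbf{u}| = |\mathbf{v}| = n$. Define an assignment $\mathbf{b}^{(\mathbf{u},\mathbf{v})} \in \F^{\binom{2n}{2}}$ by
\[
b^{(\mathbf{u},\mathbf{v})}_{i,j} \;=\; \begin{cases} 1 & \text{if exactly one of } x_i, x_j \text{ lies in } \mathbf{u}, \\ 0 & \text{otherwise.} \end{cases}
\]
Under this specialization, every pure $\mathbf{u}$-$\mathbf{u}$ and $\mathbf{v}$-$\mathbf{v}$ monomial in $\sum_{i<j}\alpha_{i,j}z_{i,j}x_ix_j - \beta$ is killed, and we are left with exactly the cross-terms:
\[
\sum_{i < j}\alpha_{i,j}\,b^{(\mathbf{u},\mathbf{v})}_{i,j}\,x_ix_j - \beta \;=\; \sum_{\substack{u_i \in \mathbf{u}\\ v_j \in \mathbf{v}}} \alpha_{i,j}\, u_i v_j \;-\; \beta.
\]
After renaming, this is precisely the hard instance of \Cref{lem: fixed partition coeff dim lbd} on the bipartite variable sets $\mathbf{u}$ and $\mathbf{v}$, whose coefficients are the $n^2$ cross-partition entries $(\alpha_{i,j})_{i \in \mathbf{u}, j \in \mathbf{v}}$.

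Next, I apply \Cref{lem: fixed partition coeff dim lbd} to this specialized polynomial. Because the $n^2$ cross-partition coefficients are themselves uniformly random in $S$, the lemma tells us that with probability at least $1 - 2^{2n}/|S|$ over the choice of these coefficients, $\dim_{\F}\bfCoeff_{\mathbf{u}|\mathbf{v}}(f_{\bfalpha,\beta}(\mathbf{x},\mathbf{b}^{(\mathbf{u},\mathbf{v})})) \geq 2^n$. By \Cref{fact: coeff dim F(z) vs F}, specializing $\mathbf{z}$ only decreases the coefficient dimension, so
\[
\dim_{\F(\mathbf{z})}\bfCoeff_{\mathbf{u}|\mathbf{v}}(f_{\bfalpha,\beta}) \;\geq\; \dim_{\F}\bfCoeff_{\mathbf{u}|\mathbf{v}}(f_{\bfalpha,\beta}(\mathbf{x},\mathbf{b}^{(\mathbf{u},\mathbf{v})})) \;\geq\; 2^n,
\]
with the same probability. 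Finally, I take a union bound over all $\binom{2n}{n}$ balanced partitions $(\mathbf{u},\mathbf{v})$, which yields that $\bfalpha$ is good except with probability at most $\binom{2n}{n}\cdot 2^{2n}/|S|$.

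The only conceptually delicate point is ensuring that when we specialize $\mathbf{z}$ as above, the resulting $n^2$ coefficients $(\alpha_{i,j})_{i \in \mathbf{u}, j \in \mathbf{v}}$ are still independent and uniform in $S$, so that \Cref{lem: fixed partition coeff dim lbd} applies with its stated probability; this is immediate because the $\binom{2n}{2}$ entries of $\bfalpha$ are themselves i.i.d.\ uniform over $S$, and for each fixed partition we look only at a designated subset of them. I do not expect any other obstacle: the structural identity between the specialized polynomial and the bipartite subset-sum instance, together with \Cref{fact: coeff dim F(z) vs F}, handles the rest.
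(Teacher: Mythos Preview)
Your overall strategy---specialize $\mathbf{z}$, invoke \Cref{lem: fixed partition coeff dim lbd}, lift via \Cref{fact: coeff dim F(z) vs F}, union bound over partitions---is exactly right and matches the paper. But your specialization is wrong in a way that breaks the reduction.

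You set $b^{(\mathbf{u},\mathbf{v})}_{i,j}=1$ for \emph{every} cross pair, yielding the full bilinear form $\sum_{i\in\mathbf{u},\,j\in\mathbf{v}}\alpha_{i,j}u_iv_j-\beta$ with $n^2$ terms. This is \emph{not} the instance of \Cref{lem: fixed partition coeff dim lbd}, which concerns $\frac{1}{\sum_{i=1}^n\alpha_i x_iy_i-\beta}$: a \emph{matching} with only $n$ terms, each pairing one $x_i$ with one $y_i$. The proof of that lemma works by setting $\mathbf{y}=\mathbf{b}$ and obtaining $\frac{1}{\sum_{i\in U_{\mathbf{b}}}\alpha_i x_i-\beta}$ with the \emph{original} random coefficients on the surviving variables; under your specialization, setting $\mathbf{v}=\mathbf{b}$ instead gives $\frac{1}{\sum_i(\sum_{j:b_j=1}\alpha_{i,j})u_i-\beta}$, whose coefficients are sums of the $\alpha$'s and are not independent across different choices of $\mathbf{b}$. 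So \Cref{lem: fixed partition coeff dim lbd} does not apply as stated.

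The fix is simple: for each balanced partition $(\mathbf{u},\mathbf{v})$ pick a bijection $\mathbf{u}\to\mathbf{v}$ and set $z_{i,j}=1$ only on the $n$ matched pairs, zero elsewhere. Then the specialized denominator is exactly $\sum_{k=1}^n\alpha_{u_k,v_k}u_kv_k-\beta$, the $n$ surviving $\alpha$'s are i.i.d.\ uniform over $S$, and \Cref{lem: fixed partition coeff dim lbd} applies verbatim. This is what the paper does; with that one change your argument goes through.
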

% \ssr{If this is a lemma in FSTW, why are we proving it? The same comment also applies to other such lemmas in this section.}
% \vr{Shall I completely remove these lemmas and maybe just add a sketch in the main technical section?}
\begin{proof}
    For any balanced partition $\mathbf{x} = (\mathbf{u},\mathbf{v})$ where $|\mathbf{u}| = |\mathbf{v}| = n$, we can embed $\sum_{i\in [n]}u_iv_i - \beta$ in $\sum_{i<j\leq n}\alpha_{i,j}z_{i,j}x_ix_j - \beta$ by a natural restriction $\mathbf{z} = \mathbf{b}_{\mathbf{u},\mathbf{v}} \in \Boo^{\binom{2n}{2}}$ that sets $z_{i,j}$ to 1 if $x_i = u_k$, $x_j = v_k$, and 0 otherwise. So, for every such restriction $\mathbf{b}_{\mathbf{u},\mathbf{v}}$ that corresponds to a balanced partition: $$f(\mathbf{u},\mathbf{v},\mathbf{b}_{\mathbf{u},\mathbf{v}}) = \frac{1}{\sum_{i\in[n]}u_iv_i - \beta}$$
    For any fixed choice of balanced partition $\mathbf{b_{\mathbf{u},\mathbf{v}}}\in\Boo^{\binom{2n}{2}}$, \Cref{lem: fixed partition coeff dim lbd} tells us that for a uniformly randomly chosen $\bfalpha \in S^{\binom{2n}{n}}$: $$ \Pr_{\bfalpha \in S^{\binom{2n}{n}}}[\dim (\bfCoeff_{\mathbf{x}|\mathbf{y}}(f_{\bfalpha,\beta}(\mathbf{u},\mathbf{v},\mathbf{b}))) \geq 2^n] \geq 1-\frac{2^{2n}}{|S|} $$
    Applying a union bound over all $\binom{2n}{n}$ choices of balanced partitions $\mathbf{x} = (\mathbf{u},\mathbf{v})$ implies that for a uniformly randomly chosen $\bfalpha \in S^{\binom{2n}{2}}$: $$ \Pr_{\bfalpha \in S^{\binom{2n}{n}}}[\forall \mathbf{x} = (\mathbf{u},\mathbf{v}): \dim_{\F}(\bfCoeff_{\mathbf{x}|\mathbf{y}}(f_{\bfalpha,\beta}(\mathbf{u},\mathbf{v},\mathbf{b}_{\mathbf{u},\mathbf{v}}))) \geq 2^n] \geq 1-\frac{\binom{2n}{n}2^{2n}}{|S|} $$
    Finally, applying \Cref{fact: coeff dim F(z) vs F} implies that for a uniformly randomly chosen $\bfalpha \in S^{\binom{2n}{2}}$:
    $$ \Pr_{\bfalpha \in S^{\binom{2n}{n}}}[\forall \mathbf{x} = (\mathbf{u},\mathbf{v}): \dim_{\F(\mathbf{z})}(\bfCoeff_{\mathbf{x}|\mathbf{y}}(f_{\bfalpha,\beta}(\mathbf{u},\mathbf{v},\mathbf{b}_{\mathbf{u},\mathbf{v}}))) \geq 2^n] \geq 1-\frac{\binom{2n}{n}2^{2n}}{|S|} $$
\end{proof}

\roabplbdanyorder
% \vr{Use restatable for theorem statements}
\begin{proof}[Proof of \Cref{thm:roabp-lbd-any-order-positive-char}]\label{proof:thm:roabp-lbd-any-order-positive-char}
    We will instantiate \Cref{lem: coeff dim any partition} for the field $\tilde{\F}$ and the set $S = \F$. Thus, choosing $\beta \in \tilde{\F}\setminus\F$ ensures that for any choice of $\bfalpha = (\alpha_{i,j})_{1\leq i <j\leq 2n} \in S^{\binom{2n}{2}}$, $\beta$ will be in $B_{\bfalpha}$ (which we recall to be the complement of all possible subset sums of $\bfalpha$).
    With the above choices, it follows from \Cref{lem: coeff dim any partition} that for a uniformly randomly chosen $\bfalpha \in S^{\binom{2n}{2}}$, 
    $$ \Pr_{\bfalpha \in S^{m}}[\forall \mathbf{x} = (\mathbf{u},\mathbf{v}): \dim_{\F(\mathbf{z})}(\bfCoeff_{\mathbf{x}|\mathbf{y}}(f_{\bfalpha,\beta}(\mathbf{u},\mathbf{v},\mathbf{b}_{\mathbf{u},\mathbf{v}}))) \geq 2^n] \geq 1-\frac{\binom{2n}{n}2^{2n}}{|S|} > 0$$
    for $|S| = |\F|> \binom{2n}{n}2^{2n}$, where $\mathbf{x}=(\mathbf{u},\mathbf{v})$ denotes any balanced partition of $\mathbf{x}$. In particular, this implies that there exists an $\bfalpha \in S^{\binom{2n}{n}}$ such that for any balanced partition $\mathbf{x} = (\mathbf{u},\mathbf{v})$,
    \begin{equation}\label{eqn: coeff dim lbd all partitions}
    \dim_{\F(\mathbf{z})}(\bfCoeff_{\mathbf{x}|\mathbf{y}}(f_{\bfalpha,\beta}(\mathbf{u},\mathbf{v},\mathbf{b}_{\mathbf{u},\mathbf{v}}))) \geq 2^n
    \end{equation}

    Now, suppose $f(\mathbf{x},\mathbf{z})$ is computable by a width-$r$ $\roABP$ in some order of variables. Using $f_\mathbf{z}$ to denote $f$ as a polynomial in $\F[z][\mathbf{x}]$, it follows that $f_\mathbf{z}$ is also computable by a width-$r$ $\roABP$ over the fraction field $\F(\mathbf{z})$ in the induced order of variables on $\mathbf{x}$. By splitting the $\mathbf{x}$ variables in half along the induced order, using \Cref{eqn: coeff dim lbd all partitions} along with Nisan's characterization of width of $\roABP$s (\Cref{lem: nisan roabp width}), we obtain the required lower bound.
\end{proof}

\subsection{Proof of \Cref{claim:monomial-axioms}}\label{app:proof-monomial-axioms}

\monomialaxioms*

\begin{proof}[Proof of \Cref{claim:monomial-axioms}]
 We will prove it by induction on the cardinality of $\mathrm{Supp}(\bm{\mu})$, which is defined as follows:
 \begin{align*}
     \mathrm{Supp}(\bm{\mu}) \; = \; \setcond{j \in [n]}{\mu_{j} > 0}.
 \end{align*}
 
\paragraph{Base case:}Suppose $|\mathrm{Supp}(\bm{\mu})| = 1$ and $\mu_{1} > 0$. If $\mu_{1} = 1$, then we can set $E_{\bm{\mu},1}(\mathbf{x}) = 1$. Otherwise, if $\mu_{1} > 1$, then
\begin{align*}
    x_{1}^{2 \mu_{1}} - x_{1}^{\mu_{1}} \; = \; (x_{1}^{2 \mu_{1}} - x_{1}) - (x_{1}^{\mu_{1}} - x_{1})    
\end{align*}
We have the following identity for any $j \geq 2$:
\begin{align*}
    z^{j} - z \; = \; (z^{j-2} + \ldots + z + 1) \cdot (z^{2} - z)
\end{align*}
Using this we get,
\begin{gather*}
     x_{1}^{2 \mu_{1}} - x_{1}^{\mu_{1}} \; = \; (x_{1}^{2 \mu_{1} - 2} + \ldots + x_{1} + 1) \cdot (x_{1}^{2} - x_{1}) - (x_{1}^{\mu_{1} - 2} + \ldots + x_{1} + 1) \cdot (x_{1}^{2} - x_{1}) \\
     = \underbrace{(x_{1}^{2 \mu_{1} - 2} + \ldots + x_{\mu_{1} - 1})}_{:= E_{\bm{\mu},1}(\mathbf{x})} \cdot (x_{1}^{2} - x_{1})
\end{gather*}
The polynomial $E_{\bm{\mu},1}(\mathbf{x})$ has a circuit of size $\bigO(D^{2})$ and depth $2$ (a $\Sigma \Pi$ circuit).

\paragraph{Induction step:}Assume this is true for all $\bm{\mu}$ with $|\bm{\mu}| \leq D$ and $|\mathrm{Supp}(\bm{\mu})| = k$. Consider any arbitrary exponent vector $\bm{\mu}$ with $|\bm{\mu}| \leq D$ and $|\mathrm{Supp}(\bm{\mu})| = (k+1)$. Let $t$ be the largest element in $\mathrm{Supp}(\bm{\mu})$ and let $\bm{\nu}$ be the exponent vector with $\nu_{t} = 0$ and $\nu_{i} = \mu_{i}$ for all $i \neq t$. We have,
\begin{align*}
    ((\mathbf{x}^{\bm{\mu}})^{2} - \mathbf{x}^{\bm{\mu}}) \; = \;  (x_{t}^{2 \mu_{t}} - x_{t} + x_{t}) \cdot (\mathbf{x}^{\bm{\nu}})^{2} - (x_{t}^{\mu_{t}} - x_{t} + x_{t}) \cdot \mathbf{x}^{\bm{\nu}}
\end{align*}
\begin{equation}\label{eqn:monomial-axiom}
    \underbrace{((x_{t}^{2\mu_{t} - 2} + \ldots + 1) \cdot (\mathbf{x}^{\bm{\nu}})^{2} - (x_{t}^{\mu_{t}-2} + \ldots +1) \cdot \mathbf{x}^{\bm{\nu}})}_{:= E_{\bm{\mu},t}(\mathbf{x})} \, \cdot \, (x_{t}^{2} - x_{t}) \; + \; x_{t} \cdot ((\mathbf{x}^{\bm{\nu}})^{2} - \mathbf{x}^{\bm{\nu}}), 
\end{equation}
where we used the identity $(z^{j} - z) = (z^{j-2} + \ldots + z + 1) \cdot (z^{2} - z)$. Since the exponent vector $\bm{\nu}$ satisfies $|\bm{\nu}| \leq D$ and $|\mathrm{Supp}(\bm{\nu})| = k$, we can apply the induction hypothesis on $(\mathbf{x}^{\bm{\nu}})^{2} - \mathbf{x}^{\bm{\nu}}$. From induction, we know there exists polynomials $E_{\bm{\nu}, j}(\mathbf{x})$ for all $j \in \mathrm{Supp}(\bm{\nu})$ such that:
\begin{align*}
    ((\mathbf{x}^{\bm{\nu}})^{2} - \mathbf{x}^{\bm{\nu}}) \; = \; \sum_{j \in \mathrm{Supp}(\bm{\nu})} \, E_{\bm{\nu},j}(\mathbf{x}) \cdot (x_{j}^{2} - x_{j}),
\end{align*}
and the polynomials $E_{\bm{\nu},j}(\mathbf{x})$ have circuits of size $\bigO(nD^{2})$ and depth $2$ (a $\Pi \Sigma$ circuit). Substituting it in \Cref{eqn:monomial-axiom}, we get,
\begin{align*}
   ((\mathbf{x}^{\bm{\mu}})^{2} - \mathbf{x}^{\bm{\mu}}) \; = \; E_{\bm{\mu},t}(\mathbf{x}) \cdot (x_{t}^{2} - x_{t}) \; + \; \sum_{j \in \mathrm{Supp}(\bm{\nu})} \underbrace{x_{t} \cdot E_{\bm{\nu},j}(\mathbf{x})}_{:= E_{S,j}} \cdot (x_{j}^{2} - x_{j}),
\end{align*}
where the polynomials $E_{\bm{\mu},j}(\mathbf{x})$ have a circuit of size $\bigO(nD^{2})$ and depth $2$ (a $\Pi \Sigma$ circuit). Moreover, the polynomials $E_{\bm{\mu},j}(\mathbf{x})$ are of degree-$2D$ polynomials. This finishes the proof of \Cref{claim:monomial-axioms}.
\end{proof}

\subsection{Proof of \Cref{claim:inddeg-p}}\label{app:proof-ind-deg}

\boundedinddegml*

\begin{proof}[Proof of \Cref{claim:inddeg-p}]
Fix an arbitrary monomial $\mathfrak{m}$ with a non-zero coefficient in the polynomial $H(\mathbf{y})$. Say $\mathfrak{m} = y_{1}^{\mu_{1}} \cdots y_{r}^{\mu_{r}}$ where for every $j \in [r]$, $0 \leq e_{i} \leq D$. Let $S_{\mathfrak{m}} \subseteq [r]$ denote the set of variables whose exponent in $\mathfrak{m}$ is at least $p$, i.e.
\begin{align*}
    S_{\mathfrak{m}} \; = \; \setcond{j \in [r]}{p \leq \mu_{j} \leq D}
\end{align*}
Let $\ell \in S_{\mathfrak{m}}$, and let $\mathfrak{m}_{-\ell} := \mathfrak{m}/y_{\ell}^{\mu_{\ell}}$. In other words, $\mathfrak{m} = y_{\ell}^{\mu_{\ell}} \cdot \mathfrak{m}_{-\ell}$. Then, 
\begin{align*}
    \mathfrak{m} \; = \; (y_{\ell}^{p} - y_{\ell} + y_{\ell}) \; y_{\ell}^{\mu_{\ell} - p} \cdot \mathfrak{m}_{-\ell} \\
    = y_{\ell}^{\mu_{\ell} - p + 1} \cdot \mathfrak{m}_{-j} \; + \; y_{\ell}^{\mu_{\ell} - p} \cdot \mathfrak{m}_{-j} \cdot (y_{\ell}^{p} - y_{\ell})
\end{align*}
The monomial $y_{\ell}^{\mu_{\ell} - p} \cdot \mathfrak{m}_{-j}$ is a monomial in the polynomial $G_{\ell}(\mathbf{y})$. We repeat the above step on the monomial $y_{\ell}^{\mu_{\ell} - p + 1} \cdot \mathfrak{m}_{-j}$. In each step with respect to the variable $y_{\ell}$ (for the monomial $\mathfrak{m}$), the degree of $y_{\ell}$ is reducing by $(p-1)$. Thus this step can be repeated $\leq D/(p-1)$ times because the individual degree of $H(\mathbf{y})$ is $\leq D$. In each step, we get one monomial for the polynomial $G_{\ell}(\mathbf{y})$, and thus we get $D/(p-1)$ monomials in the polynomial $G_{\ell}(\mathbf{y})$ from the monomial $\mathfrak{m}$. We do this for every variable in the set $S_{\mathfrak{m}}$.\\

\noindent
Finally, we iterate the above steps for every monomial with non-zero coefficient in the polynomial $H(\mathbf{y})$. The sparsity of the polynomial $H(\mathbf{y})$ is at most $D^{r}$, since the individual degree of $H(\mathbf{y})$ is $\leq D$. For each monomial in the support of $H(\mathbf{y})$, each polynomial $G_{\ell}(\mathbf{y})$ gets at most $D/(p-1)$ monomials, and hence each polynomial $G_{\ell}(\mathbf{y})$,
\begin{align*}
    \mathrm{sparsity}(G_{\ell}(\mathbf{y})) \; = \; \mathrm{sparsity}(H(\mathbf{y})) \cdot D/(p-1) \; \leq \; D^{r} \cdot D/(p-1)
\end{align*}
This finishes the proof of \Cref{claim:inddeg-p}.
\end{proof}

\subsection{Proof of \Cref{claim:partial-multilinearize-prod-univariates}}\label{app:claim-partial-uni}

\partialmlproduni*

\begin{proof}[Proof of \Cref{claim:partial-multilinearize-prod-univariates}]
We will prove this via induction on $k$.

\paragraph{Base case:}For $k = 1$, this is exactly \Cref{claim:multilinearize-one-var} where $z = z_{1}$ and $\mathbf{y} = (z_{2},\ldots,z_{n})$.

\paragraph{Induction case:}Assume the claim is true up to $k$. Let $Q(\mathbf{z}) \; = \; h_{k+2}(z_{k+2}) \cdots h_{n}(z_{n})$. By induction hypothesis, we have,
\begin{align*}
    \ml_{\leq k}\brac{\prod_{i=1}^{n} h_{i}(z_{i})} \; = \; h_{k+1}(z_{k+1}) \; \cdot \; \underbrace{\prod_{i = 1}^{k} L_{i}(z_{i}) \cdot \prod_{i = k+2}^{n} h_{i}(z_{i})}_{= Q(\mathbf{y})},
\end{align*}
where $\mathbf{y} = (z_{1},\ldots,z_{k},z_{k+2},\ldots,z_{n})$. From \Cref{obs:partial-ml-sequence},
\begin{align*}
    \ml_{\leq k+1}\brac{\prod_{i=1}^{n} h_{i}(z_{i})} \; = \; \ml_{z_{k+1}}\brac{\ml_{\leq k}\brac{\prod_{i=1}^{n} h_{i}(z_{i})}},
\end{align*}
Now applying \Cref{claim:multilinearize-one-var} on $h_{k+1}(z_{k+1}) \cdot Q(\mathbf{y})$ with respect to the variable $z_{k+1}$, we get the claim for $k+1$. This finishes the proof of \Cref{claim:partial-multilinearize-prod-univariates}.
\end{proof}

\end{document}